\setlist[itemize]{leftmargin=0.2in}
\setlist[enumerate]{leftmargin=0.3in}
\newcommand{\leftside}{\mathsf{left}}
\newcommand{\rightside}{\mathsf{right}}
\newcommand{\strongModules}{\mathcal{M}}
\newcommand{\camodules}{\mathcal{S}}
\newcommand{\slots}{\mathcal{S}^*}
\newcommand{\metachords}{\mathcal{MC}}
\newcommand{\owners}{\mathcal{P}}
\newcommand{\priv}{\mathsf{Priv}}
\newcommand{\Priv}{\mathsf{Priv}}
\newcommand{\cliques}{\mathsf{Cl}}
\newcommand{\signatures}{\mathsf{sgn}}
\renewcommand{\circle}{\mathsf{O}}
\newcommand{\pqmtree}{\mathcal{T}}
\newcommand{\HCP}{\mathbf{HCP}}
\newcommand{\blocks}{\mathsf{blocks}}
\newcommand{\sides}{\mathsf{sides}}
\newcommand{\TTT}{\mathbb{T}}
\newcommand{\NNN}{\mathbb{N}}
\newcommand{\MMM}{\mathbb{M}}
\newcommand{\SSS}{\mathbb{S}}
\newcommand{\KKK}{\mathbb{K}}
\newcommand{\LLL}{\mathbb{L}}
\newcommand{\DDD}{\mathbb{D}}
\let\leq\leqslant
\let\geq\geqslant
\let\setminus\smallsetminus
\let\rho\varrho
\newcommand{\brac}[1]{{\left(#1\right)}}
\newcommand{\Oh}[1]{O\brac{#1}}
\newcommand{\classfont}{\mathsf}
\newcommand{\NP}{\ensuremath{\classfont{NP}}}
\newcommand{\hcp}{{\sc{Helly Cliques}}}
\newcommand{\total}{{\sc{Total Ordering}}}
\newcommand{\yes}{{\bf{yes}}}
\newcommand{\no}{{\bf{no}}}
\newcommand\ie{i.e\@ifnextchar.{}{.\@}}
\newcommand\etc{etc\@ifnextchar.{}{.\@}}
\newcommand\etal{et~al\@ifnextchar.{}{.\@}}
\def\namedlabel#1#2{\begingroup
    #2%
    \def\@currentlabel{#2}%
    \phantomsection\label{#1}\endgroup
}
\newcounter{dummy} 
\numberwithin{dummy}{section}
\newtheorem{theorem}[dummy]{Theorem}
\newtheorem{claim}[dummy]{Claim}
\newtheorem{lemma}[dummy]{Lemma}
\newtheorem{observation}[dummy]{Observation}
\newtheorem{definition}[dummy]{Definition}
\newcounter{hackcount}
\title[Circular-arc graphs and the Helly property]{Circular-arc graphs and the Helly property}
\author[Jan Derbisz and Tomasz Krawczyk]{Jan Derbisz$^1$$^\dagger$ \and Tomasz Krawczyk$^2$}
\address{$^1$Theoretical Computer Science Department, 
Faculty of Mathematics and Computer Science, Jagiellonian University in Krak\'ow, Poland}
\address{Doctoral School of Exact and Natural Sciences, Jagiellonian University, Poland}
\email{jan.derbisz@doctoral.uj.edu.pl}
\address{$^2$Faculty of Mathematics and Information Science, Warsaw University of Technology, Poland}
\email{tomasz.krawczyk@pw.edu.pl}
\thanks{$^\dagger$Research of this author was partially funded by Polish National Science Center (NCN) grant 2021/41/N/ST6/03671 and by the Priority Research Area SciMat under the program Excellence Initiative - Research University at the Jagiellonian University in Krak\'{o}w}
\begin{document}
\thispagestyle{empty}

\begin{abstract}
\begin{scriptsize}
One of the most common and best studied models for representing graphs is the intersection model. 
For a family $\mathcal{R}$ of geometric objects, the \emph{$\mathcal{R}$-intersection model} of a graph $G$ in the family~$\mathcal{R}$ is a
mapping $\phi : V(G) \to \mathcal{R}$ which assigns objects in~$\mathcal{R}$ 
to the vertices of $G$ so that $\phi(u) \cap \phi(v) \neq \emptyset \iff uv \in E(G)$ for any two vertices $u$ and $v$ of $G$. 
Given a graph~$G$, an $\mathcal{R}$-intersection model~$\phi$ of~$G$, and a clique~$C$ of~$G$, 
we say \emph{$C$ is Helly} in $\phi$ if the set $\bigcap_{c \in C} \phi(c)$ is non-empty and we say the \emph{model $\phi$ satisfies the Helly property} if every clique of $G$ is Helly in $\phi$.

In this paper we investigate some problems related to the Helly properties of \emph{circular-arc graphs}, which are defined as intersection graphs of arcs of a fixed circle.
As such, circular-arc graphs are among the simplest classes of intersection graphs whose models might not satisfy the Helly property. 
In particular, some cliques of a circular-arc graph might be Helly in some but not all arc intersection models of the graph.
For these reasons, designing algorithms in the class of circular-arc graphs 
seems to be more challenging than in classes of graphs whose models naturally satisfy the Helly property, such as in the class of interval or chordal graphs.

Our first result is an alternative proof of a theorem by Lin and Szwarcfiter (\emph{LNCS}, vol.~4112, 73--82) which asserts that for every circular-arc graph $G$ either every normalized model of $G$ satisfies the Helly property or no normalized model of $G$ satisfies this property. 
Normalized models of a circular-arc graph~$G$ reflect the neighbourhood relation
between the vertices of $G$; in particular, every model of~$G$ can be made normalized by extending some of its arcs.

Further, we study the Helly properties of a single clique of a circular-arc graph~$G$.  
We divide the cliques of $G$ into three types:
a clique $C$ of $G$ is \emph{always-Helly}/\emph{always-non-Helly}/\emph{ambiguous} if $C$ is Helly in every/no/(some but not all) normalized model of~$G$.
We provide a combinatorial description for the cliques of each type, and based on it, we devise a polynomial time algorithm which determines the type of a given clique.

Finally, we study the \hcp\ problem, in which we are given an $n$-vertex circular-arc graph~$G$ and some of its cliques $C_1, \ldots, C_k$ and we ask if there is an arc intersection model of $G$ in which all the cliques $C_1, \ldots, C_k$ satisfy the Helly property.
\hcp\ was introduced by Agaoglu {\c{C}}agirici, {\c{C}}agirici, Derbisz, Hartmann, Hlinen{\'{y}}, Kratochv{\'{\i}}l, Krawczyk, and Zeman in relation with their work on $H$-graphs recognition problems (\emph{LIPIcs}, vol.~272, 8:1--8:14) and was shown by Agaoglu {\c{C}}agirici and Zeman to be \NP-complete (\emph{CoRR}, abs/2206.13372).
We show that:
\begin{itemize}
    \item \hcp\ admits a $2^{O(k\log{k})}n^{O(1)}$-time algorithm (that is, \hcp\ is FPT when parametrized by the number of cliques given in the input),
    \item assuming Exponential Time Hypothesis (ETH), \hcp\ cannot be solved in time $2^{o(k)}n^{O(1)}$,
    \item \hcp\ admits a polynomial kernel of size $O(k^6)$.
\end{itemize}

All our results use a data structure, called a \emph{PQM-tree}, which maintains all normalized models of a circular-arc graph~$G$.
\end{scriptsize}
\end{abstract}

\maketitle

\section{Introduction}
\label{sec:introduction}

One of the most common and best studied models for representing graphs is the intersection model. 
For a family $\mathcal{R}$ of geometric objects, the intersection model (or representation) of a graph $G$ in the family~$\mathcal{R}$ is a
mapping $\phi : V(G) \to \mathcal{R}$ which assigns objects in~$\mathcal{R}$ 
to the vertices of $G$ so that $\phi(u) \cap \phi(v) \neq \emptyset \iff uv \in E(G)$ for any two vertices $u$ and $v$ of $G$. 
The graphs that have such a representation in $\mathcal{R}$ form the class of $\mathcal{R}$-intersection graphs.
By considering families $\mathcal{R}$ of objects of some particular kind, for example, having a specific geometric shape, we obtain various classes of graphs. For instance:
\begin{itemize}
 \item \emph{interval graphs} are the intersection graphs of intervals on a line,
 \item \emph{circular-arc graphs} are the intersection graphs of arcs of a fixed circle,
 \item \emph{chordal graphs} are the intersection graphs of subtrees of a tree
(or, equivalently, graphs with no induced cycles
of length greater than 3).
\end{itemize}
Clearly, the class of interval graphs is contained in the class of chordal graphs and the class of circular-arc graphs.
There are known linear time recognition algorithms for interval graphs~\cite{BoothLueker76} and chordal graphs~\cite{RoseTL76}.
The first polynomial time algorithm recognizing circular-arc graphs was given by Tucker~\cite{Tuck80}.
Currently, two linear-time algorithms recognizing circular-arc graph are known~\cite{McConnel03,KapNus11}.

The subject of this paper are circular-arc graphs.
Although circular-arc graphs seem to be closely related to interval graphs, 
they have significantly different algorithmic and combinatorial properties.
In particular, quite number of problems that are solved or showed to admit polynomial time solutions in the class of interval graphs, 
in the class of circular-arc graphs are still open or are computationally hard.
A good example illustrating our remark is the minimum coloring problem, which admits a simple linear algorithm in the class of interval graphs, but
in the class of circular-arc graphs is \NP-complete~\cite{GareyJMP80}.
Another example is related to the problem of characterizing circular-arc graphs in terms of forbidden structures,
in particular, by the list of all minimal induced graphs forbidden for this class of graphs.
For interval graphs such a list was completed by Lekkerkerker and Boland~\cite{LekBol62} in the 1960's;
for circular-arc graphs, despite a flurry of research~\cite{Bon09,Fed99,Fra15,Klee69,Tro76,cao2024characterization}, it is still unknown.

One of the properties that seems to distinguish interval and chordal graphs from circular-arc graphs is the Helly property.
In the context of intersection graphs, we use the following notation.
Given a graph~$G$, an intersection model~$\phi$ of~$G$, and a clique~$C$ of~$G$, 
we say the \emph{clique $C$ satisfies the Helly property} (or \emph{is Helly}) in $\phi$ if the set $\bigcap_{c \in C} \phi(c)$ is non-empty and we say the \emph{model $\phi$ satisfies the Helly property} if every clique of $G$ is Helly in $\phi$.

Due to the properties of geometric objects defining the class of interval and chordal graphs, every intersection model of a graph from these classes satisfies the Helly property.
Therefore, one can easily deduce that every interval/chordal graph contains linearly many, in the size of its vertex set, maximal cliques.
The situation is different for circular-arc graphs;
for example, a circular-arc graph $G$ which arises from a complete graph by removing the edges of a perfect matching,
contains exponentially many maximal cliques, and in any model of $G$ only linearly many of them may satisfy the Helly property.
It is worth noting here that possessing linearly many maximal cliques is a property often used in algorithm design. 
For example, in the class of interval graphs, a vast number of algorithms use a data structure called \emph{PQ-trees}. 
The PQ-tree of an interval graph~$G$ represents all possible orderings of all maximal cliques in the models of~$G$.
Thus, the PQ-tree of $G$ also represents all possible interval models of $G$.
For chordal graphs the structure describing all their intersection models is not known, 
however, the property of possessing linearly many cliques is used e.g. in polynomial-time algorithms for the recognition and the isomorphism testing of $T$-graphs, for every fixed tree~$T$~\cite{ChaplickTVZ21, CagiriciH22}.
Similarly, in the class of circular-arc graphs the structure of all their intersection models is not known.
Nevertheless, quite recently the structure of so-called \emph{normalized models} of circular-arc graphs was described~\cite{Krawczyk20}.
Normalized models of a circular-arc graph reflect the neighbourhood relation in this graph and can be seen as its canonical representations -- see Section~\ref{sec:preliminaries} for their precise definition.
In particular, every intersection model $\psi$ of a circular-arc graph~$G$ can be easily transformed into a normalized model $\phi$ of~$G$
by possibly extending some of the arcs of $\psi$.
In particular, if a clique $C$ of $G$ is Helly in $\psi$, it is also Helly in $\phi$.
Krawczyk~\cite{Krawczyk20}, based on the approach proposed by Hsu \cite{Hsu95}, devised a linear-time algorithm computing a data structure, called a \emph{PQM-tree}, which represents all normalized models of a circular-arc graph.
Since the PQM-trees of circular-arc graphs are heavily exploited in our paper, we describe them precisely in Section~\ref{sec:PQM-trees}.

An important subclass of circular-arc graphs, which inherits many properties of interval graphs, is formed by \emph{Helly circular-arc graphs}.
A circular-arc graph $G$ is \emph{Helly} if it admits a circular-arc model which satisfies the Helly property.
In particular, like interval graphs, every Helly circular-arc graph has linearly many maximal cliques; 
on the other hand, the minimum coloring problem in this class of graphs remains \NP-complete~\cite{Gavril1996}.
Heading towards a linear-time algorithm for the recognition of 
the Helly circular-arc graphs, Lin and Szwarcfiter proved the following theorem.
\begin{theorem}[\cite{LinSchw06}]
\label{thm:intro:Lin-Szwarcfiter}
Let $G$ be a circular-arc graph.
Either every normalized model of $G$ satisfies the Helly property or no normalized model of $G$ satisfies the Helly property.
\end{theorem}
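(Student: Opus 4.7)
The plan is to exploit the PQM-tree $\pqmtree(G)$ from~\cite{Krawczyk20}, which parameterises all normalized models of $G$ by local data at each of its nodes: arbitrary permutations of children at each P-node, a single reversal at each Q-node, and a restricted finite group of admissible moves at each M-node. The key idea is to show that the existence of a non-Helly clique in a normalized model is a combinatorial invariant of $\pqmtree(G)$ itself, from which Theorem~\ref{thm:intro:Lin-Szwarcfiter} is immediate.

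I would proceed by contrapositive. Suppose $\phi$ is a normalized model of $G$ in which some clique $C$ is not Helly, i.e.\ $\bigcap_{v\in C}\phi(v)=\emptyset$ while the arcs pairwise intersect; such a collection of arcs necessarily jointly covers the circle. Let $x$ be the deepest node of $\pqmtree(G)$ whose subtree contains $C$, so the circular arrangement of the arcs of $C$ in $\phi$ is governed by the local choice at $x$ together with the subtrees below. I would then classify the possible obstructions according to the type of $x$, showing in each case that the set of admissible local choices at $x$ either always produces some non-Helly clique anchored at $x$ or never does.

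If $x$ is a P-node, then since the admissible local choices form the full symmetric group, any putative non-Helly configuration can be permuted into a Helly one, so P-nodes never anchor obstructions. If $x$ is a Q-node the only local freedom is a reflection, and one checks directly that the covering pattern of a non-Helly clique is preserved (as a possibly relabelled clique) by the reversal. The crucial case is that of M-nodes, which encode prime pieces of the modular decomposition and carry a more intricate symmetry group. Here I would establish the following structural lemma: within each orbit of the M-node's symmetry group acting on admissible local configurations, either every configuration yields only Helly cliques anchored at $x$, or every configuration yields at least one non-Helly clique anchored at $x$ (not necessarily the same one). Combined with the observation that each clique of $G$ is anchored at a unique node, this gives the desired invariance.

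The main obstacle will be the M-node analysis, since the admissible transformations at M-nodes are by far the least transparent part of the PQM-tree machinery: they allow only restricted re-interleavings of the children's subtrees while keeping the global model normalized, and one must understand these well enough to compare which triples of children come to ``jointly wrap around'' the circle in each admissible realization. I would approach this by listing the orbits of the M-node symmetry group on ordered triples of child-subtrees, and using the rigidity forced by primeness to argue that an orbit cannot simultaneously contain a fully Helly configuration and a configuration with a non-Helly anchored clique. Once this lemma is in hand, the theorem follows: ``at least one node of $\pqmtree(G)$ anchors a non-Helly clique'' is a property of the tree and not of any particular normalized model, so either every normalized model of $G$ is Helly or none is.
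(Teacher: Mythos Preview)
Your high-level strategy---use the PQM-tree to argue that ``some normalized model has a non-Helly clique'' is an invariant of the tree---is exactly what the paper does. However, your proposed mechanism for the invariance has a real gap.

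The paper's proof runs as follows. If a model $\phi$ is non-Helly, take a minimal non-Helly clique $C$; by a lemma of Lin--Szwarcfiter, $C$ forms a specific cyclic pattern (a ``non-Helly structure'') in $\phi$. If $|C|\geq 4$, or if $|C|=3$ and the three vertices lie in three distinct children of a \emph{prime} node, then $C$ is rigid: it forms a non-Helly structure in every conformal model, and we are done. The delicate case is when $|C|=3$ and either the node is serial, or $C$ meets only two children of the node. Here $C$ itself can become Helly in another model $\phi'$. The key step is a short technical lemma (Lemma~\ref{lem:technical-lemma-crossing-chords} in the paper): from any two overlapping vertices $a,b$ in a module $M$ with a prescribed local pattern, one can manufacture two further vertices $c,d\in M$ (using that $a\sim b$ means neither contains the other nor do they cover the circle) so that the five vertices $a,b,c,d,e$ together guarantee a non-Helly triple in \emph{every} admissible configuration of $M$; see Figure~\ref{fig:abcde_relations}.

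Your orbit-on-triples plan misses precisely this point. You propose to analyze orbits of the node's symmetry group on ordered triples of child-subtrees and argue by ``rigidity forced by primeness.'' But in the serial case there is no such rigidity (the children can be permuted freely), and even in the prime case a non-Helly triple sitting in only two children can be made Helly by the reflection. The witness non-Helly clique in $\phi'$ is in general \emph{not} the original $C$ and need not be supported on the same children; it uses new vertices found via the neighbourhood structure, not via any group action on the original triple. Without a mechanism to produce these extra vertices (the role of Lemma~\ref{lem:technical-lemma-crossing-chords}), the M-node case of your argument does not close.
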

They prove the theorem by listing so-called obstacles, which have to be avoided by Helly circular-arc graphs.

\subsection{Our results}
The goal of this paper is to study the Helly properties in the class of
circluar-arc graphs.

First, in Section~\ref{sec:Lin-Szwarcfiter}, we provide an alternative proof
of Theorem \ref{thm:intro:Lin-Szwarcfiter}, based on the properties of PQM-trees for circular-arc graphs.

In Section~\ref{sec:clique-type-problem} we study the Helly properties of a single clique of a circular-arc graph~$G$.  
We divide the cliques of $G$ into three types:
a clique~$C$ of~$G$ is \emph{always-Helly}/\emph{always-non-Helly}/\emph{ambiguous} if $C$ is Helly in every/no/(some but not all) normalized model of $G$.
We provide a combinatorial description for the cliques of each type, 
and based on it, we prove the following:
\begin{theorem}
\label{thm:intro:clique-type}
There is a polynomial-time algorithm which, given a circular-arc graph~$G$ and its clique $C$, determines the type of $C$.
\end{theorem}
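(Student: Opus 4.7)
The plan is to exploit the PQM-tree $\pqmtree$ of $G$ described in Section~\ref{sec:PQM-trees}, which parametrizes all normalized models of $G$. Recall that a clique $C$ is Helly in a model $\phi$ precisely when some point of the circle is contained in every arc $\phi(c)$, $c\in C$. Between any two consecutive arc endpoints of $\phi$ lies a \emph{slot}; for a slot $s$, the set $K_s$ of arcs covering any interior point of $s$ is a maximal clique of $G$, and $C$ is Helly at $s$ iff $C\subseteq K_s$. Consequently, the type of $C$ is fully determined by which slots can arise across the normalized models of $G$ and by which of them satisfy $C\subseteq K_s$.

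Next, I would classify slots by their location in $\pqmtree$. Each slot $s$ of any normalized model can be attributed to a unique internal node $u$ of $\pqmtree$, namely the deepest node at which the two arc endpoints bordering $s$ come from distinct children. For such an \emph{anchor} $u$, the clique $K_s$ has a transparent structure: it contains certain vertices forced by ancestors of $u$, together with a subset of the children of $u$ whose subtrees lie on the side of $s$. Crucially, which partitions of the children of $u$ can be separated by a slot is governed by the type of $u$: P-nodes admit every partition into two consecutive groups, Q-nodes admit only those compatible with the fixed linear order or its reverse, and M-nodes admit a restricted family determined by the module structure encoded in $\pqmtree$.

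Given $C$, the algorithm iterates over the $O(|V(G)|)$ internal nodes $u$ of $\pqmtree$. For each $u$ it decides: (i) whether some admissible arrangement at $u$ yields a slot $s$ with $C\subseteq K_s$, and (ii) whether every admissible arrangement at $u$ does so. Both reduce to local tests driven by the profile of $C$ among the children of $u$: which children have their whole subtree contained in $C$, which are disjoint from $C$, and which must be investigated recursively at a deeper anchor. Combining the answers yields the type of $C$: always-Helly if some anchor produces a Helly slot in every model, always-non-Helly if no anchor does in any model, and ambiguous otherwise.

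The main obstacle will be the M-nodes, whose admissible arrangements are genuinely more intricate than those of P- and Q-nodes since they are governed by the PQM-tree of an auxiliary reduced circular-arc graph. Here, checking (i) and (ii) amounts to deciding, for a given partition of the children of an M-node into ``inside $C$'' and ``outside $C$'', whether the allowed arrangements realize, respectively force, a separating slot. I expect this to be the most technical part of the argument, and the combinatorial description of the cliques of each type announced in the theorem is exactly what this local analysis provides. Once this step is in place, the polynomial running time follows from the polynomial size of $\pqmtree$ and the polynomial cost of each local check.
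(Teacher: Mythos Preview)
Your proposal sketches a plausible-sounding strategy, but it rests on several unjustified claims and a misreading of the PQM-tree, so as written it is not a viable plan.

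First, your description of the node types does not match the PQM-tree of Section~\ref{sec:PQM-trees}. There the P- and Q-nodes live in an unrooted tree encoding \emph{circular} orders of the slots $S^0_i,S^1_i$ (in the paper's sense, not yours), while M-nodes are the strong modules of the permutation graph $(S_i,{\sim})$ inside a single CA-module; their admissible orderings come from transitive orientations, not from ``the PQM-tree of an auxiliary reduced circular-arc graph''. Your local rules (``P-nodes admit every partition into two consecutive groups'', etc.) reflect linear PQ-tree intuition that does not transfer to this circular, two-layered structure.

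Second, the anchor framework is not justified. Two consecutive arc endpoints may lie in the same word $\phi|S^j_i$ or in adjacent ones; the ``deepest node where they diverge'' then lives in $\pqmtree_{S_i}$ or in $\pqmtree^{PQ}$ respectively, and you would need to say precisely what $K_s$ looks like in each case. The claim that $K_s$ has a ``transparent structure'' (vertices forced by ancestors plus a side-subset of children) is exactly what must be proved; arcs wrap around the circle, and the relationship between $K_s$ and the tree is not as clean as for interval graphs.

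Third, there is a quantifier problem in your classification: always-Helly means $\forall\phi\,\exists s\colon C\subseteq K_s$, whereas ``some anchor produces a Helly slot in every model'' is $\exists u\,\forall\phi$. These are not obviously equivalent, and closing that gap is itself the heart of the matter.

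The paper's argument in Section~\ref{sec:clique-type-problem} proceeds quite differently. After cleaning $C$ and reducing to $C\subseteq Q$ with no rigid non-Helly subclique, it splits into public versus private cliques. The substantive case is private: one shows the \emph{owners} of $C$ (M-nodes where $C$ meets both orientations) form a path in $\pqmtree_Q$, and for each node $N$ on or above this path analyses when an admissible ordering of $N$ \emph{binds} the clique letter $C$ to slot $S^0$ or $S^1$, or makes $C$ non-Helly (Lemmas~\ref{lem:deepest-owner-properties}--\ref{lem:Q-node-owner-properties}). Theorem~\ref{thm:private_clique_classification} then reads off the type: $C$ is ambiguous iff some node has a non-Helly ordering or at least two nodes bind $C$; otherwise $C$ is always-Helly. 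The algorithm checks these conditions directly from chord orientations and the sets $\mathcal{L}(N),\mathcal{R}(N)$, with no enumeration of maximal cliques at gaps.
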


Next, we turn our attention to the \hcp\ problem.
In the \hcp\ problem we are given a circular-arc graph~$G$ and some of its cliques $C_1, \ldots, C_k$ and we ask if there is a model of~$G$ in which all the cliques $C_1, \ldots, C_k$ are Helly
(if some model of $G$ satisfies these properties, then the model obtained by its normalization satisfies these properties as well).
\hcp\ was introduced by Agaoglu {\c{C}}agirici, {\c{C}}agirici, Derbisz, Hartmann, Hlinen{\'{y}}, Kratochv{\'{\i}}l, Krawczyk, and Zeman  \cite{CagiriciCDHHKK023} as an intermediate problem in their study of $H$-graphs recognition problems (see next subsection for details) and was shown by Agaoglu {\c{C}}agirici and Zeman to be \NP-complete~\cite{AgaogluZeman22}.
Regarding \hcp, we show the following theorems:
\begin{theorem} \ 
\label{thm:intro:helly-cliques-par}
\begin{enumerate}
\item \label{item:intro:helly-cliques-fpt} \hcp\ can be solved in time $2^{\Oh{k\log{k}}}n^{\Oh{1}}$. 
\item \label{item:intro:helly-cliques-eth} Assuming ETH, \hcp\ cannot be solved in time $2^{o(k)}n^{\Oh{1}}$.
\end{enumerate}
\end{theorem}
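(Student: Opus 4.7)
The plan rests on the PQM-tree machinery and on the clique-classification from Theorem~\ref{thm:intro:clique-type}. As a common preprocessing step for both parts, I would apply the algorithm of Theorem~\ref{thm:intro:clique-type} to classify each $C_i$ as always-Helly, always-non-Helly, or ambiguous. If some $C_i$ is always-non-Helly we answer \no; the always-Helly cliques impose no real constraint, so we may assume every $C_i$ is ambiguous. The remaining task is to pick a normalized model, equivalently a consistent set of choices in the PQM-tree of $G$, that simultaneously makes all $C_i$ Helly.

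\textbf{Part~(\ref{item:intro:helly-cliques-fpt}): the FPT algorithm.} The key structural step I would prove is a \emph{localisation lemma}: for each ambiguous clique $C$ there is a small set $N(C)$ of PQM-tree nodes whose internal configuration (P-node permutation, Q-node orientation, M-node reorientation) alone determines whether $C$ ends up Helly in the resulting normalized model, and at each such node only polynomially many ``locally Helly'' configurations are compatible with $C$. Using the combinatorial description behind Theorem~\ref{thm:intro:clique-type}, I would show that $|N(C_i)|$ is bounded by a constant and that the union $\bigcup_i N(C_i)$ contains $\Oh{k}$ nodes. The algorithm then enumerates, at each of these $\Oh{k}$ nodes, one of $k^{\Oh{1}}$ locally Helly configurations; this gives $k^{\Oh{k}}=2^{\Oh{k\log k}}$ branches. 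Each candidate partial configuration is tested for global consistency and extended to the rest of the PQM-tree in polynomial time using the standard PQM-tree reduction operations (the same operations used in~\cite{Krawczyk20}), yielding the claimed $2^{\Oh{k\log k}} n^{\Oh{1}}$ runtime.

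\textbf{Part~(\ref{item:intro:helly-cliques-eth}): the ETH lower bound.} I would reduce from $3$-SAT on $n$ variables and $m=\Oh{n}$ clauses, for which ETH precludes $2^{o(n+m)}$-time algorithms, and produce an \hcp{} instance with a circular-arc graph $G$ and $k=\Oh{n+m}$ distinguished cliques. The natural design uses one \emph{variable gadget} per propositional variable, implemented around an M-node of the PQM-tree whose two orientations encode the truth assignment and which is locked in by the Helly-ness requirement of one designated ambiguous clique; and one \emph{clause gadget} per clause, realised by an ambiguous clique whose Helly-ness forces at least one of the three incident variable M-nodes into the satisfying orientation. Composing these gadgets inside a single circular-arc graph yields $k=\Oh{n+m}$ ambiguous cliques, so a $2^{o(k)}n^{\Oh{1}}$-time algorithm for \hcp{} would contradict ETH.

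\textbf{Main obstacle.} Part~(\ref{item:intro:helly-cliques-fpt}) is the harder half: the whole algorithm hinges on the localisation lemma, i.e.\ on pinpointing precisely which PQM-tree nodes witness Helly-ness for an ambiguous clique and proving that these witnesses aggregate to $\Oh{k}$ nodes with $k^{\Oh{1}}$ valid configurations each. Establishing this, and arguing that locally Helly partial configurations can always be completed to a globally consistent normalized model via PQM-tree reductions, is where the bulk of the technical work will lie. The ETH reduction is mostly a gadget-engineering task; the subtle point there is embedding many independent gadgets into a single circular-arc graph without their M-node choices interacting through the modular decomposition.
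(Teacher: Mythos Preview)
Your plan diverges from the paper's proof in both parts, and for Part~(\ref{item:intro:helly-cliques-fpt}) there is a genuine gap.

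\textbf{Part~(\ref{item:intro:helly-cliques-fpt}).} Your localisation scheme (``each ambiguous clique is governed by $O(1)$ PQM-nodes, each with $k^{O(1)}$ relevant configurations'') is not how the paper proceeds, and it does not obviously go through. The paper separates two cases. When $V$ is prime or parallel in $\strongModules(G_{ov})$, every ambiguous clique is \emph{private}; the algorithm branches not on node configurations but on a binary choice per clique (which of the two slots $S^0_{C_i},S^1_{C_i}$ the Helly point of $C_i$ falls into), and for each of the $2^k$ tuples it tests feasibility in polynomial time via Lemma~\ref{lem:conformal-models-with-C-in-fixed-slot}. The $2^{O(k\log k)}$ factor arises only in the second case, when $V$ is serial: here ambiguous cliques may be \emph{public} and need not have a small set of owner nodes at all. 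The paper handles this case by enumerating all $k!$ circular orders of the clique letters $C_1,\ldots,C_k$ together with the position of a stabilizer chord, and for each guess decides feasibility in polynomial time using a 2-SAT formula whose correctness rests on a purely geometric \emph{Trapezoid Lemma} (pairwise nicely intersecting trapezoids admit a system of pairwise crossing transversal segments). Your sketch has no counterpart to this: public cliques in the serial case interact through the single serial Q-node $V$, whose admissible orderings are not $k^{O(1)}$ in number, and your ``extend via standard PQM-tree reductions'' step hides precisely the nontrivial content that the Trapezoid Lemma and the 2-SAT encoding supply. Also, the claim that $|N(C_i)|=O(1)$ is not correct even for private cliques: the owner set $\owners$ of a private clique is a path in $\pqmtree_Q$ of unbounded length (property~\ref{prop:clique-type-private-set}).

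\textbf{Part~(\ref{item:intro:helly-cliques-eth}).} Your 3-SAT reduction is plausible in outline, but the paper instead reduces from the \textsc{Total Ordering} (Betweenness) problem: a single serial M-node with $n$ children $M_1,\ldots,M_n$ directly encodes a linear order of $[n]$, and each betweenness triple $(x,y,z)$ becomes two cliques whose simultaneous Helly-ness forces $y$ to lie between $x$ and $z$. This yields $k=2m$ with no separate variable gadgets and no concern about gadgets interfering through the modular decomposition, and the ETH lower bound for \textsc{Total Ordering} transfers immediately.
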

\begin{theorem}
\label{thm:intro:helly-cliques-poly-kernel}
\hcp\ admits a kernel of size $O(k^6)$.
\end{theorem}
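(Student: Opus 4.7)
The plan is to build on the combinatorial description of always-Helly, always-non-Helly, and ambiguous cliques from Section~\ref{sec:clique-type-problem}, together with the PQM-tree representation of normalized models. First, we compute the PQM-tree $\pqmtree$ of $G$ and classify each of the input cliques $C_1,\ldots,C_k$ using Theorem~\ref{thm:intro:clique-type}. If some $C_i$ is always-non-Helly we output a trivial \no-instance; every always-Helly clique imposes no further constraint and can be dropped. Thus we may assume that each of the at most $k$ surviving cliques is ambiguous.

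The key observation is that each ambiguous clique $C$ admits a succinct \emph{local} description inside $\pqmtree$: $C$ is Helly in a normalized model iff that model arranges a constant number of \emph{witness} strong modules of $C$ in a prescribed way at one P-, Q-, or M-node of $\pqmtree$. Using this, for every $C_i$ we distinguish a set $W_i$ of $\Oh{k}$ strong modules of $G$ that form the ``active'' part of $\pqmtree$ relevant to $C_i$; across all cliques we obtain $\Oh{k^2}$ relevant modules in total.

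Next we bucket the vertices of $G$ by their combined interaction with $C_1,\ldots,C_k$ and with $\bigcup_i W_i$: two vertices are equivalent if they belong to the same subset of the cliques and occupy the same position relative to every relevant module. A counting argument gives at most $\Oh{k^4}$ equivalence classes. For each class we then apply a marking scheme that retains $\Oh{k^2}$ representatives per class, roughly one per pair of cliques that could use a vertex of that class as a private Helly-witness, and contracts the remaining vertices of the class into a single bulk vertex that mimics its twin-type. This yields a reduced graph $G'$ with $\Oh{k^6}$ vertices.

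The main obstacle will be proving the equivalence of the reduced instance with the original. The easy direction is that restricting any model of $G$ to the marked vertices gives a model of $G'$ in which all $C_i$ remain Helly. The converse is harder: a model of $G'$ must be expanded to a model of $G$ in which every $C_i$ is still Helly. The argument should rely on showing that at each relevant node of $\pqmtree$ the retained representatives certify the existence of a valid linear (for Q-nodes) or cyclic (for M-nodes) ordering of \emph{all} vertices of the corresponding strong module, including the contracted ones. This reinsertion step at Q- and M-nodes, where several ambiguous cliques may simultaneously constrain the order, is the technical heart of the proof; everything else reduces to a careful bookkeeping of how the marking interacts with the PQM-tree.
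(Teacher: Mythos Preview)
Your proposal has the right high-level shape (preprocess the clique types, mark a small set of ``important'' vertices, replace $G$ by a reduct), but two of the load-bearing steps are not justified and, as stated, are incorrect.

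First, the claim that an ambiguous clique $C$ is Helly iff the model ``arranges a constant number of witness strong modules in a prescribed way at \emph{one} P-, Q-, or M-node'' is false. Lemma~\ref{lem:conformal-models-with-C-in-fixed-slot} shows that $C$ is Helly with $C$ in slot $S^j$ iff \emph{every} node $N$ affecting $C$ binds $C$ in $S^j$; these nodes form the whole owners path $\owners$ of $C$ together with the Q-node $Q$, and its length is bounded by the depth of $\pqmtree$, not by a constant or by $k$. Consequently the bound of $\Oh{k}$ ``witness modules'' per clique and $\Oh{k^2}$ relevant modules in total has no justification. The paper does get down to $\Oh{k}$ \emph{important} nodes, but only after a nontrivial bottom-up block-merging argument: it maintains a partition $\blocks(N)$ of $\priv(N)$ into blocks with two sides, and a node is important only when it introduces a clique or merges two blocks, which can happen at most $2k-1$ times in total.

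Second, the bucketing argument is where the proposal really breaks down. Declaring two vertices equivalent when they lie in the ``same subset of the cliques'' and in the ``same position relative to every relevant module'' gives, a priori, $2^k$ classes already from the first condition; you give no reason why only $\Oh{k^4}$ are realised. More importantly, the hard case is a \emph{serial} node $N$: here $N$ can have arbitrarily many children $K$ with $\priv(K)=\emptyset$ but $\cliques(K)\cap\priv(N)\neq\emptyset$, and each such child imposes a genuine constraint on the ordering. The paper handles this with a signature mechanism: for each pair $(A,B)$ of disjoint subsets of $\priv(N)$ with $|A\cup B|\le 4$ it keeps one child realising that signature, which bounds the weakly important children by $\Oh{k^4}$ per serial node and $\Oh{k^5}$ overall, and then marks $\Oh{k^6}$ vertices. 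The correctness proof that a non-weakly-important child $K$ can be reinserted next to some weakly important $L$ with the same signature is the crux of the argument and has no counterpart in your sketch; your generic ``the retained representatives certify the existence of a valid ordering of all vertices'' does not supply the combinatorics needed here.
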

The proof of Theorem~\ref{thm:intro:helly-cliques-par}.\eqref{item:intro:helly-cliques-fpt} uses the PQM-tree of $G$ to find a normalized model in which $C_1,\ldots,C_k$ are Helly.
Besides, it also uses some purely geometric observations, such as so-called \emph{Trapezoid Lemma}, which asserts that for a family $T_1,\ldots,T_n$ of pairwise intersecting trapezoids spanned between two parallel lines $A$ and $B$ one can draw inside each trapezoid $T_i$ a segment $s_i$ spanned between $A$ and $B$ so as $s_1,\ldots,s_n$ are pairwise intersecting.
We also give an alternative proof of the fact that \hcp\ is \NP-complete, which yields Theorem~\ref{thm:intro:helly-cliques-par}.\eqref{item:intro:helly-cliques-eth} as a by-product.

Our kernelization procedure from Theorem \ref{thm:intro:helly-cliques-poly-kernel} is done in two steps.
In the first step we mark a set~$R$ of \emph{important vertices} in $G$, where the size of $R$ is $O(k^6)$. 
The important vertices in $R$ encode all relevant information needed to conclude that we are dealing with \no-instance; in particular, the instances $G,C_1,\ldots,C_k$ and $G,C'_1,\ldots,C'_k$ of \hcp\ are equivalent, where $C'_i = C_i \cap R$ for $i \in [k]$.
In the second step we construct so-called \emph{reduct $G'$} of~$G$ with respect to the set~$R$. 
The vertex set of the reduct~$G'$ contains the set~$R$ as its subset and has size linear in $|R|$. 
The main property of the reduct~$G'$ of $G$ with respect to $R$ is that the 
configurations of the arcs representing~$R$ occurring in the models of $G$ and 
in the models of $G'$ coincide.
In particular, we may return $G',C'_1,\ldots,C'_k$ as the kernel of $G,C_1,\ldots,C_k$.

\subsection{Applications}
The results presented in the previous subsection have been used as tools in the resolution of some problems related to the recognition of so-called $H$-graphs~\cite{CagiriciCDHHKK023}.
For a fixed connected graph~$H$, the class of \emph{$H$-graphs} contains the intersection graphs of connected subgraphs of some subdivisions of~$H$.
Many known geometric intersection graph classes are $H$-graphs for an appropriately chosen graph $H$: $K_2$-graphs coincide with interval graphs, $K_3$-graphs coincide with circular-arc graphs, the class $\bigcup \text{$T$-graph}$, where the union ranges over all trees $T$, coincides with the class of chordal graphs.
Since $H$-graphs generalize many known geometric intersection graph classes,
they form a~good background that allows to study basic computational problems in some systematic way.
Recently, quite a lot of research has been devoted to determine the tractability border for various computational problems, such as recognition or isomorphism testing, in classes of $H$-graphs with respect to the combinatorial properties of graphs $H$.
The research of Agaoglu {\c{C}}agirici at al.~\cite{CagiriciCDHHKK023} shows that \hcp\ is strongly related with the recognition problems of~$H$-graphs for the cases of this problem which remain open.
Let us briefly describe what we know about $H$-graphs recognition.
Chaplick, T{\"{o}}pfer, Voborn{\'{\i}}k, and Zeman~\cite{ChaplickTVZ21} showed that, for every fixed tree~$T$, the recognition of $T$-graphs can be solved in polynomial time. 
They also showed that $H$-graph recognition is \NP-hard when 
$H$~contains two cycles sharing an edge~\cite{ChaplickTVZ21}.
We already mentioned that $K_3$-graphs (circular-arc graphs) can be recognized in linear-time \cite{McConnel03, KapNus11}.
Agaoglu {\c{C}}agirici at al.~\cite{CagiriciCDHHKK023} strengthened the result of 
Chaplick at al.~\cite{ChaplickTVZ21} and showed that the recognition of $H$-graphs is \NP-hard when $H$ contains two distinct cycles.
They also showed that $H$-graph recognition is polynomial-time solvable when $H$ is a graph consisting of a cycle and an edge attached to it, called a \emph{lollipop}.
In particular, their algorithm recognizing lollipop-graphs uses as a subroutine a polynomial-time algorithm testing whether there is a model of a circular-arc graph in which some particular clique is Helly.
Such an algorithm is asserted by Theorem~\ref{thm:intro:clique-type}.
The above result leave open $H$-graph recognition problems for the cases where 
$H$ is a unicyclic graph different from a cycle and a lollipop.
Following the arguments from~\cite{CagiriciCDHHKK023}, one can observe that the recognition of $H$-graphs, where $H$ is a cycle with $k$ disjoint edges adjacent to it, is as difficult as \hcp\ with the number of cliques equal to~$k$.
Finally, \cite{CagiriciCDHHKK023} shows that the recognition of $\bigcup \text{$H$-graph}$, where $H$ ranges over all unicyclic graphs, is polynomial time equivalent to \hcp.

\subsection{The structure of the paper}
Our paper is organized as follows:
\begin{itemize}
 \item in Section~\ref{sec:preliminaries} we define notation and basic concepts related to circular-arc graphs,
 \item in Section \ref{sec:PQM-trees} we describe PQM-trees representing normalized models of circular-arc graphs,
 \item in Section \ref{sec:Lin-Szwarcfiter} we provide an alternative proof of Theorem~\ref{thm:intro:Lin-Szwarcfiter} by Lin and Szwarcfiter,
 \item in Section~\ref{sec:clique-type-problem} we provide a polynomial-time algorithm testing the type of a clique in a circular-arc graph,
 \item in Section~\ref{sec:fpt-Helly-Cliques-problem} we prove Theorem \ref{thm:intro:helly-cliques-par}.\eqref{item:intro:helly-cliques-fpt}.
 \item in Section~\ref{sec:helly-cliques-kernel} we prove Theorem \ref{thm:intro:helly-cliques-poly-kernel}.
 \item in Section~\ref{sec:npc} an alternative proof of \NP-completeness of \hcp\ is contained, which shows Theorem \ref{thm:intro:helly-cliques-par}.\ref{item:intro:helly-cliques-eth} as a corollary.
 \item Section~\ref{sec:appendix} (Appendix) contains the construction of the reduct of a circluar arc graph and the proof of the Trapezoid Lemma. 
 \end{itemize}

\section{Preliminaries}
\label{sec:preliminaries}
\subsection{Graphs}
All graphs considered in this paper are \emph{simple}, that is, they have no multiedges and no loops.
Let $G = (V,E)$ be a graph with the vertex set~$V$ and the edge set~$E$.
The \emph{neighbourhood} of a vertex $u \in V$ in $G$ is the set $N(u) = \{v \in V(G) \mid uv \in E(G)\}$,
the \emph{neighbourhood} of a set $U \subseteq V$ in $G$ is the set $N(U)=\bigcup_{u\in U} N(u)\setminus U$.
We also define $N[u]=N(u)\cup \{u\}$ for a vertex $u\in V$ and we write $N[U]=\bigcup_{u\in U} N[u]$ for a set $U\subseteq V$. 
A vertex $u$ of $G$ is \emph{universal} in $G$ if $N[u] = V(G)$.
Two distinct vertices $u,v$ of $G$ are \emph{twins} in $G$ if $N[u] = N[v]$.
Note that twins are adjacent in $G$.

\subsection{Circular-arc graphs}
Let $G$ be a circular-arc graph with no universal vertices and no twins and let $\psi$ be a circular-arc model of $G$ on the circle $\circle$.
Let $(v,u)$ be a pair of distinct vertices in $G$.
We say that:
\begin{itemize}
 \item $\psi(v)$ and $\psi(u)$ are \emph{disjoint} if $\psi(v) \cap \psi(u) = \emptyset$,
 \item $\psi(v)$ \emph{contains} $\psi(u)$ if $\psi(v) \supsetneq \psi(u)$,
 \item $\psi(v)$ \emph{is contained} in $\psi(u)$ if $\psi(v) \subsetneq \psi(u)$,
 \item $\psi(v)$ and $\psi(u)$ \emph{cover the circle} if $\psi(v) \cup \psi(u) = \circle$,
 \item $\psi(v)$ and $\psi(u)$ \emph{overlap}, otherwise.
\end{itemize}
See Figure~\ref{fig:mutual_arc_position} for an illustration.

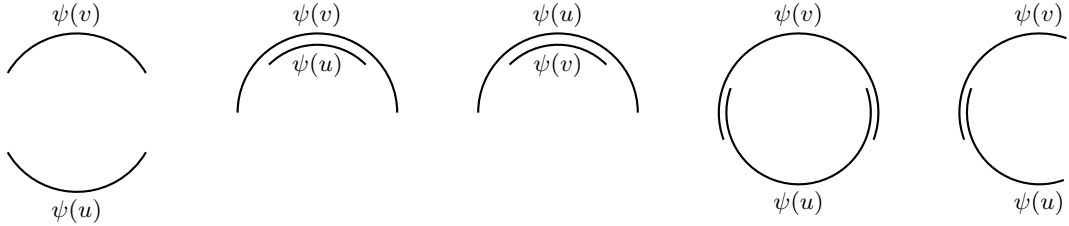
\begin{figure}[htp!]
\begin{tikzpicture}[scale=0.5]
\coordinate (center) at (0,0) {};
\coordinate (v) at ($(center)+(90:2cm)$) {};
\coordinate (u) at ($(center)+(270:2cm)$) {};

\coordinate (lv) at ($(center)+(90:2.6cm)$) {};
\coordinate (lu) at ($(center)+(270:2.6cm)$) {};

\tikzstyle{every node}=[inner sep=1pt]
\begin{scriptsize}
\node at (lv) {$\psi(v)$};
\node at (lu) {$\psi(u)$};
\end{scriptsize}

\draw[thick] ([shift=(30:2.1cm)]0,0) arc (30:150:2.1cm);
\draw[thick] ([shift=(210:2.1cm)]0,0) arc (210:330:2.1cm);

\draw[thick, white] (-3.0,-3)--(-3.0,-2.8);
\draw[thick, white] (3.0,3)--(3.0,2.8);

\end{tikzpicture}
\begin{tikzpicture}[scale=0.5]
\coordinate (center) at (0,0) {};
\coordinate (v) at ($(center)+(90:2cm)$) {};
\coordinate (u) at ($(center)+(270:2cm)$) {};

\coordinate (lv) at ($(center)+(90:2.6cm)$) {};
\coordinate (lu) at ($(center)+(90:1.25cm)$) {};

\tikzstyle{every node}=[inner sep=1pt]
\begin{scriptsize}
\node at (lv) {$\psi(v)$};
\node at (lu) {$\psi(u)$};
\end{scriptsize}

\draw[thick] ([shift=(0:2.1cm)]0,0) arc (0:180:2.1cm);
\draw[thick] ([shift=(45:1.8cm)]0,0) arc (45:135:1.8cm);

\draw[thick, white] (-3.0,-3)--(-3.0,-2.8);
\draw[thick, white] (3.0,3)--(3.0,2.8);

\end{tikzpicture}
\begin{tikzpicture}[scale=0.5]
\coordinate (center) at (0,0) {};
\coordinate (v) at ($(center)+(90:2cm)$) {};
\coordinate (u) at ($(center)+(270:2cm)$) {};

\coordinate (lu) at ($(center)+(90:2.6cm)$) {};
\coordinate (lv) at ($(center)+(90:1.25cm)$) {};

\tikzstyle{every node}=[inner sep=1pt]
\begin{scriptsize}
\node at (lv) {$\psi(v)$};
\node at (lu) {$\psi(u)$};
\end{scriptsize}

\draw[thick] ([shift=(0:2.1cm)]0,0) arc (0:180:2.1cm);
\draw[thick] ([shift=(45:1.8cm)]0,0) arc (45:135:1.8cm);

\draw[thick, white] (-3.0,-3)--(-3.0,-2.8);
\draw[thick, white] (3.0,3)--(3.0,2.8);

\end{tikzpicture}
\begin{tikzpicture}[scale=0.5]
\coordinate (center) at (0,0) {};
\coordinate (v) at ($(center)+(90:2cm)$) {};
\coordinate (u) at ($(center)+(270:2cm)$) {};

\coordinate (lv) at ($(center)+(90:2.6cm)$) {};
\coordinate (lu) at ($(center)+(270:2.4cm)$) {};

\tikzstyle{every node}=[inner sep=1pt]
\begin{scriptsize}
\node at (lv) {$\psi(v)$};
\node at (lu) {$\psi(u)$};
\end{scriptsize}

\draw[thick] ([shift=(-20:2.1cm)]0,0) arc (-20:200:2.1cm);
\draw[thick] ([shift=(160:1.9cm)]0,0) arc (160:380:1.9cm);

\draw[thick, white] (-3.0,-3)--(-3.0,-2.8);
\draw[thick, white] (3.0,3)--(3.0,2.8);

\end{tikzpicture}
\begin{tikzpicture}[scale=0.5]
\coordinate (center) at (0,0) {};
\coordinate (v) at ($(center)+(90:2cm)$) {};
\coordinate (u) at ($(center)+(270:2cm)$) {};

\coordinate (lv) at ($(center)+(90:2.6cm)$) {};
\coordinate (lu) at ($(center)+(270:2.4cm)$) {};

\tikzstyle{every node}=[inner sep=1pt]
\begin{scriptsize}
\node at (lv) {$\psi(v)$};
\node at (lu) {$\psi(u)$};
\end{scriptsize}
\draw[thick] ([shift=(70:2.1cm)]0,0) arc (70:200:2.1cm);
\draw[thick] ([shift=(160:1.9cm)]0,0) arc (160:290:1.9cm);

\draw[thick, white] (-3.0,-3)--(-3.0,-2.8);
\draw[thick, white] (3.0,3)--(3.0,2.8);
\end{tikzpicture}

\caption{\label{fig:mutual_arc_position} From left to right:
$\psi(v)$ and $\psi(u)$ are disjoint, $\psi(v)$ contains $\psi(u)$, $\psi(v)$ is contained in $\psi(u)$,
$\psi(v)$ and $\psi(u)$ cover the circle, and $\psi(v)$ and $\psi(u)$ overlap.
}

\end{figure}

In so-called \emph{normalized models}, introduced by Hsu in \cite{Hsu95}, the relative relation between the arcs reflects the neighbourhood relation between the vertices of $G$, as follows.
\begin{definition}
\label{def:normalized_model}
Let $G$ be a circular-arc graph with no universal vertices and no twins.
A circular-arc model $\psi$ of $G$ is \emph{normalized} if for every pair $(v,u)$ of distinct vertices of $G$ the following conditions are satisfied:
\begin{enumerate}
 \item \label{item:u_v_disjoint} if $uv \notin E(G)$, then $\psi(v)$ and $\psi(u)$ are disjoint,
 \item \label{item:v_contains_u} if $N[u] \subsetneq N[v]$, then $\psi(v)$ contains $\psi(u)$,
 \item \label{item:u_contains_v} if $N[v] \subsetneq N[u]$, then $\psi(v)$ is contained in $\psi(u)$,
 \item \label{item:u_v_cover_the_circle} if $uv \in E(G)$, 
 $N[v] \cup N[u] = V(G)$, $N[w] \subsetneq N[v]$ for every $w \in N[v]\setminus N[u]$, and $N[w] \subsetneq N[u]$ for every $w \in N[u]\setminus N[v]$, then $\psi(v)$ and $\psi(u)$ cover the circle, 
 \item \label{item:u_v_overlap} If none of the above condition holds, then $\psi(v)$ and $\psi(u)$ overlap.
\end{enumerate}
Furthermore, for a pair $(v,u)$ of vertices from $G$, we say that $v$ \emph{contains} $u$, $v$ \emph{is contained} in $u$, $v$ and $u$ \emph{cover the circle}, and 
$v$ and $u$ \emph{overlap} if the pair $(v,u)$ satisfies the assumption of statement \eqref{item:v_contains_u},
\eqref{item:u_contains_v}, \eqref{item:u_v_cover_the_circle}, and \eqref{item:u_v_overlap}, respectively.
\end{definition}
Hsu~\cite{Hsu95} showed that every circular-arc model $\psi$ of $G$ can be turned into a normalized model by 
possibly extending some arcs of $\psi$. 
We associate with every vertex $v \in V$ two sets, $\leftside(v)$ and $\rightside(v)$, where:
$$
\begin{array}{rcl}
\leftside(v) &=& \{ u \in V(G): \quad
\text{$v$ contains $u$} \quad \text{or} \quad \text{$v$ and $u$ cover the circle}\}, 
\\
\rightside(v) &=& \{ u \in V(G):
\quad \text{$v$ and $u$ are disjoint}  \quad \text{or} \quad \text{$v$ is contained in $u$}\}.
\end{array}
$$
Following \cite{Hsu95,Krawczyk20} we define the \emph{overlap graph} $G_{ov} = (V,{\sim})$ of $G$ which joins with an edge $u \sim v$ every two vertices $u,v$ of $G$ that overlap in $G$.
Given a normalized model $\psi$ of $G$ we obtain an intersection model $\phi$ of the overlap graph $G_{ov}$ by 
transforming every arc $\psi(v)$ for $v \in V$ to the oriented chord $\phi(v)$ such that $\psi(v)$ and $\phi(v)$ have the same endpoints and 
the oriented chord $\phi(v)$ has the arc $\psi(v)$ on its left side.
Note that the intersection model~$\phi$ of $G_{ov}$ satisfies the conditions: 
\begin{itemize}
\item $u \in \leftside(v)$ if and only if $\phi(u)$ lies on the left side of $\phi(v)$,
\item $u \in \rightside(v)$ if and only if $\phi(u)$ lies on the right side of $\phi(v)$.
\end{itemize}
See Figure~\ref{fig:uv_pairs} for an illustration.
The intersection models of $G_{ov}$ in the set of oriented chords satisfying the above two properties are called the \emph{conformal models} of $G$.
Clearly, there is one-to-one correspondence between the normalized models of $G$ and the conformal models of $G$. 
See~\cite{Hsu95,Krawczyk20} for more details.
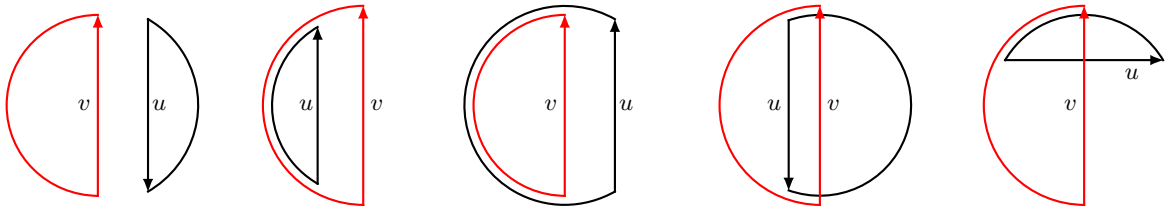
\begin{figure}[htp!]
\begin{tikzpicture}[scale=0.6,>=latex,shorten >=-0.4pt,shorten <=-0.4pt]
\coordinate (label) at (0,-3) {};

\coordinate (lv) at (-0.3,0) {};
\coordinate (lu) at (1.35,0) {};

\tikzstyle{every node}=[inner sep=1pt]
\begin{scriptsize}
\node at (lv) {$v$};
\node at (lu) {$u$};
\end{scriptsize}

\draw[thick] ([shift=(-60:2.2cm)]0,0) arc (-60:60:2.2cm);
\draw[thick,<-] ([shift=(-60:2.2cm)]0,0) -- ([shift=(60:2.2cm)]0,0);

\draw[red, thick] ([shift=(90:2.0cm)]0,0) arc (90:270:2.0cm);
\draw[red, thick,<-] ([shift=(90:2.0cm)]0,0) -- ([shift=(270:2.0cm)]0,0);

\draw[white] (-2.5,-2.5)--(-2.5,-2.3);
\draw[white] (2.5,2.5)--(2.5,2.3);
\end{tikzpicture}
\hspace{0.2cm}
\begin{tikzpicture}[scale=0.6,>=latex,shorten >=-0.4pt,shorten <=-0.4pt]
\coordinate (label) at (0,-3) {};

\coordinate (lv) at (0.3,0) {};
\coordinate (lu) at (-1.25,0) {};

\tikzstyle{every node}=[inner sep=1pt]
\begin{scriptsize}
\node at (lv) {$v$};
\node at (lu) {$u$};
\end{scriptsize}

\draw[thick] ([shift=(120:2.0cm)]0,0) arc (120:240:2.0cm);
\draw[thick,->] ([shift=(240:2.0cm)]0,0) -- ([shift=(120:2.0cm)]0,0);

\draw[red, thick] ([shift=(90:2.2cm)]0,0) arc (90:270:2.2cm);
\draw[red, thick,<-] ([shift=(90:2.2cm)]0,0) -- ([shift=(270:2.2cm)]0,0);

\draw[white] (-2.5,-2.5)--(-2.5,-2.3);
\draw[white] (1.1,2.5)--(1.1,2.3);
\end{tikzpicture}
\hspace{0.2cm}
\begin{tikzpicture}[scale=0.6,>=latex,shorten >=-0.4pt,shorten <=-0.4pt]
\coordinate (label) at (0,-3) {};

\coordinate (lv) at (-0.3,0) {};
\coordinate (lu) at (1.35,0) {};

\tikzstyle{every node}=[inner sep=1pt]
\begin{scriptsize}
\node at (lv) {$v$};
\node at (lu) {$u$};
\end{scriptsize}

\draw[thick] ([shift=(60:2.2cm)]0,0) arc (60:300:2.2cm);
\draw[thick,->] ([shift=(300:2.2cm)]0,0) -- ([shift=(60:2.2cm)]0,0);

\draw[red, thick] ([shift=(90:2.0cm)]0,0) arc (90:270:2.0cm);
\draw[red, thick,<-] ([shift=(90:2.0cm)]0,0) -- ([shift=(270:2.0cm)]0,0);

\draw[white] (-2.5,-2.5)--(-2.5,-2.3);
\draw[white] (2.3,2.5)--(2.3,2.3);
\end{tikzpicture}
\hspace{0.2cm}
\begin{tikzpicture}[scale=0.6,>=latex,shorten >=-0.4pt,shorten <=-0.4pt]
\coordinate (label) at (0,-3) {};

\coordinate (lv) at (0.3,0) {};
\coordinate (lu) at (-1.0,0) {};

\tikzstyle{every node}=[inner sep=1pt]
\begin{scriptsize}
\node at (lv) {$v$};
\node at (lu) {$u$};
\end{scriptsize}

\draw[thick] ([shift=(250:2.0cm)]0,0) arc (250:470:2.0cm);
\draw[thick,<-] ([shift=(250:2.0cm)]0,0) -- ([shift=(470:2.0cm)]0,0);

\draw[red, thick] ([shift=(90:2.2cm)]0,0) arc (90:270:2.2cm);
\draw[red, thick,<-] ([shift=(90:2.2cm)]0,0) -- ([shift=(270:2.2cm)]0,0);

\draw[white] (-2.5,-2.5)--(-2.5,-2.3);
\draw[white] (2.5,2.5)--(2.5,2.3);
\end{tikzpicture}
\hspace{0.2cm}
\begin{tikzpicture}[scale=0.6,>=latex,shorten >=-0.4pt,shorten <=-0.4pt]
\coordinate (label) at (0,-3) {};

\coordinate (lv) at (-0.3,0) {};
\coordinate (lu) at (1.05,0.7) {};

\tikzstyle{every node}=[inner sep=1pt]
\begin{scriptsize}
\node at (lv) {$v$};
\node at (lu) {$u$};
\end{scriptsize}

\draw[thick] ([shift=(30:2.0cm)]0,0) arc (30:150:2.0cm);
\draw[thick,<-] ([shift=(30:2.0cm)]0,0) -- ([shift=(150:2.0cm)]0,0);

\draw[red, thick] ([shift=(90:2.2cm)]0,0) arc (90:270:2.2cm);
\draw[red, thick,<-] ([shift=(90:2.2cm)]0,0) -- ([shift=(270:2.2cm)]0,0);

\draw[white] (-2.5,-2.5)--(-2.5,-2.3);
\draw[white] (2.5,2.5)--(2.5,2.3);
\end{tikzpicture}
\caption{\label{fig:uv_pairs} Mutual relation between the arcs $\psi(v)$ and $\psi(u)$ and between the corresponding oriented chords 
$\phi(v)$ and $\phi(u)$ for the cases: $v$ and $u$ are disjoint, $v$ contains $u$, $v$ is contained in $u$, $v$ and $u$ cover the circle, and $v$ and $u$ overlap, respectively.}
\end{figure}

Usually, the notion of normalized models is extended on all circular-arc graphs $G$ 
by requiring $\psi(v)=\circle$ for any universal vertex $v$ in $G$ and $\psi(u)=\psi(v)$ for any
pair $(u,v)$ of twin vertices in $G$.

\subsection{Representing the oriented chords and points of the circle}
Let $\circle$ be a fixed circle. 
Let $S$ be a collection of oriented chords of the circle $\circle$
and $P$ be a collection of points on the circle $\circle$ such that all the points from $P$ 
and the endpoints of the chords from $S$ are distinct.
We represent the set $B= S \cup P$ by a circular word $cw(B)$ on the letter set $S^* \cup P$, where 
$S^* = \{s^0,s^1:s \in S\}$, obtained as follows.
We start at some point on the circle with the empty word $cw(B)$. 
Then we follow the circle in the clockwise order and we append to $cw(B)$:
\begin{itemize}
\item the letter $p$ whenever we pass a point $p$ from $P$,
\item the letter $s^0$ whenever we pass the tail of a chord $s$ from $S$,
\item the letter $s^1$ whenever we pass the head of a chord $s$ from $S$.
\end{itemize}
Usually we use the same symbol to denote the set $B$ and its word representation $cw(B)$. 

For a collection $B$ of oriented chords and points on the circle
we define the \emph{reflection $B^R$} of $B$ obtained by mirroring every object from $B$ over some fixed line $L$ and then by reorienting every chord in $B^R$ 
(hence the arc to the left of an oriented chord $s^0s^1$ in $B^R$ 
is obtained by mirroring the arc to the left of the oriented chord $s^0s^1$ in $B$) -- see Figure~\ref{fig:reflection} for an illustration.
Note that the word $cw(B^R)$ representing $B^R$ is obtained from $cw(B)$ by reversing the order of the letters in $cw(B)$ and by exchanging the superscripts $0$ to $1$ and $1$ to $0$;
any word $w^R$ (not only circluar) obtained this way from a word $w$ is called the \emph{reflection} of $w$.

\begin{figure}[htp!]
\centering
\begin{tikzpicture}[yscale=0.80,xscale=0.80,>=latex,shorten >=-0.4pt,shorten <=-0.4pt]

\coordinate (center) at (0,0) {};

\coordinate (lv10) at ($(center)+(270:2.45cm)$) {};
\coordinate (lv11) at ($(center)+(90:2.45cm)$) {};

\coordinate (lv20) at ($(center)+(0:2.45cm)$) {};
\coordinate (lv21) at ($(center)+(180:2.45cm)$) {};

\coordinate (lv30) at ($(center)+(165:2.45cm)$) {};
\coordinate (lv31) at ($(center)+(30:2.45cm)$) {};

\coordinate (lv40) at ($(center)+(60:2.45cm)$) {};
\coordinate (lv41) at ($(center)+(-60:2.45cm)$) {};

\coordinate (p) at ($(center)+(125:2cm)$) {};
\coordinate (q) at ($(center)+(225:2cm)$) {};

\coordinate (lp) at ($(center)+(125:2.45cm)$) {};
\coordinate (lq) at ($(center)+(225:2.45cm)$) {};

\tikzstyle{every node}=[inner sep=1pt]
\node at (lv10) {$s^0_1$};
\node at (lv11) {$s^1_1$};
\node at (lv20) {$s^0_2$};
\node at (lv21) {$s^1_2$};
\node at (lv30) {$s^0_3$};
\node at (lv31) {$s^1_3$};
\node at (lv40) {$s^0_4$};
\node at (lv41) {$s^1_4$};
\node at (lp) {$p$};
\node at (lq) {$q$};

\draw (0,0) circle (2cm);

\draw[very thick] ([shift=(180:1.92cm)]0,0) arc (180:360:1.92cm);
\draw[very thick,->] ([shift=(360:1.92cm)]0,0) -- ([shift=(180:1.92cm)]0,0);

\draw[very thick] ([shift=(30:1.92cm)]0,0) arc (30:165:1.92cm);
\draw[very thick,->] ([shift=(165:1.92cm)]0,0) -- ([shift=(30:1.92cm)]0,0);

\draw[red,very thick] ([shift=(90:2.08cm)]0,0) arc (90:270:2.08cm);
\draw[red,very thick,->] ([shift=(270:2.08cm)]0,0) -- ([shift=(90:2.08cm)]0,0) ;

\draw[red,very thick] ([shift=(-60:2.08cm)]0,0) arc (-60:60:2.08cm);
\draw[red,very thick,->] ([shift=(60:2.08cm)]0,0) -- ([shift=(-60:2.08cm)]0,0);

\tikzstyle{every node}=[circle,minimum size=7pt,inner sep=0pt,draw,fill]
\node at (p) {};
\node at (q) {};

\draw[white] (-2.5,-2.8)--(-2.5,-2);
\draw[white] (2.5,2.8)--(2.5,2);
\end{tikzpicture} 
\begin{tikzpicture}[yscale=0.80,xscale=0.80,>=latex,shorten >=-0.4pt,shorten <=-0.4pt]
\draw[black,dashed] (0,-2.2)--(0,2.8);
\coordinate (L) at (0,-2.55) {};
\tikzstyle{every node}=[inner sep=1pt]
\node at (L) {$L$};
\draw[white] (-0.5,-2.8)--(-0.5,-2);
\draw[white] (0.5,2.8)--(0.5,2);
\end{tikzpicture} 
\begin{tikzpicture}[yscale=0.80,xscale=-0.80,>=latex,shorten >=-0.4pt,shorten <=-0.4pt]
\coordinate (center) at (0,0) {};

\coordinate (lv10) at ($(center)+(270:2.45cm)$) {};
\coordinate (lv11) at ($(center)+(90:2.45cm)$) {};

\coordinate (lv20) at ($(center)+(0:2.45cm)$) {};
\coordinate (lv21) at ($(center)+(180:2.45cm)$) {};

\coordinate (lv30) at ($(center)+(165:2.45cm)$) {};
\coordinate (lv31) at ($(center)+(30:2.45cm)$) {};

\coordinate (lv40) at ($(center)+(60:2.45cm)$) {};
\coordinate (lv41) at ($(center)+(-60:2.45cm)$) {};

\coordinate (p) at ($(center)+(125:2cm)$) {};
\coordinate (q) at ($(center)+(225:2cm)$) {};

\coordinate (lp) at ($(center)+(125:2.45cm)$) {};
\coordinate (lq) at ($(center)+(225:2.45cm)$) {};

\tikzstyle{every node}=[inner sep=1pt]
\node at (lv10) {$s^1_1$};
\node at (lv11) {$s^0_1$};
\node at (lv20) {$s^1_2$};
\node at (lv21) {$s^0_2$};
\node at (lv30) {$s^1_3$};
\node at (lv31) {$s^0_3$};
\node at (lv40) {$s^1_4$};
\node at (lv41) {$s^0_4$};
\node at (lp) {$p$};
\node at (lq) {$q$};

\draw (0,0) circle (2cm);

\draw[very thick] ([shift=(180:1.92cm)]0,0) arc (180:360:1.92cm);
\draw[very thick,<-] ([shift=(360:1.92cm)]0,0) -- ([shift=(180:1.92cm)]0,0);

\draw[very thick] ([shift=(30:1.92cm)]0,0) arc (30:165:1.92cm);
\draw[very thick,<-] ([shift=(165:1.92cm)]0,0) -- ([shift=(30:1.92cm)]0,0);

\draw[red,very thick] ([shift=(90:2.08cm)]0,0) arc (90:270:2.08cm);
\draw[red,very thick,<-] ([shift=(270:2.08cm)]0,0) -- ([shift=(90:2.08cm)]0,0) ;

\draw[red,very thick] ([shift=(-60:2.08cm)]0,0) arc (-60:60:2.08cm);
\draw[red,very thick,<-] ([shift=(60:2.08cm)]0,0) -- ([shift=(-60:2.08cm)]0,0);

\tikzstyle{every node}=[circle,minimum size=7pt,inner sep=0pt,draw,fill]
\node at (p) {};
\node at (q) {};

\draw[white] (-2.5,-2.8)--(-2.5,-2);
\draw[white] (2.5,2.8)--(2.5,2);
\end{tikzpicture} 

\caption{\label{fig:reflection} A collection $B$ of oriented chords and points and its reflection $B^R$.
The set $B$ is represented by the circular word $cw(B) = s^0_4s^1_3s^0_2s^1_4s^0_1qs^1_2s^0_3ps^1_1$ and the set $B^R$ is represented by $cw^R(B) = s^0_1ps^1_3s^0_2qs^1_1s^0_4s^1_2s^0_3s^1_4$.}
\end{figure}

We represent any conformal model $\phi$ of $G$ by its word representation (note that $\phi$ is a circular word over $V^*(G)$); we treat two conformal models of $G$ as \emph{equivalent} if their word representations are equal.
Given a conformal model $\phi$ of $G$, possibly extended by some points from the set $P$ (usually witnessing the Helly property of some cliques of $G$), given a subset $U$ of $V(G)$ and a subset $Q$ of $P$, 
by $\phi|(U^* \cup Q)$ we denote the circular order $\phi$ restricted to the letters from the set $U^* \cup Q$, which represents the restriction of the model $\phi$ to the chords representing the vertices from $U$ and to the points from $Q$.
Finally, we use operator $\equiv$ to stress that the equality holds between two circular words.

Before we describe the structure representing all non-equivalent conformal models of $G$, we need some preparation.
In particular, we need to introduce the basic concepts related to the modular decomposition of the graph~$G_{ov}$ and its connection to the transitive orientations of some induced subgraphs of~$G_{ov}$.

\subsection{Modular decomposition of $G_{ov}$}
\label{subsec:modular_decomposition}

Let $G=(V,E)$ be a circular-arc graph and let $G_{ov} = (V,{\sim})$ be the overlap graph of~$G$. 
By $(V, {\parallel})$ we denote the complement of $(V,{\sim})$.
For a set $U \subseteq V$, by $(U,{\sim})$, and $(U,{\parallel})$ we denote the subgraphs of $(V,{\sim})$ and $(V,{\parallel})$ 
induced by the set $U$, respectively.
For two sets $U_1,U_2 \subset V$, we write $U_1 \sim U_2$ ($U_1 \parallel U_2$)
if $u_1 \sim u_2$ ($u_1 \parallel u_2$, respectively) for every $u_1 \in U_1$ and $u_2 \in U_2$.
For a set $U \subseteq V$, by $U^*$ we denote the set $\{u^0,u^1: u \in U\}$.

The results presented below are due to Gallai \cite{Gal67}, 
except of Theorem~\ref{thm:permutation_models_transitive_orientations} by Dushnik and Miller~\cite{DM41}. 

A non-empty set $M\subseteq V$ is a \emph{module} in $G_{ov}$ 
if $x\sim M$ or $x \parallel M$ for every $x\in V\smallsetminus M$.
The singleton sets and the whole $V$ are the \emph{trivial modules} of $G_{ov}$. 
A module $M$ of $G_{ov}$ is \emph{strong} if $M\subseteq N$, $N\subseteq M$, or $M\cap N=\emptyset$ for every other module $N$ in $G_{ov}$.
In particular, two strong modules of $G_{ov}$ are either nested or disjoint.
The \emph{modular decomposition} of~$G_{ov}$, denoted by~$\strongModules(G_{ov})$, 
consists of all strong modules of~$G_{ov}$.
Since every two modules from $\strongModules(G_{ov})$ 
are either disjoint or nested, the modules from $\strongModules(G_{ov})$ can be organized in a tree in which $V$ is the root, the maximal proper subsets from of $M \in \strongModules(G_{ov})$ from $\strongModules(G_{ov})$ are the children of $M$ (the children of $M$ form a partition of~$M$), 
and the singleton modules $\{x\}$ for $x\in V$ are the leaves.

A module $M \in \strongModules(G_{ov})$ is \emph{serial} if $M_1\sim M_2$ for every two children $M_1$ and $M_2$ of~$M$, 
\emph{parallel} if $M_1 \parallel M_2$ for every two children $M_1$ and $M_2$ of~$M$, and \emph{prime} otherwise.
Equivalently, $M \in \strongModules$ is serial if $(M,{\parallel})$ is disconnected, 
parallel if $(M,{\sim})$ is disconnected, and prime if both $(M,{\sim})$ and $(M,{\parallel})$ are connected.

Suppose $U$ is a module in $G_{ov}$.
A graph $(U,{\sim})$ is a \emph{permutation subgraph} of $G_{ov}$ if there
exists a pair $(\tau^0,\tau^1)$, where $\tau^{0}$ and $\tau^{1}$ are permutations of $U$, such
that for every $x,y \in U$:
$$x \sim y \iff  
\begin{array}{l}
\text{$x$ occurs before $y$ in both $\tau^{0}$ and $\tau^{1}$, or} \\
\text{$y$ occurs before $x$ in both $\tau^{0}$ and $\tau^{1}$.}
\end{array}
$$
If this is the case, $(\tau^{0},\tau^{1})$ is called a \emph{permutation model} of $(U,{\sim})$.
See Figure \ref{fig:non_oriented_permutation_graph} for an example of a permutation graph and its permutation model.

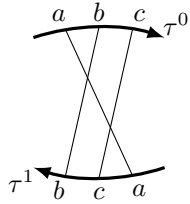
\begin{figure}[htp!]
\centering
\begin{tikzpicture}[xscale=0.85,yscale=0.5,>=latex,shorten >=-0.4pt,shorten <=-0.4pt]
\coordinate (center) at (0,0) {};
\coordinate (ua) at ($(center)+(105:2cm)$) {};
\coordinate (ub) at ($(center)+(90:2cm)$) {};
\coordinate (uc) at ($(center)+(75:2cm)$) {};

\coordinate (lua) at ($(center)+(105:2.4cm)$) {};
\coordinate (lub) at ($(center)+(90:2.4cm)$) {};
\coordinate (luc) at ($(center)+(75:2.4cm)$) {};

\coordinate (bc) at ($(center)+(270:2cm)$) {};
\coordinate (bb) at ($(center)+(255:2cm)$) {};
\coordinate (ba) at ($(center)+(285:2cm)$) {};

\coordinate (lbc) at ($(center)+(270:2.4cm)$) {};
\coordinate (lbb) at ($(center)+(255:2.4cm)$) {};
\coordinate (lba) at ($(center)+(285:2.4cm)$) {};

\coordinate (tau0) at ($(center)+(60:2.4cm)$) {};
\coordinate (tau1) at ($(center)+(240:2.4cm)$) {};

\tikzstyle{every node}=[inner sep=1pt]
\begin{footnotesize}

\node at (lua) {$a$};
\node at (lba) {$a$};

\node at (lub) {$b$};
\node at (lbb) {$b$};

\node at (luc) {$c$};
\node at (lbc) {$c$};

\node at (tau0) {$\tau^0$};
\node at (tau1) {$\tau^1$};

\end{footnotesize}

\draw[very thick,<-] ([shift=(60:2cm)]0,0) arc (60:120:2cm);
\draw[very thick,<-] ([shift=(240:2cm)]0,0) arc (240:300:2cm);

\draw (ua)--(ba);
\draw (ub)--(bb);
\draw (uc)--(bc);

\end{tikzpicture}

\caption{\label{fig:non_oriented_permutation_graph} 
A permutation model $(\tau^0,\tau^1) = (abc,acb)$ of the permutation graph $(\{a,b,c\}, \{a \sim b, a \sim c)\}$.
}
\end{figure}

Let $U$ be a module in $G_{ov}$ such that $(U,{\sim})$ is a permutation subgraph in $G_{ov}$.
The structure of the permutation models of $(U,{\sim})$ can be described by means of transitive 
orientations of the graphs $(U,{\sim})$ and $(U,{\parallel})$.
An \emph{orientation} $(U,{\prec})$ of $(U,{\sim})$ arises by orienting every edge $u \sim v$ in $(U,{\sim})$ either
from $v$ to $u$, denoted $u \prec v$, or from $u$ to $v$, denoted $v \prec u$.
An orientation $(U,{\prec})$ of $(U,{\sim})$ is \emph{transitive} if ${\prec}$ is a transitive relation on $U$.
We define transitive orientations of $(U,{\parallel})$ similarly.
Since $(U,{\sim})$ is a permutation subgraph in $G_{ov}$, 
$(U,{\sim})$ has a permutation model $\tau = (\tau^{0}, \tau^{1})$.
Note that $\tau = (\tau^0,\tau^1)$ yields transitive orientations ${\prec^{\tau}}$ and ${<^{\tau}}$ of the graphs $(U,{\sim})$ and $(U,{\parallel})$, respectively,
given by:
\begin{equation}
\label{eq:models_of_permutation_graphs_1}
\begin{array}{lll}
x \prec^{\tau} y & \iff & x \text{ occurs before } y \text{ in } \tau^0 \text{ and } x \sim y, \\
x <^{\tau} y & \iff &  x \text{ occurs before } y \text{ in } \tau^0 \text{ and } x \parallel y. \\
\end{array}
\end{equation}
On the other hand, given transitive orientations ${\prec^{\tau}}$ and ${<^{\tau}}$ of $(U,{\sim})$ and $(U,{\parallel})$, respectively, 
one can construct a permutation model $(\tau^{0}, \tau^{1})$ of $(U,{\sim})$ such that
\begin{equation}
\label{eq:models_of_permutation_graphs_2}
\begin{array}{lll}
x \text{ occurs before } y \text{ in } \tau^{0} \iff x \prec^{\tau} y \text{ or } x <^{\tau} y,\\
x \text{ occurs before } y \text{ in } \tau^{1} \iff x \prec^{\tau} y \text{ or } y <^{\tau} x.\\
\end{array}
\end{equation}
\begin{theorem}[\cite{DM41}]
\label{thm:permutation_models_transitive_orientations}
Let $(U,{\sim})$ be a permutation subgraph of $G_{ov}$.
There is a one-to-one correspondence between permutation models $\tau = (\tau^{0}, \tau^{1})$ of $(U,{\sim})$
and the pairs $({<^{\tau}}, {\prec^{\tau}})$ of transitive orientations of $(U,{\parallel})$ and $(U,{\sim})$, respectively, given by equations \eqref{eq:models_of_permutation_graphs_1} and~\eqref{eq:models_of_permutation_graphs_2}.
\end{theorem}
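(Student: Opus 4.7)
The plan is to verify each direction of the stated correspondence and then show the two maps are mutually inverse.

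Forward direction: given a permutation model $\tau = (\tau^{0}, \tau^{1})$ of $(U,{\sim})$, I would verify that $\prec^{\tau}$ and $<^{\tau}$ defined by \eqref{eq:models_of_permutation_graphs_1} are transitive orientations of $(U,{\sim})$ and $(U,{\parallel})$, respectively. The defining property of a permutation model says $x \sim y$ iff $x,y$ occur in the same relative order in $\tau^{0}$ and $\tau^{1}$, and consequently (for $x \neq y$) that $x \parallel y$ iff they occur in opposite relative orders. Thus for every $\sim$-edge exactly one of $x \prec^{\tau} y$, $y \prec^{\tau} x$ holds, so $\prec^{\tau}$ is indeed an orientation. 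For transitivity, if $x \prec^{\tau} y \prec^{\tau} z$, then $x$ precedes $y$ and $y$ precedes $z$ in both $\tau^{0}$ and $\tau^{1}$, hence $x$ precedes $z$ in both, which by the defining property forces $x \sim z$, and therefore $x \prec^{\tau} z$. The analogous argument handles $<^{\tau}$, with ``opposite relative orders'' playing the role of ``same relative orders''.

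Backward direction: given transitive orientations $\prec$ of $(U,{\sim})$ and $<$ of $(U,{\parallel})$, I would consider the two binary relations on $U$ defined by $R := {\prec} \cup {<}$ and $S := {\prec} \cup {<}^{-1}$, and claim each is a linear order on $U$; the induced permutations then define $\tau^{0}$ and $\tau^{1}$, and \eqref{eq:models_of_permutation_graphs_2} holds by construction. Totality of $R$ follows because every pair of distinct elements is either $\sim$- or $\parallel$-related, and antisymmetry follows since $\prec$ and $<$ are antisymmetric on disjoint sets of pairs. Transitivity splits into four cases based on the type of each of the two consecutive comparisons. The pure cases ($\prec\prec$ and $<<$) follow immediately from transitivity of $\prec$ and $<$. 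The mixed cases form the main obstacle: for $x \prec y$ and $y < z$, one must rule out $z\,R\,x$, and this is where both orientations are used in tandem. The subcase $z \prec x$ combined with $x \prec y$ would give $z \prec y$ by transitivity of $\prec$, contradicting $y \parallel z$; the subcase $z < x$ combined with $y < z$ would give $y < x$ by transitivity of $<$, contradicting $x \sim y$. The symmetric case $x < y \prec z$ is handled identically. Transitivity of $S$ follows from the same analysis applied to the transitive orientation $<^{-1}$ of $(U,{\parallel})$. Once $\tau^{0}, \tau^{1}$ are known to be well-defined permutations, I would check that $(\tau^{0}, \tau^{1})$ represents $(U,{\sim})$: if $x \sim y$ with $x \prec y$, then $x$ precedes $y$ in both orders; if $x \parallel y$ with $x < y$, then $x$ precedes $y$ in $\tau^{0}$ but $y$ precedes $x$ in $\tau^{1}$.

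To conclude the bijection, I would verify that the two maps are mutual inverses. Starting from $\tau$, applying \eqref{eq:models_of_permutation_graphs_1} and then \eqref{eq:models_of_permutation_graphs_2} reconstructs $\tau^{0}$ as the union ${\prec^{\tau}} \cup {<^{\tau}}$, and this union agrees with the original $\tau^{0}$ on $\sim$-pairs by the definition of $\prec^{\tau}$ and on $\parallel$-pairs by the definition of $<^{\tau}$; the same holds for $\tau^{1}$ via $S$. Conversely, starting from a pair $(<,\prec)$ and building $\tau^{0}, \tau^{1}$ through $R, S$, reading off $\prec^{\tau}$ and $<^{\tau}$ from the resulting permutations returns $\prec$ and $<$ by construction, since for each edge in $(U,{\sim})$ or $(U,{\parallel})$ the orientation recorded by the extraction formula is precisely the one placed into $R$ or $S$. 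Apart from the mixed-case transitivity of $R$, where both transitive orientations are genuinely needed at once, the entire argument is a direct unpacking of the defining property of permutation models.
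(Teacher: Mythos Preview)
Your proof is correct and follows the standard Dushnik--Miller argument. Note, however, that the paper does not actually prove this theorem: it is stated as a cited result from~\cite{DM41} and used as background in Subsection~\ref{subsec:modular_decomposition}, so there is no ``paper's own proof'' to compare against. What you have written is essentially the classical proof---extract the two orientations from the two permutations in one direction, and in the other direction show that ${\prec}\cup{<}$ and ${\prec}\cup{<^{-1}}$ are linear orders, with the mixed-case transitivity being the only place requiring care---and it is complete as stated.
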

The transitive orientations of $(U,{\sim})$ and $(U,\parallel)$ can in turn be described by means of the modular decomposition trees
of $\strongModules(U,{\sim})$ and $\strongModules(U,\parallel)$.
Note that $\strongModules(U,{\sim}) = \strongModules(U,{\parallel})$ and since $U$ is a module in $(U,{\sim})$, we have
$\strongModules(U,{\sim}) = \{M \in \strongModules(G_{ov}): M \subseteq U\} \cup \{U\}$.
The relations between the transitive orientations of $(U,{\sim})$ and the modular decomposition tree of $(U,{\sim})$
were described by Gallai~\cite{Gal67}.
\begin{theorem}[\cite{Gal67}]
\label{thm:transitive_orientation_of_edges_between_children}
If $M_1,M_2 \in \strongModules(U,{\sim})$ are such that $M_1 \sim M_2$, then every
transitive orientation $(U,{\prec})$ satisfies either $M_1 \prec M_2$ or $M_2 \prec M_1$
(that is, either $x \prec y$ for every $x \in M_1$ and $y \in M_2$ or $y \prec x$ for every $x \in M_1$ and $y \in M_2$).
\end{theorem}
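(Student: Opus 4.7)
The plan is to prove this classical theorem of Gallai by combining his $\Gamma$-relation on edges of comparability graphs with induction on the modular decomposition tree of $(U,{\sim})$. Recall that for edges $ab$ and $bc$ sharing endpoint $b$, one writes $ab \,\Gamma\, bc$ whenever $a \not\sim c$; transitivity of $\prec$ then forces any such pair to be oriented consistently at $b$ (both into $b$ or both out), since otherwise the chain $a \prec b \prec c$ (or its reverse) would demand the missing edge $ac$. Consequently, all edges in a single equivalence class of $\Gamma^*$ are uniformly oriented, and it suffices to argue that every pair of cross-edges between $M_1$ and $M_2$ lies in one such class.

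The first step is a reduction to the case where $M_1$ and $M_2$ are children of a common parent $M$ in the modular decomposition tree. Let $M$ be their least common ancestor and let $N_1, N_2$ be the children of $M$ with $M_i \subseteq N_i$. Then $M_1 \sim M_2$, combined with the module property of $N_1$ and $N_2$, forces $N_1 \sim N_2$; moreover, once uniformity of the orientation between $N_1$ and $N_2$ is established, it descends immediately to the subsets $M_1 \subseteq N_1$ and $M_2 \subseteq N_2$. So it suffices to handle the case where $M_1, M_2$ are children of $M$, and the relation $M_1 \sim M_2$ then forces $M$ to be either serial or prime (never parallel).

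In this reduced setting, given two cross-edges $xy$ and $x'y'$ with $x, x' \in M_1$ and $y, y' \in M_2$, I would chain them through the intermediate cross-edges $xy'$ and $x'y$: the pair $(xy, xy')$ is directly $\Gamma$-related whenever $y \parallel y'$ in $M_2$, and $(xy', x'y')$ is directly $\Gamma$-related whenever $x \parallel x'$ in $M_1$. If witnesses exist on both sides, uniformity follows at once. The main obstacle is the case in which both $M_1$ and $M_2$ are internally cliques, so no internal $\parallel$-pair exists to chain through; here direct propagation fails, and the whole burden rests on \emph{strongness}.

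To handle this obstacle, I would use strongness together with the structure at $M$ and induction on the depth of the modular decomposition tree. On the one hand, if no external vertex could serve as a $\Gamma$-witness connecting cross-edges, one argues that $M_1$ (or $M_2$) would be absorbed into a larger module overlapping it, contradicting its strongness. On the other hand, in the serial case the quotient of $M$ is a clique and any transitive orientation of a clique is a linear order on its children, yielding uniformity between $M_1$ and $M_2$ directly; in the prime case, Gallai's result that every prime comparability graph admits a unique transitive orientation up to reversal pins the orientation at the quotient level, and the induction hypothesis lifts this uniformity from the quotient back to individual vertices.
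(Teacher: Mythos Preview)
The paper does not prove this theorem; it is quoted from Gallai~\cite{Gal67} as a classical fact and used as a black box in Subsection~\ref{subsec:modular_decomposition}. There is no proof in the paper to compare your attempt against.

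On the substance of your sketch: the $\Gamma$-relation setup and the reduction to sibling strong modules $M_1,M_2$ under a common parent $M$ are correct and standard. The genuine gap is in your handling of the case where both $M_1$ and $M_2$ are cliques in $(U,{\sim})$. Your serial-case argument (``the quotient of $M$ is a clique and any transitive orientation of a clique is a linear order on its children'') presupposes that the given transitive orientation $(U,{\prec})$ descends to a well-defined orientation of the quotient --- but that is precisely the uniformity statement you are proving, so the step is circular. The prime-case appeal to Gallai's uniqueness theorem has the same defect: applying it at the quotient level again needs the quotient orientation to be well-defined first. What actually rescues the serial case is a structural observation you did not use: a strong child of a serial module can never itself be serial (otherwise one could form a module properly overlapping it), so every non-singleton child $M_i$ of a serial $M$ has $(M_i,{\parallel})$ connected, and your $\Gamma$-chain through internal $\parallel$-pairs then goes through after all. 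The prime case genuinely requires the deeper fact that all edges of $(M,{\sim_M})$ for prime $M$ lie in a single $\Gamma^*$-class, but this must be proved at the level of vertices, not assumed at the level of quotients.
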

Let $M$ be a strong module in $\strongModules(U,{\sim})$.
The edge relation $\sim$ in $(M,{\sim})$ restricted to the edges joining the vertices from two different children of $M$ is denoted by $\sim_M$.
If $x\sim y$, then $x\sim_My$ for exactly one module $M\in\strongModules(U,{\sim})$.
Hence, the set $\{{\sim_M} : M \in \strongModules(U,{\sim})\}$ forms a partition of the edge set ${\sim}$ of the graph $(U,{\sim})$.
\begin{theorem}[\cite{Gal67}]
\label{thm:transitive_orientations_versus_transitive_orientations_of_strong_modules}
There is a one-to-one correspondence between the set of transitive orientations $(U,{\prec})$ of $(U,{\sim})$
and the families $$\big{\{}(M,{\prec_M}): M \in \strongModules(U,{\prec}) \text{ and } \prec_M \text{ is a transitive orientation of $(M, \sim_M)$}\big{\}}$$
given by $x \prec y \iff x \prec_M y$, where $M$ is the module in $\strongModules(U,{\sim})$ such that $x \sim_M y$.
\end{theorem}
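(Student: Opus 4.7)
The plan is to exhibit the bijection by constructing both maps explicitly and verifying that each preserves transitivity. The underlying structural fact, already noted just before the theorem, is that the edge relation $\sim$ partitions as the disjoint union $\bigsqcup_M {\sim_M}$ over $M \in \strongModules(U, \sim)$: for every edge $x \sim y$ there is a unique smallest strong module $M$ in which $x$ and $y$ belong to distinct children.

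\emph{Forward direction.} Given a transitive orientation $\prec$ of $(U, \sim)$, I would define $\prec_M$ as the restriction of $\prec$ to $\sim_M$. This is an orientation of $(M, \sim_M)$, and transitivity is inherited: if $x \prec_M y$, $y \prec_M z$, and $x \sim_M z$, then $x \prec z$ follows from ambient transitivity of $\prec$, which together with $x \sim_M z$ means $x \prec_M z$.

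\emph{Backward direction.} Given a family $\{\prec_M\}$ of transitive orientations, set $\prec := \bigsqcup_M \prec_M$. The main content is verifying transitivity. Suppose $x \prec y$ and $y \prec z$, with $x \sim_M y$ and $y \sim_N z$. The strong modules containing $y$ form a nested chain, so either $M \subseteq N$ or $N \subsetneq M$; I handle the former (the latter is analogous). If $M \subsetneq N$, then $M$ lies in a single child $N_1$ of $N$, so $x$ and $y$ are in $N_1$ while $z$ lies in a different child $N_2$ with $N_1 \sim N_2$; applying Theorem~\ref{thm:transitive_orientation_of_edges_between_children} to $\prec_N$, the orientation $y \prec_N z$ forces $N_1 \prec N_2$, hence $x \prec_N z$, i.e., $x \prec z$. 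If $M = N$, let $M_1, M_2, M_3$ be the children of $M$ containing $x, y, z$ respectively. The subcase $M_1 = M_3$ is impossible: combined with $x \prec_M y$ and $y \prec_M z$ it would orient the edges between $M_1$ and $M_2$ both ways, contradicting Theorem~\ref{thm:transitive_orientation_of_edges_between_children}. If instead $M_1 \neq M_3$, transitivity of $\prec_M$ on $(M, \sim_M)$ simultaneously rules out $M_1 \parallel M_3$ (otherwise $x \prec_M y \prec_M z$ would demand orienting the non-edge $(x,z)$) and yields $x \prec_M z$, so $x \prec z$.

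The two maps are mutual inverses immediately from the partition $\sim = \bigsqcup_M \sim_M$, under which restriction and reassembly cancel. The main obstacle is the $M = N$ subcase above: the whole argument pivots on the realization that transitivity of each local $\prec_M$ on $(M, \sim_M)$ is already strong enough to preclude the ``bad three-children configurations'' among $M_1, M_2, M_3$ that would otherwise break global transitivity of the assembled orientation.
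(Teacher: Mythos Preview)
The paper does not prove this theorem; it is quoted from Gallai~\cite{Gal67} as background and used without proof. So there is no ``paper's own proof'' to compare against, and your write-up stands on its own.

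Your argument is essentially correct, but there is one genuine gap in the forward direction. You check that ``if $x \prec_M y$, $y \prec_M z$, \emph{and} $x \sim_M z$, then $x \prec_M z$''. That is not what transitivity of $\prec_M$ requires: since the paper defines a transitive orientation as one where $\prec$ is a transitive relation, you must show that $x \prec_M y$ and $y \prec_M z$ alone force $x \prec_M z$, which in particular means you must first deduce $x \sim_M z$. The missing step is exactly the ``$M_1 = M_3$ is impossible'' argument you give later in the backward direction: from global transitivity you get $x \prec z$, hence $x \sim z$; if $x$ and $z$ lay in the same child $M_1$ of $M$, then Theorem~\ref{thm:transitive_orientation_of_edges_between_children} applied to the global $\prec$ would force both $M_1 \prec M_2$ and $M_2 \prec M_1$, a contradiction. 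So $x$ and $z$ lie in different children and $x \sim_M z$ holds. Once you insert this, the forward direction is complete.

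A minor remark on the backward direction: when you invoke Theorem~\ref{thm:transitive_orientation_of_edges_between_children} for $\prec_N$ on $(N,\sim_N)$, you are using it with the children $N_1,N_2$ as modules of $(N,\sim_N)$ rather than of $(U,\sim)$. This is fine (each child is a module of $(N,\sim_N)$ and the same module argument applies), but it is worth a word to note you are applying the underlying principle rather than the theorem exactly as stated.
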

The above theorem asserts that for every $M \in \strongModules(U,{\sim})$ every transitive orientation of $(U,{\sim})$ restricted to the edges of the graph $(M, {\sim_M})$ 
induces a transitive orientation of $(M, {\sim_M})$, 
and that every transitive orientation of $(U,{\sim})$ can be obtained by independent transitive orientation of $(M,{\sim_M})$ for $M \in \strongModules(U,{\sim})$.
Gallai~\cite{Gal67} characterized all possible transitive orientation of $(M,{\sim_M})$, where $M$ is a strong module of $\strongModules(U,{\sim})$.
\begin{theorem}[\cite{Gal67}]
Let $M$ be a prime module in $\strongModules(U,{\sim})$. 
Then, $(M,{\sim_M})$ has two transitive orientations, one being the reverse of the other.
\end{theorem}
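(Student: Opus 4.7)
The plan is to reduce the claim to the classical Gallai fact that a prime graph admits exactly two transitive orientations, each the reverse of the other, by passing to a quotient, and then to prove that fact via Gallai's forcing relation on edges.

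Let $M_1,\ldots,M_k$ be the children of $M$ in $\strongModules(U,{\sim})$. By Theorem~\ref{thm:transitive_orientation_of_edges_between_children}, every transitive orientation of $(M,{\sim_M})$ orients the $\sim_M$-edges between each pair $(M_i,M_j)$ uniformly, so it descends to a well-defined orientation of the quotient graph $Q$ on vertex set $\{M_1,\ldots,M_k\}$ with $M_iM_j\in E(Q)$ iff $M_i\sim M_j$, and this orientation is transitive; conversely, any transitive orientation of $Q$ lifts back to one of $(M,{\sim_M})$. Since $M$ is a prime strong module, $Q$ is a prime graph in the classical sense that its only modules are singletons and the full vertex set: any non-trivial module $S$ of $Q$ would lift to $S'=\bigcup_{M_i\in S}M_i$, a module of $(U,{\sim})$ properly sandwiched between a child of $M$ and $M$ itself, contradicting the strong-module hierarchy of $\strongModules(U,{\sim})$. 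Hence it suffices to prove: a prime graph $Q$ admits exactly two transitive orientations, one the reverse of the other.

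The main tool is Gallai's \emph{forcing relation} $\Gamma$ on $E(Q)$: for $ab,ac\in E(Q)$ sharing an endpoint $a$, set $ab\,\Gamma\,ac$ iff $bc\notin E(Q)$. The key observation is that $\Gamma$-related edges force each other in any transitive orientation: if $ab$ is oriented $a\to b$ and $bc\notin E(Q)$, then $ac$ must be oriented $a\to c$, since the alternative $c\to a$ together with $a\to b$ would yield $c\to b$ by transitivity, implying $bc\in E(Q)$, a contradiction. Taking the transitive closure $\Gamma^*$ of $\Gamma$ gives an equivalence relation on $E(Q)$ whose classes are oriented coherently in any transitive orientation, so orienting a single edge of a class determines every edge of that class.

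The main obstacle, and the technical heart of the proof, is to establish that in a prime graph $Q$ the relation $\Gamma^*$ has a single equivalence class. I would argue by contradiction: suppose $C$ is a $\Gamma^*$-class that is not all of $E(Q)$, let $W\subseteq V(Q)$ be the set of vertices incident to some edge of $C$, and extract a non-trivial module of $Q$ from the pair $(W,C)$; using that both $Q$ and its complement are connected (which holds because $M$ is prime), a careful case analysis shows that either $W$ itself, or a subset determined by the adjacencies between $W$ and $V(Q)\setminus W$, forms a non-trivial module of $Q$, contradicting primality. Once this single-$\Gamma^*$-class property is in hand, the orientation of any one edge determines the entire orientation, yielding at most two transitive orientations; existence of at least one transitive orientation of $(M,{\sim_M})$ holds under the hypotheses of the paper (for instance, via Theorem~\ref{thm:permutation_models_transitive_orientations} applied to a permutation subgraph $(U,{\sim})$), and reversing a transitive orientation yields another one, so exactly two transitive orientations exist, one being the reverse of the other.
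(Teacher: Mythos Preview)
The paper does not prove this theorem; it simply cites it as a classical result of Gallai~\cite{Gal67} and uses it as a black box. There is therefore no ``paper's own proof'' to compare against. Your outline follows precisely the standard Gallai route (pass to the quotient on children, observe that the quotient is prime, then use the forcing relation $\Gamma$ on edges to pin down the orientation up to global reversal), so in spirit you are reproducing the cited result rather than offering an alternative.

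Two remarks on the writeup itself. First, your justification that the quotient $Q$ is prime is slightly circular as phrased: the non-existence of a module strictly between a child of $M$ and $M$ is not a consequence of the ``strong-module hierarchy'' in general---it is exactly the content of $M$ being prime. It would be cleaner to say this directly (the paper's definition of prime is that both $(M,{\sim})$ and $(M,{\parallel})$ are connected; the equivalence with ``the quotient has no non-trivial modules'' is the standard fact you are invoking). Second, the step you label ``a careful case analysis shows \ldots'' is the entire technical substance of Gallai's theorem: proving that in a prime comparability graph the forcing relation $\Gamma^*$ has a single class is genuinely non-trivial and is what~\cite{Gal67} actually establishes. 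As written you have identified the right lemma but not supplied any of its proof, so the argument is a correct \emph{plan} rather than a proof.
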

For a parallel module $M$ the graph $(M,{\sim_M})$ has exactly one (empty) transitive orientation.
For a parallel module $M$ the transitive orientations of $(M,{\sim_M})$ correspond to the total orderings of its children, that
is, every transitive orientation of $(M,{\sim_M})$ has the form $M_{i_1} \prec \ldots \prec M_{i_k}$, where
$i_1 \ldots i_k$ is a permutation of $[k]$ and $M_1, \ldots,M_k$ are the children of $M$ in $\strongModules(U,{\sim})$.

\section{Data structure representing the conformal models of $G$}
\label{sec:PQM-trees}
Let $G = (V,E)$ be a circular-arc graph with no twins and no universal vertices and
let $G_{ov}=(V,{\sim})$ be the overlap graph of~$G$.
In this section we describe a data structure, called a \emph{PQM-tree}~$\pqmtree$ of $G$, 
which represents all conformal models of $G$.
First, we describe the basic properties and the role played by 
each component of $\pqmtree$, then we show how to read out the components of~$\pqmtree$
from some fixed conformal model of~$G$.
For the combinatorial definition of $\pqmtree$ and the formal proofs of its properties we refer the reader to~\cite{Krawczyk20}.

\subsection{Components of $\pqmtree$ and their properties} 
The PQM-tree $\pqmtree$ representing the conformal models of $G$ consists of:
\begin{itemize}
\item the set $\camodules = \{S_1,\ldots,S_t\}$ of \emph{CA-modules} of $G$.
The set $\camodules$ is defined such that:
\begin{itemize}
\item $\camodules$ forms a partition of~$V$,
\item for every $i \in [t]$ the graph $(S_i,{\sim})$ is a permutation subgraph of~$G_{ov}$ contained in some connected component of $G_{ov}$.
\end{itemize}
Additionally, for every $i \in [t]$ a vertex $s_i \in S_i$, called the \emph{representant} of $S_i$, is fixed.
\item the set $\metachords = \{\mathbb{S}_1,\ldots,\mathbb{S}_t\}$ of \emph{metachords} of $G$. 
Each metachord $\mathbb{S}_i$ is a triple $(S^0_i,S^1_i,{<_{S_i}})$, where:
\begin{itemize}
\item the set $\{S_i^0 , S_i^1\}$ forms a partition of $S^*_i$ such that $s^0_i \in S^0_i$, $s^1_i \in S^1_i$, and $|\{u_0,u_1\} \cap S^0_i| = |\{u^0 , u^1\} \cap S^1_i| = 1$ for every $u \in S$.
\item ${<_{S_i}}$ is a fixed transitive orientation of $(S_i,{\parallel})$.
\end{itemize}
The elements of the set $\slots = \{S^0_1,S^1_1,\ldots,S^0_t,S^1_t\}$ are called \emph{the slots of $G$}.  
\item the set $\Pi$ of \emph{admissible orders of the slots} of $G$.
Every member of $\Pi$ is a circular word over $\slots$ (contains every slot of $G$ exactly once).
\end{itemize}
The CA-modules and the slots of~$G$ are defined such that:
\begin{description}
 \item[\namedlabel{prop:slots}{(P1)}] For every conformal model $\phi$ of $G$ and every $i \in [t]$:
 \begin{itemize}
  \item for every $j \in \{0,1\}$ the elements of the set $S^j_i$ form a contiguous subword in $\phi$, denoted by $\phi|S^j_i$,
  \item the pair $(\phi|S^0_i,\phi|S^1_i)$ is an oriented permutation model of $(S_i,{\sim})$.
 \end{itemize}
\end{description}
The above property allows us to treat any conformal model $\phi$ of $G$ as a collection of $t$ oriented permutation models $(\phi|S^0_i,\phi|S^1_i)$
of $(S_i,{\sim})$ spanned between the slots $S^0_i,S^1_i$ -- see Figure~\ref{fig:example}. 

The set of oriented permutation models that might be spanned between the slots    
$S^0_i$ and $S^1_i$ is represented by the metachord $\SSS_i = (S^0_1,S^1_i,{<_{S_i}})$, as follows.
\begin{definition}
\label{def:admissible-models-for-metachord}
An oriented permutation model $\tau = (\tau^0,\tau^1)$ of $(S_i,{\sim})$ is \emph{admissible} by the metachord $\mathbb{S}_i$ if:
\begin{itemize}
 \item $\tau^j$ is a permutation of $S^j_i$ for $j \in \{0,1\}$,
 \item we have ${<^{\tau}} = {<_{S_i}}$ (${\prec^{\tau}}$ is not restricted), where ${<^{\tau}}$ and ${\prec^{\tau}}$ are transitive orientations 
 of $(S_i,{\parallel})$ and $(S_i,{\sim})$, respectively, corresponding to $\tau$.
\end{itemize}
\end{definition}
Note that the admissible models for $\SSS_i$, regardless of the conformal model $\phi$, keep the left/right relation between every two non-intersecting chords from $(\phi|S^0_i,\phi|S^1_i)$ in the same relation (${<_{S_i}}$ is defined so as it asserts the consistency with the sets $\leftside(\cdot)$ and $\rightside(\cdot)$).
The metachords $\SSS_1,\ldots,\SSS_t$ are defined such that:
\begin{description}
 \item[\namedlabel{prop:metachords}{(P2)}] For every conformal model $\phi$ of $G_{ov}$ and every $i \in [t]$ the oriented permutation model $(\phi|S^0_i, \phi|S^1_i)$ of $(S_i,{\sim})$ is admissible by the metachord $\mathbb{S}_i$.
\end{description}

Property~\ref{prop:slots} allows us to denote by $\pi(\phi)$ the \emph{circular order of the slots in $\phi$}, that is, the word obtained from $\phi$ by substituting every contiguous subword $\phi|S^j_i$ of $\phi$ by the letter $S^j_i$, for $i \in [t]$ and $j \in \{0,1\}$ -- see Figure~\ref{fig:example} for an illustration.
Clearly, $\pi(\phi)$ is a circular word over $\slots$.
The last component $\Pi$ maintains the set of circular orders of the slots in the conformal models of~$G$.
In particular, $\Pi$ is defined such that:
\begin{description}
 \item [\namedlabel{prop:circular-orders-of-the-slots}{(P3)}] For every conformal model $\phi$ of $G$ the circular word $\pi(\phi)$ is a member of $\Pi$.
\end{description}
Eventually, all the components of $\pqmtree$ are defined such that the following holds:
\begin{description}
\item [\namedlabel{prop:completeness}{(P4)}] We can generate any conformal model of $G$ by:  
\begin{itemize}
 \item picking a circular order of the slots $\pi$ from the set $\Pi$, 
 \item replacing the slots $S^0_i$ and $S^1_i$ in $\pi$ by words $\tau^0_i$ and $\tau^1_i$, 
 where $(\tau^0_i,\tau^1_i)$ is an oriented permutation model of $(S_i,{\sim})$ admissible by the metachord $\mathbb{S}_i$.
\end{itemize}
\end{description}
The set $\Pi$ might have exponentially many members, however, it has a linear-size representation in $\pqmtree$ by means of a \emph{PQ-tree} $\pqmtree^{PQ}$ of $G$ --
see Subsection~\ref{subsec:PQ-tree}. 

Figure~\ref{fig:example} shows a conformal model $\phi$ of some circular-arc graph $G$ (to the left).
The circular-arc graph $G$ is defined on the vertex set $V = \{a,b,c,d,e,f,g,h,i\}$;
the edges of~$G$ can be read out from $\phi$.
The data structure $\mathcal{T}$ representing the conformal models of $G$ consists of:
\begin{itemize}
 \item four CA-modules of $G$, $S_1 = \{a,b,c,d\}$, $S_2 = \{f\}$, $S_3 = \{g\}$, $S_4 = \{h,i\}$, represented by vertices
 $a$, $f$, $g$, and $h$, respectively.
 \item the slots $S^0_1 = \{a^0,b^0,c^1,d^0,e^1\}$, $S^1_1 = \{a^1,b^1,c^0,d^1,e^0\}$,
 $S^0_2 = \{f^0\}$, $S^1_2 = \{f^1\}$, $S^0_3 = \{g^0\}$, $S^1_3 = \{g^1\}$, $S^0_4 = \{h^0,i^1\}$, $S^1_4 = \{h^1,i^0\}$,
 \item the transitive orientations ${<_{S_1}},{<_{S_2}},{<_{S_3}},{<_{S_4}}$, where ${<_{S_1}}$ consists of the pairs $\{b <_{S_1} a, c <_{S_1} a, d <_{S_1} a, e <_{S_1} a, c <_{S_1} b, e <_{S_1} d\}$,
 ${<_{S_4}}$ consists of the pair $\{i <_{S_4} h\}$, ${<_{S_2}}$ and ${<_{S_3}}$ are empty,
 \item the set $\Pi = \{\pi,\pi^R\}$ of circular order of the slots, where 
 $\pi = S^1_1S^1_2S^1_4S^0_1S^0_3S^0_4S^1_3S^0_2$ and $\pi^R$ is the reflection of $\pi$.
\end{itemize}
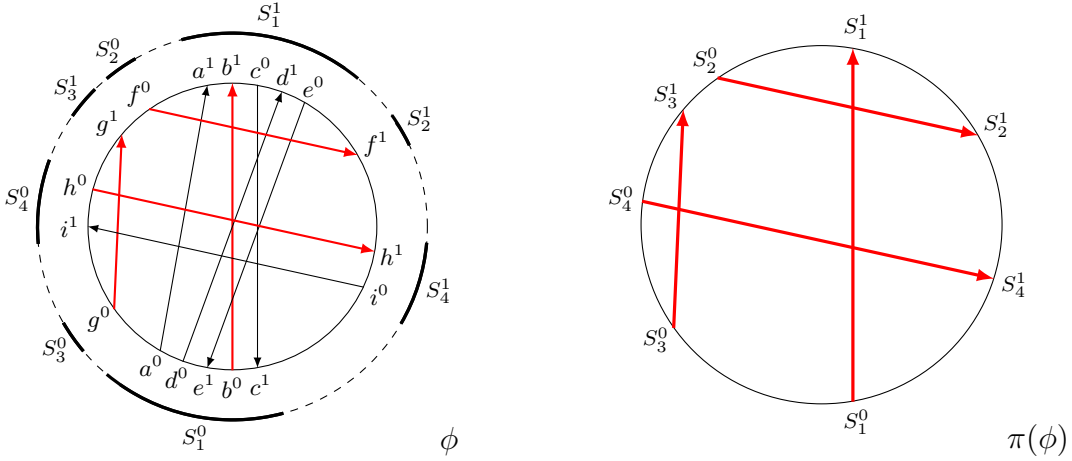
\begin{figure}[ht]
\begin{tikzpicture}[scale=0.95,>=latex,shorten >=-0.2pt,shorten <=-0.2pt]
\coordinate (center) at (0.0,0.0) {};

\coordinate (A0) at ($(center)+(240:2cm)$) {};
\coordinate (A1) at ($(center)+(100:2cm)$) {};

\coordinate (B0) at ($(center)+(270:2cm)$) {};
\coordinate (B1) at ($(center)+(90:2cm)$) {};

\coordinate (C0) at ($(center)+(80:2cm)$) {};
\coordinate (C1) at ($(center)+(280:2cm)$) {};

\coordinate (D0) at ($(center)+(250:2cm)$) {};
\coordinate (D1) at ($(center)+(70:2cm)$) {};

\coordinate (E0) at ($(center)+(60:2cm)$) {};
\coordinate (E1) at ($(center)+(260:2cm)$) {};

\coordinate (F0) at ($(center)+(125:2cm)$) {};
\coordinate (F1) at ($(center)+(30:2cm)$) {};

\coordinate (G0) at ($(center)+(215:2cm)$) {};
\coordinate (G1) at ($(center)+(140:2cm)$) {};

\coordinate (H0) at ($(center)+(165:2cm)$) {};
\coordinate (H1) at ($(center)+(350:2cm)$) {};

\coordinate (I0) at ($(center)+(335:2cm)$) {};
\coordinate (I1) at ($(center)+(180:2cm)$) {};

\coordinate (lA0) at ($(center)+(240:2.25cm)$) {};
\coordinate (lA1) at ($(center)+(100:2.25cm)$) {};

\coordinate (lB0) at ($(center)+(270:2.25cm)$) {};
\coordinate (lB1) at ($(center)+(90:2.25cm)$) {};

\coordinate (lC0) at ($(center)+(80:2.25cm)$) {};
\coordinate (lC1) at ($(center)+(280:2.25cm)$) {};

\coordinate (lD0) at ($(center)+(250:2.25cm)$) {};
\coordinate (lD1) at ($(center)+(70:2.25cm)$) {};

\coordinate (lE0) at ($(center)+(60:2.25cm)$) {};
\coordinate (lE1) at ($(center)+(260:2.25cm)$) {};

\coordinate (lF0) at ($(center)+(125:2.25cm)$) {};
\coordinate (lF1) at ($(center)+(30:2.3cm)$) {};

\coordinate (lG0) at ($(center)+(215:2.25cm)$) {};
\coordinate (lG1) at ($(center)+(140:2.25cm)$) {};

\coordinate (lH0) at ($(center)+(165:2.25cm)$) {};
\coordinate (lH1) at ($(center)+(350:2.25cm)$) {};

\coordinate (lI0) at ($(center)+(335:2.25cm)$) {};
\coordinate (lI1) at ($(center)+(180:2.25cm)$) {};

\coordinate (LA0) at ($(center)+(240:3cm)$) {};
\coordinate (LA1) at ($(center)+(100:3cm)$) {};

\coordinate (LB0) at ($(center)+(270:3cm)$) {};
\coordinate (LB1) at ($(center)+(90:3cm)$) {};

\coordinate (LC0) at ($(center)+(80:3cm)$) {};
\coordinate (LC1) at ($(center)+(280:3cm)$) {};

\coordinate (LD0) at ($(center)+(250:3cm)$) {};
\coordinate (LD1) at ($(center)+(70:3cm)$) {};

\coordinate (LE0) at ($(center)+(60:3cm)$) {};
\coordinate (LE1) at ($(center)+(260:3cm)$) {};

\coordinate (LF0) at ($(center)+(125:3cm)$) {};
\coordinate (LF1) at ($(center)+(30:3cm)$) {};

\coordinate (LG0) at ($(center)+(215:3cm)$) {};
\coordinate (LG1) at ($(center)+(140:3cm)$) {};

\coordinate (LH0) at ($(center)+(165:3cm)$) {};
\coordinate (LH1) at ($(center)+(350:3cm)$) {};

\coordinate (LI0) at ($(center)+(335:3cm)$) {};
\coordinate (LI1) at ($(center)+(180:3cm)$) {};

\coordinate (LHI0) at ($(center)+(172.5:3cm)$) {};
\coordinate (LHI1) at ($(center)+(342.5:3cm)$) {};

\coordinate (L) at (3,-3) {};

\draw (center) circle (2cm);
\draw[dashed] (center) circle (2.7cm);

\draw[->] (A0)--(A1);
\draw[->,red, thick] (B0)--(B1);
\draw[->] (C0)--(C1);
\draw[->] (D0)--(D1);
\draw[->] (E0)--(E1);

\draw[->,thick, red] (F0)--(F1);

\draw[->,thick, red] (G0)--(G1);

\draw[->, thick, red] (H0)--(H1);

\draw[->] (I0)--(I1);

\draw[very thick, -] ([shift=(105:2.7cm)]0,0) arc (105:50:2.7cm);
\draw[very thick, -] ([shift=(230:2.7cm)]0,0) arc (230:285:2.7cm);

\draw[very thick, -] ([shift=(120:2.7cm)]0,0) arc (120:130:2.7cm);
\draw[very thick, -] ([shift=(25:2.7cm)]0,0) arc (25:35:2.7cm);

\draw[very thick, -] ([shift=(135:2.7cm)]0,0) arc (135:145:2.7cm);
\draw[very thick, -] ([shift=(210:2.7cm)]0,0) arc (210:220:2.7cm);

\draw[very thick, -] ([shift=(160:2.7cm)]0,0) arc (160:185:2.7cm);
\draw[very thick, -] ([shift=(330:2.7cm)]0,0) arc (330:355:2.7cm);

\draw[very thick, -] ([shift=(105:2.7cm)]0,0) arc (105:55:2.7cm);
\draw[very thick, -] ([shift=(235:2.7cm)]0,0) arc (235:285:2.7cm);

\draw[very thick, -] ([shift=(120:2.7cm)]0,0) arc (120:130:2.7cm);
\draw[very thick, -] ([shift=(25:2.7cm)]0,0) arc (25:35:2cm);

\draw[very thick, -] ([shift=(135:2.7cm)]0,0) arc (135:145:2.7cm);
\draw[very thick, -] ([shift=(210:2.7cm)]0,0) arc (210:220:2.7cm);

\draw[very thick, -] ([shift=(160:2.7cm)]0,0) arc (160:185:2.7cm);
\draw[very thick, -] ([shift=(330:2.7cm)]0,0) arc (330:355:2.7cm);

\tikzstyle{every node}=[inner sep=1pt]
\begin{scriptsize}
\node at (lA0) {$a^0$};
\node at (lA1) {$a^1$};
\node at (lB0) {$b^0$};
\node at (lB1) {$b^1$};
\node at (lC0) {$c^0$};
\node at (lC1) {$c^1$};
\node at (lD0) {$d^0$};
\node at (lD1) {$d^1$};
\node at (lE0) {$e^0$};
\node at (lE1) {$e^1$};
\node at (lF0) {$f^0$};
\node at (lF1) {$f^1$};
\node at (lG0) {$g^0$};
\node at (lG1) {$g^1$};
\node at (lH0) {$h^0$};
\node at (lH1) {$h^1$};
\node at (lI0) {$i^0$};
\node at (lI1) {$i^1$};
\end{scriptsize}

\tikzstyle{every node}=[inner sep=1pt]
\begin{tiny}
\node at (LC0) {$S^1_1$};
\node at (LE1) {$S^0_1$};
\node at (LF0) {$S^0_2$};
\node at (LF1) {$S^1_2$};
\node at (LG0) {$S^0_3$};
\node at (LG1) {$S^1_3$};
\node at (LHI0) {$S^0_4$};
\node at (LHI1) {$S^1_4$};
\end{tiny}
\node at (L) {$\phi$};

\draw[white] (-4,-3)--(-4,-2.5);
\draw[white] (4,3)--(4,2.5);
\end{tikzpicture}
\begin{tikzpicture}[scale=0.95,>=latex,shorten >=-0.2pt,shorten <=-0.2pt]
\coordinate (center) at (0.0,0.0) {};

\coordinate (C0) at ($(center)+(80:2.5cm)$) {};
\coordinate (C1) at ($(center)+(280:2.5cm)$) {};

\coordinate (F0) at ($(center)+(125:2.5cm)$) {};
\coordinate (F1) at ($(center)+(30:2.5cm)$) {};

\coordinate (G0) at ($(center)+(215:2.5cm)$) {};
\coordinate (G1) at ($(center)+(140:2.5cm)$) {};

\coordinate (HI0) at ($(center)+(172.5:2.5cm)$) {};
\coordinate (HI1) at ($(center)+(342.5:2.5cm)$) {};

\coordinate (LC0) at ($(center)+(80:2.8cm)$) {};
\coordinate (LC1) at ($(center)+(280:2.8cm)$) {};

\coordinate (LF0) at ($(center)+(125:2.8cm)$) {};
\coordinate (LF1) at ($(center)+(30:2.8cm)$) {};

\coordinate (LG0) at ($(center)+(215:2.8cm)$) {};
\coordinate (LG1) at ($(center)+(140:2.8cm)$) {};

\coordinate (LHI0) at ($(center)+(172.5:2.8cm)$) {};
\coordinate (LHI1) at ($(center)+(342.5:2.8cm)$) {};

\coordinate (L) at (3,-3) {};

\draw (center) circle (2.5cm);

\draw[->,red,very thick] (C1)--(C0);

\draw[->,red,very thick] (F0)--(F1);

\draw[->,red,very thick] (G0)--(G1);

\draw[->,red,very thick] (HI0)--(HI1);

\tikzstyle{every node}=[inner sep=1pt]
\begin{tiny}
\node at (LC0) {$S^1_1$};
\node at (LC1) {$S^0_1$};
\node at (LF0) {$S^0_2$};
\node at (LF1) {$S^1_2$};
\node at (LG0) {$S^0_3$};
\node at (LG1) {$S^1_3$};
\node at (LHI0) {$S^0_4$};
\node at (LHI1) {$S^1_4$};
\end{tiny}
\node at (L) {$\pi(\phi)$};

\draw[white] (-4,-3)--(-4,-2.5);
\draw[white] (4,3)--(4,2.5);
\end{tikzpicture}

\caption{\label{fig:example} A conformal model $\phi$ of $G$ and the circular order of the slots $\pi(\phi)$ in $\phi$.}
\end{figure}

\subsection{CA-modules, slots, metachords and their admissible models}
In this subsection we first describe the structure of the admissible models for a single metachord of~$G$.
Then, we show how we can read out CA-modules, slots, and metachords of $G$ from 
some conformal model of $G$.

\subsubsection{The structure of the admissible models}
Let $S$ be a CA-module of $G$ and let 
$\SSS = (S^0,S^1,{<_{S}})$ be the metachord associated with~$S$.
Due to Theorem~\ref{thm:permutation_models_transitive_orientations}, the models $\tau$ admissible by $\mathbb{S}$ 
are in the correspondence with the transitive orientations ${\prec_{\tau}}$ of the permutation graph $(S,{\sim})$.
Transitive orientations of $(S,{\sim})$ are in turn represented by the modular decomposition tree $\strongModules(S,{\sim})$ of the graph $(S,{\sim})$:
Theorem~\ref{thm:transitive_orientations_versus_transitive_orientations_of_strong_modules} 
asserts that each transitive orientation ${\prec_{\tau}}$ of $(S,{\sim})$ is uniquely determined by transitive orientations ${\prec^{M}_{\tau}}$ of the graphs 
$(M,{\sim}_M)$, where $M$ runs over all inner nodes in $\strongModules(S,{\sim})$.
The modular decomposition tree $\strongModules(S,{\sim})$ will be the part of $\pqmtree$,
denoted by $\pqmtree_S$, and the inner nodes of $\pqmtree_S$ for $S \in \camodules$ 
will be called \emph{M-nodes} of $\pqmtree$.

Figure~\ref{fig:admissible-model-structure} shows 
a modular decomposition tree $\pqmtree_S$ of $(S,{\sim})$ for some CA-module $S$ and its admissible model $\tau = (\tau^0,\tau^1)$.
M-node $S$ is prime and hence $(S,{\sim_S})$ has two transitive orientations,
M-node $A_2$ is serial with three children and hence $(A_2,{\sim_{A_2}})$ 
has $3!$ transitive orientations corresponding to the permutations of the children of $A_2$.
The remaining M-nodes $A_3, B_2, B_3$ are parallel and each graph $(A_3,{\sim_{A_3}})$, $(B_2,{\sim_{B_2}})$, $(B_3,{\sim_{B_3}})$ has one empty transitive orientation. 
Hence, the metachord $\mathbb{S}$ has $12$ different admissible models, 
corresponding to $12=2 \cdot 3!$ transitive orientations of $(S,{\sim})$.

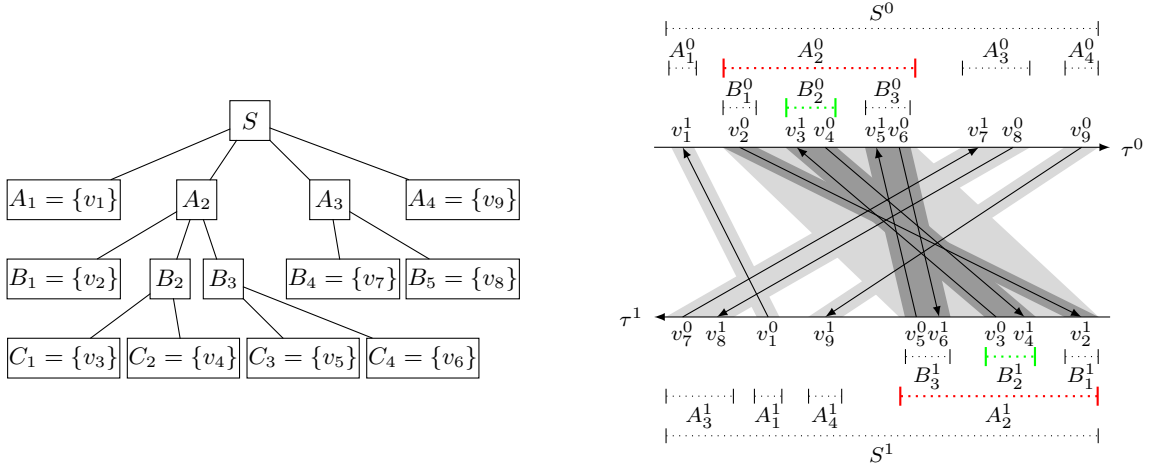
\begin{figure}[ht]
\begin{tikzpicture}[scale=0.7,>=latex,shorten >=-0.4pt,shorten <=-0.4pt]
\tikzstyle{every node}=[rectangle,minimum size=15pt,inner sep=0.5,draw];
  \begin{scriptsize}
  \node (M) at (4.0,7.5) {$S$};
  \node (v1) at (0.5,6) {$A_1 = \{v_1\}$};
  \node (A) at (3,6) {$A_2$};
  \node (B) at (5.5,6) {$A_3$};
  \node (v9) at (8,6) {$A_4 = \{v_9\}$};
  \node (v2) at (0.5,4.5) {$B_1 = \{v_2\}$};
  \node (C) at (2.5,4.5) {$B_2$};
  \node (D) at (3.5,4.5) {$B_3$};
  \node (v7) at (5.75,4.5) {$B_4=\{v_7\}$};
  \node (v8) at (8,4.5) {$B_5 = \{v_8\}$};
  \node (v3) at (0.5,3) {$C_1 = \{v_3\}$};
  \node (v4) at (2.75,3) {$C_2 = \{v_4\}$};
  \node (v5) at (5,3) {$C_3 = \{v_5\}$};
  \node (v6) at (7.25,3) {$C_4 = \{v_6\}$};
 
\end{scriptsize}
\tikzstyle{every node}=[square,minimum size=15pt,inner sep=0.5,draw];
\path (M) edge (v1); 
\path (M) edge (A); 
\path (M) edge (B); 
\path (M) edge (v9); 
\path (A) edge (v2); 
\path (A) edge (C); 
\path (A) edge (D); 
\path (B) edge (v7); 
\path (B) edge (v8); 
\path (C) edge (v3); 
\path (C) edge (v4); 
\path (D) edge (v5); 
\path (D) edge (v6); 

\draw[-,white] (-0.5,1) -- (0,1);
\end{tikzpicture}
\hspace{1cm}
\begin{tikzpicture}[xscale=0.75,yscale=0.75,>=latex]
\coordinate (a_v1) at (0,3) {};
\coordinate (a_v2) at (1,3) {};
\coordinate (a_v3) at (2,3) {};
\coordinate (a_v4) at (2.5,3) {};
\coordinate (a_v5) at (3.4,3) {};
\coordinate (a_v6) at (3.8,3) {};
\coordinate (a_v7) at (5.2,3) {};
\coordinate (a_v8) at (5.8,3) {};
\coordinate (a_v9) at (7,3) {};

\coordinate (b_v2) at (7,0) {};
\coordinate (b_v4) at (6,0) {};
\coordinate (b_v3) at (5.5,0) {};
\coordinate (b_v6) at (4.5,0) {};
\coordinate (b_v5) at (4.1,0) {};
\coordinate (b_v9) at (2.5,0) {};
\coordinate (b_v1) at (1.5,0) {};
\coordinate (b_v8) at (0.6,0) {};
\coordinate (b_v7) at (0,0) {};

\coordinate (la_v1) at (0,3.3) {};
\coordinate (la_v2) at (1,3.3) {};
\coordinate (la_v3) at (2,3.3) {};
\coordinate (la_v4) at (2.5,3.3) {};
\coordinate (la_v5) at (3.4,3.3) {};
\coordinate (la_v6) at (3.8,3.3) {};
\coordinate (la_v7) at (5.2,3.3) {};
\coordinate (la_v8) at (5.8,3.3) {};
\coordinate (la_v9) at (7,3.3) {};

\coordinate (lb_v2) at (7,-0.3) {};
\coordinate (lb_v4) at (6,-0.3) {};
\coordinate (lb_v3) at (5.5,-0.3) {};
\coordinate (lb_v6) at (4.5,-0.3) {};
\coordinate (lb_v5) at (4.1,-0.3) {};
\coordinate (lb_v9) at (2.5,-0.3) {};
\coordinate (lb_v1) at (1.5,-0.3) {};
\coordinate (lb_v8) at (0.6,-0.3) {};
\coordinate (lb_v7) at (0,-0.3) {};

\coordinate (lS0) at (3.5,5.1) {};

\coordinate (tau0) at (7.9,3);
\coordinate (tau1) at (-0.9,0);

\tikzstyle{every node}=[inner sep=2pt,fill=white]

\coordinate (lS0) at (3.5,5.4) {};
\draw[dotted,|-|] (-0.3,5.1) -- (7.3,5.1);

\coordinate (lS1) at (3.5,-2.4) {};
\draw[dotted,|-|] (-0.3,-2.1) -- (7.3,-2.1);

\draw[fill=gray!30, draw=none] (-0.2,3) -- (0.2,3) -- (1.7,0) -- (1.3,0) -- cycle;
\coordinate (lA01) at (0,4.7) {};
\draw[dotted,|-|] (-0.25,4.4) -- (0.25,4.4);
\coordinate (lA11) at (1.5,-1.75) {};
\draw[dotted,|-|] (1.25,-1.4) -- (1.75,-1.4);

\draw[fill=gray!30, draw=none] (0.7,3) -- (4.1,3) -- (7.3,0) -- (3.8,0) -- cycle;
\coordinate (lA02) at (2.25,4.7) {};
\draw[dotted,red,thick,|-|] (0.7,4.4) -- (4.1,4.4);
\coordinate (lA12) at (5.55,-1.75) {};
\draw[dotted,red,thick,|-|] (3.8,-1.4) -- (7.3,-1.4);

\draw[fill=gray!30, draw=none] (4.9,3) -- (6.1,3) -- (0.9,0) -- (-0.3,0) -- cycle;
\coordinate (lA03) at (5.5,4.7) {};
\draw[dotted,|-|] (4.9,4.4) -- (6.1,4.4);
\coordinate (lA13) at (0.3,-1.75) {};
\draw[dotted,|-|] (-0.3,-1.4) -- (0.9,-1.4);

\draw[fill=gray!30, draw=none] (6.7,3) -- (7.3,3) -- (2.8,0) -- (2.2,0) -- cycle;
\coordinate (lA04) at (7,4.7) {};
\draw[dotted,|-|] (6.7,4.4) -- (7.3,4.4);
\coordinate (lA14) at (2.5,-1.75) {};
\draw[dotted,|-|] (2.2,-1.4) -- (2.8,-1.4);

\draw[fill=gray!80, draw=none] (0.7,3) -- (1.3,3) -- (7.3,0) -- (6.7,0) -- cycle;
\coordinate (lB01) at (1,4) {};
\draw[dotted,|-|] (0.7,3.7) -- (1.3,3.7);
\coordinate (lB11) at (7,-1.05) {};
\draw[dotted,|-|] (6.7,-0.7) -- (7.3,-0.7);

\draw[fill=gray!80, draw=none] (1.8,3) -- (2.7,3) -- (6.2,0) -- (5.3,0) -- cycle;
\coordinate (lB02) at (2.25,4) {};
\draw[dotted,green,thick,|-|] (1.8,3.7) -- (2.7,3.7);
\coordinate (lB12) at (5.75,-1.05) {};
\draw[dotted,green,thick,|-|] (5.3,-0.7) -- (6.2,-0.7);

\draw[fill=gray!80, draw=none] (3.2,3) -- (4,3) -- (4.7,0) -- (3.9,0) -- cycle;
\coordinate (lB03) at (3.6,4) {};
\draw[dotted,|-|] (3.2,3.7) -- (4,3.7);
\coordinate (lB13) at (4.3,-1.05) {};
\draw[dotted,|-|] (3.9,-0.7) -- (4.7,-0.7);

\draw[<-] (a_v1)--(b_v1);
\draw[->] (a_v2)--(b_v2);
\draw[<-] (a_v3)--(b_v3);
\draw[->] (a_v4)--(b_v4);
\draw[<-] (a_v5)--(b_v5);
\draw[->] (a_v6)--(b_v6);
\draw[<-] (a_v7)--(b_v7);
\draw[->] (a_v8)--(b_v8);
\draw[->] (a_v9)--(b_v9);

\draw[->] (-0.5,3) -- (7.5,3);
\draw[<-] (-0.5,0) -- (7.5,0);

\tikzstyle{every node}=[inner sep=1pt]
\begin{tiny}
\node at (lb_v1) {$v^0_1$};
\node at (lb_v2) {$v^1_2$};
\node at (lb_v3) {$v^0_3$};
\node at (lb_v4) {$v^1_4$};
\node at (lb_v5) {$v^0_5$};
\node at (lb_v6) {$v^1_6$};
\node at (lb_v7) {$v^0_7$};
\node at (lb_v8) {$v^1_8$};
\node at (lb_v9) {$v^1_9$};

\node at (la_v1) {$v^1_1$};
\node at (la_v2) {$v^0_2$};
\node at (la_v3) {$v^1_3$};
\node at (la_v4) {$v^0_4$};
\node at (la_v5) {$v^1_5$};
\node at (la_v6) {$v^0_6$};
\node at (la_v7) {$v^1_7$};
\node at (la_v8) {$v^0_8$};
\node at (la_v9) {$v^0_9$};

\node at (lB01) {$B^0_1$};
\node at (lB02) {$B^0_2$};
\node at (lB03) {$B^0_3$};

\node at (lB11) {$B^1_1$};
\node at (lB12) {$B^1_2$};
\node at (lB13) {$B^1_3$};

\node at (lA01) {$A^0_1$};
\node at (lA02) {$A^0_2$};
\node at (lA03) {$A^0_3$};
\node at (lA04) {$A^0_4$};

\node at (lA11) {$A^1_1$};
\node at (lA12) {$A^1_2$};
\node at (lA13) {$A^1_3$};
\node at (lA14) {$A^1_4$};

\node at (lS0) {$S^0$};
\node at (lS1) {$S^1$};

\node at (tau0) {$\tau^0$};
\node at (tau1) {$\tau^1$};
\end{tiny}
\end{tikzpicture}
\caption{\label{fig:admissible-model-structure} Modular decomposition tree
$\pqmtree_S$ of CA-module $S$ (to the left) and its admissible model $\tau = (\tau^0,\tau^1)$.
We have $\tau^0|A^0_2 = v^0_2v^1_3v^0_4v^1_5v^0_6$ and $\tau^1|A^1_2 = v^1_2v^1_4v^0_3v^1_6v^0_5$ (depicted by red intervals) and 
$\tau^0|B^0_2 = v^1_3v^0_4$ and $\tau^1|B^1_2 = v^1_4v^0_3$ (depicted by green intervals),
$\tau_{|S} = (A^0_1A^0_2A^0_3A^0_4, A^1_2A^1_4A^1_1A^1_3)$ and $\tau_{|A_2} = (B^0_1B^0_2B^0_3,B^1_1B^1_2B^1_3)$.}
\end{figure}

In this paper we represent the models admissible for $\SSS$ the following way.
Firstly, we extend the notion of the metachord on the set of all nodes in 
$\pqmtree_{S}$: for a node $M \in \pqmtree_{S}$ the \emph{metachord $\MMM$} associated with $M$ is the triple $(M^0,M^1,{<_M})$, where 
$M^0 = M^* \cap S^0$, $M^1 = M^* \cap S^1$, and ${<_M}$ equals to ${<_{S}}$ 
restricted to $M$.
Secondly, for every M-node $M$ we define the set $\Pi(M)$ of \emph{admissible orderings} of $M$, as follows: for every transitive orientation ${\prec_M}$ of $(M,{\sim}_M)$ we define an element $\pi_M=(\pi^0, \pi^1)$ in the set $\Pi(M)$, where for $j \in \{0,1\}$ the word $\pi^j$ is a permutation of the set $\{K^j: K \text{ is a child of }M\}$ such that for every two distinct children $K,L$ of $M$ we have:
\begin{equation}
\begin{array}{lll}
K^0 \text{ occurs before } L^0 \text{ in } \pi^{0} \iff K \prec_M L \text{ or } K <_{M} L,\\
K^1 \text{ occurs before } L^1 \text{ in } \pi^{1} \iff K \prec_M L \text{ or } L <_{M} K.\\
\end{array}
\end{equation}
Note that, by Theorem~\ref{thm:transitive_orientation_of_edges_between_children}, 
for every model $\tau = (\tau^0,\tau^1)$ admissible for $\SSS$ and every node $M$ in $\pqmtree_{S}$ the sets $M^0$ and $M^1$ form contiguous subwords, say $\tau^0|M^0$ and $\tau^1|M^1$, in the words $\tau^0$ and $\tau^1$, respectively.
Now, for every M-node $M$ in $\pqmtree_{S}$ let $\tau_{|M}$ denote the pair
$({\tau^0}_{|M}, {\tau^1}_{|M})$, where the word ${\tau^j}_{|M}$ is obtained from the word $\tau^{j}|M^j$ by replacing its contiguous subword $\tau^j|K^j$ by the letter $K^j$, for every child $K$ of $M$ in $\pqmtree_{S}$ -- see Figure~\ref{fig:admissible-model-structure} for an illustration.
Note that $\tau_{|M}$ is a member of $\Pi(M)$.
Summing up, Theorems \ref{thm:permutation_models_transitive_orientations} and~\ref{thm:transitive_orientations_versus_transitive_orientations_of_strong_modules} yield the following:
\begin{theorem}
There is a bijection between the set $\Phi^{\SSS}$ of admissible models for the metachord $\SSS$ 
and the set $\Phi^{\SSS}_{\diamond}$, where
$$\Phi^{\SSS}_{\diamond} = \Bigg{\{}\Big{\{}\big{(}M,\pi_M\big{)}: 
\begin{array}{c}
\text{$M$ is an inner node in $\pqmtree_{S}$ and } \\ 
\text{$\pi_M$ is an admissible ordering from $\Pi(M)$}
\end{array}
\Big{\}}\Bigg{\}},$$
established by 
$$ \Phi^{\SSS} \ni \tau  \quad \longrightarrow \quad \Big{\{}\big{(}M,\tau_{|M}\big{)}: M \text{ is an M-node in } \pqmtree_{S}\Big{\}} \in \Phi^{\SSS}_{\diamond}.$$
\end{theorem}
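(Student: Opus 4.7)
The plan is to chain the two prior structural theorems about permutation graphs and modular decomposition, thereby reducing an admissible model for $\SSS$ to an independent local choice at every M-node of $\pqmtree_S$, and then match this decomposition with the block-contraction operation $\tau \mapsto \tau_{|M}$.

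First I would invoke Theorem~\ref{thm:permutation_models_transitive_orientations} for the permutation subgraph $(S,{\sim})$: an oriented permutation model $\tau = (\tau^0,\tau^1)$ of $(S,{\sim})$ is determined by the pair of transitive orientations $({<^{\tau}}, {\prec^{\tau}})$ of $(S,{\parallel})$ and $(S,{\sim})$. By Definition~\ref{def:admissible-models-for-metachord}, $\tau$ is admissible for $\SSS$ precisely when ${<^{\tau}} = {<_S}$, and $\tau^0, \tau^1$ are the prescribed permutations of $S^0, S^1$. Hence $\Phi^{\SSS}$ is in bijection with the transitive orientations ${\prec^{\tau}}$ of $(S,{\sim})$.

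Next I would apply Theorem~\ref{thm:transitive_orientations_versus_transitive_orientations_of_strong_modules} to split ${\prec^{\tau}}$ into its restrictions ${\prec^{\tau}_M}$ to $(M,{\sim_M})$ for each M-node $M$ of $\pqmtree_S$. Composing these two bijections, an admissible model $\tau$ corresponds bijectively to a family $\{(M, {\prec^{\tau}_M}) : M \text{ is an M-node in } \pqmtree_S\}$ of transitive orientations of the graphs $(M,{\sim_M})$.

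The last step is to identify each such local transitive orientation with an admissible ordering $\pi_M \in \Pi(M)$, and to verify that the ordering thus obtained is exactly $\tau_{|M}$. The bijection between transitive orientations of $(M,{\sim_M})$ and $\Pi(M)$ is essentially built into the definition of $\Pi(M)$: the defining equations express $\pi_M$ from $({<_M}, {\prec_M})$ in the same Dushnik--Miller style as equation~\eqref{eq:models_of_permutation_graphs_2}, but one level up (at the children of $M$ instead of vertices). To see that $\tau_{|M}$ corresponds to ${\prec^{\tau}_M}$, fix two distinct children $K, L$ of $M$. By Theorem~\ref{thm:transitive_orientation_of_edges_between_children}, if $K \sim_M L$ then ${\prec^{\tau}}$ orients every edge between $K$ and $L$ uniformly, so $K^0$ precedes $L^0$ as a whole block in $\tau^0|M^0$ iff $K \prec^{\tau}_M L$; and if $K \parallel_M L$ then all edges in $(S,{\parallel})$ between $K$ and $L$ are oriented uniformly by ${<_S}$, so $K^0$ precedes $L^0$ iff $K <_M L$. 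Contracting each block $\tau^j|K^j$ to the letter $K^j$ therefore yields exactly the admissible ordering defined by ${\prec^{\tau}_M}$ and ${<_M}$, and the same check on $\tau^1$ follows the dual equation.

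The main obstacle is not conceptual but notational: carefully separating the contribution of edges in $(M,{\sim_M})$ (governed by ${\prec^{\tau}_M}$) from that of the non-edges in $(M,{\sim_M})$ (governed by the fixed ${<_M}$), and checking that the block-contraction operation commutes with the Dushnik--Miller recipe at the level of children. Once these checks are in place, the resulting map $\tau \mapsto \{(M, \tau_{|M})\}$ is the composition of three explicit bijections, so bijectivity follows.
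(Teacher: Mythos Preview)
Your proposal is correct and follows exactly the route the paper takes: the theorem is stated there as a direct consequence of Theorems~\ref{thm:permutation_models_transitive_orientations} and~\ref{thm:transitive_orientations_versus_transitive_orientations_of_strong_modules}, and you have spelled out precisely that chain of bijections. Your extra verification that the block-contraction $\tau \mapsto \tau_{|M}$ agrees with the admissible ordering determined by ${\prec^{\tau}_M}$ and ${<_M}$ is the only part the paper leaves implicit, and you handle it correctly via Theorem~\ref{thm:transitive_orientation_of_edges_between_children}.
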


Finally, suppose $\phi$ is a conformal model of $G$ and $\tau = (\tau^0,\tau^1)$ is an admissible model for~$\SSS$ spanned between $S^0$ and $S^1$ (that is, $(\phi|S^0, \phi|S^1) = (\tau^0, \tau^1)$).
Let $\phi^R$ be the reflection of $\phi$ and let $\mu = (\mu^0,\mu^1)$ be an admissible model for $\SSS$ spanned between $S^0$ and $S^1$ (that is, $(\phi^R|S^0, \phi^R|S^1) = (\mu^0, \mu^1)$).
Note that $\mu$ and $\tau$ are related: $\mu^0$ is the reflection of $\tau^1$ and 
$\mu^1$ is the reflection of $\mu^0$, and hence we say $\mu$ is the \emph{reflection of $\tau$} -- see Figure \ref{fig:admissible-model-reflection}.
Observe also that, if $\tau$ and $\mu$ correspond to transitive orientations 
${\prec^{\tau}}$ and ${\prec^{\mu}}$ of $(S,{\sim})$, then $\prec^{\mu}$ is the reverse of $\prec^{\tau}$.
In particular, for every prime M-node $M$ in $\pqmtree_S$, the set $\Pi(M)$ has two admissible orderings, one being the reflection  of the other - see Figure~\ref{fig:admissible-model-reflection}.

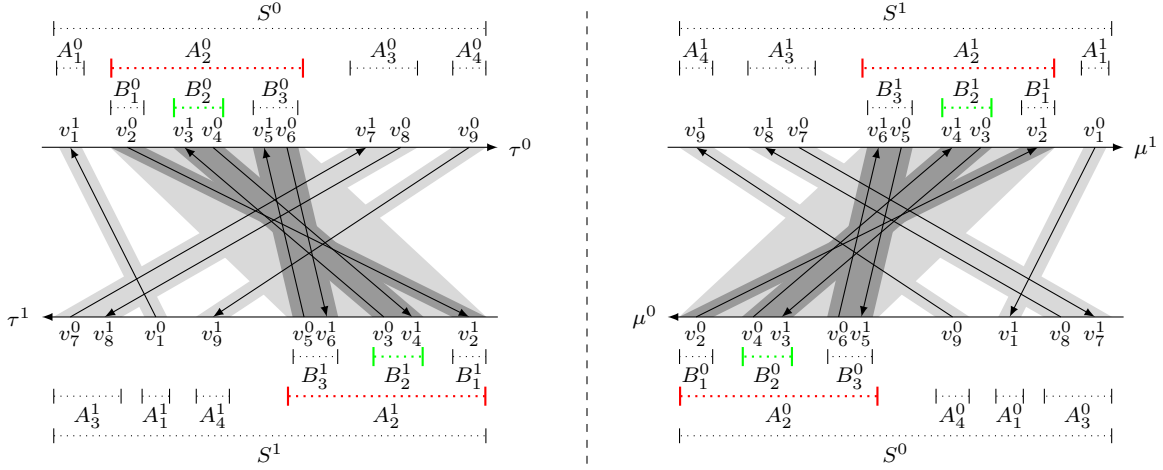
\begin{figure}[ht]
\begin{tikzpicture}[xscale=0.75,yscale=0.75,>=latex]
\coordinate (a_v1) at (0,3) {};
\coordinate (a_v2) at (1,3) {};
\coordinate (a_v3) at (2,3) {};
\coordinate (a_v4) at (2.5,3) {};
\coordinate (a_v5) at (3.4,3) {};
\coordinate (a_v6) at (3.8,3) {};
\coordinate (a_v7) at (5.2,3) {};
\coordinate (a_v8) at (5.8,3) {};
\coordinate (a_v9) at (7,3) {};

\coordinate (b_v2) at (7,0) {};
\coordinate (b_v4) at (6,0) {};
\coordinate (b_v3) at (5.5,0) {};
\coordinate (b_v6) at (4.5,0) {};
\coordinate (b_v5) at (4.1,0) {};
\coordinate (b_v9) at (2.5,0) {};
\coordinate (b_v1) at (1.5,0) {};
\coordinate (b_v8) at (0.6,0) {};
\coordinate (b_v7) at (0,0) {};

\coordinate (la_v1) at (0,3.3) {};
\coordinate (la_v2) at (1,3.3) {};
\coordinate (la_v3) at (2,3.3) {};
\coordinate (la_v4) at (2.5,3.3) {};
\coordinate (la_v5) at (3.4,3.3) {};
\coordinate (la_v6) at (3.8,3.3) {};
\coordinate (la_v7) at (5.2,3.3) {};
\coordinate (la_v8) at (5.8,3.3) {};
\coordinate (la_v9) at (7,3.3) {};

\coordinate (lb_v2) at (7,-0.3) {};
\coordinate (lb_v4) at (6,-0.3) {};
\coordinate (lb_v3) at (5.5,-0.3) {};
\coordinate (lb_v6) at (4.5,-0.3) {};
\coordinate (lb_v5) at (4.1,-0.3) {};
\coordinate (lb_v9) at (2.5,-0.3) {};
\coordinate (lb_v1) at (1.5,-0.3) {};
\coordinate (lb_v8) at (0.6,-0.3) {};
\coordinate (lb_v7) at (0,-0.3) {};

\coordinate (lS0) at (3.5,5.1) {};

\coordinate (tau0) at (7.9,3);
\coordinate (tau1) at (-0.9,0);

\tikzstyle{every node}=[inner sep=2pt,fill=white]

\coordinate (lS0) at (3.5,5.4) {};
\draw[dotted,|-|] (-0.3,5.1) -- (7.3,5.1);

\coordinate (lS1) at (3.5,-2.4) {};
\draw[dotted,|-|] (-0.3,-2.1) -- (7.3,-2.1);

\draw[fill=gray!30, draw=none] (-0.2,3) -- (0.2,3) -- (1.7,0) -- (1.3,0) -- cycle;
\coordinate (lA01) at (0,4.7) {};
\draw[dotted,|-|] (-0.25,4.4) -- (0.25,4.4);
\coordinate (lA11) at (1.5,-1.75) {};
\draw[dotted,|-|] (1.25,-1.4) -- (1.75,-1.4);

\draw[fill=gray!30, draw=none] (0.7,3) -- (4.1,3) -- (7.3,0) -- (3.8,0) -- cycle;
\coordinate (lA02) at (2.25,4.7) {};
\draw[dotted,red,thick,|-|] (0.7,4.4) -- (4.1,4.4);
\coordinate (lA12) at (5.55,-1.75) {};
\draw[dotted,red,thick,|-|] (3.8,-1.4) -- (7.3,-1.4);

\draw[fill=gray!30, draw=none] (4.9,3) -- (6.1,3) -- (0.9,0) -- (-0.3,0) -- cycle;
\coordinate (lA03) at (5.5,4.7) {};
\draw[dotted,|-|] (4.9,4.4) -- (6.1,4.4);
\coordinate (lA13) at (0.3,-1.75) {};
\draw[dotted,|-|] (-0.3,-1.4) -- (0.9,-1.4);

\draw[fill=gray!30, draw=none] (6.7,3) -- (7.3,3) -- (2.8,0) -- (2.2,0) -- cycle;
\coordinate (lA04) at (7,4.7) {};
\draw[dotted,|-|] (6.7,4.4) -- (7.3,4.4);
\coordinate (lA14) at (2.5,-1.75) {};
\draw[dotted,|-|] (2.2,-1.4) -- (2.8,-1.4);

\draw[fill=gray!80, draw=none] (0.7,3) -- (1.3,3) -- (7.3,0) -- (6.7,0) -- cycle;
\coordinate (lB01) at (1,4) {};
\draw[dotted,|-|] (0.7,3.7) -- (1.3,3.7);
\coordinate (lB11) at (7,-1.05) {};
\draw[dotted,|-|] (6.7,-0.7) -- (7.3,-0.7);

\draw[fill=gray!80, draw=none] (1.8,3) -- (2.7,3) -- (6.2,0) -- (5.3,0) -- cycle;
\coordinate (lB02) at (2.25,4) {};
\draw[dotted,green,thick,|-|] (1.8,3.7) -- (2.7,3.7);
\coordinate (lB12) at (5.75,-1.05) {};
\draw[dotted,green,thick,|-|] (5.3,-0.7) -- (6.2,-0.7);

\draw[fill=gray!80, draw=none] (3.2,3) -- (4,3) -- (4.7,0) -- (3.9,0) -- cycle;
\coordinate (lB03) at (3.6,4) {};
\draw[dotted,|-|] (3.2,3.7) -- (4,3.7);
\coordinate (lB13) at (4.3,-1.05) {};
\draw[dotted,|-|] (3.9,-0.7) -- (4.7,-0.7);

\draw[<-] (a_v1)--(b_v1);
\draw[->] (a_v2)--(b_v2);
\draw[<-] (a_v3)--(b_v3);
\draw[->] (a_v4)--(b_v4);
\draw[<-] (a_v5)--(b_v5);
\draw[->] (a_v6)--(b_v6);
\draw[<-] (a_v7)--(b_v7);
\draw[->] (a_v8)--(b_v8);
\draw[->] (a_v9)--(b_v9);

\draw[->] (-0.5,3) -- (7.5,3);
\draw[<-] (-0.5,0) -- (7.5,0);

\tikzstyle{every node}=[inner sep=1pt]
\begin{tiny}
\node at (lb_v1) {$v^0_1$};
\node at (lb_v2) {$v^1_2$};
\node at (lb_v3) {$v^0_3$};
\node at (lb_v4) {$v^1_4$};
\node at (lb_v5) {$v^0_5$};
\node at (lb_v6) {$v^1_6$};
\node at (lb_v7) {$v^0_7$};
\node at (lb_v8) {$v^1_8$};
\node at (lb_v9) {$v^1_9$};

\node at (la_v1) {$v^1_1$};
\node at (la_v2) {$v^0_2$};
\node at (la_v3) {$v^1_3$};
\node at (la_v4) {$v^0_4$};
\node at (la_v5) {$v^1_5$};
\node at (la_v6) {$v^0_6$};
\node at (la_v7) {$v^1_7$};
\node at (la_v8) {$v^0_8$};
\node at (la_v9) {$v^0_9$};

\node at (lB01) {$B^0_1$};
\node at (lB02) {$B^0_2$};
\node at (lB03) {$B^0_3$};

\node at (lB11) {$B^1_1$};
\node at (lB12) {$B^1_2$};
\node at (lB13) {$B^1_3$};

\node at (lA01) {$A^0_1$};
\node at (lA02) {$A^0_2$};
\node at (lA03) {$A^0_3$};
\node at (lA04) {$A^0_4$};

\node at (lA11) {$A^1_1$};
\node at (lA12) {$A^1_2$};
\node at (lA13) {$A^1_3$};
\node at (lA14) {$A^1_4$};

\node at (lS0) {$S^0$};
\node at (lS1) {$S^1$};

\node at (tau0) {$\tau^0$};
\node at (tau1) {$\tau^1$};
\end{tiny}
\end{tikzpicture}
\hspace{0.4cm}
\begin{tikzpicture}[xscale=0.75,yscale=0.75,>=latex]
\draw[dashed] (0,-4)--(0,4);
\end{tikzpicture}
\hspace{0.28cm}
\begin{tikzpicture}[xscale=-0.75,yscale=0.75,>=latex]
\coordinate (a_v1) at (0,3) {};
\coordinate (a_v2) at (1,3) {};
\coordinate (a_v3) at (2,3) {};
\coordinate (a_v4) at (2.5,3) {};
\coordinate (a_v5) at (3.4,3) {};
\coordinate (a_v6) at (3.8,3) {};
\coordinate (a_v7) at (5.2,3) {};
\coordinate (a_v8) at (5.8,3) {};
\coordinate (a_v9) at (7,3) {};

\coordinate (b_v2) at (7,0) {};
\coordinate (b_v4) at (6,0) {};
\coordinate (b_v3) at (5.5,0) {};
\coordinate (b_v6) at (4.5,0) {};
\coordinate (b_v5) at (4.1,0) {};
\coordinate (b_v9) at (2.5,0) {};
\coordinate (b_v1) at (1.5,0) {};
\coordinate (b_v8) at (0.6,0) {};
\coordinate (b_v7) at (0,0) {};

\coordinate (la_v1) at (0,3.3) {};
\coordinate (la_v2) at (1,3.3) {};
\coordinate (la_v3) at (2,3.3) {};
\coordinate (la_v4) at (2.5,3.3) {};
\coordinate (la_v5) at (3.4,3.3) {};
\coordinate (la_v6) at (3.8,3.3) {};
\coordinate (la_v7) at (5.2,3.3) {};
\coordinate (la_v8) at (5.8,3.3) {};
\coordinate (la_v9) at (7,3.3) {};

\coordinate (lb_v2) at (7,-0.3) {};
\coordinate (lb_v4) at (6,-0.3) {};
\coordinate (lb_v3) at (5.5,-0.3) {};
\coordinate (lb_v6) at (4.5,-0.3) {};
\coordinate (lb_v5) at (4.1,-0.3) {};
\coordinate (lb_v9) at (2.5,-0.3) {};
\coordinate (lb_v1) at (1.5,-0.3) {};
\coordinate (lb_v8) at (0.6,-0.3) {};
\coordinate (lb_v7) at (0,-0.3) {};

\coordinate (lS0) at (3.5,5.1) {};

\coordinate (tau0) at (7.9,0);
\coordinate (tau1) at (-0.9,3);

\tikzstyle{every node}=[inner sep=2pt,fill=white]

\coordinate (lS0) at (3.5,5.4) {};
\draw[dotted,|-|] (-0.3,5.1) -- (7.3,5.1);

\coordinate (lS1) at (3.5,-2.4) {};
\draw[dotted,|-|] (-0.3,-2.1) -- (7.3,-2.1);

\draw[fill=gray!30, draw=none] (-0.2,3) -- (0.2,3) -- (1.7,0) -- (1.3,0) -- cycle;
\coordinate (lA01) at (0,4.7) {};
\draw[dotted,|-|] (-0.25,4.4) -- (0.25,4.4);
\coordinate (lA11) at (1.5,-1.75) {};
\draw[dotted,|-|] (1.25,-1.4) -- (1.75,-1.4);

\draw[fill=gray!30, draw=none] (0.7,3) -- (4.1,3) -- (7.3,0) -- (3.8,0) -- cycle;
\coordinate (lA02) at (2.25,4.7) {};
\draw[dotted,red,thick,|-|] (0.7,4.4) -- (4.1,4.4);
\coordinate (lA12) at (5.55,-1.75) {};
\draw[dotted,red,thick,|-|] (3.8,-1.4) -- (7.3,-1.4);

\draw[fill=gray!30, draw=none] (4.9,3) -- (6.1,3) -- (0.9,0) -- (-0.3,0) -- cycle;
\coordinate (lA03) at (5.5,4.7) {};
\draw[dotted,|-|] (4.9,4.4) -- (6.1,4.4);
\coordinate (lA13) at (0.3,-1.75) {};
\draw[dotted,|-|] (-0.3,-1.4) -- (0.9,-1.4);

\draw[fill=gray!30, draw=none] (6.7,3) -- (7.3,3) -- (2.8,0) -- (2.2,0) -- cycle;
\coordinate (lA04) at (7,4.7) {};
\draw[dotted,|-|] (6.7,4.4) -- (7.3,4.4);
\coordinate (lA14) at (2.5,-1.75) {};
\draw[dotted,|-|] (2.2,-1.4) -- (2.8,-1.4);

\draw[fill=gray!80, draw=none] (0.7,3) -- (1.3,3) -- (7.3,0) -- (6.7,0) -- cycle;
\coordinate (lB01) at (1,4) {};
\draw[dotted,|-|] (0.7,3.7) -- (1.3,3.7);
\coordinate (lB11) at (7,-1.05) {};
\draw[dotted,|-|] (6.7,-0.7) -- (7.3,-0.7);

\draw[fill=gray!80, draw=none] (1.8,3) -- (2.7,3) -- (6.2,0) -- (5.3,0) -- cycle;
\coordinate (lB02) at (2.25,4) {};
\draw[dotted,green,thick,|-|] (1.8,3.7) -- (2.7,3.7);
\coordinate (lB12) at (5.75,-1.05) {};
\draw[dotted,green,thick,|-|] (5.3,-0.7) -- (6.2,-0.7);

\draw[fill=gray!80, draw=none] (3.2,3) -- (4,3) -- (4.7,0) -- (3.9,0) -- cycle;
\coordinate (lB03) at (3.6,4) {};
\draw[dotted,|-|] (3.2,3.7) -- (4,3.7);
\coordinate (lB13) at (4.3,-1.05) {};
\draw[dotted,|-|] (3.9,-0.7) -- (4.7,-0.7);

\draw[->] (a_v1)--(b_v1);
\draw[<-] (a_v2)--(b_v2);
\draw[->] (a_v3)--(b_v3);
\draw[<-] (a_v4)--(b_v4);
\draw[->] (a_v5)--(b_v5);
\draw[<-] (a_v6)--(b_v6);
\draw[->] (a_v7)--(b_v7);
\draw[<-] (a_v8)--(b_v8);
\draw[<-] (a_v9)--(b_v9);

\draw[<-] (-0.5,3) -- (7.5,3);
\draw[->] (-0.5,0) -- (7.5,0);

\tikzstyle{every node}=[inner sep=1pt]
\begin{tiny}
\node at (lb_v1) {$v^1_1$};
\node at (lb_v2) {$v^0_2$};
\node at (lb_v3) {$v^1_3$};
\node at (lb_v4) {$v^0_4$};
\node at (lb_v5) {$v^1_5$};
\node at (lb_v6) {$v^0_6$};
\node at (lb_v7) {$v^1_7$};
\node at (lb_v8) {$v^0_8$};
\node at (lb_v9) {$v^0_9$};

\node at (la_v1) {$v^0_1$};
\node at (la_v2) {$v^1_2$};
\node at (la_v3) {$v^0_3$};
\node at (la_v4) {$v^1_4$};
\node at (la_v5) {$v^0_5$};
\node at (la_v6) {$v^1_6$};
\node at (la_v7) {$v^0_7$};
\node at (la_v8) {$v^1_8$};
\node at (la_v9) {$v^1_9$};

\node at (lB01) {$B^1_1$};
\node at (lB02) {$B^1_2$};
\node at (lB03) {$B^1_3$};

\node at (lB11) {$B^0_1$};
\node at (lB12) {$B^0_2$};
\node at (lB13) {$B^0_3$};

\node at (lA01) {$A^1_1$};
\node at (lA02) {$A^1_2$};
\node at (lA03) {$A^1_3$};
\node at (lA04) {$A^1_4$};

\node at (lA11) {$A^0_1$};
\node at (lA12) {$A^0_2$};
\node at (lA13) {$A^0_3$};
\node at (lA14) {$A^0_4$};

\node at (lS0) {$S^1$};
\node at (lS1) {$S^0$};

\node at (tau0) {$\mu^0$};
\node at (tau1) {$\mu^1$};
\end{tiny}
\end{tikzpicture}
\caption{\label{fig:admissible-model-reflection} 
Admissible model $\tau = (\tau^0, \tau^1)$ (to the left) and its reflection $\mu = (\mu^0, \mu^1)$ (to the right).
Prime M-node $S$ has two admissible orderings: $(A^0_1A^0_2A^0_3A^0_4, A^1_2A^1_4A^1_1A^1_3)$ and its reflection $(A^0_3A^0_1A^0_4A^0_2,A^1_4A^1_3A^1_2A^1_1)$.
}
\end{figure}

\subsubsection{Determining CA-modules, slots, and metachords}
In this section we show how to read out CA-modules, slots, and metachords of $G$ from some conformal model $\phi$ of~$G$.
For a combinatorial definition we refer the reader to \cite{Krawczyk20}.

First, for every child $M$ of the root node $V$ of $\strongModules(G_{ov})$, we define
the set $\camodules(M)$ of \emph{CA-modules of $G$ contained in $M$}, as follows:
$S$~is a member of $\camodules(M)$ if $S$ is a maximal submodule of $M$ (not necessarily strong) such that the chords of $\phi(S)$ form a \emph{valid oriented permutation model of $(S,{\sim})$} in $\phi$, which means that the letters of $S^*$ form either two contiguous subwords $\tau^0(S)$ and $\tau^1(S)$ in $\phi$ or one contiguous subword of the form $\tau^0(S)\tau^1(S)$, where $(\tau^0(S), \tau^1(S))$ is an oriented permutation model of $(S,{\sim})$.
It is shown in \cite{Krawczyk20} that for every child $M$ of $V$ in $\strongModules(G_{ov})$ the set $\camodules(M)$ forms a partition of $M$.
Finally, we set $$\camodules = \bigcup \{\camodules(M): M \text{ is a child of $V$ in } \strongModules(G_{ov})\}.$$

Assume that $\camodules = \{S_1,\ldots,S_t\}$.
For every $S_i \in \camodules$ we pick a vertex $s_i \in S_i$ representing the set~$S_i$.
We assume $s^0_i \in \tau^0(S_i)$ and $s^1_i \in \tau^1(S_i)$ 
(we assert $s^0_i \in \tau^0(S_i)$ by possibly swapping the superscripts in $\tau^0(S_i)$ and $\tau^1(S_i)$).
We define the metachord $\mathbb{S}_i = (S^0_i,S^1_i,{<_{S_i}})$ for CA-module $S_i$ such that:
\begin{itemize}
  \item $S^0_i$ contains the letters from the word $\tau^0(S_i)$,
  \item $S^1_i$ contains the letters from the word $\tau^1(S_i)$,
  \item assuming ${<^\tau}$ and ${\prec^\tau}$ are the transitive orientations of $(S_i,{\parallel})$ and $(S_i,{\sim})$ corresponding to the oriented permutation model 
  $(\tau^0(S_i), \tau^{1}(S_i))$ of $(S_i,{\sim})$, we set ${<_{S_i}} = {<^\tau}$.
\end{itemize}
It is shown in \cite{Krawczyk20} that the slots $S^0_1,S^1_1,\ldots,S^0_t,S^1_t$ satisfy property~\ref{prop:slots} and 
the metachords $\mathbb{S}_1,\ldots,\mathbb{S}_t$ satisfy property~\ref{prop:metachords}.

\subsection{PQ-trees}
\label{subsec:PQ-tree}
As we mentioned, the set $\Pi$ may contain exponentially many members, but it has a linear-size
representation by means of the PQ-tree $\pqmtree^{PQ}$, which is the part of $\pqmtree$.
To define $\pqmtree^{PQ}$ we need some preparation.
Let $Q_0$ and $Q_1$ be two components of $G_{ov}$.
We say the components $Q_{0}$ and $Q_{1}$ are \emph{separated} 
if there is $v \in V \setminus (Q_0 \cup Q_1)$ such that $Q_i \subseteq \leftside(v)$ and $Q_{1-i} \subseteq \rightside(v)$ for some $i \in \{0,1\}$; otherwise, $Q_0$ and $Q_1$ are \emph{neighbouring}.
See Figure~\ref{fig:PQ_tree} for an illustration.

The PQ-tree $\pqmtree^{PQ}$ is an unrooted tree.
The leaf nodes of $\pqmtree^{PQ}$ are in the correspondence with the slots of $G$.
The non-leaf nodes of~$\pqmtree^{PQ}$ are labelled either by the letter P (\emph{P-nodes}) or the letter Q (\emph{Q-nodes}):
Q-nodes are in the correspondence with the connected components of $G_{ov}$ and 
P-nodes are in the correspondence with the maximal sets consisting of at least two pairwise neighbouring Q-nodes.
We refer to non-leaf nodes of $\pqmtree^{PQ}$ as \emph{PQ-nodes} of $\pqmtree^{PQ}$.
We add an edge between a Q-node $Q$ and a $P$-node $P$
if $Q \in P$ and we add an edge between a slot $S^i_j$ and a Q-node $Q$ if $S_i \subseteq Q$.
See Figure~\ref{fig:PQ_tree} for an illustration.

Note that $\pqmtree^{PQ}$ consist a single inner node $V$ in the case when $V$ is serial/prime in $\strongModules(G_{ov})$.


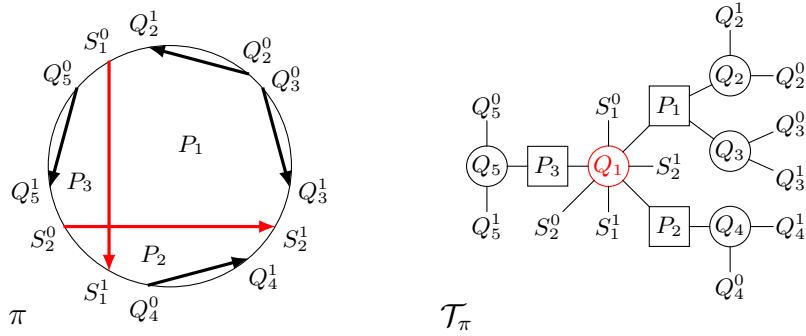
\begin{figure}[htp!]
\begin{tikzpicture}[scale=0.8,>=latex,shorten >=-0.4pt,shorten <=-0.4pt]
\coordinate (center) at (0,0) {};
\coordinate (label) at (-2.5,-2.5) {};

\coordinate (s10) at ($(center)+(120:2.0cm)$) {};
\coordinate (s11) at ($(center)+(240:2.0cm)$) {};

\coordinate (s20) at ($(center)+(210:2.0cm)$) {};
\coordinate (s21) at ($(center)+(330:2.0cm)$) {};

\coordinate (m20) at ($(center)+(50:2.0cm)$) {};
\coordinate (m21) at ($(center)+(100:2.0cm)$) {};

\coordinate (m31) at ($(center)+(-10:2.0cm)$) {};
\coordinate (m30) at ($(center)+(40:2.0cm)$) {};

\coordinate (m40) at ($(center)+(260:2.0cm)$) {};
\coordinate (m41) at ($(center)+(310:2.0cm)$) {};

\coordinate (m50) at ($(center)+(140:2.0cm)$) {};
\coordinate (m51) at ($(center)+(190:2.0cm)$) {};

\coordinate (lp1) at ($(center)+(45:0.5cm)$) {};
\coordinate (lp3) at ($(center)+(190:1.5cm)$) {};
\coordinate (lp2) at ($(center)+(260:1.5cm)$) {};

\coordinate (ls10) at ($(center)+(120:2.4cm)$) {};
\coordinate (ls11) at ($(center)+(240:2.4cm)$) {};

\coordinate (ls20) at ($(center)+(210:2.4cm)$) {};
\coordinate (ls21) at ($(center)+(330:2.4cm)$) {};

\coordinate (lm20) at ($(center)+(53:2.4cm)$) {};
\coordinate (lm21) at ($(center)+(100:2.4cm)$) {};

\coordinate (lm31) at ($(center)+(-10:2.4cm)$) {};
\coordinate (lm30) at ($(center)+(37:2.4cm)$) {};

\coordinate (lm40) at ($(center)+(260:2.4cm)$) {};
\coordinate (lm41) at ($(center)+(310:2.4cm)$) {};

\coordinate (lm50) at ($(center)+(140:2.4cm)$) {};
\coordinate (lm51) at ($(center)+(190:2.4cm)$) {};

\tikzstyle{every node}=[inner sep=1pt]
\begin{scriptsize}
\node at (lp1) {$P_1$};
\node at (lp2) {$P_2$};
\node at (lp3) {$P_3$};

\node at (ls10) {$S^0_1$};
\node at (ls11) {$S^1_1$};

\node at (ls20) {$S^0_2$};
\node at (ls21) {$S^1_2$};

\node at (lm20) {$Q^0_2$};
\node at (lm21) {$Q^1_2$};

\node at (lm30) {$Q^0_3$};
\node at (lm31) {$Q^1_3$};

\node at (lm40) {$Q^0_4$};
\node at (lm41) {$Q^1_4$};

\node at (lm50) {$Q^0_5$};
\node at (lm51) {$Q^1_5$};

\end{scriptsize}
\node at (label) {$\pi$};

\draw (0,0) circle (2cm);

\draw[very thick,red,->] (s10) -- (s11);
\draw[very thick,red,->] (s20) -- (s21);
\draw[very thick,->] (m20) -- (m21);
\draw[very thick,->] (m30) -- (m31);
\draw[very thick,->] (m40) -- (m41);
\draw[very thick,->] (m50) -- (m51);

\draw[white] (-2.8,-2.8)--(-2.8,-2);
\draw[white] (2.8,2.8)--(2.8,2);

\end{tikzpicture}
\hspace{1cm}
\begin{tikzpicture}[scale=0.8,>=latex,shorten >=-0.4pt,shorten <=-0.4pt]
\coordinate (label) at (-2.5,-2.5) {};

  \tikzstyle{every node}=[circle,minimum size=10pt,inner sep=0.5,draw];
  \begin{scriptsize}
  \node[red] (m1) at (0,0) {$Q_1$};
  \node (m2) at (2,1.5) {$Q_2$};
  \node (m3) at (2,0.25) {$Q_3$};
  \node (m4) at (2,-1) {$Q_4$};
  \node (m5) at (-2,0.0) {$Q_5$};
  \end{scriptsize}
  \tikzstyle{every node}=[rectangle,minimum size=15pt,inner sep=0.5,draw];
  \begin{scriptsize}
  \node (p1) at (1,1.0) {$P_1$};
  \node (p2) at (1,-1.0) {$P_2$};
  \node (p3) at (-1.0,0) {$P_3$};
  \end{scriptsize}
\tikzstyle{every node}=[inner sep=1pt]
  \begin{scriptsize}
  \node (m12) at (2,2.5) {$Q^1_2$};
  \node (m02) at (3,1.5) {$Q^0_2$};
  \node (m03) at (3,0.7) {$Q^0_3$};
  \node (m13) at (3,-0.2) {$Q^1_3$};
  \node (m14) at (3,-1) {$Q^1_4$};
  \node (m04) at (2,-2) {$Q^0_4$};

  \node (s01) at (0,1) {$S^0_1$};
  \node (s11) at (0,-1) {$S^1_1$};
  \node (s02) at (-1,-1) {$S^0_2$};
  \node (s12) at (1,0) {$S^1_2$};
  \node (m05) at (-2,1) {$Q^0_5$};
  \node (m15) at (-2,-1) {$Q^1_5$};

  \end{scriptsize}
  \node at (label) {$\pqmtree_{\pi}$};

\path (m1) edge (p1); 
\path (m1) edge (p2); 
\path (m1) edge (p3); 
\path (m2) edge (m02); 
\path (m2) edge (m12); 
\path (m3) edge (m03); 
\path (m3) edge (m13); 
\path (m4) edge (m04); 
\path (m4) edge (m14); 
\path (m5) edge (m05); 
\path (m5) edge (m15); 

\path (m1) edge (s01); 
\path (m1) edge (s11); 
\path (m1) edge (s02); 
\path (m1) edge (s12); 

\path (p1) edge (m2); 
\path (p1) edge (m3); 
\path (p2) edge (m4); 
\path (p3) edge (m5); 

\draw[white] (-2.8,-2.8)--(-2.8,-2);
\draw[white] (2.8,2.8)--(2.8,2);
\end{tikzpicture}
\caption{\label{fig:PQ_tree} To the left: circular order of the slots $\pi$ in some conformal model
of a circular-arc graph $G$ for the case when $V$ is parallel in $\strongModules(G_{ov})$,
$Q_1,Q_2,Q_3,Q_4,Q_5$ are the connected components of $G_{ov}$, 
$\camodules(Q_1)=\{S_1,S_2\}$ and $\camodules(Q_i)=\{Q_i\}$ for $i \in [2,5]$.
Components $Q_4$ and $Q_5$ are separated by any vertex from $Q_1$, components $Q_2$ and $Q_5$ are separated by any vertex from $S_1$.
P-nodes $P_1$, $P_2$, and $P_3$ correspond to 
maximal sets consisting of $\geq 2$ pairwise neighbouring modules: $\{Q_1,Q_2,Q_3\}$,
$\{Q_1,Q_4\}$, and $\{Q_1,Q_5\}$, respectively.
We have $\slots(Q_1 \to P_1) = \{Q^0_2,Q^1_2,Q^0_3,Q^1_3\}$,
$\slots(P_1 \to Q_1) = \{S^0_1,S^1_1, S^0_2, S^1_2, Q^0_4,Q^1_4,Q^0_5,Q^1_5\}$.
We have $\pi_{|Q_1} \equiv S^0_1P_1S^1_2P_2S^1_1S^0_2P_3$ and 
$\pi_{|P_1} \equiv Q_1Q_2Q_3$.
To the right: PQ-tree $\pqmtree^{PQ}_{\pi}$ representing $\pi$.
}
\end{figure}

Before we describe the way in which the PQ-tree $\pqmtree^{PQ}$ represents $\Pi$, we need to list some of its properties.
Let $Q$ be a Q-node and $P$ be a P-node such that $Q$ and $P$ are adjacent in $\pqmtree^{PQ}$.
When we delete the edge $QP$ from $\pqmtree^{PQ}$, we obtain a forest consisting of two trees, $\pqmtree^{PQ}_{Q \to P}$ and $\pqmtree^{PQ}_{P \to Q}$, 
where the node $P$ is in $\pqmtree^{PQ}_{Q \to P}$ and the node $Q$ is in $\pqmtree^{PQ}_{P \to Q}$.
Let $\slots(Q \to P)$ and $\slots(P \to Q)$ denote the sets of slots contained in the trees 
$\pqmtree^{PQ}_{Q \to P}$ and $\pqmtree^{PQ}_{P \to Q}$, respectively.
In \cite{Krawczyk20} the following property of every $\pi \in \Pi$ with respect to $Q$ and $P$ is shown:
\begin{itemize}
 \item the slots from the set $\slots(Q \to P)$ form a contiguous subword in $\pi$,
 \item the slots from the set $\slots(P \to Q)$ form a contiguous subword in $\pi$.
\end{itemize}
See Figure~\ref{fig:PQ_tree}. 
The above properties allows us to denote:
\begin{itemize}
\item for every Q-node $Q$ by $\pi_{|Q}$ the circular word obtained from $\pi$ by substituting the contiguous subword $\slots(Q\to P)$ by the letter $P$, for every P-node $P$ adjacent to $M$,
\item for every P-node $P$ by $\pi_{|P}$ the circular word obtained from $\pi$ by substituting the contiguous subword $\slots(P\to Q)$ by the letter $Q$, for every Q-node $Q$ adjacent to $P$.
\end{itemize} 
See Figure~\ref{fig:PQ_tree}. 
Thus, we have defined the word $\pi_{|N}$ for every inner node $N$ in $\pqmtree^{PQ}$; note that
$\pi_{|N}$ is a circular ordering of the nodes adjacent to $N$ in $\pqmtree^{PQ}$.

The set $\Pi$ is represented by means of the sets of \emph{admissible orderings} $\Pi(N)$ of inner nodes $N$ of~$\pqmtree^{PQ}$,
where each member of $\Pi(N)$ is a circular order of the nodes adjacent to $N$ in~$\pqmtree^{PQ}$.
The sets $\Pi(N)$ are defined so as the following property holds:
\begin{description}
 \item [\namedlabel{prop:pm-tree}{(P4)}] There is a one-to-one correspondence between the set $\Pi$ and the set $\Pi_{\diamond}$, where
 $$\Pi_{\diamond} = \Bigg{\{}\Big{\{}\big{(}N,\pi_N\big{)}: 
 \begin{array}{c}
 \text{$N$ is an inner node in $\pqmtree^{PQ}$ and} \\
 \text{$\pi_N$ is an admissible ordering from $\Pi(N)$}
 \end{array}
\Big{\}}\Bigg{\}},$$
 established by 
 $$\Pi \ni \pi \quad \longrightarrow \quad \Big{\{}\big{(}N, \pi_{|N}\big{)}: N \text{ is an  inner node in } \pqmtree^{PQ}\Big{\}} \in \Pi_{\diamond}.$$
\end{description}

Let $\pi \in \Pi$. 
The set $\Big{\{}\big{(}N, \pi_{|N}\big{)}: N \text{ is a PQ-node in } \pqmtree^{PQ}\Big{\}}$ from $\Pi_{\diamond}$ 
corresponding to~$\pi$ induces a plane drawing $\pqmtree^{PQ}_{\pi}$ of the tree $\pqmtree^{PQ}$
in which for every PQ-node~$N$ the clockwise order of the neighbours of $N$ is given by $\pi_{|N}$.
Note that the circular word obtained by listing all the slots when walking the boundary of $\pqmtree^{PQ}_{\pi}$ in the clockwise order coincides with~$\pi$,
and hence we say that $\pqmtree^{PQ}_{\pi}$ \emph{represents}~$\pi$.

The sets $\Pi(\cdot)$ are defined differently depending on whether the module $V$ is
prime, serial, or parallel in $\strongModules(G_{ov})$. 
In any case, however, the sets $\Pi(\cdot)$ can be read out easily from $\pi(\phi)$. 

\subsubsection{$V$ is serial in $\strongModules(G_{ov})$}
It is shown in \cite{Krawczyk20} that we have $\camodules(M)=\{M\}$ for any child $M$ of $V$ in $\strongModules(G_{ov})$, 
and hence the set $\camodules$ of CA-modules of $G$ consists of the children of $V$ in $\strongModules(G_{ov})$.
Assume that $\camodules = \{S_1,\ldots,S_t\}$.
Since $V$ is serial, $(S_i,{\parallel})$ is connected for every $i \in [t]$ and $S_i \sim S_j$ for every two distinct $i,j \in [t]$.
The set $\Pi(V)$ is defined such that
$$
\Pi(V) =  \left \{ \pi :
\begin{array}{c}
\text{$\pi$ is a circular order of $S^0_1,S^1_1,\ldots,S^0_t,S^1_t$ such that for every } \\
\text{distinct $i,j \in [t]$ the slots corresponding to $S_i$ and $S_j$ overlap} 
\end{array}
\right \}.
$$
The PQ-tree $\pqmtree^{PQ}$ consists of the single \emph{serial} Q-node~$V$ and the 
leaf nodes from $\slots$ adjacent to $V$.
In particular, we have $\Pi = \Pi(V)$.
See Figure~\ref{fig:Pi-serial} for an illustration.

\begin{center}
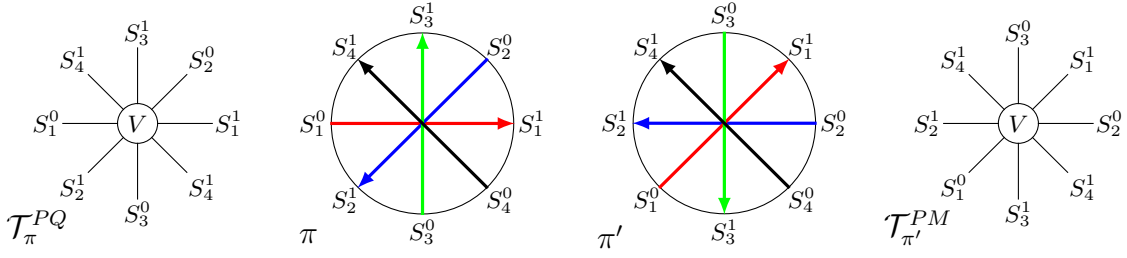
\begin{figure}[ht]
\begin{tikzpicture}[xscale=0.6,yscale=0.6,>=latex,shorten >=-0.4pt,shorten <=-0.4pt]
\coordinate (label) at (-2.2,-2.3) {};
\tikzstyle{every node}=[circle,minimum size=15pt,inner sep=0.5,draw];
\begin{scriptsize}
\node (V) at (0.0,0) {$V$};
\end{scriptsize}
\tikzstyle{every node}=[inner sep=1pt]
\begin{scriptsize}
\node (s1) at ($(center)+(0:2.0cm)$) {$S^1_1$};
\node (s2) at ($(center)+(45:2.0cm)$) {$S^0_2$};
\node (s3) at ($(center)+(90:2.0cm)$) {$S^1_3$};
\node (s4) at ($(center)+(135:2.0cm)$) {$S^1_4$};
\node (s5) at ($(center)+(180:2.0cm)$) {$S^0_1$};
\node (s6) at ($(center)+(225:2.0cm)$) {$S^1_2$};
\node (s7) at ($(center)+(270:2.0cm)$) {$S^0_3$};
\node (s8) at ($(center)+(315:2.0cm)$) {$S^1_4$};
\end{scriptsize}
\node at (label) {$\pqmtree^{PQ}_{\pi}$};
\path (V) edge (s1); 
\path (V) edge (s2); 
\path (V) edge (s3); 
\path (V) edge (s4); 
\path (V) edge (s5); 
\path (V) edge (s6); 
\path (V) edge (s7); 
\path (V) edge (s8); 
\draw[white] (-2.8,-2.8)--(-2.8,-2);
\draw[white] (2.8,2.8)--(2.8,2);
\end{tikzpicture}
\hspace{0.1cm}
\begin{tikzpicture}[yscale=0.6,xscale=0.6,>=latex,shorten >=-0.4pt,shorten <=-0.4pt]
\coordinate (center) at (0,0) {};
\coordinate (label) at (-2.5,-2.5) {};

\coordinate (s1) at ($(center)+(0:2.0cm)$) {};
\coordinate (s2) at ($(center)+(45:2.0cm)$) {};
\coordinate (s3) at ($(center)+(90:2.0cm)$) {};
\coordinate (s4) at ($(center)+(135:2.0cm)$) {};
\coordinate (s5) at ($(center)+(180:2.0cm)$) {};
\coordinate (s6) at ($(center)+(225:2.0cm)$) {};
\coordinate (s7) at ($(center)+(270:2.0cm)$) {};
\coordinate (s8) at ($(center)+(315:2.0cm)$) {};

\coordinate (ls1) at ($(center)+(0:2.4cm)$) {};
\coordinate (ls2) at ($(center)+(45:2.4cm)$) {};
\coordinate (ls3) at ($(center)+(90:2.4cm)$) {};
\coordinate (ls4) at ($(center)+(135:2.4cm)$) {};
\coordinate (ls5) at ($(center)+(180:2.4cm)$) {};
\coordinate (ls6) at ($(center)+(225:2.4cm)$) {};
\coordinate (ls7) at ($(center)+(270:2.4cm)$) {};
\coordinate (ls8) at ($(center)+(315:2.4cm)$) {};

\tikzstyle{every node}=[inner sep=1pt]
\begin{scriptsize}
\node at (ls1) {$S^1_1$};
\node at (ls2) {$S^0_2$};
\node at (ls3) {$S^1_3$};
\node at (ls4) {$S^1_4$};
\node at (ls5) {$S^0_1$};
\node at (ls6) {$S^1_2$};
\node at (ls7) {$S^0_3$};
\node at (ls8) {$S^0_4$};
\end{scriptsize}
\node at (label) {$\pi$};

\draw (0,0) circle (2cm);

\draw[very thick,red,<-] (s1) -- (s5);
\draw[very thick,blue,->] (s2) -- (s6);
\draw[very thick,green,<-] (s3) -- (s7);
\draw[very thick,<-] (s4) -- (s8);

\draw[white] (-2.8,-2.8)--(-2.8,-2);
\draw[white] (2.8,2.8)--(2.8,2);
\end{tikzpicture}
\hspace{0.3cm}
\begin{tikzpicture}[yscale=0.6,xscale=0.6,>=latex,shorten >=-0.4pt,shorten <=-0.4pt]
\coordinate (center) at (0,0) {};
\coordinate (label) at (-2.5,-2.5) {};

\coordinate (s1) at ($(center)+(0:2.0cm)$) {};
\coordinate (s2) at ($(center)+(45:2.0cm)$) {};
\coordinate (s3) at ($(center)+(90:2.0cm)$) {};
\coordinate (s4) at ($(center)+(135:2.0cm)$) {};
\coordinate (s5) at ($(center)+(180:2.0cm)$) {};
\coordinate (s6) at ($(center)+(225:2.0cm)$) {};
\coordinate (s7) at ($(center)+(270:2.0cm)$) {};
\coordinate (s8) at ($(center)+(315:2.0cm)$) {};

\coordinate (ls1) at ($(center)+(0:2.4cm)$) {};
\coordinate (ls2) at ($(center)+(45:2.4cm)$) {};
\coordinate (ls3) at ($(center)+(90:2.4cm)$) {};
\coordinate (ls4) at ($(center)+(135:2.4cm)$) {};
\coordinate (ls5) at ($(center)+(180:2.4cm)$) {};
\coordinate (ls6) at ($(center)+(225:2.4cm)$) {};
\coordinate (ls7) at ($(center)+(270:2.4cm)$) {};
\coordinate (ls8) at ($(center)+(315:2.4cm)$) {};

\tikzstyle{every node}=[inner sep=1pt]
\begin{scriptsize}
\node at (ls1) {$S^0_2$};
\node at (ls2) {$S^1_1$};
\node at (ls3) {$S^0_3$};
\node at (ls4) {$S^1_4$};
\node at (ls5) {$S^1_2$};
\node at (ls6) {$S^0_1$};
\node at (ls7) {$S^1_3$};
\node at (ls8) {$S^0_4$};
\end{scriptsize}
\node at (label) {$\pi'$};

\draw (0,0) circle (2cm);

\draw[very thick,blue,->] (s1) -- (s5);
\draw[very thick,red,<-] (s2) -- (s6);
\draw[very thick,green,->] (s3) -- (s7);
\draw[very thick,<-] (s4) -- (s8);
\draw[white] (-2.8,-2.8)--(-2.8,-2);
\draw[white] (2.8,2.8)--(2.8,2);
\end{tikzpicture}
\hspace{0.1cm}
\begin{tikzpicture}[xscale=0.6,yscale=0.6,>=latex,shorten >=-0.4pt,shorten <=-0.4pt]
\coordinate (label) at (-2.2,-2.3) {};
\tikzstyle{every node}=[circle,minimum size=15pt,inner sep=0.5,draw];
\begin{scriptsize}
\node (V) at (0.0,0) {$V$};
\end{scriptsize}
\tikzstyle{every node}=[inner sep=1pt]
\begin{scriptsize}
\node (s1) at ($(center)+(0:2.0cm)$) {$S^0_2$};
\node (s2) at ($(center)+(45:2.0cm)$) {$S^1_1$};
\node (s3) at ($(center)+(90:2.0cm)$) {$S^0_3$};
\node (s4) at ($(center)+(135:2.0cm)$) {$S^1_4$};
\node (s5) at ($(center)+(180:2.0cm)$) {$S^1_2$};
\node (s6) at ($(center)+(225:2.0cm)$) {$S^0_1$};
\node (s7) at ($(center)+(270:2.0cm)$) {$S^1_3$};
\node (s8) at ($(center)+(315:2.0cm)$) {$S^1_4$};
\end{scriptsize}
\node at (label) {$\pqmtree^{PM}_{\pi'}$};
\path (V) edge (s1); 
\path (V) edge (s2); 
\path (V) edge (s3); 
\path (V) edge (s4); 
\path (V) edge (s5); 
\path (V) edge (s6); 
\path (V) edge (s7); 
\path (V) edge (s8); 
\draw[white] (-2.8,-2.8)--(-2.8,-2);
\draw[white] (2.8,2.8)--(2.8,2);
\end{tikzpicture}
\caption{\label{fig:Pi-serial} Two members $\pi$ and $\pi'$ of the set $\Pi$ and the trees $\pqmtree^{PQ}_{\pi}$ and $\pqmtree^{PQ}_{\pi'}$ representing $\pi$ and $\pi'$ for the case when $V$ is serial in $\strongModules(G_{ov})$: $\pi'$ is obtained from $\pi$ by swapping the red and blue metachords and by flipping the orientation of the green metachord.}
\end{figure}

\end{center}

\subsubsection{$V$ is prime in $\strongModules(G_{ov})$}
Let $M$ be a child of $V$ in $\strongModules(G_{ov})$.
It is shown in \cite{Krawczyk20} that if $M$ is prime, then $\camodules(M)=\{M\}$ and
if $M$ is serial/parallel, then every module from $\camodules(M)$ is the union of some children of $M$ in $\strongModules(G_{ov})$. 
Assume that $\camodules = \{S_1,\ldots,S_t\}$ and $\pi = \pi(\phi)$.
We set
$$\Pi(V) = \{\pi,\pi^R\},$$
where $\pi^R$ is the reflection of $\pi$. 
The PQ-tree $\pqmtree^{PQ}$ consists of the single \emph{prime} Q-node~$V$ and the 
leaf nodes from $\slots$ adjacent to $V$.
We have $\Pi = \Pi(V)$.
See Figure~\ref{fig:Pi-prime} for an illustration.
\begin{center}
\begin{figure}[ht]
\begin{tikzpicture}[xscale=0.6,yscale=0.6,>=latex,shorten >=-0.4pt,shorten <=-0.4pt]
\coordinate (label) at (-2.2,-2.3) {};
\tikzstyle{every node}=[circle,minimum size=15pt,inner sep=0.5,draw];
\begin{scriptsize}
\node (V) at (0.0,0) {$V$};
\end{scriptsize}
\tikzstyle{every node}=[inner sep=1pt]
\begin{scriptsize}
\node (s1) at ($(center)+(0:2.0cm)$) {$S^1_1$};
\node (s2) at ($(center)+(45:2.0cm)$) {$S^1_2$};
\node (s3) at ($(center)+(90:2.0cm)$) {$S^1_3$};
\node (s4) at ($(center)+(135:2.0cm)$) {$S^0_2$};
\node (s5) at ($(center)+(180:2.0cm)$) {$S^1_4$};
\node (s6) at ($(center)+(225:2.0cm)$) {$S^0_1$};
\node (s7) at ($(center)+(270:2.0cm)$) {$S^0_4$};
\node (s8) at ($(center)+(315:2.0cm)$) {$S^1_3$};
\end{scriptsize}
\node at (label) {$\pqmtree^{PQ}_{\pi}$};
\path (V) edge (s1); 
\path (V) edge (s2); 
\path (V) edge (s3); 
\path (V) edge (s4); 
\path (V) edge (s5); 
\path (V) edge (s6); 
\path (V) edge (s7); 
\path (V) edge (s8); 
\draw[white] (-2.8,-2.8)--(-2.8,-2);
\draw[white] (2.8,2.8)--(2.8,2);
\end{tikzpicture}
\hspace{0.1cm}
\begin{tikzpicture}[yscale=0.6,xscale=0.6,>=latex,shorten >=-0.4pt,shorten <=-0.4pt]
\coordinate (center) at (0,0) {};
\coordinate (label) at (-2.2,-2.3) {};

\coordinate (s1) at ($(center)+(30:2.0cm)$) {};
\coordinate (s2) at ($(center)+(90:2.0cm)$) {};
\coordinate (s3) at ($(center)+(150:2.0cm)$) {};
\coordinate (s4) at ($(center)+(180:2.0cm)$) {};
\coordinate (s5) at ($(center)+(220:2.0cm)$) {};
\coordinate (s6) at ($(center)+(260:2.0cm)$) {};
\coordinate (s7) at ($(center)+(290:2.0cm)$) {};
\coordinate (s8) at ($(center)+(360:2.0cm)$) {};

\coordinate (ls1) at ($(center)+(30:2.4cm)$) {};
\coordinate (ls2) at ($(center)+(90:2.4cm)$) {};
\coordinate (ls3) at ($(center)+(150:2.4cm)$) {};
\coordinate (ls4) at ($(center)+(180:2.4cm)$) {};
\coordinate (ls5) at ($(center)+(220:2.4cm)$) {};
\coordinate (ls6) at ($(center)+(260:2.4cm)$) {};
\coordinate (ls7) at ($(center)+(290:2.4cm)$) {};
\coordinate (ls8) at ($(center)+(360:2.4cm)$) {};

\tikzstyle{every node}=[inner sep=1pt]
\begin{scriptsize}
\node at (ls1) {$S^1_2$};
\node at (ls2) {$S^1_3$};
\node at (ls3) {$S^0_2$};
\node at (ls4) {$S^1_4$};
\node at (ls5) {$S^0_1$};
\node at (ls6) {$S^0_4$};
\node at (ls7) {$S^0_3$};
\node at (ls8) {$S^1_1$};
\end{scriptsize}
\node at (label) {$\pi$};

\draw (0,0) circle (2cm);

\draw[very thick,->] (s3) -- (s1);
\draw[very thick,red,->] (s7) -- (s2);
\draw[very thick,->] (s6) -- (s4);
\draw[very thick,->] (s5) -- (s8);

\draw[thick,red] ([shift=(90:2.0cm)]0,0) arc (90:290:2.0cm);

\draw[white] (-2.8,-2.8)--(-2.8,-2);
\draw[white] (2.8,2.8)--(2.8,2);
\end{tikzpicture}
\begin{tikzpicture}[yscale=0.6,xscale=-0.6,>=latex,shorten >=-0.4pt,shorten <=-0.4pt]
\draw[white] (-0.5,-3)--(-0.5,-3);
\draw[white] (0.5,3)--(0.5,3);
\draw[black, dashed] (0,-2.5)--(0,2.5);
\end{tikzpicture}
\begin{tikzpicture}[yscale=0.6,xscale=-0.6,>=latex,shorten >=-0.4pt,shorten <=-0.4pt]
\coordinate (center) at (0,0) {};
\coordinate (label) at (-2.2,-2.3) {};

\coordinate (s1) at ($(center)+(30:2.0cm)$) {};
\coordinate (s2) at ($(center)+(90:2.0cm)$) {};
\coordinate (s3) at ($(center)+(150:2.0cm)$) {};
\coordinate (s4) at ($(center)+(180:2.0cm)$) {};
\coordinate (s5) at ($(center)+(220:2.0cm)$) {};
\coordinate (s6) at ($(center)+(260:2.0cm)$) {};
\coordinate (s7) at ($(center)+(290:2.0cm)$) {};
\coordinate (s8) at ($(center)+(360:2.0cm)$) {};

\coordinate (ls1) at ($(center)+(30:2.4cm)$) {};
\coordinate (ls2) at ($(center)+(90:2.4cm)$) {};
\coordinate (ls3) at ($(center)+(150:2.4cm)$) {};
\coordinate (ls4) at ($(center)+(180:2.4cm)$) {};
\coordinate (ls5) at ($(center)+(220:2.4cm)$) {};
\coordinate (ls6) at ($(center)+(260:2.4cm)$) {};
\coordinate (ls7) at ($(center)+(290:2.4cm)$) {};
\coordinate (ls8) at ($(center)+(360:2.4cm)$) {};

\tikzstyle{every node}=[inner sep=1pt]
\begin{scriptsize}
\node at (ls1) {$S^0_2$};
\node at (ls2) {$S^0_3$};
\node at (ls3) {$S^1_2$};
\node at (ls4) {$S^0_4$};
\node at (ls5) {$S^1_1$};
\node at (ls6) {$S^1_4$};
\node at (ls7) {$S^1_3$};
\node at (ls8) {$S^0_1$};
\end{scriptsize}
\node at (label) {$\pi^R$};

\draw (0,0) circle (2cm);

\draw[very thick,<-] (s3) -- (s1);
\draw[very thick,red,<-] (s7) -- (s2);
\draw[very thick,<-] (s6) -- (s4);
\draw[very thick,<-] (s5) -- (s8);

\draw[thick,red] ([shift=(90:2.0cm)]0,0) arc (90:290:2.0cm);

\draw[white] (-2.8,-2.8)--(-2.8,-2);
\draw[white] (2.8,2.8)--(2.8,2);
\end{tikzpicture}
\begin{tikzpicture}[xscale=-0.6,yscale=0.6,>=latex,shorten >=-0.4pt,shorten <=-0.4pt]
\coordinate (label) at (-2.2,-2.3) {};
\tikzstyle{every node}=[circle,minimum size=15pt,inner sep=0.5,draw];
\begin{scriptsize}
\node (V) at (0.0,0) {$V$};
\end{scriptsize}
\tikzstyle{every node}=[inner sep=1pt]
\begin{scriptsize}
\node (s1) at ($(center)+(0:2.0cm)$) {$S^0_1$};
\node (s2) at ($(center)+(45:2.0cm)$) {$S^0_2$};
\node (s3) at ($(center)+(90:2.0cm)$) {$S^0_3$};
\node (s4) at ($(center)+(135:2.0cm)$) {$S^1_2$};
\node (s5) at ($(center)+(180:2.0cm)$) {$S^0_4$};
\node (s6) at ($(center)+(225:2.0cm)$) {$S^1_1$};
\node (s7) at ($(center)+(270:2.0cm)$) {$S^1_4$};
\node (s8) at ($(center)+(315:2.0cm)$) {$S^0_3$};
\end{scriptsize}
\node at (label) {$\pqmtree^{PQ}_{\pi^R}$};
\path (V) edge (s1); 
\path (V) edge (s2); 
\path (V) edge (s3); 
\path (V) edge (s4); 
\path (V) edge (s5); 
\path (V) edge (s6); 
\path (V) edge (s7); 
\path (V) edge (s8); 
\draw[white] (-2.8,-2.8)--(-2.8,-2);
\draw[white] (2.8,2.8)--(2.8,2);
\end{tikzpicture}
\caption{\label{fig:Pi-prime} Two members $\pi$ and $\pi^R$ of the set $\Pi$ and the trees $\pqmtree^{PQ}_{\pi}$ and $\pqmtree^{PQ}_{\pi^R}$ representing $\pi$ and $\pi^R$ for the case when $V$ is prime in $\strongModules(G_{ov})$: $\pi^R \equiv S^1_2S^0_4S^1_1S^1_4S^1_3S^0_1S^0_2S^0_3$ is the reflection of $\pi \equiv S^1_3S^1_2S^1_1S^0_3S^0_4S^0_1S^1_4S^0_2$.}
\end{figure}

\end{center}

\subsubsection{$V$ is parallel in $\strongModules(G_{ov})$}
In this case the children of $V$ in $\strongModules(G_{ov})$ correspond to the 
connected components of $G_{ov}$, which in turn correspond to Q-nodes in $\pqmtree^{PQ}$.
Also, we recall that $\camodules(Q)$ forms a partition of $Q$, for every Q-node $Q$ of $\pqmtree^{PQ}$.

Let $\pi = \pi(\phi)$ be the circular order of the slots in $\phi$.
For every PQ-node $N$ in $\pqmtree^{PQ}$ we define the set $\Pi(N)$, where:
\begin{itemize}
 \item for a P-node $P$ the set $\Pi(P)$ contains all circular orders of the M-nodes adjacent to $P$,
 \item for a Q-node $Q$ the set $\Pi(Q)$ contains two members: $\pi_{|Q}$ and its reflection $(\pi_{|Q})^R$.
\end{itemize}
We note that when $\camodules(Q)=\{Q\}$ then we have $\pi_{|Q} \equiv (\pi_{|Q})^R$.
Also, note that any circular order of the slots in $\Pi$ can be obtained by performing a sequence of the operations on the tree $\pqmtree^{PQ}_{\pi}$, where each of them either
\begin{itemize}
 \item \emph{reflects a Q-node}, or
 \item \emph{permutes arbitrarily} the neighbours of \emph{a P-node}.
\end{itemize}
The reflection of a Q-node $Q$ transforms $\pi$ into $\pi'$, 
where the only difference between the trees $\pqmtree^{PQ}_{\pi}$ and $\pqmtree^{PQ}_{\pi'}$ is on the node $M$:
$\pqmtree^{PQ}_{\pi}$ orders the neighbours of $Q$ consistently with $\pi_{|Q}$ and
$\pqmtree^{PQ}_{\pi'}$ orders the neighbours of $Q$ consistently with the reflection of $\pi_{|Q}$ (that is, we have $\pi'_{|Q} \equiv (\pi_{|Q})^R$).
See Figure~\ref{fig:Q-node-reflection} for an illustration.
\input ./figures/data_structure/parallel/Q_node_reflection.tex

A permutation of a P-node $P$ transforms $\pi$ into $\pi'$, 
where the only difference between the trees $\pqmtree^{PQ}_{\pi}$ and $\pqmtree^{PQ}_{\pi'}$ is on the node $P$:
$\pqmtree^{PQ}_{\pi}$ orders the neighbours of $Q$ consistently with $\pi_{|P}$ and 
$\pqmtree^{PQ}_{\pi'}$ orders the neighbours of $Q$ consistently with $\pi'_{|P}$, where
$\pi'_{|P}$ is any circular ordering of the neighbours of $P$.
See Figure~\ref{fig:P-node-permutation} for an illustration.
\input ./figures/data_structure/parallel/P_node_permutation.tex

\subsection{The structure of the conformal models of $G$}
In order to describe the structure of conformal models of $G$ in the way we will use it in this paper, we extend the tree $\pqmtree^{PQ}$ by attaching, for every Q-node $Q$ and every $S \in \camodules(Q)$, the root $S$ of the modular decomposition tree $\pqmtree_{S}$ to the node $Q$.
This way we obtain the PQM-tree $\mathcal{T}$.
Finally, to simplify our notation, for every Q-node $Q$ we denote by $\pqmtree_{Q}$ the subtree of $\pqmtree$ rooted in $Q$ restricted to the node $Q$ and the nodes in the trees $\pqmtree_S$ for $S \in \camodules(Q)$.

Let $\phi$ be any conformal model of $G$.
Now, for every PQM-node $N$ (that is, for $N$ which is either P-node, Q-node, or M-node in $\pqmtree$)
by $\phi_{|N}$ we denote an \emph{admissible ordering of $N$ induced by the model $\phi$}, defined such that:
\begin{itemize}
 \item $\phi_{|N} = \pi(\phi)_{|N}$ if $N$ is a PQ-node in $\pqmtree$,
 \item $\phi_{|N} = (\phi|S^0,\phi|S^1)_{|N}$ if $N$ is an M-node in $\pqmtree$.
\end{itemize}
Finally, we can summarize this section with the following theorem.
\begin{theorem}
There is a bijection between the set $\Phi$ of conformal models of $G$ and the set $\Phi_\diamond$, where 
$$\Phi_{\diamond} = \Bigg{\{} \Big{\{}\big{(}N,\pi_N\big{)}: 
\begin{array}{c}
\text{$N$ is an PQM-node in $\pqmtree$ and} \\
\text{$\pi_N$ is an admissible ordering from $\Pi(N)$} \\
\end{array}
 \Big{\}} \Bigg{\}},
$$
established by 
$$\Phi \ni \phi \quad \longrightarrow \quad  \Big{\{}\big{(}N,{\phi}_{|N}\big{)}: N \text{ is a PQM-node in } \pqmtree \Big{\}} \in \Phi_\diamond.$$ 
\end{theorem}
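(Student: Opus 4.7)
The plan is to assemble the required bijection from two component bijections already established in this section, each handling one ``layer'' of the PQM-tree. The outer layer will be handled by Property~\ref{prop:pm-tree}, which gives a bijection between $\Pi$ and $\Pi_{\diamond}$ describing the admissible orderings of PQ-nodes. The inner layer, separately for each CA-module $S_i$, will be handled by the previously stated metachord bijection $\Phi^{\SSS_i} \leftrightarrow \Phi^{\SSS_i}_{\diamond}$ describing the admissible orderings of M-nodes in $\pqmtree_{S_i}$. Property~\ref{prop:completeness} will serve as the ``glue'': it says that every conformal model $\phi$ is obtained in a unique way by choosing a slot order $\pi \in \Pi$ and, independently for every $i \in [t]$, an admissible model $\tau_i = (\tau^0_i,\tau^1_i)$ for $\SSS_i$, and then substituting $\tau^j_i$ in place of the slot $S^j_i$ inside $\pi$.

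For the forward direction, I will start with $\phi \in \Phi$ and use Property~\ref{prop:slots} to ensure each slot appears as a contiguous subword $\phi|S^j_i$, Property~\ref{prop:circular-orders-of-the-slots} to place $\pi(\phi)$ in $\Pi$, and Property~\ref{prop:metachords} to certify that each pair $(\phi|S^0_i,\phi|S^1_i)$ is admissible for $\SSS_i$. Applying the pm-tree bijection to $\pi(\phi)$ then produces the family $\{(N,\phi_{|N})\}$ over PQ-nodes $N$, using that $\phi_{|N} = \pi(\phi)_{|N}$ for PQ-nodes by definition. Applying the metachord bijection to each pair $(\phi|S^0_i,\phi|S^1_i)$ produces the M-node part of the family. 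Their union is the required element of $\Phi_{\diamond}$.

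For the inverse direction, I will start with $\{(N,\pi_N)\}_{N} \in \Phi_{\diamond}$, restrict the family to PQ-nodes to recover a unique $\pi \in \Pi$ via the pm-tree bijection, and restrict it to the M-nodes in each $\pqmtree_{S_i}$ to recover a unique admissible model $\tau_i$ for $\SSS_i$ via the metachord bijection. Property~\ref{prop:completeness} then assembles a conformal model $\phi$ from this data by substitution. Verifying that the two composites are identities will be essentially bookkeeping: by construction of the substitution, $\phi|S^j_i = \tau^j_i$ and $\pi(\phi) = \pi$, hence $\phi_{|N} = \pi_N$ for every PQM-node $N$.

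The main (and essentially only) subtle point is the uniqueness half of the decomposition provided by Property~\ref{prop:completeness}: it must be the case that the ``outer'' slot order $\pi(\phi)$ and the ``inner'' permutation models $(\phi|S^0_i,\phi|S^1_i)$ are both uniquely readable from $\phi$, so that the two bijections may be inverted independently on disjoint portions of the PQM-tree. This independence is precisely what Property~\ref{prop:slots} guarantees, since it ensures the contiguous subwords $\phi|S^j_i$ for different CA-modules do not interleave within $\phi$. With that observation in place, the proof reduces to composing the two already-established bijections on disjoint blocks of nodes.
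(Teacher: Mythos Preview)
Your proposal is correct and matches the paper's approach: the paper states this theorem as a summary without a separate proof, treating it as the evident combination of the metachord bijection $\Phi^{\SSS_i}\leftrightarrow\Phi^{\SSS_i}_{\diamond}$ (handling the M-nodes of each $\pqmtree_{S_i}$) with Property~\ref{prop:pm-tree} (handling the PQ-nodes), glued together by the decomposition in Properties~\ref{prop:slots}--\ref{prop:completeness}. Your write-up spells out exactly this composition and correctly identifies the only point needing care, namely that the slot order and the per-metachord models are independently and uniquely recoverable from $\phi$.
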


\subsection{Conformal models and the Helly property: notation and basic properties}

Let $G=(V,E)$ be a circular-arc graph and let $\pqmtree$ be the PQM-tree of $G$.

Let $M$ be an M-node in $\pqmtree$.
We say a vertex $a \in M$ is \emph{$(M^j,M^{1-j})$-oriented} in the metachord $\MMM$
if $a^0 \in M^j$ and $a^1 \in M^{1-j}$.
Clearly, if $a$ is $(M^j,M^{1-j})$-oriented in $\MMM$, then $a$ is $(S^j,S^{1-j})$-oriented in $\SSS$, where $S$ is a CA-module containing $M$.

Let $C_1,\ldots,C_k$ be some fixed cliques of $G$.
A circular word $\phi$ over the letters $V^* \cup \{C_1,\ldots,C_k\}$
is a \emph{$\{C_1,\ldots,C_k\}$-conformal model of $G$} if:
\begin{description}
 \item [\namedlabel{prop:conformal-model-1}{(H1)}] $\phi|V^*$ is a conformal model of $G$.
 \item [\namedlabel{prop:conformal-model-2}{(H2)}] For every $i \in [k]$ and every $a \in C_i$ 
 the point $\phi(C_i)$ lies on the left side of the chord $\phi(a)$.
\end{description}
Clearly, every conformal model of $G$ in which the cliques $C_1,\ldots,C_k$ satisfy the Helly property might be extended to a $\{C_1,\ldots,C_k\}$-conformal model.
We abuse slightly our notation, and for a $\{C_1,\ldots,C_k\}$-conformal model of $G$, an M-node $M$, and $j \in \{0,1\}$, by $\phi|M^j$ we denote the shortest contiguous subword of $\phi$ containing all the letters from $M^j$ and no letter from the set $M^{1-j}$.

Let $M$ be an M-node in $\pqmtree$ and let $C_i$ be a clique from $\{C_1,\ldots,C_k\}$.
We say~$C_i$ is \emph{private} for $M$ (or $M$ is an \emph{owner} of $C_i$) 
if there are two distinct vertices in $C_i \cap M$ with different orientations in $\MMM$.
We say~$C_i$ is \emph{private} if $C_i$ is private for some M-node in $\pqmtree$; otherwise we say~$C_i$ is \emph{public}.

\begin{lemma}
\label{lemma:private-nodes}
Let $M$ be an M-node in $\pqmtree$ and let $C_i \in \{C_1,\ldots,C_k\}$ be such that $C_i$ is private for $M$.
Then:
\begin{enumerate}
\item \label{item:private-nodes-model} For every $\{C_1,\ldots,C_k\}$-conformal model $\phi$ of $G$ the letter $C_i$ occurs either in $\phi|M^0$ or in $\phi|M^1$.
\item \label{item:private-nodes-path} Suppose $M \subseteq S$ for some CA-module $S$. 
Then the clique $C_i$ is private for every M-node $M'$ lying on the path from $M$ to $S$ in $\pqmtree_{S}$.
In particular, the letter $C_i$ occurs either in the slot $\phi|S^0$ 
or in the slot $\phi|S^1$ in any $\{C_1,\ldots,C_k\}$-conformal model $\phi$ of~$G$.
\end{enumerate}
\end{lemma}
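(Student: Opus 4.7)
The plan is to prove part~(1) by a geometric argument pinning down the point $\phi(C_i)$ using the Helly condition and two witnesses of privateness, and then to bootstrap part~(2) by showing that privateness propagates up the path from $M$ to the root $S$ of $\pqmtree_S$.

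For part~(1), since $C_i$ is private for $M$ there exist $a,b \in C_i \cap M$ with opposite orientations in $\MMM$; assume without loss of generality that $a$ is $(M^0,M^1)$-oriented and $b$ is $(M^1,M^0)$-oriented. From the fact noted in the preceding subsection (a corollary of Theorem~\ref{thm:transitive_orientation_of_edges_between_children}) the letters of $M^0$ and $M^1$ occupy two disjoint contiguous subwords $\phi|M^0$ and $\phi|M^1$ in $\phi$, splitting the circle into four cyclic pieces $\phi|M^0,\, X,\, \phi|M^1,\, Y$, where $X$ and $Y$ are the two ``gap'' arcs between $\phi|M^0$ and $\phi|M^1$. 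Each of the chords $\phi(a)$ and $\phi(b)$ has one endpoint in $\phi|M^0$ and one in $\phi|M^1$, so each of them separates $X$ from $Y$.

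The crux of the argument is the observation that the left sides of $\phi(a)$ and $\phi(b)$ lie on \emph{opposite} sides of this $X/Y$ separation. Under the conformal-model convention (the arc $\psi(v)$ sits on the left of the directed chord $\phi(v)$), the left arc of a directed chord is precisely the arc traversed clockwise from its tail to its head. Since the tail of $\phi(a)$ lies in $\phi|M^0$ and its head in $\phi|M^1$, going clockwise from $a^0$ to $a^1$ passes through $X$ but not through $Y$; symmetrically, going clockwise from $b^0 \in \phi|M^1$ to $b^1 \in \phi|M^0$ passes through $Y$ but not through $X$. Consequently the intersection of the two left sides avoids both $X$ and $Y$ and is contained in $\phi|M^0 \cup \phi|M^1$. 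By~\ref{prop:conformal-model-2} the point $\phi(C_i)$ must lie on the left of both $\phi(a)$ and $\phi(b)$, so the letter $C_i$ occurs inside $\phi|M^0$ or $\phi|M^1$, which proves~(1).

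For part~(2) fix any M-node $M'$ on the path from $M$ to $S$ in $\pqmtree_S$, so that $M \subseteq M' \subseteq S$. From the identity $N^j = N^* \cap S^j$ applied to $N \in \{M, M'\}$ we obtain $M^j \subseteq (M')^j$ for each $j \in \{0,1\}$, so the witnesses $a,b$ retain their opposite orientations in the metachord $\mathbb{M}'$; hence $C_i$ is private for~$M'$. In particular $C_i$ is private for the root $S$ of $\pqmtree_S$, which is an inner node of $\pqmtree_S$ and therefore an M-node since $|S| \geq |C_i \cap M| \geq 2$. Applying (1) to $S$ then yields that the letter $C_i$ occurs in $\phi|S^0$ or $\phi|S^1$, completing the proof.

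The main obstacle is the ``opposite left sides'' claim in the middle of~(1): one has to unpack the arc--chord convention carefully enough to see that swapping which of $M^0, M^1$ contains the tail (versus the head) of the chord exchanges which of $X, Y$ lies on the left. Once this geometric observation is granted, the rest of the proof is routine bookkeeping with the nested structure $M \subseteq M' \subseteq S$ and the inclusion $M^j \subseteq (M')^j$.
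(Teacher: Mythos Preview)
Your proof is correct and follows the same line as the paper's: pick two witnesses $a,b\in C_i\cap M$ of opposite orientation, use \ref{prop:conformal-model-2} to force the clique point into the intersection of their left arcs, which lies in $\phi|M^0\cup\phi|M^1$; then propagate privateness up via $M\subseteq M'\subseteq S$ and apply part~(1) to $S$. The paper's write-up is terser (it phrases the geometric step directly as ``$C_i$ lies between $a^0$ and $b^1$ or between $b^0$ and $a^1$''), but the underlying argument is identical.
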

\begin{proof}

First we prove~\eqref{item:private-nodes-model}.
Since $C_i$ is private for $M$, there are $a,b \in C_i\cap M$ such that $a$ is $(M^0,M^1)$-oriented and $b$ is $(M^1,M^0)$-oriented in $\MMM$.
It means that $a^0,b^1 \in M^0$ and $a^1,b^0 \in M^1$.
Since $\phi(C_i)$ is to the left of $\phi(a)$ and $\phi(b)$, 
$C_i$ must be contained either between $a^0$ and $b^1$ or between $b^0$ and $a^1$ in $\phi$.
In the first case $C_i$ is contained in $\phi|M^0$ and in the second case $C_i$ is contained in $\phi|M^1$.
Statement \eqref{item:private-nodes-path} follows from the fact that $M \subseteq M' \subseteq S$.
\end{proof}

Clearly, every $\{C_1,\ldots,C_k\}$-conformal model of $G$ can be turned into $\{C_1,\ldots,C_k\}$-conformal model $\phi$ of $G$ such that:
\begin{description}
 \item [\namedlabel{prop:conformal-model-3}{(H3)}] For every $i \in [k]$ and every M-node $M$ of $\pqmtree$ such that $C_i$ is not private for $M$ the letter $C_i$ does not occur in the words $\phi|M^0$ and $\phi|M^1$.
\end{description}
Indeed, if $C_i$ is not private for $M$, then all the vertices from $C_i \cap M$ have the 
same orientation in $\MMM$, and we can simply push out the clique letter $C_i$ from 
$\phi|M^j$ whenever $C_i$ is contained in $\phi|M^j$, for any $j \in \{0,1\}$.
In particular, if $C_i$ is public and~\ref{prop:conformal-model-3} holds, then $C_i$ occurs outside the slot $\phi|S^j$ for every $j \in \{0,1\}$ and every $S \in \camodules$.
In the rest of the paper we assume $\{C_1,\ldots,C_k\}$-conformal models of $G$
satisfy additionally the property~\ref{prop:conformal-model-3}.

\section{An alternative proof of the Lin-Szwarcfiter theorem}
\label{sec:Lin-Szwarcfiter}
In \cite{LinSchw06} Lin and Szwarcfiter have proved the following:
\begin{theorem}
\label{thm:Lin-Szwarcfiter}
Let $G$ be a circular-arc graph.
Either every normalized model of $G$ satisfies the Helly property or every normalized model of $G$ does not satisfy the Helly property.
\end{theorem}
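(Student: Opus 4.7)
The plan is to exploit the bijection between normalized models and conformal models of $G$, together with the parameterization of conformal models by families of admissible orderings at the PQM-nodes of $\pqmtree$ (via the final theorem of Section~\ref{sec:PQM-trees}). Since a clique $C$ is Helly in a conformal model $\phi$ precisely when $\phi$ admits an extension to a $\{C\}$-conformal model of $G$ (by property~\ref{prop:conformal-model-2}), the theorem reduces to showing that the property ``some clique of $G$ is non-Helly in $\phi$'' is independent of the conformal model $\phi$. Any two conformal models $\phi,\phi'$ differ in the admissible orderings at finitely many PQM-nodes, so one may pass from $\phi$ to $\phi'$ by a finite sequence of local moves, each replacing the admissible ordering at a single PQM-node $N$. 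Hence it suffices to show that the existence of a non-Helly clique is invariant under each such local move.

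The proof then splits by the type of $N$. For a Q-node, the two admissible orderings are mutual reflections of the contiguous substring of $\phi$ read along the incident Q-subtrees, and a clique whose arcs cover the circle in $\phi$ has its arcs still cover the circle after the local reflection, so the Helly status of every clique is preserved. For a P-node $P$, the admissible orderings permute the contiguous substrings associated with the Q-node subtrees adjacent to $P$; here I would invoke the fact from Section~\ref{subsec:PQ-tree} that Q-nodes adjacent to a common P-node correspond to pairwise neighbouring components of $G_{ov}$. This forces any clique $C$ of $G$ that spans more than one such subtree to consist only of pairs in containment or cover-the-circle relations, which in turn implies that the circle-covering property of $\phi(C)$ is invariant under reordering the blocks; a clique contained in a single Q-subtree is trivially unaffected.

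The main obstacle is the analysis at an M-node $M$. Changing the admissible ordering at $M$ alters the transitive orientation $\prec_M$ of $(M,\sim_M)$ while keeping the external positions of letters outside $\phi|M^0 \cup \phi|M^1$ fixed, as well as the transitive orientation $<_M$ of $(M,\parallel)$. Using Lemma~\ref{lemma:private-nodes}, I would split the cliques of $G$ into those public with respect to $M$ and those private for some M-node in the subtree of $\pqmtree_S$ rooted at $M$. For a public clique $C$, the witnessing letter $C$ in a $\{C\}$-conformal extension can be chosen outside $\phi|M^0$ and $\phi|M^1$ (property~\ref{prop:conformal-model-3}), so the Helly status of $C$ is unaffected by the change at $M$. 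For a private clique $C$, Lemma~\ref{lemma:private-nodes}\eqref{item:private-nodes-model} localizes any Helly witness to one of $\phi|M^0$ or $\phi|M^1$; a recursive descent along $\pqmtree_S$ then reduces the question to whether an oriented permutation model of $(M,\sim_M)$ admissible by $\mathbb{M}$ admits a compatible insertion of the witnessing point, and a direct argument shows that the admissibility of such an insertion depends only on $<_M$ together with the fixed partition of $C \cap M$ among the slots of the children of $M$, not on the chosen $\prec_M$. Combining the three cases yields the invariance of the non-Helly property under local changes at every PQM-node, and hence across all conformal and normalized models of $G$.
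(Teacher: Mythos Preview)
Your local-move strategy is a natural idea, but the M-node step contains a genuine gap. You claim that for a clique $C$ private to $M$, ``the admissibility of such an insertion depends only on $<_M$ together with the fixed partition of $C\cap M$ among the slots of the children of $M$, not on the chosen $\prec_M$.'' This is false. Take a serial M-node $M$ with three children $K_1,K_2,K_3$ and a clique $C=\{a,b,c\}$ with $a\in K_1$, $b\in K_2$, $c\in K_3$, where $a$ is $(M^0,M^1)$-oriented and $b,c$ are $(M^1,M^0)$-oriented. Here $(M,{\parallel_M})$ is empty, so $<_M$ carries no information, and the partition of $C\cap M$ among the children is fixed; yet the ordering $K_1\prec_M K_2\prec_M K_3$ makes $C$ Helly while $K_2\prec_M K_1\prec_M K_3$ makes $C$ non-Helly (this is precisely the mechanism behind the \NP-hardness reduction in Section~\ref{sec:npc}). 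The same phenomenon occurs for serial Q-nodes, which you also gloss over by saying ``the two admissible orderings are mutual reflections'' --- that holds only for prime Q-nodes. Because individual cliques genuinely change Helly status under M-node and serial Q-node moves, an argument that tracks each clique separately cannot work.

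The paper's proof avoids this by never claiming invariance of a single clique. Instead, given a non-Helly model $\phi$, it takes a minimal non-Helly clique $C$ (which by Lemma~\ref{lem:non-Helly-clique-forces-forbidden-structure} is a cycle in $G_{ov}$), and shows that either $C$ is rigid (Claims~\ref{claim:large_rigid_forbidden_structure}--\ref{claim:small_rigid_forbidden_structure}), or $|C|=3$ and one can find two auxiliary vertices $d,e$ via the technical Lemma~\ref{lem:technical-lemma-crossing-chords} so that in \emph{every} conformal model some triple from $\{a,b,c,d,e\}$ forms a non-Helly structure (Figure~\ref{fig:abcde_relations}). The crux is thus not that $C$ stays non-Helly, but that the five-vertex configuration $\{a,b,c,d,e\}$ always contains a non-Helly triangle --- a global witness you do not produce. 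If you want to salvage the local-move viewpoint, you would need, at each serial node, to exhibit such a replacement clique whenever the original one becomes Helly; that is essentially what Lemma~\ref{lem:technical-lemma-crossing-chords} does, and it is the missing ingredient in your outline.
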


Let $G$ be a circular-arc graph, $G_{ov}=(V,{\sim})$ be the overlap graph of $G$, and 
$\pqmtree$ be the PQM-tree representing the conformal models of $G$.

\begin{definition}
Let $C$ be a clique in $G$ of size $k \geq 3$ and let $\phi$ be a conformal model of~$G$.
We say $C$ \emph{forms a non-Helly structure} in $\phi$ if 
$$\phi|C \equiv v_0^0v_2^1v_1^0v_3^1 \ldots v_{k-2}^0v_0^1v_{k-1}^0v_1^1$$
for some circular order $v_0,\ldots,v_{k-1}$ of the vertices of $C$.
\end{definition}
See Figure~\ref{fig:forb-structures} for an illustration.
Note that whenever $C$ forms a non-Helly clique in $\phi$ and $\phi|C \equiv v_0^0v_2^1v_1^0v_3^1 \ldots v_{k-2}^0v_0^1v_{k-1}^0v_1^1$ for a circular order $v_0,\ldots,v_{k-1}$ of the vertices of $C$, then for every two distinct $i,j \in [0,k-1]$:
\begin{itemize}
\item $v_{i} \sim v_{j}$ if $(i-j)_{mod\ k} \in \{1,k-1\}$,
\item $v_i,v_j$ cover the circle if $(i-j)_{mod\ k} \notin \{1,k-1\}$.
\end{itemize}
In particular, $C$ induces a cycle in $G_{ov}$ with the consecutive vertices $v_0,\ldots,v_{k-1}$.

\begin{figure}[htp!]
\begin{tikzpicture}[scale=0.5,>=latex,shorten >=-0.4pt,shorten <=-0.4pt]
\coordinate (center) at (0,0) {};
\coordinate (label) at (0,-2.5) {};

\coordinate (s1) at ($(center)+(0:2.0cm)$) {};
\coordinate (s2) at ($(center)+(60:2.0cm)$) {};
\coordinate (s3) at ($(center)+(120:2.0cm)$) {};
\coordinate (s4) at ($(center)+(180:2.0cm)$) {};
\coordinate (s5) at ($(center)+(240:2.0cm)$) {};
\coordinate (s6) at ($(center)+(300:2.0cm)$) {};

\begin{scriptsize}
\node at ($(center)+(0:2.5cm)$) {$v_0^0$};
\node at ($(center)+(60:2.5cm)$) {$v_1^1$};
\node at ($(center)+(120:2.5cm)$) {$v_2^0$};
\node at ($(center)+(180:2.5cm)$) {$v_0^1$};
\node at ($(center)+(240:2.5cm)$) {$v_1^0$};
\node at ($(center)+(300:2.5cm)$) {$v_2^1$};
\end{scriptsize}

\draw (0,0) circle (2cm);

\draw[thick,red,->] (s1) -- (s4);
\draw[thick,red,->] (s5) -- (s2);
\draw[thick,red,->] (s3) -- (s6);

\draw[white] (-3,-3)--(-3,-2);
\draw[white] (3,3)--(3,2);
\end{tikzpicture}
\hspace{0.1cm}
\begin{tikzpicture}[scale=0.5,>=latex,shorten >=-0.4pt,shorten <=-0.4pt]
\coordinate (center) at (0,0) {};
\coordinate (label) at (0,-2.5) {};

\coordinate (s1) at ($(center)+(30:2.0cm)$) {};
\coordinate (s2) at ($(center)+(60:2.0cm)$) {};
\coordinate (s3) at ($(center)+(120:2.0cm)$) {};
\coordinate (s4) at ($(center)+(150:2.0cm)$) {};
\coordinate (s5) at ($(center)+(210:2.0cm)$) {};
\coordinate (s6) at ($(center)+(240:2.0cm)$) {};
\coordinate (s7) at ($(center)+(300:2.0cm)$) {};
\coordinate (s8) at ($(center)+(330:2.0cm)$) {};

\begin{scriptsize}
\node at ($(center)+(30:2.5cm)$) {$v_0^0$};
\node at ($(center)+(60:2.5cm)$) {$v_1^1$};
\node at ($(center)+(120:2.5cm)$) {$v_3^0$};
\node at ($(center)+(150:2.5cm)$) {$v_0^1$};
\node at ($(center)+(210:2.5cm)$) {$v_2^0$};
\node at ($(center)+(240:2.5cm)$) {$v_3^1$};
\node at ($(center)+(300:2.5cm)$) {$v_1^0$};
\node at ($(center)+(330:2.5cm)$) {$v_2^1$};
\end{scriptsize}

\draw (0,0) circle (2cm);

\draw[thick,red,->] (s1) -- (s4);
\draw[thick,red,->] (s7) -- (s2);
\draw[thick,red,->] (s3) -- (s6);
\draw[thick,red,->] (s5) -- (s8);

\draw[white] (-3,-3)--(-3,-2);
\draw[white] (3,3)--(3,2);
\end{tikzpicture}
\hspace{0.1cm}
\begin{tikzpicture}[scale=0.5,>=latex,shorten >=-0.4pt,shorten <=-0.4pt]
\coordinate (center) at (0,0) {};
\coordinate (label) at (0,-2.5) {};

\coordinate (s1) at ($(center)+(30:2.0cm)$) {};
\coordinate (s2) at ($(center)+(60:2.0cm)$) {};
\coordinate (s3) at ($(center)+(102:2.0cm)$) {};
\coordinate (s4) at ($(center)+(132:2.0cm)$) {};
\coordinate (s5) at ($(center)+(174:2.0cm)$) {};
\coordinate (s6) at ($(center)+(204:2.0cm)$) {};
\coordinate (s7) at ($(center)+(246:2.0cm)$) {};
\coordinate (s8) at ($(center)+(276:2.0cm)$) {};
\coordinate (s9) at ($(center)+(318:2.0cm)$) {};
\coordinate (s10) at ($(center)+(348:2.0cm)$) {};

\begin{scriptsize}
\node at ($(center)+(30:2.5cm)$) {$v_0^0$};
\node at ($(center)+(60:2.5cm)$) {$v_1^1$};
\node at ($(center)+(102:2.5cm)$) {$v_4^0$};
\node at ($(center)+(132:2.5cm)$) {$v_0^1$};
\node at ($(center)+(174:2.5cm)$) {$v_3^0$};
\node at ($(center)+(204:2.5cm)$) {$v_4^1$};
\node at ($(center)+(246:2.5cm)$) {$v_2^0$};
\node at ($(center)+(276:2.5cm)$) {$v_3^1$};
\node at ($(center)+(318:2.5cm)$) {$v_1^0$};
\node at ($(center)+(348:2.5cm)$) {$v_2^1$};
\end{scriptsize}

\draw (0,0) circle (2cm);

\draw[thick,red,->] (s1) -- (s4);
\draw[thick,red,->] (s7) -- (s10);
\draw[thick,red,->] (s3) -- (s6);
\draw[thick,red,->] (s5) -- (s8);
\draw[thick,red,->] (s9) -- (s2);

\draw[white] (-3,-3)--(-3,-2);
\draw[white] (3,3)--(3,2);
\end{tikzpicture}
\caption{\label{fig:forb-structures} 
Non-Helly structures for $k=3,4,5$}
\end{figure}
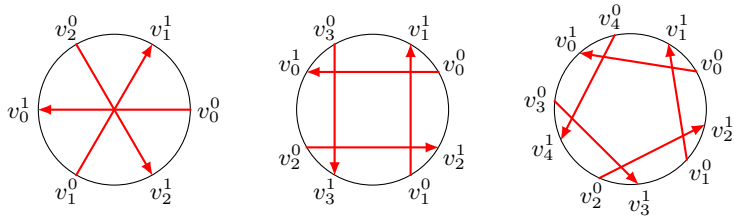

The proof of Theorem~\ref{thm:Lin-Szwarcfiter} by Lin and Szwarcfiter is based on the following:
\begin{lemma}[\cite{LinSchw06}]
\label{lem:non-Helly-clique-forces-forbidden-structure}
Let $\phi$ be a conformal model of $G$ and let $C$ be a clique in~$G$.
Then $C$ is minimal with respect to inclusion non-Helly clique in $\phi$ if and only if $C$ forms a non-Helly structure in $\phi$.
\end{lemma}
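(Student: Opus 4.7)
The proof splits into two directions. For ($\Leftarrow$), I verify directly that the combinatorial pattern encodes a minimal non-Helly clique. Label the $2k$ letters in $\phi|C$ by positions $0,1,\ldots,2k-1$ clockwise; the pattern places $v_i^0$ at position $2i$ and $v_i^1$ at position $2i-3 \pmod{2k}$. The arc $\psi(v_i)$ is the long clockwise arc from $v_i^0$ to $v_i^1$ and its complement $G_i$ is the open sub-arc strictly between positions $2i-3$ and $2i$. A direct computation shows $\bigcup_i G_i = \circle$ (so $C$ is non-Helly in $\phi$), that for each $j$ the open arc between positions $2j-2$ and $2j-1$ lies in $G_j$ but in no other $G_i$, so $\bigcap_{i\neq j}\psi(v_i)\ne\emptyset$ (minimality), and that for every pair $v_i,v_j$ the arcs either overlap (when $(i-j)\bmod k\in\{1,k-1\}$, where $G_i\cap G_j\neq\emptyset$ without covering $\circle$) or cover the circle (when $G_i\cap G_j=\emptyset$), so $C$ is a clique in $G$.

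For ($\Rightarrow$), let $C$ be a minimal non-Helly clique in $\phi$ of size $k\ge 3$ and set $A_v:=\psi(v)$, $G_v:=\circle\setminus A_v$ for $v\in C$. I first collect standing observations: no containment $A_u\subsetneq A_v$ can hold inside $C$ (else $\bigcap_{w\in C}A_w=\bigcap_{w\in C\setminus\{v\}}A_w$ would violate minimality); the arcs $\{G_v\}_{v\in C}$ form a \emph{minimal} cover of the circle, so every $G_v$ contains an open private sub-arc $P_v$ disjoint from every other $G_u$; and for every pair $u,v\in C$ we have $G_u\cup G_v\neq\circle$ (else $A_u\cap A_v=\emptyset$, contradicting the clique property), which combined with non-containment forces any non-empty $G_u\cap G_v$ to be a single arc contributed to by exactly one endpoint of each of $G_u, G_v$.

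Label $v_0,v_1,\ldots,v_{k-1}$ cyclically so that $P_{v_0},P_{v_1},\ldots,P_{v_{k-1}}$ occur in this clockwise cyclic order. The key structural claim is: $G_{v_i}\cap G_{v_j}\neq\emptyset$ if and only if $(i-j)\bmod k\in\{0,1,k-1\}$. For the nontrivial direction, suppose $G_{v_j}$ meets $G_{v_i}$ with $j\notin\{i-1,i,i+1\}$. Then one of the endpoints $v_i^0,v_i^1$ lies strictly inside $G_{v_j}$; a short case analysis on the positions of the endpoints relative to $P_{v_i}$ and $P_{v_j}$ (using privacy of both) forces the cyclic clockwise successor of $P_{v_i}$ among the private regions to be exactly $P_{v_j}$, contradicting $j\neq i+1$. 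From this ``nearest-neighbour overlap'' structure, the cyclic clockwise orders of the heads $v_i^1$ and of the tails $v_i^0$ both coincide with the labeling, and between $P_{v_i}$ and $P_{v_{i+1}}$ clockwise the only endpoints that can appear are $v_{i+1}^1$ (entry into $G_{v_{i+1}}$) followed by $v_i^0$ (exit from $G_{v_i}$); concatenating these blocks around the circle produces exactly $v_0^0 v_2^1 v_1^0 v_3^1 \cdots v_{k-2}^0 v_0^1 v_{k-1}^0 v_1^1$.

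The main obstacle is pinning down the nearest-neighbour overlap pattern in the minimal cover. This step is delicate because it must combine both the minimality of the cover (private regions acting as roadblocks for long-distance overlaps of complements) and the clique property of $C$ (which forbids $G_u\cup G_v=\circle$ and hence the existence of a second overlap arc that would otherwise allow non-adjacent complements to meet). Once this is established, the coincidence of the head and tail orders with the private-region labeling, together with the $(v_{i+1}^1, v_i^0)$ interleaving inside each $(P_{v_i},P_{v_{i+1}})$, is a bookkeeping exercise and delivers the claimed cyclic word.
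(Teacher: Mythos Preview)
The paper does not contain a proof of this lemma; it is quoted from \cite{LinSchw06} and used as a black box in Section~\ref{sec:Lin-Szwarcfiter}. There is therefore nothing on the paper's side to compare your argument against.

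That said, your argument is sound. Two small refinements are worth recording. First, you tacitly use that each private region $P_v$ is a single sub-arc of $G_v$; this does hold, but only because (as you already observe) every non-empty $G_u\cap G_v$ is a single sub-arc sitting at one end of $G_v$, so $\bigcup_{u\neq v}(G_u\cap G_v)$ consists of at most two end-pieces of $G_v$ and $P_v=G_v\setminus\bigcup_{u\neq v}G_u$ is a connected middle piece. Second, the ``main obstacle'' admits a one-line proof rather than a case analysis: since $G_{v_i}$ is a connected arc containing $P_{v_i}$ and disjoint from every other $P_{v_m}$, it lies entirely in the open arc $U_i$ strictly between $P_{v_{i-1}}$ and $P_{v_{i+1}}$ (clockwise through $P_{v_i}$); for $j\notin\{i-1,i,i+1\}$ the arcs $U_i$ and $U_j$ are separated by $P_{v_{i+1}}$ or $P_{v_{i-1}}$, hence $G_{v_i}\cap G_{v_j}\subseteq U_i\cap U_j=\emptyset$. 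Conversely, each boundary point $v_i^0$ of $G_{v_i}$ lies in $U_i\cup U_{i+1}$ and must be covered by some $G_{v_j}$ with $j\neq i$, forcing $j=i+1$ and hence $G_{v_i}\cap G_{v_{i+1}}\neq\emptyset$. Finally, your phrase ``forces the cyclic clockwise successor of $P_{v_i}$ to be exactly $P_{v_j}$'' should read ``successor or predecessor'' (depending on which endpoint of $G_{v_i}$ lands in $G_{v_j}$), but since you assume $j\notin\{i-1,i,i+1\}$ this is cosmetic.
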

In particular, if $\phi$ does not satisfy the Helly property, there is a clique $C$ in $G$ which is non-Helly in $\phi$, and then every minimal non-Helly subclique of $C$ forms a non-Helly structure in $G$.

\begin{definition}
Let $C$ be a clique in $G$ of size $k \geq 3$.
We say $C$ \emph{is a rigid non-Helly clique} in $G$ if $C$ forms a non-Helly
structure in every conformal model of $G$.
\end{definition}

In the next two claims we describe the cliques in $G$ that form rigid non-Helly cliques in~$G$.
In the course of the proof of Theorem~\ref{thm:Lin-Szwarcfiter} we will show that $G$ contains no other rigid non-Helly cliques.
\begin{claim}
\label{claim:large_rigid_forbidden_structure}
Let $C$ be a clique of $G$ that forms a non-Helly structure in a conformal model~$\phi$ of $G$. 
If $|C| \geq 4$ then $C$ is a rigid non-Helly clique in~$G$.
\end{claim}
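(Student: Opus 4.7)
The plan is to show that, for $k \geq 4$, the overlap and cover-the-circle relations among the vertices of $C$---which are intrinsic to $G$ by Definition~\ref{def:normalized_model}---rigidly determine the cyclic arrangement of the chord endpoints $v_i^0, v_i^1$ in every conformal model of $G$, and that this arrangement is precisely the non-Helly structure word.

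I would first fix an arbitrary conformal model $\phi'$ of $G$ with corresponding normalized model $\psi'$, and note that since $C$ forms a non-Helly structure in $\phi$, the consecutive pairs $v_i, v_{i+1}$ (indices cyclic) overlap and the non-consecutive pairs $v_i, v_j$ cover the circle in $G$; these relations then hold in $\psi'$ as well. I would then pass to the gap arcs $g_i = \circle \setminus \psi'(v_i)$, which are non-empty open arcs; the cover-the-circle relations become $g_i \cap g_j = \emptyset$ for non-consecutive pairs, and the overlap relations (which in particular exclude cover-the-circle) become $g_i \cap g_{i+1} \neq \emptyset$ for consecutive pairs. Thus the intersection pattern of $g_0, \ldots, g_{k-1}$ on $\circle$ is exactly a $k$-cycle.

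The main step, and the main obstacle, will be to deduce from this that the gaps must appear along $\circle$ in the cyclic order $g_0, g_1, \ldots, g_{k-1}$ (up to reflection). Here the hypothesis $k \geq 4$ is crucial: the two neighbours $g_1$ and $g_{k-1}$ of $g_0$ in the intersection pattern lie at cyclic distance at least $2$ in $C_k$, hence $g_1 \cap g_{k-1} = \emptyset$, and the non-empty overlaps $g_0 \cap g_1$ and $g_0 \cap g_{k-1}$ are therefore disjoint sub-arcs of the arc $g_0$ sitting at the two opposite ends of $g_0$. Iterating this observation along the circle---the previously placed $g_i$ forces $g_{i+1}$ to overlap it at the end away from $g_{i-1}$---would pin down the claimed cyclic order. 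For $k = 3$ this argument collapses because then $g_1 \cap g_{k-1} = g_1 \cap g_2 \neq \emptyset$, in line with the hypothesis $|C| \geq 4$.

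Once the cyclic order of the gaps is fixed, the positions of the endpoints $v_i^0, v_i^1$ on $\circle$ are determined (they are the endpoints of $g_i$), and the orientations of the chords $\phi'(v_i)$ are fixed by the convention that $\psi'(v_i)$ sits on the left side of $\phi'(v_i)$. A direct inspection of the resulting clockwise reading, rotated to begin with $v_0^0$, would reproduce the non-Helly structure word $v_0^0 v_2^1 v_1^0 v_3^1 \ldots v_{k-2}^0 v_0^1 v_{k-1}^0 v_1^1$ (in the reflected case the same word appears after relabelling $v_i$ as $v_{-i \bmod k}$). Since $\phi'$ was arbitrary, $C$ forms a non-Helly structure in every conformal model of $G$, i.e., $C$ is a rigid non-Helly clique.
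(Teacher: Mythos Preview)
Your argument is correct and complete. The paper's proof is a two-line appeal to the (folklore) fact that a cycle of length $\geq 4$ has, up to reflection, a unique representation as an intersection graph of chords of a circle; since $C$ induces such a cycle in $G_{ov}$, the chord diagram of $\phi'|C$ is forced, and the conformal-model constraints then fix the orientations, so $\phi'|C$ equals $\phi|C$ or its reflection.

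Your route is essentially a self-contained proof of this same uniqueness, carried out on the complement side: passing to the gap arcs $g_i$ converts ``overlap'' into ``gaps overlap'' and ``cover the circle'' into ``gaps disjoint'', and you then argue the cyclic order of the $g_i$ directly. What your approach buys is that it avoids invoking the chord-representation uniqueness as a black box and makes the role of $k \geq 4$ completely explicit (namely, that the two neighbours $g_1,g_{k-1}$ of $g_0$ are already disjoint). What the paper's approach buys is brevity. One minor remark: your use of the cover-the-circle relations to get $g_i \cap g_j = \emptyset$ for non-consecutive pairs is a bit stronger than necessary---in the paper's phrasing the non-crossing of non-consecutive chords is read off directly from $v_i \parallel v_j$ in $G_{ov}$---but of course the two constraints agree here.
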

\begin{proof}
Since $C$ forms a non-Helly structure in $\phi$, $C$ induces a cycle in $G_{ov}$.
Since there are two possibilities to represent a cycle of size $\geq 4$ as an intersection graph of chords (one is the reflection of the other),
we easily deduce that either the circular word $\phi|C$ or
its reflection must occur in any conformal model of $G$.
\end{proof}

\begin{claim}
\label{claim:small_rigid_forbidden_structure}
Let $C$ be a clique of $G$ which forms a non-Helly structure in 
a conformal model~$\phi$ of $G$.
Let $Q$ be a component of $G_{ov}$ such that $C \subseteq Q$.
If $|C|=3$, $C \subseteq N$ for a prime node $N \in \pqmtree_Q$, and the elements of $C$ are contained in three distinct children of $N$ in $\pqmtree_Q$, 
then $C$ is a rigid non-Helly clique in $G$.
\end{claim}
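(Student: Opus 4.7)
The plan is to exploit the rigidity of the prime node~$N$. By Gallai's theorem and the description of $\Pi(N)$ given in Section~\ref{sec:PQM-trees}, $N$ admits exactly two admissible orderings, $\pi_N$ and its reflection. I will show that in any conformal model $\phi'$ of~$G$, the cyclic interleaving of the six letters $\{v_0^0, v_0^1, v_1^0, v_1^1, v_2^0, v_2^1\}$ depends only on the choice of admissible ordering at~$N$ together with data fixed by the PQM-tree, and that the two resulting interleavings are reflections of each other.

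First, I label $C = \{v_0, v_1, v_2\}$ so that $v_i \in K_i$, where $K_0, K_1, K_2$ are the three distinct children of $N$ in $\pqmtree_Q$ provided by the hypothesis. Since $C$ forms a non-Helly structure in~$\phi$, the vertices $v_0, v_1, v_2$ pairwise overlap in~$G_{ov}$; the module property of $K_0, K_1, K_2$ then upgrades this to $K_i \sim K_j$ for every $i \neq j$. Hence $\{K_0, K_1, K_2\}$ forms a triangle in $(N, \sim_N)$, whose only two transitive orientations are reverse linear orders on $\{K_0, K_1, K_2\}$ and therefore match the two elements of $\Pi(N)$.

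Second, I argue that in any conformal model $\phi'$ the cyclic order of the six distinguished endpoints is controlled entirely by the admissible ordering at~$N$. The orientations of $v_0, v_1, v_2$ in the metachord $\SSS$ of the ambient CA-module $S$, i.e.\ which of $v_i^0, v_i^1$ lies in $S^0$ versus $S^1$, are part of the fixed data of the PQM-tree and are therefore identical across all conformal models. By property~\ref{prop:slots} the three subwords $\phi'|K_0^j, \phi'|K_1^j, \phi'|K_2^j$ appear as contiguous blocks inside $\phi'|N^j$ for $j \in \{0,1\}$, and Theorem~\ref{thm:transitive_orientation_of_edges_between_children} specifies that their order inside $\phi'|N^j$ is dictated by the transitive orientation of $(N, \sim_N)$ induced by the admissible ordering at~$N$. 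Because each of the six distinguished endpoints lies in a \emph{distinct} block $\phi'|K_i^j$, no variation outside $N$ -- reflections of unrelated Q-nodes, permutations of P-nodes, admissible orderings of M-nodes other than~$N$, or internal arrangements within any single $\phi'|K_i^j$ -- can affect their cyclic interleaving. A direct comparison with the definition of the reflection of an admissible ordering then shows that switching the choice at~$N$ between $\pi_N$ and $\pi_N^R$ simply reflects the cyclic pattern of the six endpoints.

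Finally, in~$\phi$ the cyclic order of the six endpoints spells out the non-Helly pattern $v_0^0 v_2^1 v_1^0 v_0^1 v_2^0 v_1^1$ by hypothesis; reversing this word and swapping superscripts yields, after relabelling, the non-Helly pattern for the cyclic reordering $v_1, v_0, v_2$ of~$C$, so the reflection is again a non-Helly pattern. Hence both admissible orderings at~$N$ produce non-Helly cyclic orders of the six endpoints, and $C$ is rigid non-Helly. The main delicate step I anticipate is the bookkeeping in the second paragraph, specifically confirming that each $\phi'|K_i^j$ is a single contiguous block of $\phi'|N^j$ containing exactly one of the six distinguished endpoints, so that rearrangements inside any single $K_i$-subtree can only permute letters within a block already holding a single relevant endpoint.
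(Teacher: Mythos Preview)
Your proposal is correct and follows essentially the same approach as the paper: since $N$ is prime, $\Pi(N)$ has exactly two admissible orderings, one the reflection of the other, and because the three vertices of $C$ lie in three distinct children of $N$, the cyclic word $\phi'|C^*$ is completely determined by the choice of ordering at $N$; both choices yield a non-Helly pattern. One small remark: your write-up implicitly assumes $N$ is an M-node (you speak of ``the ambient CA-module $S$''), whereas the claim also covers the case $N=Q$ is the prime Q-node itself (cf.\ the right half of Figure~\ref{fig:small-rigid-non-Helly}); the argument is identical there, simply replacing the slots $S^0,S^1$ by the circular ordering of slots around $Q$.
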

\begin{proof}
Let $\phi$ be a conformal model of $G$.
Suppose $C=\{v_0,v_1,v_2\}$ and $\phi|C \equiv v^0_0v^1_1v^0_2v^1_0v^0_1v^1_2$.
Let $\pi = \phi_{|N} \in \Pi(N)$ be an ordering of $N$ in $\phi$ and let $\gamma \in \Pi(N)$ be the reflection of $\pi$.
Since $\Pi(N) = \{\pi,\gamma\}$ and $C$ has a non-empty intersection with three children of $N$, for every conformal model $\phi'$ of $G$ we have: 
\begin{itemize}
\item $\phi'|C \equiv v^0_0v^1_1v^0_2v^1_0v^0_1v^1_2$ if ${\phi'}_{|N}= \pi$,
\item $\phi'|C \equiv v^0_2v^1_1v^0_0v^1_2v^0_1v^1_0$ if ${\phi'}_{|N}= \gamma$.
\end{itemize}
\end{proof}
If $C = \{v_0,v_1,v_2\}$, $N$, and $\pi,\gamma \in \Pi(N)$ are as in the previous claim,  then $\pi$ and $\gamma$ are said to \emph{force non-Helly structures} $v^0_0v^1_1v^0_2v^1_0v^0_1v^1_2$ and $v^0_2v^1_1v^0_0v^1_2v^0_1v^1_0$, respectively.
See Figure~\ref{fig:small-rigid-non-Helly} for an illustration.
\begin{figure}[ht]
\begin{tikzpicture}[xscale=0.6,yscale=0.9,>=latex]
\coordinate (u0) at (0,2) {};
\coordinate (u1) at (1,2) {};
\coordinate (u2) at (2,2) {};
\coordinate (u3) at (3,2) {};
\coordinate (u4) at (4,2) {};

\coordinate (b0) at (0,0) {};
\coordinate (b1) at (1,0) {};
\coordinate (b2) at (2,0) {};
\coordinate (b3) at (3,0) {};
\coordinate (b4) at (4,0) {};

\coordinate (lu0) at (0,2.35) {};
\coordinate (lu1) at (1,2.35) {};
\coordinate (lu2) at (2,2.35) {};
\coordinate (lu3) at (3,2.35) {};
\coordinate (lu4) at (4,2.35) {};

\coordinate (lb0) at (0,-0.35) {};
\coordinate (lb1) at (1,-0.35) {};
\coordinate (lb2) at (2,-0.35) {};
\coordinate (lb3) at (3,-0.35) {};
\coordinate (lb4) at (4,-0.35) {};

\coordinate (tau0) at (4.9,2);
\coordinate (tau1) at (-0.9,0);

\tikzstyle{every node}=[inner sep=2pt,fill=white]

\draw[fill=gray!30, draw=none] ($(b0) + (-0.2,0)$) -- ($(b0) + (0.2,0)$) -- ($(u1) + (0.2,0)$) -- ($(u1) + (-0.2,0)$) -- cycle;
\draw[fill=gray!30, draw=none] ($(b1) + (-0.2,0)$) -- ($(b1) + (0.2,0)$) -- ($(u4) + (0.2,0)$) -- ($(u4) + (-0.2,0)$) -- cycle;
\draw[fill=gray!30, draw=none] ($(b2) + (-0.2,0)$) -- ($(b2) + (0.2,0)$) -- ($(u2) + (0.2,0)$) -- ($(u2) + (-0.2,0)$) -- cycle;
\draw[fill=gray!30, draw=none] ($(b3) + (-0.2,0)$) -- ($(b3) + (0.2,0)$) -- ($(u0) + (0.2,0)$) -- ($(u0) + (-0.2,0)$) -- cycle;
\draw[fill=gray!30, draw=none] ($(b4) + (-0.2,0)$) -- ($(b4) + (0.2,0)$) -- ($(u3) + (0.2,0)$) -- ($(u3) + (-0.2,0)$) -- cycle;

\draw[->] (-0.5,2) -- (4.5,2);
\draw[<-] (-0.5,0) -- (4.5,0);

\draw[->,thick] (u0) -- (b3);
\draw[<-,thick] (u2) -- (b2);
\draw[->,thick] (u4) -- (b1);

\tikzstyle{every node}=[inner sep=1pt]
\begin{footnotesize}
\node at (tau0) {$\pi^0$};
\node at (tau1) {$\pi^1$};
\end{footnotesize}

\begin{tiny}
\node at (lu0) {$v^0_0$};
\node at (lu2) {$v^1_1$};
\node at (lu4) {$v^0_2$};

\node at (lb1) {$v^1_2$};
\node at (lb2) {$v^0_1$};
\node at (lb3) {$v^1_0$};
\end{tiny}
\end{tikzpicture}
\hspace{0.05cm}
\begin{tikzpicture}[xscale=0.8,yscale=0.6,>=latex]
\draw[-,dashed] (0,-2.5) -- (0,2.5);
\end{tikzpicture}
\hspace{0.05cm}
\begin{tikzpicture}[xscale=-0.6,yscale=0.9,>=latex]
\coordinate (u0) at (0,2) {};
\coordinate (u1) at (1,2) {};
\coordinate (u2) at (2,2) {};
\coordinate (u3) at (3,2) {};
\coordinate (u4) at (4,2) {};

\coordinate (b0) at (0,0) {};
\coordinate (b1) at (1,0) {};
\coordinate (b2) at (2,0) {};
\coordinate (b3) at (3,0) {};
\coordinate (b4) at (4,0) {};

\coordinate (lu0) at (0,2.35) {};
\coordinate (lu1) at (1,2.35) {};
\coordinate (lu2) at (2,2.35) {};
\coordinate (lu3) at (3,2.35) {};
\coordinate (lu4) at (4,2.35) {};

\coordinate (lb0) at (0,-0.35) {};
\coordinate (lb1) at (1,-0.35) {};
\coordinate (lb2) at (2,-0.35) {};
\coordinate (lb3) at (3,-0.35) {};
\coordinate (lb4) at (4,-0.35) {};

\coordinate (tau0) at (4.9,0);
\coordinate (tau1) at (-0.9,2);

\tikzstyle{every node}=[inner sep=2pt,fill=white]

\draw[fill=gray!30, draw=none] ($(b0) + (-0.2,0)$) -- ($(b0) + (0.2,0)$) -- ($(u1) + (0.2,0)$) -- ($(u1) + (-0.2,0)$) -- cycle;
\draw[fill=gray!30, draw=none] ($(b1) + (-0.2,0)$) -- ($(b1) + (0.2,0)$) -- ($(u4) + (0.2,0)$) -- ($(u4) + (-0.2,0)$) -- cycle;
\draw[fill=gray!30, draw=none] ($(b2) + (-0.2,0)$) -- ($(b2) + (0.2,0)$) -- ($(u2) + (0.2,0)$) -- ($(u2) + (-0.2,0)$) -- cycle;
\draw[fill=gray!30, draw=none] ($(b3) + (-0.2,0)$) -- ($(b3) + (0.2,0)$) -- ($(u0) + (0.2,0)$) -- ($(u0) + (-0.2,0)$) -- cycle;
\draw[fill=gray!30, draw=none] ($(b4) + (-0.2,0)$) -- ($(b4) + (0.2,0)$) -- ($(u3) + (0.2,0)$) -- ($(u3) + (-0.2,0)$) -- cycle;

\draw[<-] (-0.5,2) -- (4.5,2);
\draw[->] (-0.5,0) -- (4.5,0);

\draw[<-,thick] (u0) -- (b3);
\draw[->,thick] (u2) -- (b2);
\draw[<-,thick] (u4) -- (b1);

\tikzstyle{every node}=[inner sep=1pt]
\begin{footnotesize}
\node at (tau0) {$\gamma^0$};
\node at (tau1) {$\gamma^1$};
\end{footnotesize}

\begin{tiny}
\node at (lu0) {$v^1_0$};
\node at (lu2) {$v^0_1$};
\node at (lu4) {$v^1_2$};

\node at (lb1) {$v^0_2$};
\node at (lb2) {$v^1_1$};
\node at (lb3) {$v^0_0$};
\end{tiny}
\end{tikzpicture}
\hspace{0.7cm}
\begin{tikzpicture}[yscale=0.5,xscale=0.5,>=latex]
\coordinate (center) at (0,0) {};

\coordinate (v0) at ($(center)+(90:2cm)$) {};
\coordinate (v1) at ($(center)+(30:2cm)$) {};
\coordinate (v2) at ($(center)+(330:2cm)$) {};
\coordinate (v3) at ($(center)+(270:2cm)$) {};
\coordinate (v4) at ($(center)+(210:2cm)$) {};
\coordinate (v5) at ($(center)+(150:2cm)$) {};

\coordinate (lv0) at ($(center)+(90:2.5cm)$) {};
\coordinate (lv1) at ($(center)+(30:2.5cm)$) {};
\coordinate (lv2) at ($(center)+(330:2.5cm)$) {};
\coordinate (lv3) at ($(center)+(270:2.5cm)$) {};
\coordinate (lv4) at ($(center)+(210:2.5cm)$) {};
\coordinate (lv5) at ($(center)+(150:2.5cm)$) {};

\coordinate (tau) at (-2.5,-2.5) {};

\draw[fill=gray!30, draw=none] ($(center)+(95:2cm)$) -- ($(center)+(90:2cm)$)-- ($(center)+(85:2cm)$) -- ($(center)+(275:2cm)$) -- ($(center)+(270:2cm)$) --($(center)+(265:2cm)$) -- cycle;

\draw[fill=gray!30, draw=none] ($(center)+(45:2cm)$) -- ($(center)+(50:2cm)$)-- ($(center)+(55:2cm)$) -- ($(center)+(125:2cm)$) -- ($(center)+(130:2cm)$) --($(center)+(135:2cm)$) -- cycle;

\draw[fill=gray!30, draw=none] ($(center)+(170:2cm)$) -- ($(center)+(175:2cm)$)-- ($(center)+(180:2cm)$) -- ($(center)+(235:2cm)$) -- ($(center)+(240:2cm)$) --($(center)+(245:2cm)$) -- cycle;

\draw[fill=gray!30, draw=none] ($(center)+(35:2cm)$) -- ($(center)+(30:2cm)$)-- ($(center)+(25:2cm)$) -- ($(center)+(215:2cm)$) -- ($(center)+(210:2cm)$) --($(center)+(205:2cm)$) -- cycle;

\draw[fill=gray!30, draw=none] ($(center)+(335:2cm)$) -- ($(center)+(330:2cm)$)-- ($(center)+(325:2cm)$) -- ($(center)+(155:2cm)$) -- ($(center)+(150:2cm)$) --($(center)+(145:2cm)$) -- cycle;

\draw (0,0) circle (2cm);

\draw[thick,->] (v0)--(v3);
\draw[thick,<-] (v1)--(v4);
\draw[thick,->] (v2)--(v5);

\tikzstyle{every node}=[inner sep=1pt]
\begin{footnotesize}
\node at (tau) {$\pi$};
\end{footnotesize}
\begin{tiny}
\node at (lv0) {$v^0_0$};
\node at (lv1) {$v^1_1$};
\node at (lv2) {$v^0_2$};
\node at (lv3) {$v^1_0$};
\node at (lv4) {$v^0_1$};
\node at (lv5) {$v^1_2$};
\end{tiny}
\draw[white,-] (-2.8,-2.8)--(-2.2,-2.8);
\draw[white,-] (2.8,2.8)--(2.2,2.8);
\end{tikzpicture}
\hspace{0.05cm}
\begin{tikzpicture}[xscale=0.8,yscale=0.6,>=latex]
\draw[-,dashed] (0,-2.5) -- (0,2.5);
\end{tikzpicture}
\hspace{0.05cm}
\begin{tikzpicture}[yscale=0.5,xscale=-0.5,>=latex]
\coordinate (center) at (0,0) {};

\coordinate (v0) at ($(center)+(90:2cm)$) {};
\coordinate (v1) at ($(center)+(30:2cm)$) {};
\coordinate (v2) at ($(center)+(330:2cm)$) {};
\coordinate (v3) at ($(center)+(270:2cm)$) {};
\coordinate (v4) at ($(center)+(210:2cm)$) {};
\coordinate (v5) at ($(center)+(150:2cm)$) {};

\coordinate (lv0) at ($(center)+(90:2.5cm)$) {};
\coordinate (lv1) at ($(center)+(30:2.5cm)$) {};
\coordinate (lv2) at ($(center)+(330:2.5cm)$) {};
\coordinate (lv3) at ($(center)+(270:2.5cm)$) {};
\coordinate (lv4) at ($(center)+(210:2.5cm)$) {};
\coordinate (lv5) at ($(center)+(150:2.5cm)$) {};

\coordinate (tau) at (2.5,-2.5) {};

\draw[fill=gray!30, draw=none] ($(center)+(95:2cm)$) -- ($(center)+(90:2cm)$)-- ($(center)+(85:2cm)$) -- ($(center)+(275:2cm)$) -- ($(center)+(270:2cm)$) --($(center)+(265:2cm)$) -- cycle;

\draw[fill=gray!30, draw=none] ($(center)+(45:2cm)$) -- ($(center)+(50:2cm)$)-- ($(center)+(55:2cm)$) -- ($(center)+(125:2cm)$) -- ($(center)+(130:2cm)$) --($(center)+(135:2cm)$) -- cycle;

\draw[fill=gray!30, draw=none] ($(center)+(170:2cm)$) -- ($(center)+(175:2cm)$)-- ($(center)+(180:2cm)$) -- ($(center)+(235:2cm)$) -- ($(center)+(240:2cm)$) --($(center)+(245:2cm)$) -- cycle;

\draw[fill=gray!30, draw=none] ($(center)+(35:2cm)$) -- ($(center)+(30:2cm)$)-- ($(center)+(25:2cm)$) -- ($(center)+(215:2cm)$) -- ($(center)+(210:2cm)$) --($(center)+(205:2cm)$) -- cycle;

\draw[fill=gray!30, draw=none] ($(center)+(335:2cm)$) -- ($(center)+(330:2cm)$)-- ($(center)+(325:2cm)$) -- ($(center)+(155:2cm)$) -- ($(center)+(150:2cm)$) --($(center)+(145:2cm)$) -- cycle;

\draw (0,0) circle (2cm);

\draw[thick,<-] (v0)--(v3);
\draw[thick,->] (v1)--(v4);
\draw[thick,<-] (v2)--(v5);

\tikzstyle{every node}=[inner sep=1pt]
\begin{footnotesize}
\node at (tau) {$\gamma$};
\end{footnotesize}
\begin{tiny}
\node at (lv0) {$v^1_0$};
\node at (lv1) {$v^0_1$};
\node at (lv2) {$v^1_2$};
\node at (lv3) {$v^0_0$};
\node at (lv4) {$v^1_1$};
\node at (lv5) {$v^0_2$};
\end{tiny}
\draw[white,-] (-2.8,-2.8)--(-2.2,-2.8);
\draw[white,-] (2.8,2.8)--(2.2,2.8);
\end{tikzpicture}

\caption{\label{fig:small-rigid-non-Helly} 
Orderings $\pi$ and $\gamma$ forcing non-Helly structures $v^0_0v^1_1v^0_2v^1_0v^0_1v^1_2$ and $v^0_2v^1_1v^0_0v^1_2v^0_1v^1_0$, respectively, for the cases $N$ is a prime M-node (to the left) and $N$ is a prime Q-node (to the right).}
\end{figure}
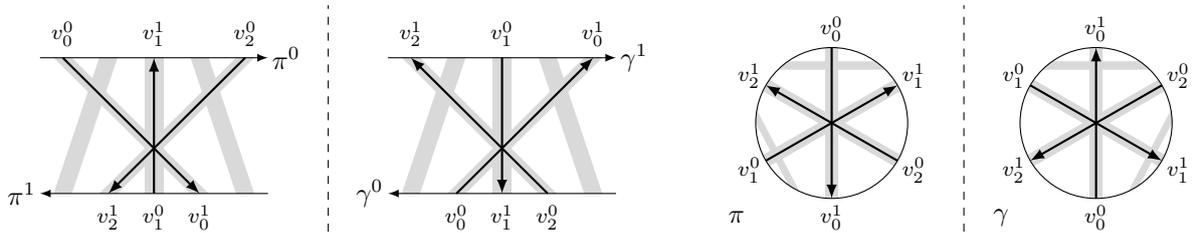

Our proof of Theorem~\ref{thm:Lin-Szwarcfiter} will use the following technical lemma.
\begin{lemma}
\label{lem:technical-lemma-crossing-chords}
Let $\phi$ be a conformal model of $G$ and let $M$ be an M-node in $\pqmtree$.
\begin{enumerate}
\item \label{item:technical-lemma-crossing-chords-same-direction} If there are $a,b \in M$ such that $a \sim b$ 
and $a^0b^0$ is a subword of $\phi|M^j$ for some $j \in \{0,1\}$,
then there are $c,d \in M$ such that $c \sim d$ and $c^1a^{0}d^1b^0$ is a subword of $\phi|M^j$ and 
$a^1c^0b^1d^0$ is a subword of $\phi|M^{1-j}$ (see Figure \ref{fig:technical_lemma} to the left).
\item \label{item:technical-lemma-crossing-chords-different-direction}
If there are $a,b \in M$ such that $a \sim b$ and $a^1b^0$ is a subword of $\phi|M^j$ for some $j \in \{0,1\}$,
then there are $c,d \in M$ such that $c \sim d$ and $a^{1}c^0d^1b^0$ is a subword of $\phi|M^j$ and 
$c^1a^0b^1d^0$ is a subword of $\phi|M^{1-j}$ (see Figure \ref{fig:technical_lemma} to the right).
\end{enumerate}
\end{lemma}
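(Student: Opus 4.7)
The plan is to work directly with the admissible oriented permutation model $(\phi|M^0, \phi|M^1)$ of $(M, \sim_M)$ encoded by the conformal model $\phi$, using the correspondence between admissible models and transitive orientations from Theorem \ref{thm:permutation_models_transitive_orientations}. I describe part (1) in detail; part (2) follows by a mirror argument.

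For part (1), since $a^0 b^0$ is a subword of $\phi|M^j$, both $a$ and $b$ are $(M^j, M^{1-j})$-oriented in $\MMM$, so $a^1, b^1 \in M^{1-j}$. The hypothesis $a \sim b$ together with the admissibility characterization forces the order of $a^1, b^1$ in $\phi|M^{1-j}$ to match that of $a^0, b^0$ in $\phi|M^j$; thus $a^1$ precedes $b^1$ in $\phi|M^{1-j}$. I next claim there exists $d \in M$ with $d^1$ lying strictly between $a^0$ and $b^0$ in $\phi|M^j$ and $d^0$ lying strictly after $b^1$ in $\phi|M^{1-j}$; in particular $d$ is $(M^{1-j}, M^j)$-oriented. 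Existence is by a crossing-witness argument: if no such $d$ existed, the chord for $b$ could not cross the chord for $a$ in any admissible way consistent with $\phi$, contradicting $a \sim b$ via the transitive orientation property of $(M, \sim_M)$. Symmetrically, I obtain $c \in M$ with $c^1$ strictly before $a^0$ in $\phi|M^j$ and $c^0$ strictly between $a^1$ and $b^1$ in $\phi|M^{1-j}$.

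It remains to check $c \sim d$. Both $c$ and $d$ are $(M^{1-j}, M^j)$-oriented; by construction $c^1$ precedes $d^1$ in $\phi|M^j$ and $c^0$ precedes $d^0$ in $\phi|M^{1-j}$. Since the admissible permutation model of $(M, \sim_M)$ has the property that two vertices are adjacent in $\sim$ precisely when their endpoints appear in the same order in both slots, we conclude $c \sim d$, establishing the claimed configuration. Part (2) is analogous: from $a^1 b^0$ being a subword of $\phi|M^j$ we now have $a$ and $b$ with opposite orientations in $\MMM$; the extremal search yields $c$ with the same orientation as $b$ and $d$ with the same orientation as $a$ whose endpoints realize the prescribed pattern $a^1 c^0 d^1 b^0$ in $\phi|M^j$ and $c^1 a^0 b^1 d^0$ in $\phi|M^{1-j}$, and the crossing $c \sim d$ is verified by the analogous adjacency characterization.

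The main obstacle is establishing the simultaneous existence of $c$ and $d$ satisfying the correct positional constraints in \emph{both} slots, not just the one where $a^0, b^0$ live. I would handle this by choosing $d$ extremally (for instance, among all candidates with $d^1$ strictly between $a^0$ and $b^0$ in $\phi|M^j$, pick the one with $d^0$ as close to the start of $\phi|M^{1-j}$ as possible) and then using the transitive orientation of $(M, \sim_M)$ to propagate the positional constraint to the opposite slot; the same extremal technique, applied on the other side of $a^0$, produces $c$.
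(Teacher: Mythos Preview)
Your proposal has a genuine gap in the existence step. You try to produce $c$ and $d$ purely from the permutation-model structure of $(M,\sim)$, invoking a ``crossing-witness argument'': if no such $d$ existed, you say, the chord for $b$ could not cross the chord for $a$. But this is false on its face: the chords $\phi(a)$ and $\phi(b)$ already cross because $a\sim b$, and nothing about that crossing requires any third vertex to sit between $a^0$ and $b^0$. An abstract permutation graph $(M,\sim)$ can perfectly well have $M=\{a,b\}$ with $a\sim b$ and no $c,d$ whatsoever; the transitive-orientation correspondence (Theorem~\ref{thm:permutation_models_transitive_orientations}) gives you no density of intermediate vertices. So from the data you allow yourself --- the restricted graph $(M,\sim)$ and its admissible model --- the conclusion simply does not follow.

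What you are missing is that the lemma is about a conformal model of the circular-arc graph $G$, and the witnesses $c,d$ are found in $V(G)$ using the \emph{neighbourhood} characterization of overlap (Definition~\ref{def:normalized_model}). Because $a$ and $b$ overlap, in particular $b$ is not contained in $a$, i.e.\ $N[b]\not\subseteq N[a]$; this gives a vertex $c\in V$ with $bc\in E(G)$ and $ac\notin E(G)$, whose arc endpoints are then forced into the slots $\phi|M^j$ and $\phi|M^{1-j}$ at the prescribed positions, hence $c\in M$. Symmetrically one obtains $d$ from $N[a]\not\subseteq N[b]$. This is exactly the paper's argument, and it cannot be replaced by an argument internal to $(M,\sim)$. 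Your sketch for part~(2) inherits the same gap; note also that the paper's proof of (2) is not a mirror of (1) but first finds one witness $c$ from the ``do not cover the circle'' condition and then \emph{applies} part~(1) to the pair $c,b$ to obtain $d$.
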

\begin{proof}
First we prove~\eqref{item:technical-lemma-crossing-chords-same-direction}.
Since $a \sim b$, $b$ is not contained in $a$, there exists $c \in V$ such that $ac \notin E(G)$ and $bc \in E(G)$.
Therefore, $a^1c^0b^1$ is a subword of $\phi|M^{1-j}$.
This implies $c\in M$.
Therefore, $c^1a^0b^0$ is a subword of $\phi|M^j$.
Similarly, since $a \sim b$ there exists $d$ such that $ad \in E(G)$ and $bd \notin E(G)$.
Therefore, $a^0d^1b^0$ is a subword of $\phi|M^j$.
This implies $d\in M$ and $a^1b^1d^0$ is a subword of $\phi|M^{1-j}$.
Combining the above conclusions we have that $c^1a^{0}d^1b^0$ is a subword of $\phi|M^j$ and $a^1c^0b^1d^0$ is a subword of $\phi|M^{1-j}$, as desired.
In particular, this implies $c \sim d$.

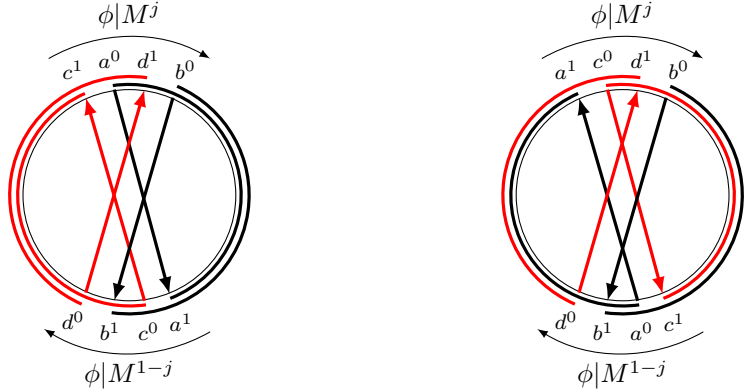
\begin{figure}[htp!]
\centering
\begin{tikzpicture}[scale=0.70,>=latex,shorten >=-0.4pt,shorten <=-0.4pt]

\coordinate (center) at (0,0) {};

\coordinate (u1) at ($(center)+(114:2cm)$) {};
\coordinate (u2) at ($(center)+(98:2cm)$) {};
\coordinate (u3) at ($(center)+(82:2cm)$) {};
\coordinate (u4) at ($(center)+(66:2cm)$) {};

\coordinate (l1) at ($(center)+(292:2cm)$) {};
\coordinate (l2) at ($(center)+(278:2cm)$) {};
\coordinate (l3) at ($(center)+(262:2cm)$) {};
\coordinate (l4) at ($(center)+(246:2cm)$) {};

\coordinate (lu1) at ($(center)+(114:2.6cm)$) {};
\coordinate (lu2) at ($(center)+(98:2.6cm)$) {};
\coordinate (lu3) at ($(center)+(82:2.6cm)$) {};
\coordinate (lu4) at ($(center)+(66:2.6cm)$) {};

\coordinate (ll1) at ($(center)+(292:2.6cm)$) {};
\coordinate (ll2) at ($(center)+(278:2.6cm)$) {};
\coordinate (ll3) at ($(center)+(262:2.6cm)$) {};
\coordinate (ll4) at ($(center)+(246:2.6cm)$) {};

\coordinate (ll) at ($(center)+(270:3.4cm)$) {};
\coordinate (lu) at ($(center)+(90:3.4cm)$) {};

\tikzstyle{every node}=[inner sep=1pt]
\begin{tiny}
\node at (lu1) {$c^1$};
\node at (lu2) {$a^0$};
\node at (lu3) {$d^1$};
\node at (lu4) {$b^0$};

\node at (ll1) {$a^1$};
\node at (ll2) {$c^0$};
\node at (ll3) {$b^1$};
\node at (ll4) {$d^0$};
\end{tiny}

\begin{footnotesize}
\node at (lu) {$\phi|M^j$};
\node at (ll) {$\phi|M^{1-j}$};
\end{footnotesize}
\draw (0,0) circle (2cm);

\draw[very thick] ([shift=(292:2.10cm)]0,0) arc (292:(360+98):2.10cm);
\draw[very thick] ([shift=(262:2.25cm)]0,0) arc (262:(360+66):2.25cm);
\draw[very thick, red] ([shift=(114:2.10cm)]0,0) arc (114:278:2.10cm);
\draw[very thick, red] ([shift=(82:2.25cm)]0,0) arc (82:246:2.25cm);
\draw[very thick,red,<-] (u1) -- (l2);
\draw[very thick,->] (u2) -- (l1);
\draw[very thick,red,<-] (u3) -- (l4);
\draw[very thick,->] (u4) -- (l3);

\draw[<-] ([shift=(60:3cm)]0,0) arc (60:120:3cm);
\draw[<-] ([shift=(238:3cm)]0,0) arc (238:300:3cm);

\draw[white] (-3,-4) -- (-3,-2.5);
\draw[white] (3,4) -- (3,2.5);

\end{tikzpicture} 
\hspace{2cm}
\begin{tikzpicture}[scale=0.70,>=latex,shorten >=-0.4pt,shorten <=-0.4pt]

\coordinate (center) at (0,0) {};

\coordinate (u1) at ($(center)+(114:2cm)$) {};
\coordinate (u2) at ($(center)+(98:2cm)$) {};
\coordinate (u3) at ($(center)+(82:2cm)$) {};
\coordinate (u4) at ($(center)+(66:2cm)$) {};

\coordinate (l1) at ($(center)+(292:2cm)$) {};
\coordinate (l2) at ($(center)+(278:2cm)$) {};
\coordinate (l3) at ($(center)+(262:2cm)$) {};
\coordinate (l4) at ($(center)+(246:2cm)$) {};

\coordinate (lu1) at ($(center)+(114:2.6cm)$) {};
\coordinate (lu2) at ($(center)+(98:2.6cm)$) {};
\coordinate (lu3) at ($(center)+(82:2.6cm)$) {};
\coordinate (lu4) at ($(center)+(66:2.6cm)$) {};

\coordinate (ll1) at ($(center)+(292:2.6cm)$) {};
\coordinate (ll2) at ($(center)+(278:2.6cm)$) {};
\coordinate (ll3) at ($(center)+(262:2.6cm)$) {};
\coordinate (ll4) at ($(center)+(246:2.6cm)$) {};

\coordinate (ll) at ($(center)+(270:3.4cm)$) {};
\coordinate (lu) at ($(center)+(90:3.4cm)$) {};

\tikzstyle{every node}=[inner sep=1pt]
\begin{tiny}
\node at (lu1) {$a^1$};
\node at (lu2) {$c^0$};
\node at (lu3) {$d^1$};
\node at (lu4) {$b^0$};

\node at (ll1) {$c^1$};
\node at (ll2) {$a^0$};
\node at (ll3) {$b^1$};
\node at (ll4) {$d^0$};
\end{tiny}

\begin{footnotesize}
\node at (lu) {$\phi|M^j$};
\node at (ll) {$\phi|M^{1-j}$};
\end{footnotesize}
\draw (0,0) circle (2cm);

\draw[very thick,red] ([shift=(292:2.10cm)]0,0) arc (292:(360+98):2.10cm);
\draw[very thick] ([shift=(262:2.25cm)]0,0) arc (262:(360+66):2.25cm);
\draw[very thick] ([shift=(114:2.10cm)]0,0) arc (114:278:2.10cm);
\draw[very thick,red] ([shift=(82:2.25cm)]0,0) arc (82:246:2.25cm);
\draw[very thick,<-] (u1) -- (l2);
\draw[very thick,red,->] (u2) -- (l1);
\draw[very thick,red,<-] (u3) -- (l4);
\draw[very thick,->] (u4) -- (l3);

\draw[<-] ([shift=(60:3cm)]0,0) arc (60:120:3cm);
\draw[<-] ([shift=(238:3cm)]0,0) arc (238:300:3cm);

\draw[white] (-3,-4) -- (-3,-2.5);
\draw[white] (3,4) -- (3,2.5);

\end{tikzpicture} 
\caption{\label{fig:technical_lemma} Illustration of statements~\eqref{item:technical-lemma-crossing-chords-same-direction} (to the left) and \eqref{item:technical-lemma-crossing-chords-different-direction} (to the right) of Lemma 2.1.}
\end{figure}

Now we prove \eqref{item:technical-lemma-crossing-chords-different-direction}.
Since $a \sim b$, $a$ and $b$ do not cover the circle, and $|M^j \cap \{x^0,x^1\}| = 1$ for every $x \in M$, 
there exists $c \in V$ such that $ac \notin E(G)$ and $b \sim c$ 
or there exists $c' \in V$ such that $a\sim c'$ and $bc' \notin E(G)$.
Suppose the first case holds (the other case is analogous).
Therefore, $a^1c^0b^0$ is a subword of $\phi|M^j$.
This implies that $c\in M$.
Therefore $c^1a^0b^1$  is a subword of $\phi|M^{1-j}$.
Now, applying Lemma~\ref{lem:technical-lemma-crossing-chords}.\eqref{item:technical-lemma-crossing-chords-same-direction} to~$b$ and~$c$, we get that there is $d$ such that $c^0d^1b^0$ is a subword of $\phi|M^j$ and $c^1b^1d^0$ is a subword of of $\phi|M^{1-j}$.
Combining the above conclusions we have that $a^1c^0d^1b^0$ is a subword of $\phi|M^j$ and $c^1a^0b^1d^0$ is a subword of $\phi|M^{1-j}$, as desired.
In particular, this implies $c\sim d$.
\end{proof}

We proceed to the proof of Theorem~\ref{thm:Lin-Szwarcfiter}.
In the course of the proof we also show that $G$ contains no other rigid non-Helly cliques than those defined by Claims~\ref{claim:large_rigid_forbidden_structure} and \ref{claim:small_rigid_forbidden_structure}.
In the result, we show Theorem~\ref{thm:rigid_forbidden_structures} as a by-product.

\begin{proof}[Proof of Theorem \ref{thm:Lin-Szwarcfiter}]
Let $\phi$ be a non-Helly conformal model of $G$.
We show that any other conformal model $\phi'$ of $G$ is also non-Helly.
Since $\phi$ is non-Helly, $G$ contains a minimal non-Helly clique $C$ in $\phi$.
By Lemma~\ref{lem:non-Helly-clique-forces-forbidden-structure}, 
$C$ forms a non-Helly structure in $\phi$.
If $C$ satisfies the assumptions of Claim~\ref{claim:large_rigid_forbidden_structure} or Claim~\ref{claim:small_rigid_forbidden_structure},
$C$ is non-Helly in $\phi'$, and $\phi'$ is non-Helly.
Suppose the other case, that is, suppose $C = \{a,b,c\}$,
$C \subseteq Q$ for a component $Q$ of $G_{ov}$, and suppose $N$ is 
the deepest node in $\pqmtree_Q$ such that $C \subseteq Q$ (that is, $C \subseteq N$ and either $C$ intersects with three children of $N$ and $N$ is serial,
or $C$ intersects two children of $N$).

Let $N$ be a serial node in $\pqmtree_Q$ and let $a \in N_a$, $b \in N_b$, and $c \in N_c$ for three distinct children of~$N$.
Clearly, $N$ is an M-node or $N=V$ and $N$ is the only inner node in $\pqmtree$. 
Observe that, since $N$ is serial, $\Pi(N)$ admits an order $\pi$ such that 
$C$ is Helly in $\phi$ whenever $\phi_{|N}=\pi$.
Now, suppose $a^0c^0b^1a^1c^1b^0$ is a circular subword of $\phi'$.
If $N$ is is an M-node, we 
use Lemma \ref{lem:technical-lemma-crossing-chords} to deduce there is $d \in N$ such that
$d^1a^0c^0b^1a^1d^0c^1b^0$ is a circular subword of $\phi'$.
Otherwise, if $N = V$ and $V$ is serial, 
there is $d \in N_a$ such that $d^1a^0c^0b^1a^1d^0c^1b^0$ is a circular subword of $\phi'$
as otherwise $a$ would be a universal vertex in $G$.
In any case we deduce $\{b,c,d\}$ forms a non-Helly structure in $\phi'$.
The other cases are proven analogously.

Assume $C$ intersects exactly two children of $N$.
Assume first that $\{a,b\} \in N_{a,b}$ and $c \in N_c$, where $N_{a,b}$ and $N_c$
are two distinct children of $N$.
Note that $N_{a,b}$ is an M-node in $\pqmtree_{Q}$.
Assume $a^0b^1$ is a subword of $\phi|N^0_{a,b}$ and 
$a^1b^0$ is a subword of $\phi|N^1_{a,b}$.
We use Lemma~\ref{lem:technical-lemma-crossing-chords}.\eqref{item:technical-lemma-crossing-chords-different-direction}
to show there are $d,e \in N_{a,b}$ such that $d^1a^0b^1e^0$ is a subword of $\phi|N^0_{a,b}$ and $a^1d^0e^1b^0$ 
is a subword of $\phi|N^1_{a,b}$.
Now, we refer to Figure~\ref{fig:abcde_relations} which shows all possible configurations of the oriented chords representing the elements in the set $\{a,b,c,d,e\}$
in conformal models of $G$.

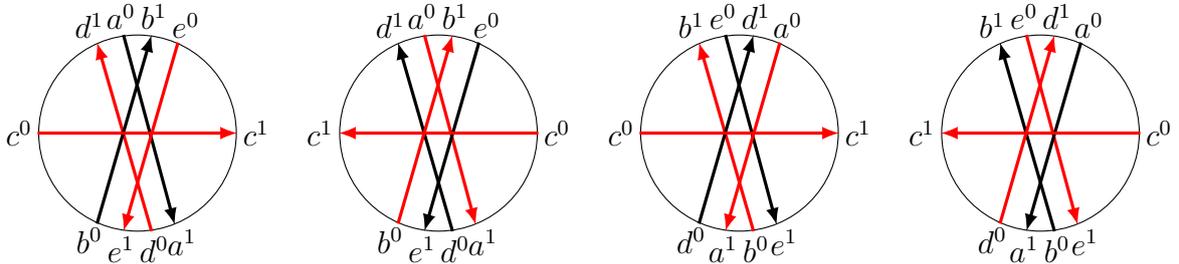
\begin{figure}[htp!]
\centering
\begin{tikzpicture}[scale=0.65,>=latex,shorten >=-0.4pt,shorten <=-0.4pt]

\coordinate (center) at (0,0) {};

\coordinate (u1) at ($(center)+(114:2cm)$) {};
\coordinate (u2) at ($(center)+(98:2cm)$) {};
\coordinate (u3) at ($(center)+(82:2cm)$) {};
\coordinate (u4) at ($(center)+(66:2cm)$) {};

\coordinate (l) at ($(center)+(180:2cm)$) {};
\coordinate (r) at ($(center)+(0:2cm)$) {};

\coordinate (l1) at ($(center)+(292:2cm)$) {};
\coordinate (l2) at ($(center)+(278:2cm)$) {};
\coordinate (l3) at ($(center)+(262:2cm)$) {};
\coordinate (l4) at ($(center)+(246:2cm)$) {};

\coordinate (lu1) at ($(center)+(114:2.4cm)$) {};
\coordinate (lu2) at ($(center)+(98:2.4cm)$) {};
\coordinate (lu3) at ($(center)+(82:2.4cm)$) {};
\coordinate (lu4) at ($(center)+(66:2.4cm)$) {};

\coordinate (ll1) at ($(center)+(292:2.4cm)$) {};
\coordinate (ll2) at ($(center)+(278:2.4cm)$) {};
\coordinate (ll3) at ($(center)+(262:2.4cm)$) {};
\coordinate (ll4) at ($(center)+(246:2.4cm)$) {};

\coordinate (ll) at ($(center)+(180:2.4cm)$) {};
\coordinate (lr) at ($(center)+(0:2.4cm)$) {};

\tikzstyle{every node}=[inner sep=1pt]
\node at (lu1) {$d^1$};
\node at (lu2) {$a^0$};
\node at (lu3) {$b^1$};
\node at (lu4) {$e^0$};

\node at (ll1) {$a^1$};
\node at (ll2) {$d^0$};
\node at (ll3) {$e^1$};
\node at (ll4) {$b^0$};
\node at (ll) {$c^0$};
\node at (lr) {$c^1$};

\draw (0,0) circle (2cm);

\draw[very thick,<-,red] (u1) -- (l2);
\draw[very thick,->] (u2) -- (l1);
\draw[very thick,<-] (u3) -- (l4);
\draw[very thick,->,red] (u4) -- (l3);

\draw[very thick,->,red] (l) -- (r);

\draw[white] (-2.5,-2.5) -- (-2.5,-2);
\draw[white] (2.5,2.5) -- (2.5,2);
\end{tikzpicture} 
\hspace{0.15cm}
\begin{tikzpicture}[scale=0.65,>=latex,shorten >=-0.4pt,shorten <=-0.4pt]

\coordinate (center) at (0,0) {};

\coordinate (u1) at ($(center)+(114:2cm)$) {};
\coordinate (u2) at ($(center)+(98:2cm)$) {};
\coordinate (u3) at ($(center)+(82:2cm)$) {};
\coordinate (u4) at ($(center)+(66:2cm)$) {};

\coordinate (l) at ($(center)+(180:2cm)$) {};
\coordinate (r) at ($(center)+(0:2cm)$) {};

\coordinate (l1) at ($(center)+(292:2cm)$) {};
\coordinate (l2) at ($(center)+(278:2cm)$) {};
\coordinate (l3) at ($(center)+(262:2cm)$) {};
\coordinate (l4) at ($(center)+(246:2cm)$) {};

\coordinate (lu1) at ($(center)+(114:2.4cm)$) {};
\coordinate (lu2) at ($(center)+(98:2.4cm)$) {};
\coordinate (lu3) at ($(center)+(82:2.4cm)$) {};
\coordinate (lu4) at ($(center)+(66:2.4cm)$) {};

\coordinate (ll1) at ($(center)+(292:2.4cm)$) {};
\coordinate (ll2) at ($(center)+(278:2.4cm)$) {};
\coordinate (ll3) at ($(center)+(262:2.4cm)$) {};
\coordinate (ll4) at ($(center)+(246:2.4cm)$) {};

\coordinate (ll) at ($(center)+(180:2.4cm)$) {};
\coordinate (lr) at ($(center)+(0:2.4cm)$) {};

\tikzstyle{every node}=[inner sep=1pt]
\node at (lu1) {$d^1$};
\node at (lu2) {$a^0$};
\node at (lu3) {$b^1$};
\node at (lu4) {$e^0$};

\node at (ll1) {$a^1$};
\node at (ll2) {$d^0$};
\node at (ll3) {$e^1$};
\node at (ll4) {$b^0$};
\node at (ll) {$c^1$};
\node at (lr) {$c^0$};

\draw (0,0) circle (2cm);

\draw[very thick,<-] (u1) -- (l2);
\draw[very thick,->,red] (u2) -- (l1);
\draw[very thick,<-,red] (u3) -- (l4);
\draw[very thick,->] (u4) -- (l3);

\draw[very thick,->,red] (r) -- (l);

\draw[white] (-2.5,-2.5) -- (-2.5,-2);
\draw[white] (2.5,2.5) -- (2.5,2);
\end{tikzpicture} 
\hspace{0.15cm}
\begin{tikzpicture}[scale=0.65,>=latex,shorten >=-0.4pt,shorten <=-0.4pt]

\coordinate (center) at (0,0) {};

\coordinate (u1) at ($(center)+(114:2cm)$) {};
\coordinate (u2) at ($(center)+(98:2cm)$) {};
\coordinate (u3) at ($(center)+(82:2cm)$) {};
\coordinate (u4) at ($(center)+(66:2cm)$) {};

\coordinate (l) at ($(center)+(180:2cm)$) {};
\coordinate (r) at ($(center)+(0:2cm)$) {};

\coordinate (l1) at ($(center)+(292:2cm)$) {};
\coordinate (l2) at ($(center)+(278:2cm)$) {};
\coordinate (l3) at ($(center)+(262:2cm)$) {};
\coordinate (l4) at ($(center)+(246:2cm)$) {};

\coordinate (lu1) at ($(center)+(114:2.4cm)$) {};
\coordinate (lu2) at ($(center)+(98:2.4cm)$) {};
\coordinate (lu3) at ($(center)+(82:2.4cm)$) {};
\coordinate (lu4) at ($(center)+(66:2.4cm)$) {};

\coordinate (ll1) at ($(center)+(292:2.4cm)$) {};
\coordinate (ll2) at ($(center)+(278:2.4cm)$) {};
\coordinate (ll3) at ($(center)+(262:2.4cm)$) {};
\coordinate (ll4) at ($(center)+(246:2.4cm)$) {};

\coordinate (ll) at ($(center)+(180:2.4cm)$) {};
\coordinate (lr) at ($(center)+(0:2.4cm)$) {};

\tikzstyle{every node}=[inner sep=1pt]
\node at (lu3) {$d^1$};
\node at (lu4) {$a^0$};
\node at (lu1) {$b^1$};
\node at (lu2) {$e^0$};

\node at (ll3) {$a^1$};
\node at (ll4) {$d^0$};
\node at (ll1) {$e^1$};
\node at (ll2) {$b^0$};
\node at (ll) {$c^0$};
\node at (lr) {$c^1$};

\draw (0,0) circle (2cm);

\draw[very thick,<-,red] (u1) -- (l2);
\draw[very thick,->] (u2) -- (l1);
\draw[very thick,<-] (u3) -- (l4);
\draw[very thick,->,red] (u4) -- (l3);

\draw[very thick,->,red] (l) -- (r);

\draw[white] (-2.5,-2.5) -- (-2.5,-2);
\draw[white] (2.5,2.5) -- (2.5,2);
\end{tikzpicture} 
\hspace{0.15cm}
\begin{tikzpicture}[scale=0.65,>=latex,shorten >=-0.4pt,shorten <=-0.4pt]

\coordinate (center) at (0,0) {};

\coordinate (u1) at ($(center)+(114:2cm)$) {};
\coordinate (u2) at ($(center)+(98:2cm)$) {};
\coordinate (u3) at ($(center)+(82:2cm)$) {};
\coordinate (u4) at ($(center)+(66:2cm)$) {};

\coordinate (l) at ($(center)+(180:2cm)$) {};
\coordinate (r) at ($(center)+(0:2cm)$) {};

\coordinate (l1) at ($(center)+(292:2cm)$) {};
\coordinate (l2) at ($(center)+(278:2cm)$) {};
\coordinate (l3) at ($(center)+(262:2cm)$) {};
\coordinate (l4) at ($(center)+(246:2cm)$) {};

\coordinate (lu1) at ($(center)+(114:2.4cm)$) {};
\coordinate (lu2) at ($(center)+(98:2.4cm)$) {};
\coordinate (lu3) at ($(center)+(82:2.4cm)$) {};
\coordinate (lu4) at ($(center)+(66:2.4cm)$) {};

\coordinate (ll1) at ($(center)+(292:2.4cm)$) {};
\coordinate (ll2) at ($(center)+(278:2.4cm)$) {};
\coordinate (ll3) at ($(center)+(262:2.4cm)$) {};
\coordinate (ll4) at ($(center)+(246:2.4cm)$) {};

\coordinate (ll) at ($(center)+(180:2.4cm)$) {};
\coordinate (lr) at ($(center)+(0:2.4cm)$) {};

\tikzstyle{every node}=[inner sep=1pt]
\node at (lu3) {$d^1$};
\node at (lu4) {$a^0$};
\node at (lu1) {$b^1$};
\node at (lu2) {$e^0$};

\node at (ll3) {$a^1$};
\node at (ll4) {$d^0$};
\node at (ll1) {$e^1$};
\node at (ll2) {$b^0$};
\node at (ll) {$c^1$};
\node at (lr) {$c^0$};

\draw (0,0) circle (2cm);

\draw[very thick,<-] (u1) -- (l2);
\draw[very thick,->,red] (u2) -- (l1);
\draw[very thick,<-,red] (u3) -- (l4);
\draw[very thick,->] (u4) -- (l3);

\draw[very thick,->,red] (r) -- (l);

\draw[white] (-2.5,-2.5) -- (-2.5,-2);
\draw[white] (2.5,2.5) -- (2.5,2);
\end{tikzpicture} 

\caption{\label{fig:abcde_relations} Forbidden structure (in red) in all possible configurations between oriented chords for $a,b,c,d,e$.}
\end{figure}

In any case, note that there is a triple in $\{a,b,c,d,e\}$ which forms a non-Helly structure in $G$.
Also, since each of these configurations occurs in some conformal model of $G$, 
$\{a,b,c\}$ satisfies the Helly condition in some conformal model of $G$.
The remaining case $\{a,c\} \in N_{a,c}$ and $b \in N_b$, where $N_{a,c}$ and $N_b$
are two distinct children of $N$ is proven analogously
(in this case we use Lemma~\ref{lem:technical-lemma-crossing-chords}.\eqref{item:technical-lemma-crossing-chords-same-direction} to obtain two additional chords $d,e$).
\end{proof}

As a by-product we obtain the theorem that characterizes rigid non-Helly cliques in $G$.
\begin{theorem}
\label{thm:rigid_forbidden_structures}
A clique $C$ of size $\geq 3$ is rigid in $G$ if and only if there is an order $v_0,\ldots,v_{k}$ of the vertices of $G$ such that for every $i \neq j$ from $[0,k]$:
\begin{itemize}
\item $v_{i} \sim v_{j}$ if $(i-j)_{mod\ k} \in \{1,k-1\}$,
\item $v_i,v_j$ cover the circle if $(i-j)_{mod\ k} \notin \{1,k-1\}$,
\end{itemize}
and exactly one of the following statements hold:
\begin{itemize}
\item $|C| \geq 4$.
\item $|C|=3$, $C$ is contained in a component $Q$ of $G_{ov}$, there is a prime node
$N \in \pqmtree_Q$ such that $C \subseteq N$, $C$ intersects with three children of $N$, and 
any ordering of $\Pi(N)$ forces either non-Helly structure
$v^0_0v^1_1v^0_2v^1_0v^0_1v^1_2$ or non-Helly structure $v^0_2v^1_1v^0_0v^1_2v^0_1v^1_0$.
\end{itemize}
\end{theorem}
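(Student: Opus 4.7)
The plan is to combine Claims \ref{claim:large_rigid_forbidden_structure} and \ref{claim:small_rigid_forbidden_structure} with the case analysis already carried out inside the proof of Theorem \ref{thm:Lin-Szwarcfiter}. For the ``if'' direction, the stated cyclic $\sim$/cover pattern forces the chords representing $v_0,\ldots,v_{k-1}$ in any conformal model to realize the same $k$-cycle in $G_{ov}$ in one of exactly two mutually reflected ways, both of which are non-Helly structures; for $|C|\geq 4$ this is precisely Claim \ref{claim:large_rigid_forbidden_structure}. For $|C|=3$, the hypothesis on the prime node $N$ gives $\Pi(N)=\{\pi,\gamma\}$ with $\gamma$ the reflection of $\pi$, and since each of these orderings is assumed to force one of the two prescribed non-Helly words on $C$, rigidity follows via the correspondence between conformal models and admissible orderings recorded in Property \ref{prop:pm-tree}, which is the content of Claim \ref{claim:small_rigid_forbidden_structure}.

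For the ``only if'' direction, let $C$ be a rigid non-Helly clique and fix any conformal model $\phi$. Since $\phi|C$ has the non-Helly shape, the cyclic order $v_0,\ldots,v_{k-1}$ emerges, and the remark immediately after the definition of a non-Helly structure supplies the claimed $\sim$/cover relations. If $|C|\geq 4$ the first bullet is satisfied and we are done. Otherwise $|C|=3$, the three vertices pairwise overlap, and they lie in a common connected component $Q$ of $G_{ov}$; let $N$ be the deepest node of $\pqmtree_Q$ containing $C$. I would then replay the dichotomy from the proof of Theorem \ref{thm:Lin-Szwarcfiter}: if $N$ is serial (including the corner case in which $N=V$ is the only inner node of $\pqmtree$), one picks an admissible ordering of $\Pi(N)$ that places the three children of $N$ hosting $v_0,v_1,v_2$ so that $C$ is Helly, and lifts it to a conformal model via Property \ref{prop:pm-tree}; if $N$ is prime but $C$ meets only two of its children, the five-chord argument based on Lemma \ref{lem:technical-lemma-crossing-chords} and the configurations of Figure \ref{fig:abcde_relations} exhibits a conformal model in which $C$ is Helly. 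Each of these outcomes contradicts rigidity, leaving only the case $N$ prime with $C$ spread across three children, which is the second bullet; since $|\Pi(N)|=2$ and $C$ must be non-Helly under both admissible orderings, the explicit form of the forced non-Helly structures is read off from the orientation-reversal computation in Claim \ref{claim:small_rigid_forbidden_structure}.

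The main obstacle is sharpening the two negative cases to contradict rigidity of $C$ itself, not merely to expose some other non-Helly triple as was sufficient in the proof of Theorem \ref{thm:Lin-Szwarcfiter}. In the serial case this amounts to verifying that among the orderings of $\Pi(N)$ there is one under which the three children holding $v_0,v_1,v_2$ can be arranged so that $\{v_0,v_1,v_2\}$ admits a common point, which is a routine consequence of the freedom granted by the serial structure together with admissible models for the relevant M-nodes. In the two-children-of-a-prime-node case one must identify in Figure \ref{fig:abcde_relations} those configurations in which $\phi'(a),\phi'(b),\phi'(c)$ share a common point and then confirm realizability by a compatible choice of admissible ordering of $\Pi(N)$ together with admissible models for the M-nodes inside the two children hit by $C$. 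This bookkeeping is the most delicate step, but it is handled cleanly by the PQM-tree correspondence, after which the characterization follows.
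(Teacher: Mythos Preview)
Your proposal is correct and follows essentially the same approach as the paper, which obtains Theorem~\ref{thm:rigid_forbidden_structures} as a by-product of the case analysis inside the proof of Theorem~\ref{thm:Lin-Szwarcfiter}. The ``sharpening'' you flag as the main obstacle is in fact already carried out there: in the serial case the proof explicitly notes that $\Pi(N)$ admits an ordering under which $C$ itself is Helly, and in the two-children case it explicitly records that each of the four configurations in Figure~\ref{fig:abcde_relations} occurs in some conformal model of $G$ and that $\{a,b,c\}$ is Helly in at least one of them---so no additional bookkeeping beyond what the paper already does is needed.
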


\section{Types of cliques of circular-arc graphs}
\label{sec:clique-type-problem}

Let $G$ be a circular-arc graph.
\begin{definition}
\label{def:clique_types}
A clique $C$ in $G$ is said to be:
\begin{itemize}
\item \emph{always-Helly} if $C$ is Helly in every conformal model of~$G$,
\item \emph{always-non-Helly} if $C$ is not Helly in every conformal model of~$G$,
\item \emph{ambiguous}, otherwise.
\end{itemize}
\end{definition}
Of course, each clique of $G$ is exactly one of the types listed above.
The goal of this section is to present a polynomial time algorithm 
for the following problem:

\smallskip

\begin{tabular}{rl}
\textbf{Problem:} & \sc{Clique Type} \\
\textbf{Input:}& A circular-arc graph $G$ and a clique $C$ in $G$, \\
\textbf{Output:}& The type of $C$.
\end{tabular}
\medskip

The next observation allows us to restrict our attention to so-called clean cliques in circular-arc graphs.
\begin{definition}
\label{def:clean_clique}
A clique $C$ in $G$ is \emph{clean} if $C$ contains no two vertices such that one is contained in the other. 
\end{definition}
Consider the following ``cleaning'' procedure performed on a clique $C$ of $G$:
\begin{itemize}
    \item as long as there exist $u, v \in C$ such that $u$ is contained in $v$, remove $v$ from $C$.
\end{itemize}
Since the containment relation is transitive, the above procedure always (regardless of the order in which the vertices are removed from $C$) pulls out the same subset of vertices from $C$, and leaves the type of $C$ unchanged.

Suppose $G,C$ is an instance of the Clique Type problem.
Let $\pqmtree$ be the PM-tree for $G$.
The remarks of the previous paragraph allow us to assume that:
\begin{description}
 \item [\namedlabel{prop:clique-type-clean}{(C1)}] $C$ is a clean clique of $G$.
\end{description}

\begin{observation}
\label{obs:type-clique-one-module}
Suppose $C$ satisfies \ref{prop:clique-type-clean}. If $C \cap Q_1 \neq \emptyset$ and $C \cap Q_2 \neq \emptyset$ for two distinct Q-nodes $Q_1$ and $Q_2$ in $\pqmtree$, then $C$ is always-Helly.
\end{observation}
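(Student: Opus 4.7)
The plan is to exhibit, for every conformal model $\phi$ of $G$, a single point on the circle that lies in $\psi(w)$ for each $w \in C$; this will show $C$ is always-Helly. First I will note that the existence of two distinct Q-nodes forces the root $V$ of $\strongModules(G_{ov})$ to be parallel, since in the serial and prime cases $\pqmtree^{PQ}$ contains only the single Q-node $V$. Hence $Q_1, Q_2$ correspond to distinct connected components of $G_{ov}$. Fix $u \in C \cap Q_1$ and $v \in C \cap Q_2$. For any two $w, w' \in C$ lying in distinct Q-nodes we have $w \not\sim w'$ and $ww' \in E(G)$, while neither contains the other by cleanliness of $C$; then by elimination among the clauses of Definition~\ref{def:normalized_model} the pair $w, w'$ covers the circle. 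In particular, every $w \in C$ covers the circle with whichever of $u, v$ lies in a different Q-node than $w$.

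Next I will use the PQ-tree layout of $\phi$. By the contiguity property of Q-nodes in $\pqmtree^{PQ}$, the slots of each Q-node $Q$ occupy a contiguous closed arc $R_Q$ of the circle, and arcs corresponding to distinct Q-nodes are disjoint. For any $w \in C$ lying in a Q-node $Q_w$, both endpoints $w^0, w^1$ lie in $R_{Q_w}$, so $S_w \coloneq \circle \setminus \psi(w)$ is one of the two arcs spanned by $w^0, w^1$: either the \emph{inner} arc, a sub-arc of $R_{Q_w}$ between $w^0$ and $w^1$, or the \emph{outer} arc, which contains all of $\circle \setminus R_{Q_w}$. Fix any $w' \in \{u, v\}$ in a Q-node different from $Q_w$; since $w$ and $w'$ cover the circle, the endpoints of $\phi(w')$ lie in $\psi(w) = \circle \setminus S_w$ and also outside $R_{Q_w}$. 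Therefore $S_w$ cannot be the outer arc and must be the inner one, so $S_w \subseteq R_{Q_w}$.

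Finally I choose a point $p$ on the circle strictly between the last letter of $R_{Q_1}$ in $\phi$ and the letter that immediately follows it; such a $p$ lies outside every $R_Q$. Since each $S_w$ is contained in $R_{Q_w}$ with endpoints at letter positions strictly inside $R_{Q_w}$, the point $p$ belongs to no $S_w$. Hence $p \in \psi(w)$ for every $w \in C$, so $\bigcap_{w \in C}\psi(w) \neq \emptyset$ and $C$ is Helly in $\phi$. As $\phi$ was arbitrary, $C$ is always-Helly.

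The main difficulty I anticipate is reconciling the discrete slot structure of $\phi$ with the continuous-circle geometry: I must verify that the witness $p$, placed in the inter-letter gap at the boundary of $R_{Q_1}$, genuinely avoids every $S_w$, which relies on $S_w$ never reaching the first or last letter-position of its enclosing $R_{Q_w}$.
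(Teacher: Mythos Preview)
Your argument has a genuine gap in the second paragraph: the claim that ``the slots of each Q-node $Q$ occupy a contiguous closed arc $R_Q$'' and that these arcs are pairwise disjoint is not true in general. A Q-node $Q$ may be adjacent to several P-nodes in $\pqmtree^{PQ}$, and then in the circular order $\pi$ the slots of~$Q$ are interspersed with the contiguous blocks $\slots(Q\to P)$ coming from those P-node neighbours. Figure~\ref{fig:PQ_tree} already illustrates this: the slots of $Q_2$ are $S^0_2,S^0_3,S^1_2,S^1_3$, separated by the blocks of $P_1$ and $P_2$, so no contiguous $R_{Q_2}$ exists. Consequently your inner/outer dichotomy for $S_w$ collapses when $Q_w$ has more than one P-node neighbour, the step ``endpoints of $\phi(w')$ lie outside $R_{Q_w}$'' is unjustified, and the final witness~$p$ cannot be located.

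Your approach is easily repaired: instead of one arc per Q-node, pick a single edge $(Q,P)$ on the path from $Q_1$ to $Q_2$ in $\pqmtree^{PQ}$ and let $A=\slots(P\to Q)$, $B=\slots(Q\to P)$; these two blocks \emph{are} contiguous and complementary, with the slots of $Q_1$ in one and those of $Q_2$ in the other. For each $w\in C$, both endpoints of $w$ lie in the same block, and your covering argument with whichever of $u,v$ lies on the opposite side forces $S_w$ to be the inner sub-arc of that block; any point in the gap between $A$ and $B$ is then the desired witness. The paper takes a quite different and shorter route: it argues by contradiction via Lemma~\ref{lem:non-Helly-clique-forces-forbidden-structure}. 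A minimal non-Helly subclique $C'\subseteq C$ induces a cycle in $G_{ov}$ and hence lies in a single Q-node, say $Q_1$; then any $v\in C\cap Q_2$ must (by cleanliness) cover the circle with every $c\in C'$, forcing $\psi(v)\supseteq\bigcup_{c\in C'}(\circle\setminus\psi(c))=\circle$, so $v$ would be universal. This avoids the PQ-tree geometry altogether.
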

\begin{proof}
Suppose for the sake of contradiction that there is a subclique $C' \subseteq C$ which 
forms a forbidden structure in some conformal model $\phi$ of $G$.
Since $(C',{\sim})$ is connected, $C'$ is contained in some connected component of $G_{ov}$.
Without loss of generality assume $C' \subseteq Q_1$.
Therefore, there exists a vertex $v\in C\setminus C'$ such that $v\in Q_2$.
However, such a vertex cannot exist because $C$ is clean.
\end{proof}
The above observation allows us to assume that:
\begin{description}
 \item [\namedlabel{prop:clique-type-one-component}{(C2)}] $C \subseteq Q$ for some Q-node $Q$ of $\pqmtree$.
\end{description}
Clearly, if $C$ contains a rigid non-Helly subclique, then $C$ is always-non-Helly.
So, we further assume that: 
\begin{description}
 \item [\namedlabel{prop:clique-type-no-rigid-non-Helly-subclique}{(C3)}] 
 \text{$C$ contains no rigid non-Helly subclique.}
\end{description}
We now classify the type of $C$ depending on whether $C$ is public or private.
The next theorem establishes the type of $C$ when $C$ is public.
\begin{theorem}
\label{thm:public_clique_classification}
Suppose $C$ is a public clique in $G$ which satisfies conditions \ref{prop:clique-type-clean}-\ref{prop:clique-type-no-rigid-non-Helly-subclique}.
\begin{enumerate}
 \item \label{item:public_clique_classification_M_prime} If $Q$ is prime, then $C$ is always-Helly. 
 \item \label{item:public_clique_classification_M_serial} Suppose $Q$ is serial (which implies $Q=V$). 
 If $C$ intersects at most two CA-modules from $\camodules$, then $C$ is always-Helly,
 otherwise $C$ is ambiguous.
 \end{enumerate}
\end{theorem}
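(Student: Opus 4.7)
The plan is to reduce the analysis to $3$-element subcliques of $C$. By Lemma~\ref{lem:non-Helly-clique-forces-forbidden-structure}, $C$ is non-Helly in a conformal model $\phi$ iff some subclique of $C$ forms a non-Helly structure in $\phi$, and by Claim~\ref{claim:large_rigid_forbidden_structure} any such subclique of size $\geq 4$ is automatically rigid non-Helly, which is ruled out by~\ref{prop:clique-type-no-rigid-non-Helly-subclique}. The crucial structural fact is that, because $C$ is public, for every CA-module $S \in \camodules$ all vertices of $C \cap S$ share a common orientation $(S^{j_S}, S^{1-j_S})$ in the metachord $\SSS$. I will analyze three ``shapes'' of a $3$-subclique $\{a,b,c\} \subseteq C$: entirely inside one CA-module, split as $2{+}1$ between two CA-modules, or spread across three CA-modules.

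A $3$-subclique $\{a,b,c\}$ inside one CA-module $S$ has all three tails landing in $\phi|S^{j_S}$ and all three heads landing in $\phi|S^{1-j_S}$; the restriction $\phi|\{a,b,c\}$ thus reads as three consecutive tails followed by three consecutive heads, and never matches the alternating tail-head-tail-head-tail-head pattern of a non-Helly structure. A $3$-subclique with $a, b \in S_1$ and $c \in S_2$ has, by publicity applied to $S_1$, the pair $a^0, b^0$ together inside $\phi|S_1^{j_1}$ and $a^1, b^1$ together inside $\phi|S_1^{1-j_1}$; since $c$'s letters live in slots of $S_2^*$, no $c$-letter can appear between $a^0$ and $b^0$ (or between $a^1$ and $b^1$) in the restriction $\phi|\{a,b,c\}$, so the two tails and the two heads coming from $\{a,b\}$ remain adjacent, again blocking the alternation. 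Neither of these two shapes can ever produce a non-Helly structure, regardless of $\phi$.

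For part~\ref{item:public_clique_classification_M_prime}, primeness of $Q$ means $\Pi(Q) = \{\pi_{|Q}, (\pi_{|Q})^R\}$, so the internal slot ordering of $Q$ is fixed up to reflection. Combined with the previous paragraph, the only remaining configuration is a $3$-subclique spread across three CA-modules inside $Q$; in that case each of its six letters lies in a distinct slot, so the restriction $\phi|\{a,b,c\}$ is determined by $\pi_{|Q}$ up to reflection. A direct check shows that the reflection of a non-Helly pattern is itself a non-Helly pattern (on the same triple, with a relabelling of $v_0, v_1, v_2$), so if $\{a,b,c\}$ were non-Helly in some conformal model it would be non-Helly in every conformal model, i.e.\ rigid non-Helly, contradicting~\ref{prop:clique-type-no-rigid-non-Helly-subclique}. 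Hence $C$ is always-Helly.

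For part~\ref{item:public_clique_classification_M_serial}, $Q = V$ is serial and $\Pi(V)$ consists of all circular slot-orderings in which the slot pairs of every two CA-modules interleave. If $C$ meets at most two CA-modules the first two shapes exhaust all possibilities, so $C$ is always-Helly by the argument above. If $C$ meets three CA-modules $S_{i_1}, S_{i_2}, S_{i_3}$, pick $a \in C \cap S_{i_1}$, $b \in C \cap S_{i_2}$, $c \in C \cap S_{i_3}$ and exhibit two witnesses for ambiguity. The Helly witness uses the slot ordering $S_1^{j_1} S_2^{j_2} \cdots S_t^{j_t}\, S_1^{1-j_1} S_2^{1-j_2} \cdots S_t^{1-j_t}$, which lies in $\Pi(V)$ because every two CA-modules interleave, and in which all tails of every subclique of $C$ precede all its heads so the alternating non-Helly pattern cannot occur. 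For the non-Helly witness I prescribe that the restriction of the slot order to $\{S_{i_k}^{j_{i_k}}, S_{i_k}^{1-j_{i_k}}\}_{k=1,2,3}$ equals the cyclic word $S_{i_1}^{j_{i_1}} S_{i_3}^{1-j_{i_3}} S_{i_2}^{j_{i_2}} S_{i_1}^{1-j_{i_1}} S_{i_3}^{j_{i_3}} S_{i_2}^{1-j_{i_2}}$, which forces $\phi|\{a,b,c\} \equiv a^0 c^1 b^0 a^1 c^0 b^1$, a non-Helly structure. The main obstacle is extending this six-slot skeleton to an admissible order of all $t$ CA-modules; I plan to insert the two slots of every remaining $S_l$ antipodally, one inside the first three-slot arc and one inside the second, which preserves pairwise interleaving with every previously placed pair. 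Combining the two witnesses establishes ambiguity of $C$, completing the proof.
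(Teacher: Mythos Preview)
Your proof is correct and follows essentially the same approach as the paper: both arguments hinge on the fact that publicity forces all vertices of $C$ in a common CA-module to share one orientation, which blocks the tail--head alternation of a non-Helly structure, and both invoke~\ref{prop:clique-type-no-rigid-non-Helly-subclique} (via Claim~\ref{claim:small_rigid_forbidden_structure}) to rule out the three-CA-module configuration when $Q$ is prime. You organise the case analysis by the three ``shapes'' of a $3$-subclique and then treat part~\eqref{item:public_clique_classification_M_serial} by exhibiting explicit Helly and non-Helly slot orderings, whereas the paper argues part~\eqref{item:public_clique_classification_M_prime} by contradiction directly and leaves the witness constructions in part~\eqref{item:public_clique_classification_M_serial} largely implicit; your version is somewhat more explicit (notably the antipodal-insertion argument extending the six-slot skeleton to all of $\camodules$, and the concrete Helly ordering $S_1^{j_1}\cdots S_t^{j_t}S_1^{1-j_1}\cdots S_t^{1-j_t}$), but the underlying ideas coincide.
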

\begin{proof}
To show statement~\eqref{item:public_clique_classification_M_prime} suppose, for the sake of contradiction, that
there is a conformal model $\phi$ of $G$ and vertices $a,b,c \in C$ such that $\phi$ 
contains $a^0b^1c^0a^1b^0c^1$ as a circular subword.
Since $C$ satisfies~\ref{prop:clique-type-no-rigid-non-Helly-subclique}, we have that $a,b,c$ are not contained in three distinct children of $Q$.
Without loss of generality, suppose $a$ and $b$ are contained in $M$ where $M$ is a child of $Q$ in $\pqmtree_{Q}$.
Since $C$ is public, $a$ and $b$ have the same orientation.
Without loss of generality assume $\phi|M^j$ contains $b^0a^0$ as a subword for some $j\in \{0,1\}$.
However, this implies that $\phi|M^j$ contains $b^0c^1a^0$ as a subword, which cannot be the case as $C$ is public.

To show statement~\eqref{item:public_clique_classification_M_serial} suppose there are $a,b,c\in C$ contained in three different $CA$-modules from $\camodules$.
Let $\phi$ be a conformal model of $G$.
Since $Q=V$ is serial, to obtain other conformal models, we can permute the slots of~$\phi$ arbitrarily as long as the slots corresponding to different CA-modules overlap. 
Thus, we can obtain a conformal model~$\phi'$ of $G$ such that $\phi'$ contains $a^0b^1c^0a^1b^0c^1$ as a circular subword.
Since $C$ is public, we can permute the slots of $\phi$ to get a conformal model of $G$ in which $C$ is Helly.

One can easily check that $C$ is Helly in any conformal model of $G$ when $C$ is public and $C$ is contained in at most two $CA$-modules from $\camodules$.
\end{proof}

Assume now that $C$ is private. 
Denote by $\owners$ the set of all M-nodes of $\pqmtree$ which are the owners of~$C$. 
Clearly, $\owners \neq \emptyset$ as $C$ is private.
We start with a simple claim.
\begin{claim}
\label{claim:private-nodes-and-vertices-outside}
For every $M \in \owners$ and every $c \in C \setminus M$, we have 
$c \sim M$.
\end{claim}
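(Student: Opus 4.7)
The plan is a proof by contradiction.

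I would first argue that $M$ is in fact a module of the full overlap graph $G_{ov}$, not merely of the permutation subgraph $(S,{\sim})$ for the CA-module $S \in \camodules$ containing it. Indeed, any vertex $w \in V \setminus M$ either lies in $V \setminus S$, and then relates uniformly to $S$ (and hence to $M$) by the fact that $S$ is itself a module of $G_{ov}$, or lies in $S \setminus M$, and then relates uniformly to $M$ by the module property inside $(S,{\sim})$. Consequently, for the vertex $c \in V \setminus M$ exactly one of $c \sim M$ or $c \parallel M$ holds, and the task reduces to excluding the latter.

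Suppose toward a contradiction that $c \parallel M$, and let $a,b \in C \cap M$ witness $M \in \owners$; without loss of generality $a$ is $(M^0,M^1)$-oriented and $b$ is $(M^1,M^0)$-oriented, so $a^0, b^1 \in M^0$ and $a^1, b^0 \in M^1$. Because $C$ is a clique, $c$ is adjacent in $G$ to both $a$ and $b$; by assumption $c$ overlaps neither, and by the cleanness of $C$ there is no containment between $c$ and $a$ or between $c$ and $b$. Definition~\ref{def:normalized_model} then leaves only the possibility that both $(c,a)$ and $(c,b)$ cover the circle. In any conformal model $\phi$ of $G$, the cover relations imply in particular $c \in \leftside(a) \cap \leftside(b)$ via the characterization $u \in \leftside(v) \iff \phi(u)$ lies on the left side of $\phi(v)$.

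The final step, which I expect to be the crux, is the geometric argument showing that $c \in \leftside(a) \cap \leftside(b)$ cannot hold when $a$ and $b$ have opposite orientations in $\MMM$. Since $M^0$ and $M^1$ form two disjoint contiguous subwords $\phi|M^0$ and $\phi|M^1$ in $\phi$, the remaining positions of $\phi$ split into two ``gaps''. The opposite orientations of $a, b$ force the clockwise arc from $a^0$ to $a^1$ (the left side of $\phi(a)$) to traverse exactly one of the two gaps, while the clockwise arc from $b^0$ to $b^1$ traverses the other gap. Accordingly, the intersection of these two arcs is contained in $\phi|M^0 \cup \phi|M^1$ --- concretely, in the clockwise subword of $\phi|M^0$ from $a^0$ to $b^1$ together with the clockwise subword of $\phi|M^1$ from $b^0$ to $a^1$. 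But $c \notin M$ implies $c^0, c^1 \notin \phi|M^0 \cup \phi|M^1$, so they cannot lie in this intersection, contradicting $c \in \leftside(a) \cap \leftside(b)$. The only subtlety is to verify the location of the intersection carefully under all possible relative orders of $a^0, b^1$ inside $\phi|M^0$ and of $a^1, b^0$ inside $\phi|M^1$; a brief case-check shows that in every case the intersection stays inside the two chunks, so the contradiction goes through.
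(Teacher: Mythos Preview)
Your proof is correct and follows essentially the same line as the paper's: the paper's one-sentence proof invokes precisely the two ingredients you unpack in detail---cleanness of $C$ and the presence of two oppositely-oriented vertices in $M \cap C$---without spelling out the geometric contradiction. Your detailed verification that the two left arcs $L_a$ and $L_b$ traverse distinct gaps (and hence intersect only inside $\phi|M^0 \cup \phi|M^1$) is exactly what the paper leaves implicit; the ``subtlety'' you flag about relative orders of $a^0,b^1$ in $\phi|M^0$ is in fact immaterial, since the gap each arc traverses is determined solely by which slot contains its tail, not by positions within the slot.
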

\begin{proof}
The claim follows easily by the fact that $C$ is clean and $M \cap C$ contains two vertices with different orientation in $\MMM$.
\end{proof}

\begin{observation}
Suppose $C$ is a private clique which satisfies conditions~\ref{prop:clique-type-clean}--\ref{prop:clique-type-no-rigid-non-Helly-subclique}.
Then:
\begin{itemize}
\item If there are $M_1, M_2 \in \owners$ which are incomparable in $\pqmtree_{Q}$, then $C$ is always-non-Helly.
\item If there are $M \in \owners$ and $c_1,c_2 \in C \setminus M$ such that
$c_1 \parallel c_2$, then $C$ is always-non-Helly.
\end{itemize}
\end{observation}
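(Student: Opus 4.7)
The plan is to argue in both cases that, in any conformal model $\phi$ of $G$, the requirement that $C$ be Helly would force the Helly witness into two disjoint regions of $\phi$, which is absurd. For each statement I would assume for contradiction that $C$ is Helly in some conformal model $\phi$, extend $\phi$ by a clique letter $C$ representing the Helly witness (so that the extended word is a $\{C\}$-conformal model), and then apply Lemma~\ref{lemma:private-nodes} to pin this letter down.

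For the first statement, Lemma~\ref{lemma:private-nodes}.\eqref{item:private-nodes-model} applied to $M_1$ places the witness letter inside $\phi|M_1^0 \cup \phi|M_1^1$, and similarly inside $\phi|M_2^0 \cup \phi|M_2^1$ for $M_2$. I would split into two subcases. If $M_1, M_2$ lie in distinct CA-modules $S_1, S_2$, I would further apply Lemma~\ref{lemma:private-nodes}.\eqref{item:private-nodes-path} to confine the witness to $\phi|S_1^0 \cup \phi|S_1^1$ and simultaneously to $\phi|S_2^0 \cup \phi|S_2^1$, which are disjoint contiguous regions by property~\ref{prop:slots}. If instead $M_1, M_2$ lie in a common CA-module $S$ but are incomparable in $\pqmtree_S$, they are disjoint strong submodules of $(S,\sim)$, so $M_1^* \cap M_2^* = \emptyset$ and the contiguous subwords $\phi|M_i^j$ sit as pairwise disjoint blocks inside $\phi|S^0 \cup \phi|S^1$ (a direct consequence of the structure of admissible orderings along the modular decomposition tree). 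Either subcase yields the desired contradiction.

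For the second statement, cleanness of $C$ together with $c_1 c_2 \in E(G)$ and $c_1 \parallel c_2$ rules out disjointness and containment, so $c_1$ and $c_2$ must cover the circle. Claim~\ref{claim:private-nodes-and-vertices-outside} gives $c_1, c_2 \sim M$, which in the conformal model forces each chord $\phi(c_i)$ to cross every chord of $\phi(M)$; consequently, each of $\phi(c_1), \phi(c_2)$ has one endpoint in each of the two arcs $A, B$ separating the contiguous subwords $\phi|M^0$ and $\phi|M^1$. The non-crossing of $\phi(c_1)$ and $\phi(c_2)$ combined with the cover-the-circle condition pins the cyclic order of the four endpoints (up to swapping $c_1$ and $c_2$) to $c_1^0, c_2^1, c_2^0, c_1^1$, from which a short computation shows $\psi(c_1) \cap \psi(c_2)$ is the union of two sub-arcs, one strictly inside $A$ and one strictly inside $B$. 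Any Helly witness for $C$ must lie in $\psi(c_1) \cap \psi(c_2) \subseteq A \cup B$, while Lemma~\ref{lemma:private-nodes} forces it into $\phi|M^0 \cup \phi|M^1$, which is disjoint from $A \cup B$ — contradiction.

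The main obstacle is the geometric bookkeeping in the second statement: reliably identifying the cyclic arrangement of $c_1^0, c_1^1, c_2^0, c_2^1$ among the surrounding regions $A, B, \phi|M^0, \phi|M^1$, and then confirming from the cover-the-circle pattern that $\psi(c_1) \cap \psi(c_2)$ sits strictly outside the owner subwords. Once this picture is fixed, both statements collapse to the same mechanism driven by Lemma~\ref{lemma:private-nodes}: the Helly witness is simultaneously required to lie in two disjoint subregions of $\phi$.
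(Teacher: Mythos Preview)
Your proposal is correct and follows essentially the same approach as the paper: both statements are reduced via Lemma~\ref{lemma:private-nodes} to the impossibility of placing the clique letter in two disjoint regions, and for the second statement both you and the paper deduce that $c_1,c_2$ cover the circle from cleanness, invoke Claim~\ref{claim:private-nodes-and-vertices-outside} to get $c_1,c_2 \sim M$, and conclude that $\psi(c_1)\cap\psi(c_2)$ misses the slots $\phi|M^0,\phi|M^1$. Your write-up is somewhat more explicit in the geometric bookkeeping (the paper dispatches the first statement in one sentence and the disjointness in the second statement in one clause), but the underlying argument is the same.
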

\begin{proof}
The first statement follows from Lemma~\ref{lemma:private-nodes}.\eqref{item:private-nodes-model}.

For the second statement suppose for the sake of contradiction that there is a $C$-conformal model $\phi$ of G.
Due to Lemma~\ref{lemma:private-nodes}.\eqref{item:private-nodes-model} the clique letter $C$ occurs in either $\phi|M^0$ or $\phi|M^1$.
Observe that $c_1$ and $c_2$ cover the circle since $C$ is clean.
By Claim~\ref{claim:private-nodes-and-vertices-outside} we have $c_1 \sim M$ and $c_2 \sim M$.
Therefore, the intersection of the arcs $\phi(c_1)$ and $\phi(c_2)$ 
is disjoint with the slots $\phi|M^0$ and $\phi|M^1$,
and hence $\phi$ is not a $C$-conformal model of $G$.
\end{proof}

The above observation allows us to assume that:
\begin{description}
 \item [\namedlabel{prop:clique-type-private-set}{(C4)}] The set $\owners$ induces a path in $\pqmtree_Q$ starting at some CA-module $S \in \camodules(Q)$ and ending at some M-node $D$ in $\pqmtree_{S}$.
 \item [\namedlabel{prop:clique-type-vertices-of-C-from-outside-D}{(C5)}] For every $M \in \owners$ and every $c \in C \setminus M$ we have $c \sim M$ and for every two distinct $c_1,c_2 \in C \setminus M$ we have $c_1 \sim c_2$.
\end{description}
The M-nodes $S$ and $D$, determined by property \ref{prop:clique-type-private-set}, are called the \emph{lowest owner} and the \emph{deepest owner} of $C$ in $\pqmtree_{Q}$, respectively.

Now we describe the properties of the modules from the set $\owners \cup \{Q\}$.
For this purpose, for every $N \in \owners \cup \{Q\}$ we set 
$$\mathcal{I}(N) = \left\{ K: 
\begin{array}{c}
\text{$K$ is a child of $N$ in $\pqmtree_{Q}$ such that} \\
\text{$(C \cap K) \neq \emptyset$ and $K \notin \owners$}
\end{array}
\right\}.
$$
\begin{claim}
\label{claim:I_N-properties}
The following statements hold:
\begin{enumerate}
 \item \label{item:I_N-properties-clique} For every $c \in C$ there is $N \in \owners \cup \{Q\}$ such that $c \in K$ for some $K \in \mathcal{I}(N)$.
 \item \label{item:I_N-properties-consistent-orientation} For every $N \in \owners \cup \{Q\}$ and every $K \in \mathcal{I}(N)$ the edges from $(C \cap K)$ have the same orientation in~$\KKK$.
\end{enumerate}
\end{claim}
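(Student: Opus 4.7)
My plan is to prove the two parts of the claim separately, relying on a careful bookkeeping of where each $c \in C$ sits with respect to the chain of owners together with the definition of ``owner''. By~\ref{prop:clique-type-private-set} I can enumerate the owners as a chain $S = M_0 \supsetneq M_1 \supsetneq \cdots \supsetneq M_t = D$ inside $\pqmtree_S$, with $Q$ being the parent of $S$ and each $M_{i+1}$ being a child of $M_i$ in $\pqmtree_S$.

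For statement~\eqref{item:I_N-properties-clique}, given $c \in C$ I will locate the deepest $N \in \owners \cup \{Q\}$ with $c \in N$ and take $K$ to be the child of $N$ through which $c$ lies. Concretely: if $c \notin S$, I set $N = Q$ and let $K$ be the CA-module $S' \in \camodules(Q) \setminus \{S\}$ containing $c$; since the owner chain sits entirely in $\pqmtree_S$, no CA-module other than $S$ can lie in $\owners$, so $K \in \mathcal{I}(Q)$. If instead $c \in S$, I pick $i$ maximal with $c \in M_i$ and let $K$ be the child of $M_i$ in $\pqmtree_S$ containing $c$ (possibly the leaf $\{c\}$). Because $\owners$ induces a path in $\pqmtree_Q$, when $i < t$ the only child of $M_i$ that can belong to $\owners$ is $M_{i+1}$, and the maximality of $i$ rules out $K = M_{i+1}$; when $i = t$ no child of $D$ belongs to $\owners$ at all, by the deepest-owner property. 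In either case $K \notin \owners$ and $K \in \mathcal{I}(M_i)$.

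Statement~\eqref{item:I_N-properties-consistent-orientation} is then a direct unwinding of the definition of owner: $K \notin \owners$ means precisely that $K$ does not contain two vertices of $C$ with different orientations in $\KKK$, so either $|C \cap K| \leq 1$ (which is automatic whenever $K$ is a leaf $\{v\}$ of $\pqmtree_S$) or every vertex of $C \cap K$ shares one common orientation in $\KKK$.

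I do not expect a serious obstacle here: the only delicate point is the case analysis in part~\eqref{item:I_N-properties-clique}, where I must verify that the chosen $K$ is never itself forced into $\owners$. This follows entirely from the path structure of $\owners$ guaranteed by~\ref{prop:clique-type-private-set}, which ensures that $M_i$ has at most one owner-child and $D$ has none; all remaining work is a direct unpacking of the relevant definitions.
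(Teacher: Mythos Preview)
Your proof is correct and follows essentially the same approach as the paper, which simply states that the claim follows from Condition~\ref{prop:clique-type-private-set} and Claim~\ref{claim:private-nodes-and-vertices-outside}. Your version is just a more explicit unpacking of the path structure of $\owners$ from~\ref{prop:clique-type-private-set} together with the definition of owner; in particular, your observation that $K \notin \owners$ directly yields part~\eqref{item:I_N-properties-consistent-orientation} is exactly the intended argument.
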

\begin{proof}
The statements of the claim follow from Condition~\ref{prop:clique-type-private-set} and Claim~\ref{claim:private-nodes-and-vertices-outside}.
\end{proof}

Let $N \in \mathcal{P} \cup \{Q\}$ be such that $\mathcal{I}(N) \neq \emptyset$.
Claim~\ref{claim:I_N-properties}.\eqref{item:I_N-properties-consistent-orientation} allows us to partition $\mathcal{I}(N)$ into 
the sets $\mathcal{L}(N)$ and $\mathcal{R}(N)$, as follows:
$$
\begin{array}{ccc}
\smallskip
 \mathcal{L}(N) &=& \big{\{}K \in \mathcal{I}(N): \text{the elements from $(C \cap K)$ are $(K^0,K^1)$-oriented in $\KKK$}\big{\}}, \\
 \mathcal{R}(N) &=& \big{\{}K \in \mathcal{I}(N): \text{the elements from $(C \cap K)$ are $(K^1,K^0)$-oriented in $\KKK$}\big{\}}.
\end{array}
$$
\begin{claim}
\label{claim:L-R-sets}
The following statements hold:
\begin{enumerate}
\item \label{item:L-R-sets-L} For every $N \in \owners \cup \{Q\}$ and distinct $L_1,L_2 \in \mathcal{L}(N)$
we have $L_1 \sim L_2$,
\item \label{item:L-R-sets-R} For every $N \in \owners \cup \{Q\}$ and distinct $R_1,R_2 \in \mathcal{R}(N)$
we have $R_1 \sim R_2$,
\item \label{item:L-R-sets-L-R} For every $N \in \big{(}\owners \cup \{Q\}\big{)} \setminus \{D\}$, $L \in \mathcal{L}(N)$, and $R \in \mathcal{R}(N)$, we have 
$L \sim K$, $K \sim R$, and $L \sim R$, where $K$ is a child of $N$ from $\owners$.
\end{enumerate}
\end{claim}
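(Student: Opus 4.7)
The plan is to combine a geometric ``nested-slots'' observation with the cleanness of $C$ (property~\ref{prop:clique-type-clean}) to dispose of statements~(1) and~(2), and to deduce statement~(3) essentially directly from the strong overlap condition~\ref{prop:clique-type-vertices-of-C-from-outside-D}.

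I would begin by establishing the following preliminary fact: whenever $K_1, K_2$ are two distinct children of a node $N \in \owners \cup \{Q\}$ in $\pqmtree$ with $K_1 \parallel K_2$ in $G_{ov}$, in every conformal model $\phi$ of $G$ the four slots $K_1^0, K_1^1, K_2^0, K_2^1$ appear along the circle in a nested pattern of the form $K_1^0 \ldots K_2^0 \ldots K_2^1 \ldots K_1^1$, up to swapping the roles of $K_1$ and $K_2$. For $N$ an M-node this is immediate from equation~\eqref{eq:models_of_permutation_graphs_2}: the pair $(K_1, K_2)$ is unrelated by $\prec^{\tau}$ but oriented by the fixed $<_N$, which forces $K_1^0$ before $K_2^0$ in $\tau^0$ and $K_2^1$ before $K_1^1$ in $\tau^1$. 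For $N = Q$, an analogous arrangement follows from the description of $\Pi(Q)$ in Subsection~\ref{subsec:PQ-tree}.

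For statement~(1), assume for contradiction that $L_1, L_2 \in \mathcal{L}(N)$ are $\parallel$-related in $G_{ov}$, and pick $u \in C \cap L_1$, $v \in C \cap L_2$. Both $u$ and $v$ are $(L^0, L^1)$-oriented in their respective metachords, so combined with the nested-slots pattern the endpoints of $\phi(u)$ and $\phi(v)$ appear in the circular order $u^0, v^0, v^1, u^1$. This means $\phi(v)$ lies inside the left arc of $\phi(u)$, witnessing that $v$ is contained in $u$ in $G$, contradicting~\ref{prop:clique-type-clean}. Statement~(2) follows by the mirrored argument with $(L^1, L^0)$-orientations.

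For statement~(3) I would apply condition~\ref{prop:clique-type-vertices-of-C-from-outside-D} directly with $M = K$. Since $K$ is the owner-child of $N$ (which exists because $N \in (\owners \cup \{Q\}) \setminus \{D\}$), and $L, R$ are other children of $N$, we have $L \cap K = \emptyset = R \cap K$; hence every $u \in C \cap L$ and $v \in C \cap R$ belongs to $C \setminus K$. Then~\ref{prop:clique-type-vertices-of-C-from-outside-D} yields the vertex-level overlaps $u \sim K$, $v \sim K$, and $u \sim v$. Since $L, K, R$ are modules of $G_{ov}$, any single vertex-level overlap promotes to the module-level statements $L \sim K$, $K \sim R$, and $L \sim R$.

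The main obstacle is verifying the nested-slots observation in the $N = Q$ case, since the slots of the children of $Q$ are arranged via $\Pi(Q)$ rather than a permutation model of a single M-node; however, the same ordering formulas used in the parallel-$V$ description of $\Pi(Q)$ yield an analogous nested pattern for $\parallel$-siblings, reducing this case to the M-node analysis. Once the geometric observation is established, statements~(1)--(2) reduce to a one-line check that the forced circular order is the containment pattern, and statement~(3) is an almost immediate consequence of~\ref{prop:clique-type-vertices-of-C-from-outside-D}.
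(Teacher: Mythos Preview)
The paper states this claim without proof, treating it as an immediate consequence of~\ref{prop:clique-type-clean} and~\ref{prop:clique-type-vertices-of-C-from-outside-D}. Your argument for statement~(3) via~\ref{prop:clique-type-vertices-of-C-from-outside-D} with $M=K$ is exactly right, and your geometric argument for statements~(1)--(2) is correct whenever $N$ is an M-node: there the fixed orientation $<_N$ of $(N,{\parallel})$ forces the slots of two $\parallel$-children to appear in the containment-type order $L_1^0,L_2^0,L_2^1,L_1^1$, and equal orientation then yields containment of the clique vertices, contradicting~\ref{prop:clique-type-clean}.

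There is, however, a genuine gap in the $N=Q$ case of your nested-slots observation. When $L_1,L_2$ are distinct CA-modules in $\camodules(Q)$ with $L_1\parallel L_2$, the circular arrangement of the four slots is \emph{not} governed by a single $<_N$; it is fixed by $\Pi(Q)$ and can equally well be the cover-the-circle pattern $L_1^0,L_2^1,L_2^0,L_1^1$. With both $u\in C\cap L_1$ and $v\in C\cap L_2$ $(L^0,L^1)$-oriented, that pattern gives $u,v$ covering the circle rather than one containing the other, and cleanness~\ref{prop:clique-type-clean} does not exclude this. So ``reducing this case to the M-node analysis'' does not go through as written. The fix is immediate and is in fact the argument the paper has in mind: for $N=Q$ (indeed for every $N\neq D$) you already have an owner-child $K\in\owners$ (namely $K=S$ when $N=Q$), so any $u\in C\cap L_1$ and $v\in C\cap L_2$ lie in $C\setminus K$, and~\ref{prop:clique-type-vertices-of-C-from-outside-D} gives $u\sim v$ directly. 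Your geometric argument is thus only needed---and only valid as stated---for the single case $N=D$.
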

In the next lemmas we describe the properties of the modules from the set $\owners \cup \{Q\}$.
We start with the deepest owner $D$  of $C$.
\begin{lemma}
\label{lem:deepest-owner-properties}
Let $D$ be the deepest owner of $C$ in $\pqmtree_Q$.
Then:
\begin{enumerate}
\item \label{item:deepest-owner-properties-L-D-relation} We have $\mathcal{L}(D) \neq \emptyset$, $\mathcal{R}(D) \neq \emptyset$. Moreover,
\begin{enumerate}
 \item \label{item:deepest-owner-properties-L-D-relation-parallel} If $D$ is parallel, then $\mathcal{L}(D) \parallel \mathcal{R}(D)$ and $|\mathcal{L}(D)| = |\mathcal{R}(D)| = 1$.
 \item \label{item:deepest-owner-properties-L-D-relation-serial} If $D$ is serial, then $\mathcal{L}(D) \sim \mathcal{R}(D)$.
\item \label{item:deepest-owner-properties-L-D-relation-prime} If $D$ is prime, then either $\mathcal{L}(D) \parallel \mathcal{R}(D)$ or $L \sim R$ for some $L \in \mathcal{L}(D)$ and some $R \in \mathcal{R}(D)$.
\end{enumerate}
\item \label{item:deepest-owner-properties-model}
Let $\phi$ be a $C$-conformal model $\phi$ of $G$ and let $\phi_{|D} = (\pi^0,\pi^1)$ be an ordering of $N$ in~$\phi$:
\begin{enumerate}
 \item If $C$ is contained in $\phi|D^0$, then $\pi^0$ orders $\mathcal{L}^0(D)$ before $\mathcal{R}^0(D)$.
 \item If $C$ is contained in $\phi|D^1$, then $\pi^1$ orders $\mathcal{R}^1(D)$ before $\mathcal{L}^1(D)$.
\end{enumerate}
\item \label{item:deepest-owner-properties-prime}
Let $D$ be prime and let $L \in \mathcal{L}(D)$ and $R \in \mathcal{R}(D)$ be such that
$L \sim R$.
Then: 
\begin{enumerate}
\item \label{item:deepest-owner-properties-prime-bindind-in-0} There is a unique admissible ordering $\gamma = (\gamma^0, \gamma^1)$ in $\Pi(D)$ which orders
$\mathcal{L}^0(D)$ before $\mathcal{R}^0(D)$ in $\gamma^0$.
\item \label{item:deepest-owner-properties-prime-bindind-in-1} There is a unique admissible ordering $\delta = (\delta^0, \delta^1)$ in $\Pi(D)$
which orders $\mathcal{R}^1(D)$ before $\mathcal{L}^1(D)$ in $\delta^1$.
\end{enumerate}
Moreover, we have $\Pi(D)=\{\gamma,\delta\}$, which means that $\delta$ is the reflection of $\gamma$.
\end{enumerate}
\end{lemma}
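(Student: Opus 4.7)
My plan is to address the three parts in sequence, since later parts build on the setup developed earlier.

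For part (1), the non-emptiness of $\mathcal{L}(D)$ and $\mathcal{R}(D)$ follows from $D$ being an owner: there exist $a, b \in C \cap D$ of opposite orientations in $\DDD$, and since $D$ is the \emph{deepest} owner no child of $D$ lies in $\owners$, so both vertices sit in children from $\mathcal{I}(D)$; the consistency of orientations inside a single child (Claim~\ref{claim:I_N-properties}.\eqref{item:I_N-properties-consistent-orientation}) then places $a$'s child in $\mathcal{L}(D)$ and $b$'s child in $\mathcal{R}(D)$. The case analysis on the type of $D$ is routine. For parallel $D$, Claim~\ref{claim:L-R-sets}.\eqref{item:L-R-sets-L} makes $\mathcal{L}(D)$ a $\sim$-clique, which clashes with pairwise parallelism of the children unless $|\mathcal{L}(D)| = 1$, and symmetrically $|\mathcal{R}(D)| = 1$; the relation $\mathcal{L}(D) \parallel \mathcal{R}(D)$ is then automatic from $D$ being parallel. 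For serial $D$, every two children are $\sim$-adjacent, giving $\mathcal{L}(D) \sim \mathcal{R}(D)$. The prime case admits both alternatives, so no further restriction is forced.

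For part (2), I would argue geometrically. Suppose $\phi(C) \in \phi|D^0$; the other case is symmetric. The slot inclusions $K^0 \subseteq D^0$ and $K^1 \subseteq D^1$ follow directly from the definitions of slots via $K^* \subseteq D^* \cap S^*$. If $K \in \mathcal{L}(D)$ and $v \in C \cap K$, then $v^0 \in K^0 \subseteq D^0$ and $v^1 \in K^1 \subseteq D^1$; since property~\ref{prop:conformal-model-2} places $\phi(C)$ on the left of $\phi(v)$, tracing clockwise from $v^0$ we meet $\phi(C)$ before $v^1$, which forces $v^0$ to precede $\phi(C)$ in $\phi|D^0$. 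Combined with property~\ref{prop:slots} (contiguity of $\phi|K^0$), the whole block $\phi|K^0$ sits before $\phi(C)$ in $\phi|D^0$. A symmetric argument for $K \in \mathcal{R}(D)$ places $\phi|K^0$ after $\phi(C)$, yielding the claimed order in $\pi^0$.

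For part (3), Gallai's theorem (the last theorem cited in Section~\ref{sec:preliminaries}) gives $|\Pi(D)| = 2$ and $\Pi(D) = \{\gamma, \gamma^R\}$, so the hypothesis $L \sim R$ selects one of the two orderings: in exactly one of the transitive orientations of $(D, \sim_D)$ we have $L \prec R$, and this is the candidate $\gamma$. Uniqueness of $\gamma$ is immediate because reflecting an ordering swaps the positions of $L^0$ and $R^0$. For the existence part — promoting the single relation $L \prec R$ to the full statement ``$\mathcal{L}^0(D)$ before $\mathcal{R}^0(D)$ in $\gamma^0$'' — I would build an explicit conformal model realizing $\gamma$ at $D$ (choosing arbitrary orderings at the remaining PQM-nodes), insert a Helly marker for $C$ into $\phi|D^0$ between the $\mathcal{L}^0$ and $\mathcal{R}^0$ blocks emerging from $\gamma^0$, and then invoke part (2) on the resulting $\{C\}$-conformal model to read off the ordering. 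To certify that the inserted marker is valid we use properties~\ref{prop:clique-type-private-set} and~\ref{prop:clique-type-vertices-of-C-from-outside-D}, which guarantee $c \sim M$ for every $c \in C \setminus M$ and $M \in \owners$ and $c_1 \sim c_2$ for any two outside-vertices; these control the positions of the chords of $c \in C \setminus D$ relative to the candidate marker. The parallel statement about $\delta = \gamma^R$, with $\mathcal{R}^1(D)$ before $\mathcal{L}^1(D)$ in $\delta^1$, follows from the reflection correspondence $\delta^0 = (\gamma^1)^R$, $\delta^1 = (\gamma^0)^R$.

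The main obstacle I anticipate lies in the promotion step in (3): passing from the single oriented edge $L \prec R$ to the uniform statement about all of $\mathcal{L}^0(D)$ preceding $\mathcal{R}^0(D)$ requires handling both $\sim$-pairs (where a transitivity argument through Claim~\ref{claim:L-R-sets} does not close by itself) and $\parallel$-pairs (whose order is frozen by $<_D$ and cannot be adjusted via $\gamma$). Thus the key technical step is to argue that the data $(<_D, {\prec^\gamma})$, restricted to $\mathcal{L}(D) \times \mathcal{R}(D)$, is consistent with an actual geometric Helly-marker placement; effectively one has to couple the combinatorial ordering $\gamma$ with the construction of a conformal model of $G$ admitting the marker, rather than arguing purely within $\Pi(D)$.
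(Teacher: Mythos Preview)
Your treatment of parts~(1) and~(2) is correct and matches the paper's argument essentially line for line.

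For part~(3), however, your plan diverges from the paper and leaves the gap you yourself flag unresolved. Your proposed route --- build a conformal model $\phi$ realising $\gamma$ at $D$, insert a Helly marker for $C$, then invoke part~(2) --- is circular in two respects. First, you cannot place the marker ``between the $\mathcal{L}^0$ and $\mathcal{R}^0$ blocks emerging from $\gamma^0$'' without already knowing those blocks are separated, which is exactly the claim. Second, to certify the marker really is a Helly witness you would need the existence of a $C$-conformal model, which at this point in the paper has \emph{not} been established (it only follows later, in Theorem~\ref{thm:private_clique_classification}, and that theorem relies on the present lemma).

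The paper closes the gap purely combinatorially, without constructing any model, and the ingredient you are missing is property~\ref{prop:clique-type-no-rigid-non-Helly-subclique}. The argument splits the pairs $(L',R') \in \mathcal{L}(D) \times \mathcal{R}(D)$ by their $\sim/\parallel$ relation:
\begin{itemize}
\item If $L' \parallel R'$, the order is frozen by $<_D$ as you say, but it is frozen \emph{in the correct direction}: pick $l' \in C \cap L'$ and $r' \in C \cap R'$; they are adjacent (since $C$ is a clique), non-overlapping, of opposite orientations, and by~\ref{prop:clique-type-clean} neither contains the other --- so they cover the circle, which forces $l'^0$ before $r'^1$ in $\tau^0$, i.e.\ $L'^0$ before $R'^0$ in every $\pi^0$. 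This case is not an obstacle at all.
\item If $L' \sim R'$, suppose for contradiction that $R' \prec_D L'$ in the orientation $\prec_D$ determined by $L \prec_D R$. A short case analysis on whether $L \sim R'$ and $L' \sim R$ (using Claim~\ref{claim:L-R-sets} for $L \sim L'$ and $R \sim R'$) shows that either some triple among $\{L,L',R,R'\}$ sits in three distinct children of the prime node $D$ with a forced non-Helly ordering (Claim~\ref{claim:small_rigid_forbidden_structure}), or all four form a $4$-cycle in $G_{ov}$ (Claim~\ref{claim:large_rigid_forbidden_structure}). Either way, a rigid non-Helly subclique of $C$ arises, contradicting~\ref{prop:clique-type-no-rigid-non-Helly-subclique}.
\end{itemize}
So contrary to your final sentence, one \emph{can} argue purely within $\Pi(D)$; the leverage comes from~\ref{prop:clique-type-no-rigid-non-Helly-subclique}, which you did not invoke.
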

\begin{proof}
We have $\mathcal{L}(D) \neq \emptyset$ and $\mathcal{R}(D) \neq \emptyset$ by Claim~\ref{claim:I_N-properties}.\eqref{item:I_N-properties-consistent-orientation} and by the fact that we have two vertices from $C \cap D$ with different orientations in $\DDD$.
Statement~\eqref{item:deepest-owner-properties-L-D-relation-parallel}
follows by the fact that $C$ is clean.
Statements~\eqref{item:deepest-owner-properties-L-D-relation-serial}-\eqref{item:deepest-owner-properties-L-D-relation-prime}
are obvious.

Statement \eqref{item:deepest-owner-properties-model} follows from the fact that the letter $C$ is on the left side of the chords of~$\phi$ representing the vertices from the set $C \cap \big{(} \bigcup \mathcal{L}(D) \big{)}$ and from the set $C \cap \big{(}\bigcup \mathcal{R}(D) \big{)}$.

Now we proceed to the proof of statement \eqref{item:deepest-owner-properties-prime}.

Since $C$ is a clique, for every $(\pi^0,\pi^1)$ in $\Pi(D)$ and every $L \in \mathcal{L}(D)$ and $R \in \mathcal{R}(D)$ such that $L \parallel R$
we have that $L^0$ occurs before $R^0$ in $\pi^0$ and $R^1$ occurs before $L^1$ in $\pi^1$.

Suppose now that there exist $L\in \mathcal{L}(D)$ and $R \in \mathcal{R}(D)$ such that $L\sim R$.
Let ${\prec_D}$ be a transitive orientation of $(D,{\sim_D})$ satisfying $L \prec_D R$ 
(note that in the other transitive orientation $\prec^{R}_D$ of $(D,{\sim_D})$ we have $R \prec^R_D L$).
To prove statement~\eqref{item:deepest-owner-properties-prime} it is enough to show $L' \prec_{D} R'$ for every $L' \in \mathcal{L}(D)$ and
$R' \in \mathcal{R}(D)$ such that $L' \sim R'$.
Then we have $R' \prec^R_{D} L'$ for every $L' \in \mathcal{L}(D)$ and
$R' \in \mathcal{R}(D)$ such that $L' \sim R'$, and the admissible orderings $\gamma$ and $\delta$ from $\Pi(D)$
corresponding to ${\prec_D}$ and ${\prec^R_D}$ satisfy statements~\eqref{item:deepest-owner-properties-prime-bindind-in-0} and \eqref{item:deepest-owner-properties-prime-bindind-in-1}, respectively.
Suppose for the sake of contradiction that there exist $L'\in \mathcal{L}(D)$ and $R' \in \mathcal{R}(D)$ such that $L'\sim R'$ and $R' \prec_D L'$.
First, suppose $L\sim R'$.
If $L \prec_D R'$, then $L \prec_D R' \prec_D L'$ which contradicts that $C$ satisfies property~\ref{prop:clique-type-no-rigid-non-Helly-subclique}.
But if $R' \prec_D L$ then $R' \prec_D L \prec_D R$ which again contradicts that $C$ satisfies property~\ref{prop:clique-type-no-rigid-non-Helly-subclique}.
Therefore, we must have $L \parallel R'$.
In particular, we have $L <_{S} R'$.
Due to symmetrical arguments, we have $L' \parallel R$.
In particular, we have $L' <_S R$.
Recall that $L\sim L'$ and $R \sim R'$ because $C$ is clean.
However, this implies that $C$ contains a rigid non-Helly clique of size $4$ 
which cannot be the case since $C$ satisfies property~\ref{prop:clique-type-no-rigid-non-Helly-subclique}.
\end{proof}
Lemma~\ref{lem:deepest-owner-properties} allows us to introduce the following terminology for admissible orders of the deepest owner $D$ of $C$:
\begin{itemize}
 \item If $D$ is prime and there are $L \in \mathcal{L}(D)$ and $R \in \mathcal{R}(D)$ such that $L \sim R$, then we say admissible ordering $\gamma = (\gamma^{0}, \gamma^1)$ which orders $\mathcal{L}^0(D)$ before $\mathcal{R}^0(D)$ in $\gamma^0$ \emph{binds $C$ in the slot~$S^0$}, and admissible ordering $\delta = (\delta^{0}, \delta^1)$ of $\Pi(D)$ which orders $\mathcal{R}^1(D)$ before $\mathcal{L}^1(D)$ in $\delta^1$ \emph{binds $C$ in the slot~$S^1$}.
 See Figure~\ref{fig:deepest-owner} to the left for an illustration.
 \item If $D$ is serial, then any admissible ordering $\gamma = (\gamma^{0}, \gamma^1)$ which orders $\mathcal{L}^0(D)$ before $\mathcal{R}^0(D)$ in $\gamma^0$ \emph{binds $C$ in the slot~$S^0$}, any admissible ordering $\delta = (\delta^{0}, \delta^1)$ of $\Pi(D)$ which orders $\mathcal{R}^1(D)$ before $\mathcal{L}^1(D)$ in $\delta^1$ \emph{binds $C$ in the slot~$S^1$}, and any other ordering $\sigma=(\sigma^0,\sigma^1)$ \emph{makes $C$ non-Helly}. 
 Note that $D$ admits an ordering which makes $C$ non-Helly iff $|\mathcal{I}(D)| \geq 3$.
 See Figure~\ref{fig:deepest-owner} (in the middle) for an illustration.
\end{itemize}
Note that when $D$ is prime and $\mathcal{L}(D) \parallel \mathcal{R}(D)$ or when $D$ is parallel, 
then the choice of the admissible ordering of $D$ has no impact on 
the slot in which $C$ occurs. 
See Figure~\ref{fig:deepest-owner} to the right for an illustration.
Also, if $D$ is prime and $L \sim D$ for some $L \in \mathcal{L}(D)$ and $R \in \mathcal{R}(D)$ or when $D$ is serial and $|\mathcal{I}(D)|=2$, then any ordering of $D$ binds $C$ is $S^j$ for some $j \in \{0,1\}$.

\begin{figure}[ht]
\begin{tikzpicture}[xscale=0.8,yscale=0.75,>=latex]
\coordinate (u1) at (0,2) {};
\coordinate (u2) at (1,2) {};
\coordinate (u3) at (2,2) {};
\coordinate (u4) at (3,2) {};

\coordinate (uC) at (1.5,2) {};
\coordinate (luC) at (1.5,2.35) {};

\coordinate (b1) at (0,0) {};
\coordinate (b2) at (1,0) {};
\coordinate (b3) at (2,0) {};
\coordinate (b4) at (3,0) {};

\coordinate (lu1) at (0,2.35) {};
\coordinate (lu2) at (1,2.35) {};
\coordinate (lu3) at (2,2.35) {};
\coordinate (lu4) at (3,2.35) {};

\coordinate (lb1) at (0,-0.35) {};
\coordinate (lb2) at (1,-0.35) {};
\coordinate (lb3) at (2,-0.35) {};
\coordinate (lb4) at (3,-0.35) {};

\coordinate (tau0) at (3.8,2);
\coordinate (tau1) at (-0.8,0);

\tikzstyle{every node}=[inner sep=2pt,fill=white]

\draw[fill=gray!30, draw=none] (-0.2,0) -- (0.2,0) -- (1.2,2) -- (0.8,2) -- cycle;
\draw[fill=gray!30, draw=none] (1.8,0) -- (2.2,0) -- (0.2,2) -- (-0.2,2) -- cycle;
\draw[fill=gray!30, draw=none] (0.8,0) -- (1.2,0) -- (3.2,2) -- (2.8,2) -- cycle;
\draw[fill=gray!30, draw=none] (2.8,0) -- (3.2,0) -- (2.2,2) -- (1.8,2) -- cycle;

\draw[thick,red,->] (u1)--(b3);
\draw[thick,red,->] (u2)--(b1);
\draw[->] (u3)--(b4);
\draw[thick,blue,<-] (u4)--(b2);

\draw[->] (-0.5,2) -- (3.5,2);
\draw[<-] (-0.5,0) -- (3.5,0);

\tikzstyle{every node}=[circle,minimum size=5pt,inner sep=0pt,draw,fill]
\node at (uC) {};

\tikzstyle{every node}=[inner sep=1pt]
\begin{footnotesize}
\node at (tau0) {$\gamma^0$};
\node at (tau1) {$\gamma^1$};

\node at (lu1) {$L^0_1$};
\node at (lu2) {$L^0_2$};
\node at (luC) {$C$};
\node at (lu4) {$R^0_1$};

\node at (lb1) {$L^1_2$};
\node at (lb2) {$R^1_1$};
\node at (lb3) {$L^1_1$};

\end{footnotesize}
\end{tikzpicture}
\hspace{0.4cm}
\begin{tikzpicture}[xscale=0.8,yscale=0.75,>=latex]
\coordinate (u1) at (0,2) {};
\coordinate (u2) at (1,2) {};
\coordinate (u3) at (2,2) {};
\coordinate (u4) at (3,2) {};

\coordinate (uC) at (1.5,2) {};
\coordinate (luC) at (1.5,2.35) {};

\coordinate (b1) at (0,0) {};
\coordinate (b2) at (1,0) {};
\coordinate (b3) at (2,0) {};
\coordinate (b4) at (3,0) {};

\coordinate (bC) at (2.5,0) {};
\coordinate (lbC) at (2.5,-0.35) {};

\coordinate (lu1) at (0,2.35) {};
\coordinate (lu2) at (1,2.35) {};
\coordinate (lu3) at (2,2.35) {};
\coordinate (lu4) at (3,2.35) {};

\coordinate (lb1) at (0,-0.35) {};
\coordinate (lb2) at (1,-0.35) {};
\coordinate (lb3) at (2,-0.35) {};
\coordinate (lb4) at (3,-0.35) {};

\coordinate (tau0) at (3.8,2);
\coordinate (tau1) at (-0.8,0);

\tikzstyle{every node}=[inner sep=2pt,fill=white]

\draw[fill=gray!30, draw=none] (-0.2,0) -- (0.2,0) -- (3.2,2) -- (2.8,2) -- cycle;
\draw[fill=gray!30, draw=none] (0.8,0) -- (1.2,0) -- (2.2,2) -- (1.8,2) -- cycle;
\draw[fill=gray!30, draw=none] (1.8,0) -- (2.2,0) -- (1.2,2) -- (0.8,2) -- cycle;
\draw[fill=gray!30, draw=none] (2.8,0) -- (3.2,0) -- (0.2,2) -- (-0.2,2) -- cycle;

\draw[thick,red,->] (u1)--(b4);
\draw[thick,red,->] (u2)--(b3);
\draw[thick,blue,<-] (u3)--(b2);
\draw[->] (u4)--(b1);

\draw[->] (-0.5,2) -- (3.5,2);
\draw[<-] (-0.5,0) -- (3.5,0);

\tikzstyle{every node}=[circle,minimum size=5pt,inner sep=0pt,draw,fill]
\node at (uC) {};

\tikzstyle{every node}=[inner sep=1pt]
\begin{footnotesize}
\node at (tau0) {$\gamma^0$};
\node at (tau1) {$\gamma^1$};

\node at (lu1) {$L^0_1$};
\node at (lu2) {$L^0_2$};
\node at (lu3) {$R^0_1$};

\node at (lb2) {$R^1_1$};
\node at (lb3) {$L^1_2$};
\node at (lb4) {$L^1_1$};

\node at (luC) {$C$};
\node at (lbC) {$C$};

\end{footnotesize}
\end{tikzpicture}
\hspace{0.4cm}
\begin{tikzpicture}[xscale=0.8,yscale=0.75,>=latex]
\coordinate (u1) at (0,2) {};
\coordinate (u2) at (1,2) {};
\coordinate (u3) at (2,2) {};
\coordinate (u4) at (3,2) {};

\coordinate (uC) at (1.5,2) {};
\coordinate (luC) at (1.5,2.35) {};

\coordinate (b1) at (0,0) {};
\coordinate (b2) at (1,0) {};
\coordinate (b3) at (2,0) {};
\coordinate (b4) at (3,0) {};

\coordinate (bC) at (2.5,0) {};
\coordinate (lbC) at (2.5,-0.35) {};

\coordinate (lu1) at (0,2.35) {};
\coordinate (lu2) at (1,2.35) {};
\coordinate (lu3) at (2,2.35) {};
\coordinate (lu4) at (3,2.35) {};

\coordinate (lb1) at (0,-0.35) {};
\coordinate (lb2) at (1,-0.35) {};
\coordinate (lb3) at (2,-0.35) {};
\coordinate (lb4) at (3,-0.35) {};

\coordinate (tau0) at (3.8,2);
\coordinate (tau1) at (-0.8,0);

\tikzstyle{every node}=[inner sep=2pt,fill=white]

\draw[fill=gray!30, draw=none] (-0.2,0) -- (0.2,0) -- (1.2,2) -- (0.8,2) -- cycle;
\draw[fill=gray!30, draw=none] (1.8,0) -- (2.2,0) -- (0.2,2) -- (-0.2,2) -- cycle;
\draw[fill=gray!30, draw=none] (0.8,0) -- (1.2,0) -- (3.2,2) -- (2.8,2) -- cycle;
\draw[fill=gray!30, draw=none] (2.8,0) -- (3.2,0) -- (2.2,2) -- (1.8,2) -- cycle;

\draw[thick,red,->] (u1)--(b3);
\draw[thick,red,->] (u2)--(b1);
\draw[thick,<-,blue] (u3)--(b4);
\draw[->] (u4)--(b2);

\draw[<-] (-0.5,2) -- (3.5,2);
\draw[->] (-0.5,0) -- (3.5,0);

\tikzstyle{every node}=[circle,minimum size=5pt,inner sep=0pt,draw,fill]
\node at (uC) {};
\node at (bC) {};

\tikzstyle{every node}=[inner sep=1pt]
\begin{footnotesize}
\node at (tau0) {$\gamma^0$};
\node at (tau1) {$\gamma^1$};

\node at (lu1) {$L^0_1$};
\node at (lu2) {$L^0_2$};
\node at (lu3) {$R^0_1$};

\node at (lb1) {$L^1_2$};
\node at (lb3) {$L^1_1$};
\node at (lb4) {$R^1_1$};

\node at (luC) {$C$};
\node at (lbC) {$C$};

\end{footnotesize}
\end{tikzpicture}

\vspace{0.4cm}

\begin{tikzpicture}[xscale=0.8,yscale=-0.75,>=latex]
\coordinate (u1) at (0,2) {};
\coordinate (u2) at (1,2) {};
\coordinate (u3) at (2,2) {};
\coordinate (u4) at (3,2) {};

\coordinate (uC) at (1.5,2) {};
\coordinate (luC) at (1.5,2.35) {};

\coordinate (b1) at (0,0) {};
\coordinate (b2) at (1,0) {};
\coordinate (b3) at (2,0) {};
\coordinate (b4) at (3,0) {};

\coordinate (lu1) at (0,2.35) {};
\coordinate (lu2) at (1,2.35) {};
\coordinate (lu3) at (2,2.35) {};
\coordinate (lu4) at (3,2.35) {};

\coordinate (lb1) at (0,-0.35) {};
\coordinate (lb2) at (1,-0.35) {};
\coordinate (lb3) at (2,-0.35) {};
\coordinate (lb4) at (3,-0.35) {};

\coordinate (tau0) at (3.8,0);
\coordinate (tau1) at (-0.8,2);

\tikzstyle{every node}=[inner sep=2pt,fill=white]

\draw[fill=gray!30, draw=none] (-0.2,0) -- (0.2,0) -- (1.2,2) -- (0.8,2) -- cycle;
\draw[fill=gray!30, draw=none] (1.8,0) -- (2.2,0) -- (0.2,2) -- (-0.2,2) -- cycle;
\draw[fill=gray!30, draw=none] (0.8,0) -- (1.2,0) -- (3.2,2) -- (2.8,2) -- cycle;
\draw[fill=gray!30, draw=none] (2.8,0) -- (3.2,0) -- (2.2,2) -- (1.8,2) -- cycle;

\draw[thick,red,<-] (u1)--(b3);
\draw[thick,red,<-] (u2)--(b1);
\draw[<-] (u3)--(b4);
\draw[thick,blue,->] (u4)--(b2);

\draw[<-] (-0.5,2) -- (3.5,2);
\draw[->] (-0.5,0) -- (3.5,0);

\tikzstyle{every node}=[circle,minimum size=5pt,inner sep=0pt,draw,fill]
\node at (uC) {};

\tikzstyle{every node}=[inner sep=1pt]
\begin{footnotesize}
\node at (tau0) {$\delta^0$};
\node at (tau1) {$\delta^1$};

\node at (lu1) {$L^1_1$};
\node at (lu2) {$L^1_2$};
\node at (luC) {$C$};
\node at (lu4) {$R^1_1$};

\node at (lb1) {$L^0_2$};
\node at (lb2) {$R^0_1$};
\node at (lb3) {$L^0_1$};

\end{footnotesize}
\end{tikzpicture}
\hspace{0.4cm}
\begin{tikzpicture}[xscale=0.8,yscale=0.75,>=latex]
\coordinate (u1) at (0,2) {};
\coordinate (u2) at (1,2) {};
\coordinate (u3) at (2,2) {};
\coordinate (u4) at (3,2) {};

\coordinate (uC) at (1.5,2) {};
\coordinate (luC) at (1.5,2.35) {};

\coordinate (b1) at (0,0) {};
\coordinate (b2) at (1,0) {};
\coordinate (b3) at (2,0) {};
\coordinate (b4) at (3,0) {};

\coordinate (bC) at (2.5,0) {};
\coordinate (lbC) at (2.5,-0.35) {};

\coordinate (lu1) at (0,2.35) {};
\coordinate (lu2) at (1,2.35) {};
\coordinate (lu3) at (2,2.35) {};
\coordinate (lu4) at (3,2.35) {};

\coordinate (lb1) at (0,-0.35) {};
\coordinate (lb2) at (1,-0.35) {};
\coordinate (lb3) at (2,-0.35) {};
\coordinate (lb4) at (3,-0.35) {};

\coordinate (tau0) at (3.8,2);
\coordinate (tau1) at (-0.8,0);

\tikzstyle{every node}=[inner sep=2pt,fill=white]

\draw[fill=gray!30, draw=none] (-0.2,0) -- (0.2,0) -- (3.2,2) -- (2.8,2) -- cycle;
\draw[fill=gray!30, draw=none] (0.8,0) -- (1.2,0) -- (2.2,2) -- (1.8,2) -- cycle;
\draw[fill=gray!30, draw=none] (1.8,0) -- (2.2,0) -- (1.2,2) -- (0.8,2) -- cycle;
\draw[fill=gray!30, draw=none] (2.8,0) -- (3.2,0) -- (0.2,2) -- (-0.2,2) -- cycle;

\draw[thick,red,->] (u1)--(b4);
\draw[thick,blue,<-] (u2)--(b3);
\draw[->] (u3)--(b2);
\draw[thick,red,->] (u4)--(b1);

\draw[->] (-0.5,2) -- (3.5,2);
\draw[<-] (-0.5,0) -- (3.5,0);

\tikzstyle{every node}=[circle,minimum size=5pt,inner sep=0pt,draw,fill]

\tikzstyle{every node}=[inner sep=1pt]
\begin{footnotesize}
\node at (tau0) {$\sigma^0$};
\node at (tau1) {$\sigma^1$};

\node at (lu1) {$L^0_1$};
\node at (lu2) {$R^0_1$};
\node at (lu4) {$L^0_2$};

\node at (lb1) {$L^1_2$};
\node at (lb3) {$R^1_1$};
\node at (lb4) {$L^1_1$};


\end{footnotesize}
\end{tikzpicture}
\hspace{0.4cm}
\begin{tikzpicture}[xscale=0.8,yscale=-0.75,>=latex]
\coordinate (u1) at (0,2) {};
\coordinate (u2) at (1,2) {};
\coordinate (u3) at (2,2) {};
\coordinate (u4) at (3,2) {};

\coordinate (uC) at (1.5,2) {};
\coordinate (luC) at (1.5,2.35) {};

\coordinate (b1) at (0,0) {};
\coordinate (b2) at (1,0) {};
\coordinate (b3) at (2,0) {};
\coordinate (b4) at (3,0) {};

\coordinate (bC) at (2.5,0) {};
\coordinate (lbC) at (2.5,-0.35) {};

\coordinate (lu1) at (0,2.35) {};
\coordinate (lu2) at (1,2.35) {};
\coordinate (lu3) at (2,2.35) {};
\coordinate (lu4) at (3,2.35) {};

\coordinate (lb1) at (0,-0.35) {};
\coordinate (lb2) at (1,-0.35) {};
\coordinate (lb3) at (2,-0.35) {};
\coordinate (lb4) at (3,-0.35) {};

\coordinate (tau0) at (3.8,0);
\coordinate (tau1) at (-0.8,2);

\tikzstyle{every node}=[inner sep=2pt,fill=white]

\draw[fill=gray!30, draw=none] (-0.2,0) -- (0.2,0) -- (1.2,2) -- (0.8,2) -- cycle;
\draw[fill=gray!30, draw=none] (1.8,0) -- (2.2,0) -- (0.2,2) -- (-0.2,2) -- cycle;
\draw[fill=gray!30, draw=none] (0.8,0) -- (1.2,0) -- (3.2,2) -- (2.8,2) -- cycle;
\draw[fill=gray!30, draw=none] (2.8,0) -- (3.2,0) -- (2.2,2) -- (1.8,2) -- cycle;

\draw[thick,red,<-] (u1)--(b3);
\draw[thick,red,<-] (u2)--(b1);
\draw[thick,->,blue] (u3)--(b4);
\draw[<-] (u4)--(b2);

\draw[<-] (-0.5,2) -- (3.5,2);
\draw[->] (-0.5,0) -- (3.5,0);

\tikzstyle{every node}=[circle,minimum size=5pt,inner sep=0pt,draw,fill]
\node at (uC) {};
\node at (bC) {};

\tikzstyle{every node}=[inner sep=1pt]
\begin{footnotesize}
\node at (tau0) {$\delta^0$};
\node at (tau1) {$\delta^1$};

\node at (lu1) {$L^1_1$};
\node at (lu2) {$L^1_2$};
\node at (lu3) {$R^1_1$};

\node at (lb1) {$L^0_2$};
\node at (lb3) {$L^0_1$};
\node at (lb4) {$R^0_1$};

\node at (luC) {$C$};
\node at (lbC) {$C$};

\end{footnotesize}
\end{tikzpicture}
\caption{\label{fig:deepest-owner} 
To the left: prime deepest owner $D$ with $L_1 \sim R_1$: ordering $\gamma=(\gamma^0,\gamma^1)$ binds $C$ in~$S^0$ and ordering $\delta=(\delta^0,\delta^1)$ binds $C$ in~$S^1$. 
In the middle: ordering $\gamma = (\gamma^0,\gamma^1)$ of serial deepest owner $D$ binds $C$ in $S^0$, ordering $\sigma = (\sigma^0,\sigma^1)$ makes $C$ non-Helly.
To the right: prime deepest owner $D$ satisfies $\mathcal{L}(D) \parallel \mathcal{R}(D)$
and hence $C$ is not bound by $D$.}
\end{figure}
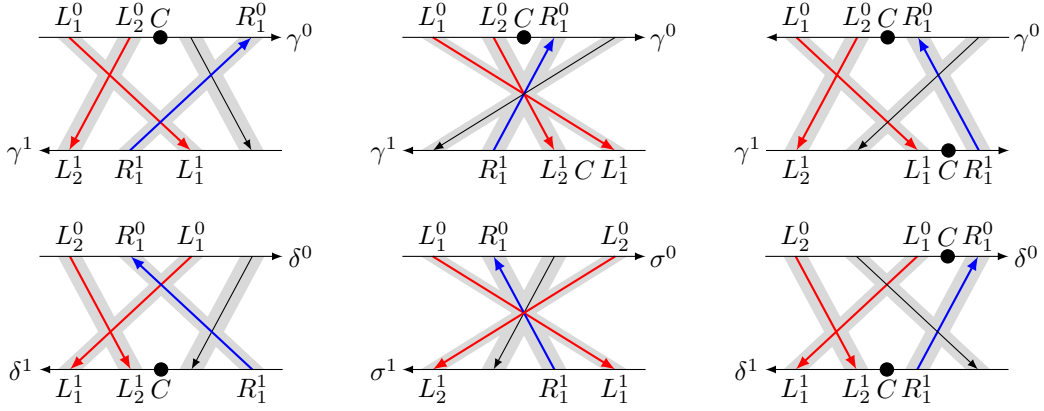

\begin{lemma}
\label{lem:M-node-owner-properties}
Let $M \in \owners \setminus \{D\}$ be such that $\mathcal{I}(M) \neq \emptyset$ and
let $K$ be the child of $M$ in the set $\owners$. 
Then $M$ is either prime or serial and:
\begin{enumerate}
\item \label{item:M-node-owner-properties-L-R-sets} We have $\mathcal{L}(M) \sim K$, $\mathcal{R}(M) \sim K$, and $\mathcal{L}(M) \sim \mathcal{R}(M)$.
\item \label{item:M-node-owner-properties-model}
Let $\phi$ be a $C$-conformal model $\phi$ of $G$ and let $\phi_{|M} = (\pi^0,\pi^1)$ be an ordering of $M$ in~$\phi$:
\begin{enumerate}
 \item If $C$ is contained in $\phi|D^0$ then $\pi^0$ orders $\mathcal{L}^0(M)$ before $K^0$ and $K^0$ before $\mathcal{R}^0(M)$,
 \item If $C$ is contained in $\phi|D^1$ then $\pi^1$ orders $\mathcal{R}^1(M)$ before $K^1$ and $K^1$ before $\mathcal{L}^1(M)$.
\end{enumerate}
\item \label{item:M-node-owner-properties-prime} Suppose $M$ is prime. 
Then: 
\begin{enumerate}
\item \label{item:M-node-owner-properties-prime-bindind-in-0} There is a unique admissible ordering $\gamma = (\gamma^0, \gamma^1)$ in $\Pi(M)$ which orders
$\mathcal{L}^0(M)$ before $K^0$ and $K^0$ before $\mathcal{R}^0(M)$ in $\gamma^0$.
\item \label{item:M-node-owner-properties-prime-bindind-in-1} There is a unique admissible ordering $\delta = (\delta^0, \delta^1)$ in $\Pi(M)$
which orders $\mathcal{R}^1(M)$ before $K^1$ and $K^1$ before $\mathcal{L}^1(M)$ in $\delta^1$.
\end{enumerate}
Moreover, we have $\Pi(M)=\{\gamma,\delta\}$, which means that $\delta$ is the reflection of $\gamma$.
\end{enumerate}
\end{lemma}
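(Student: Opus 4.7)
The plan is to parallel the proof of Lemma~\ref{lem:deepest-owner-properties}, with the owner child $K$ now serving as an intermediate module through which the clique letter $C$ must pass on its way from $M$ down to $D$. For the first two assertions, I first note that $M$ cannot be parallel: since $\mathcal{I}(M)\neq\emptyset$, any $N\in\mathcal{I}(M)$ satisfies $N\sim K$ by Claim~\ref{claim:L-R-sets}.\eqref{item:L-R-sets-L-R}, so $M$ is prime or serial. Statement~(1) is then a direct reading of Claim~\ref{claim:L-R-sets}: parts~(1) and~(2) handle pairs inside $\mathcal{L}(M)$ and inside $\mathcal{R}(M)$, while part~(3) supplies $\mathcal{L}(M)\sim K$, $K\sim\mathcal{R}(M)$, and $\mathcal{L}(M)\sim\mathcal{R}(M)$.

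Statement~(2) is a geometric computation exploiting the inheritance $D^0=D^*\cap S^0\subseteq K^0\subseteq M^0$, which automatically places $C\in\phi|D^0$ inside the $K^0$-block of $\phi|M^0$. The convention that $\psi(v)$ lies on the left of the chord $\phi(v)=v^0\to v^1$, together with the clockwise reading of $\phi$, identifies the left arc of $\phi(v)$ with the clockwise arc from $v^0$ to $v^1$. For $v\in L\cap C$ with $L\in\mathcal{L}(M)$, the vertex $v$ is $(M^0,M^1)$-oriented, so $v^0\in L^0\subseteq M^0$; the clique letter $C\in\phi|M^0$ lies on the left of $\phi(v)$ precisely when $C$ appears clockwise after $v^0$ within $\phi|M^0$. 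A mirror argument applied to $v\in R\cap C$ with $R\in\mathcal{R}(M)$ shows that $C$ must appear before $v^1\in R^0\subseteq M^0$. Combined with $C\in\phi|K^0$, this pins every $L^0$-block before $K^0$ and every $R^0$-block after $K^0$ in $\pi^0$; the $C\in\phi|D^1$ case is dual.

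The core of the argument is statement~(3). Primeness of $M$ together with Gallai's theorem gives $\Pi(M)=\{\gamma,\delta\}$ with $\delta=\gamma^R$, corresponding to the two transitive orientations of $(M,{\sim_M})$. By~(1) the set $\mathcal{L}(M)\cup\{K\}\cup\mathcal{R}(M)$ is pairwise $\sim_M$-related, so each transitive orientation restricts to a total order on it; the plan is to show that this total order must be either $\mathcal{L}(M)\prec_M K\prec_M\mathcal{R}(M)$ or its reverse, which immediately identifies $\gamma$ and $\delta$ with the orderings described in (3)(a) and (3)(b). To eliminate every other pattern I exploit that $K\in\owners$ yields vertices $a,b\in C\cap K$ with opposite orientations in $\KKK$ (say $a^0\in K^0$ and $b^0\in K^1$); for each bad configuration I compute the cyclic arrangement induced on a triple of representatives -- $\{c_{L_1},c_{L_2},b\}$ for $L_1\prec_M K\prec_M L_2$, $\{c_{R_1},c_{R_2},a\}$ for $R_1\prec_M K\prec_M R_2$, and $\{c_L,c_R,a\}$ or $\{c_L,c_R,b\}$ (the choice depending on whether $L\prec_M R$ or $R\prec_M L$) for the configurations in which $\mathcal{L}(M)$ and $\mathcal{R}(M)$ land on the same side of $K$. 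Each such arrangement matches a cyclic rotation of the non-Helly structure $v_0^0 v_2^1 v_1^0 v_0^1 v_2^0 v_1^1$, and since the three representatives lie in three distinct children of the prime module $M$, Claim~\ref{claim:small_rigid_forbidden_structure} promotes this to a rigid non-Helly subclique of $C$, contradicting~\ref{prop:clique-type-no-rigid-non-Helly-subclique}. The main obstacle is this case analysis: the correct choice between $a$ and $b$ is dictated by which endpoint in $K$ can be made to play the role of $v_2$ in the non-Helly pattern, and the matching must be verified by hand for every sub-case.
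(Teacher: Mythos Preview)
Your proposal is correct and follows essentially the same approach as the paper's own proof; the paper argues each of the three statements in a single line (citing \ref{prop:clique-type-vertices-of-C-from-outside-D} for (1), the left-side property of the clique letter for (2), and Claim~\ref{claim:I_N-properties}.\eqref{item:I_N-properties-consistent-orientation} together with \ref{prop:clique-type-no-rigid-non-Helly-subclique} for (3)), while you spell out the rigid-non-Helly case analysis for (3) explicitly. Your exhaustiveness of the three bad patterns ($L_1\prec K\prec L_2$, $R_1\prec K\prec R_2$, and an $L$ and an $R$ on the same side of $K$) is the content the paper leaves implicit.
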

\begin{proof}
Since $C$ is clean, $\mathcal{I}(M) \neq \emptyset$, and $K \cap C$ contains two vertices with different orientation in $\KKK$, $M$ needs to be either 
prime or serial.

Statement~\eqref{item:M-node-owner-properties-L-R-sets} follows from the fact that $D \subseteq K$ and $C$ satisfies condition~\ref{prop:clique-type-vertices-of-C-from-outside-D}.

Statement~\eqref{item:M-node-owner-properties-model} follows from the fact that $D \subseteq K$ and the letter $C$ is on the left side of the chords of $\phi$ representing the vertices from the set $C \cap \big{(}\bigcup \mathcal{L}(M)\big{)}$ and from the set $C \cap \big{(}\bigcup \mathcal{R}(M)\big{)}$.

Statement~\eqref{item:M-node-owner-properties-prime} follows from Claim~\ref{claim:I_N-properties}.\eqref{item:I_N-properties-consistent-orientation} and the fact that $C$ contains no rigid non-Helly subclique.
\end{proof}
Similarly, the above lemma allows us to introduce the following terminology for admissible orderings of $\Pi(M)$ for M-nodes $M \in \owners \setminus \{D\}$ such that $\mathcal{I}(M) \neq \emptyset$:
\begin{itemize}
 \item If $M$ is prime, then admissible ordering $\gamma = (\gamma^{0}, \gamma^1)$ which orders $\mathcal{L}^0(M)$ before $K^0$ and $K^0$ before $\mathcal{R}^0(M)$ in $\gamma^0$ \emph{binds $C$ in the slot~$S^0$}, and admissible ordering $\delta = (\delta^{0}, \delta^1)$ of $\Pi(M)$
 which orders $\mathcal{R}^1(M)$ before $K^1$ and $K^1$ before $\mathcal{L}^1(M)$ in $\delta^1$ \emph{binds $C$ in the slot~$S^1$}.
 See Figure~\ref{fig:M-node-owner} (to the left) for an illustration.
 \item If $M$ is serial, then any admissible ordering $\gamma = (\gamma^{0}, \gamma^1)$ which orders $\mathcal{L}^0(M)$ before $K^0$ and $K^0$ before $\mathcal{R}^0(M)$ in $\gamma^0$ \emph{binds $C$ in the slot~$S^0$}, any admissible ordering $\delta = (\delta^{0}, \delta^1)$ of $\Pi(M)$
 which orders $\mathcal{R}^1(M)$ before $K^1$ and $K^1$ before $\mathcal{L}^1(M)$ in $\delta^1$ \emph{binds $C$ in the slot~$S^1$}, and any other ordering $\sigma=(\sigma^0,\sigma^1)$ \emph{makes $C$ non-Helly}. 
 Note that $M$ admits an ordering which makes $C$ non-Helly iff $|\mathcal{I}(M)| \geq 2$.
 See Figure~\ref{fig:M-node-owner} (to the right) for an illustration.
\end{itemize}
Note that when $M$ is prime or when $M$ is serial and $|\mathcal{I}(M)| = 1$, 
then any ordering of $M$ binds $C$ in $S^j$ for some $j \in \{0,1\}$.

\begin{figure}[ht]
\begin{tikzpicture}[xscale=0.7,yscale=0.75,>=latex]
\coordinate (u1) at (0,2) {};
\coordinate (u2) at (1,2) {};
\coordinate (u3) at (2,2) {};
\coordinate (u4) at (3,2) {};
\coordinate (u5) at (4,2) {};

\coordinate (uC) at (2,2) {};
\coordinate (luC) at (2,2.35) {};
\coordinate (bC) at (2,0) {};
\coordinate (buC) at (2,-0.35) {};

\coordinate (b1) at (0,0) {};
\coordinate (b2) at (1,0) {};
\coordinate (b3) at (2,0) {};
\coordinate (b4) at (3,0) {};
\coordinate (b5) at (4,0) {};

\coordinate (lu1) at (0,2.35) {};
\coordinate (lu2) at (1,2.35) {};
\coordinate (lu3) at (2,2.35) {};
\coordinate (lu4) at (3,2.35) {};
\coordinate (lu5) at (4,2.35) {};

\coordinate (lb1) at (0,-0.35) {};
\coordinate (lb2) at (1,-0.35) {};
\coordinate (lb3) at (2,-0.35) {};
\coordinate (lb4) at (3,-0.35) {};
\coordinate (lb5) at (4,-0.35) {};

\coordinate (tau0) at (4.8,2);
\coordinate (tau1) at (-0.8,0);

\tikzstyle{every node}=[inner sep=2pt,fill=white]

\draw[fill=gray!30, draw=none] (-0.2,0) -- (0.2,0) -- (1.2,2) -- (0.8,2) -- cycle;
\draw[fill=gray!30, draw=none] (0.8,0) -- (1.2,0) -- (4.2,2) -- (3.8,2) -- cycle;
\draw[fill=gray!30, draw=none] (1.8,0) -- (2.2,0) -- (2.2,2) -- (1.8,2) -- cycle;
\draw[fill=gray!30, draw=none] (2.8,0) -- (3.2,0) -- (0.2,2) -- (-0.2,2) -- cycle;
\draw[fill=gray!30, draw=none] (3.8,0) -- (4.2,0) -- (3.2,2) -- (2.8,2) -- cycle;

\draw[thick,red,->] (u1)--(b4);
\draw[->] (u2)--(b1);
\draw[->] (u4)--(b5);
\draw[thick,blue,<-] (u5)--(b2);

\draw[->] (-0.5,2) -- (4.5,2);
\draw[<-] (-0.5,0) -- (4.5,0);

\tikzstyle{every node}=[circle,minimum size=5pt,inner sep=0pt,draw,fill]
\node at (uC) {};

\tikzstyle{every node}=[inner sep=1pt]
\begin{footnotesize}
\node at (tau0) {$\gamma^0$};
\node at (tau1) {$\gamma^1$};

\node at (lu1) {$L^0_1$};
\node at (lu3) {$K^0$};
\node at (lu5) {$R^0_1$};

\node at (lb2) {$R^1_1$};
\node at (lb3) {$K^1$};
\node at (lb4) {$L^1_1$};


\end{footnotesize}
\end{tikzpicture}
\hspace{0.5cm}
\begin{tikzpicture}[xscale=0.7,yscale=0.75,>=latex]
\coordinate (u1) at (0,2) {};
\coordinate (u2) at (1,2) {};
\coordinate (u3) at (2,2) {};
\coordinate (u4) at (3,2) {};
\coordinate (u5) at (4,2) {};

\coordinate (uC) at (2,2) {};
\coordinate (luC) at (2,2.35) {};
\coordinate (bC) at (2,0) {};
\coordinate (buC) at (2,-0.35) {};

\coordinate (b1) at (0,0) {};
\coordinate (b2) at (1,0) {};
\coordinate (b3) at (2,0) {};
\coordinate (b4) at (3,0) {};
\coordinate (b5) at (4,0) {};

\coordinate (lu1) at (0,2.35) {};
\coordinate (lu2) at (1,2.35) {};
\coordinate (lu3) at (2,2.35) {};
\coordinate (lu4) at (3,2.35) {};
\coordinate (lu5) at (4,2.35) {};

\coordinate (lb1) at (0,-0.35) {};
\coordinate (lb2) at (1,-0.35) {};
\coordinate (lb3) at (2,-0.35) {};
\coordinate (lb4) at (3,-0.35) {};
\coordinate (lb5) at (4,-0.35) {};

\coordinate (tau0) at (4.8,2);
\coordinate (tau1) at (-0.8,0);

\tikzstyle{every node}=[inner sep=2pt,fill=white]

\draw[fill=gray!30, draw=none] (-0.2,0) -- (0.2,0) -- (4.2,2) -- (3.8,2) -- cycle;
\draw[fill=gray!30, draw=none] (0.8,0) -- (1.2,0) -- (3.2,2) -- (2.8,2) -- cycle;
\draw[fill=gray!30, draw=none] (1.8,0) -- (2.2,0) -- (2.2,2) -- (1.8,2) -- cycle;
\draw[fill=gray!30, draw=none] (2.8,0) -- (3.2,0) -- (1.2,2) -- (0.8,2) -- cycle;
\draw[fill=gray!30, draw=none] (3.8,0) -- (4.2,0) -- (0.2,2) -- (-0.2,2) -- cycle;

\draw[thick,red,->] (u1)--(b5);
\draw[thick,red,->] (u2)--(b4);
\draw[->] (u4)--(b2);
\draw[->] (u5)--(b1);

\draw[->] (-0.5,2) -- (4.5,2);
\draw[<-] (-0.5,0) -- (4.5,0);

\tikzstyle{every node}=[circle,minimum size=5pt,inner sep=0pt,draw,fill]
\node at (uC) {};

\tikzstyle{every node}=[inner sep=1pt]
\begin{footnotesize}
\node at (tau0) {$\gamma^0$};
\node at (tau1) {$\gamma^1$};

\node at (lu1) {$L^0_1$};
\node at (lu2) {$L^0_2$};
\node at (lu3) {$K^0$};

\node at (lb3) {$K^1$};
\node at (lb4) {$L^1_1$};
\node at (lb5) {$L^1_2$};


\end{footnotesize}
\end{tikzpicture}

\vspace{0.4cm}

\begin{tikzpicture}[xscale=0.7,yscale=-0.75,>=latex]
\coordinate (u1) at (0,2) {};
\coordinate (u2) at (1,2) {};
\coordinate (u3) at (2,2) {};
\coordinate (u4) at (3,2) {};
\coordinate (u5) at (4,2) {};

\coordinate (uC) at (2,2) {};
\coordinate (luC) at (2,2.35) {};
\coordinate (bC) at (2,0) {};
\coordinate (buC) at (2,-0.35) {};

\coordinate (b1) at (0,0) {};
\coordinate (b2) at (1,0) {};
\coordinate (b3) at (2,0) {};
\coordinate (b4) at (3,0) {};
\coordinate (b5) at (4,0) {};

\coordinate (lu1) at (0,2.35) {};
\coordinate (lu2) at (1,2.35) {};
\coordinate (lu3) at (2,2.35) {};
\coordinate (lu4) at (3,2.35) {};
\coordinate (lu5) at (4,2.35) {};

\coordinate (lb1) at (0,-0.35) {};
\coordinate (lb2) at (1,-0.35) {};
\coordinate (lb3) at (2,-0.35) {};
\coordinate (lb4) at (3,-0.35) {};
\coordinate (lb5) at (4,-0.35) {};

\coordinate (tau0) at (4.8,0);
\coordinate (tau1) at (-0.8,2);

\tikzstyle{every node}=[inner sep=2pt,fill=white]

\draw[fill=gray!30, draw=none] (-0.2,0) -- (0.2,0) -- (1.2,2) -- (0.8,2) -- cycle;
\draw[fill=gray!30, draw=none] (0.8,0) -- (1.2,0) -- (4.2,2) -- (3.8,2) -- cycle;
\draw[fill=gray!30, draw=none] (1.8,0) -- (2.2,0) -- (2.2,2) -- (1.8,2) -- cycle;
\draw[fill=gray!30, draw=none] (2.8,0) -- (3.2,0) -- (0.2,2) -- (-0.2,2) -- cycle;
\draw[fill=gray!30, draw=none] (3.8,0) -- (4.2,0) -- (3.2,2) -- (2.8,2) -- cycle;

\draw[thick,red,<-] (u1)--(b4);
\draw[<-] (u2)--(b1);
\draw[<-] (u4)--(b5);
\draw[thick,blue,->] (u5)--(b2);

\draw[<-] (-0.5,2) -- (4.5,2);
\draw[->] (-0.5,0) -- (4.5,0);

\tikzstyle{every node}=[circle,minimum size=5pt,inner sep=0pt,draw,fill]
\node at (uC) {};

\tikzstyle{every node}=[inner sep=1pt]
\begin{footnotesize}
\node at (tau0) {$\delta^0$};
\node at (tau1) {$\delta^1$};

\node at (lu1) {$L^1_1$};
\node at (lu3) {$K^1$};
\node at (lu5) {$R^1_1$};

\node at (lb2) {$R^0_1$};
\node at (lb3) {$K^0$};
\node at (lb4) {$L^0_1$};


\end{footnotesize}
\end{tikzpicture}
\hspace{0.5cm}
\begin{tikzpicture}[xscale=0.7,yscale=0.75,>=latex]
\coordinate (u1) at (0,2) {};
\coordinate (u2) at (1,2) {};
\coordinate (u3) at (2,2) {};
\coordinate (u4) at (3,2) {};
\coordinate (u5) at (4,2) {};

\coordinate (uC) at (2,2) {};
\coordinate (luC) at (2,2.35) {};
\coordinate (bC) at (2,0) {};
\coordinate (buC) at (2,-0.35) {};

\coordinate (b1) at (0,0) {};
\coordinate (b2) at (1,0) {};
\coordinate (b3) at (2,0) {};
\coordinate (b4) at (3,0) {};
\coordinate (b5) at (4,0) {};

\coordinate (lu1) at (0,2.35) {};
\coordinate (lu2) at (1,2.35) {};
\coordinate (lu3) at (2,2.35) {};
\coordinate (lu4) at (3,2.35) {};
\coordinate (lu5) at (4,2.35) {};

\coordinate (lb1) at (0,-0.35) {};
\coordinate (lb2) at (1,-0.35) {};
\coordinate (lb3) at (2,-0.35) {};
\coordinate (lb4) at (3,-0.35) {};
\coordinate (lb5) at (4,-0.35) {};

\coordinate (tau0) at (4.8,0);
\coordinate (tau1) at (-0.8,2);

\tikzstyle{every node}=[inner sep=2pt,fill=white]

\draw[fill=gray!30, draw=none] (-0.2,0) -- (0.2,0) -- (4.2,2) -- (3.8,2) -- cycle;
\draw[fill=gray!30, draw=none] (0.8,0) -- (1.2,0) -- (3.2,2) -- (2.8,2) -- cycle;
\draw[fill=gray!30, draw=none] (1.8,0) -- (2.2,0) -- (2.2,2) -- (1.8,2) -- cycle;
\draw[fill=gray!30, draw=none] (2.8,0) -- (3.2,0) -- (1.2,2) -- (0.8,2) -- cycle;
\draw[fill=gray!30, draw=none] (3.8,0) -- (4.2,0) -- (0.2,2) -- (-0.2,2) -- cycle;

\draw[thick,red,->] (u1)--(b5);
\draw[->] (u2)--(b4);
\draw[thick,red,->] (u4)--(b2);
\draw[->] (u5)--(b1);

\draw[->] (-0.5,2) -- (4.5,2);
\draw[<-] (-0.5,0) -- (4.5,0);

\tikzstyle{every node}=[circle,minimum size=5pt,inner sep=0pt,draw,fill]

\tikzstyle{every node}=[inner sep=1pt]
\begin{footnotesize}
\node at (tau0) {$\sigma^0$};
\node at (tau1) {$\sigma^1$};

\node at (lu1) {$L^0_1$};
\node at (lu3) {$K^0$};
\node at (lu4) {$L^0_2$};

\node at (lb2) {$L^1_2$};
\node at (lb3) {$K^1$};
\node at (lb5) {$L^1_1$};


\end{footnotesize}
\end{tikzpicture}
\caption{\label{fig:M-node-owner} 
To the left: ordering $\gamma=(\gamma^0,\gamma^1)$ of prime owner $M$ binds $C$ in~$S^0$ and ordering $\delta=(\delta^0,\delta^1)$ binds $C$ in~$S^1$. 
To the right: ordering $\gamma = (\gamma^0,\gamma^1)$ of serial owner $M$ (we have $\mathcal{R}(N) = \emptyset$) binds $C$ in $S^0$, ordering $\sigma = (\sigma^0,\sigma^1)$ makes $C$ non-Helly.}
\end{figure}
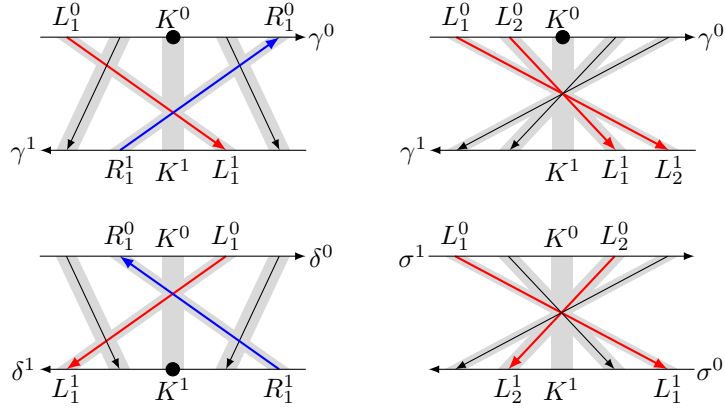

\noindent Finally, we describe the properties of the Q-node $Q$ which contains $C$.
\begin{lemma} Let $Q$ be the $Q$-node containing $C$ and let $S$ be the lowest owner of $C$ in~$\pqmtree_Q$. 
Then:
\label{lem:Q-node-owner-properties}
\begin{enumerate} 
\item \label{item:Q-node-owner-properties-L-R-sets} We have $\mathcal{L}(Q) \sim S$, $\mathcal{R}(Q) \sim S$, and $\mathcal{L}(Q) \sim \mathcal{R}(Q)$.
\item \label{item:Q-node-owner-properties-model}
Let $\phi$ be a conformal model of $G$ and let $\phi_{|Q} \equiv \pi_Q$ be an admissible ordering of $Q$ in~$\phi$. 
Then:
\begin{itemize}
\item If $C$ is in $\phi|S^0$, then $\pi_Q$ orders the slot $S^0$ between $\mathcal{L}^0(Q)$ and $\mathcal{L}^1(Q)$ and between $\mathcal{R}^1(Q)$ and $\mathcal{R}^0(Q)$.
\item If $C$ is in $\phi|S^1$, then $\pi_Q$ orders the slot $S^1$ between $\mathcal{L}^0(Q)$ and $\mathcal{L}^1(Q)$ and between $\mathcal{R}^1(Q)$ and $\mathcal{R}^0(Q)$.
\end{itemize}
\item \label{item:Q-node-owner-properties-prime} Suppose $M$ is prime.
Then: 
\begin{enumerate}
\item \label{item:Q-node-owner-properties-prime-slot-0} There is a unique admissible ordering $\gamma \in \Pi(Q)$ which orders the slot $S^0$ between $\mathcal{L}^0(Q)$ and $\mathcal{L}^1(Q)$ and between $\mathcal{R}^1(Q)$ and $\mathcal{R}^0(Q)$.
\item \label{item:Q-node-owner-properties-prime-slot-1} There is a unique admissible ordering $\delta \in \Pi(Q)$ which orders the slot $S^1$ between $\mathcal{L}^0(Q)$ and $\mathcal{R}^1(Q)$ and between $\mathcal{R}^1(Q)$ and $\mathcal{R}^0(Q)$.
\end{enumerate}
Moreover, we have $\Pi(Q) = \{\gamma, \delta\}$, which means that $\delta$ is the reflection  of $\gamma$.
\end{enumerate}
\end{lemma}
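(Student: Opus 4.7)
The plan is to prove the three statements in sequence, mirroring the template of Lemmas~\ref{lem:deepest-owner-properties} and~\ref{lem:M-node-owner-properties}.

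Part~(1) will follow directly from condition~\ref{prop:clique-type-vertices-of-C-from-outside-D}. Given $L \in \mathcal{L}(Q)$, pick any witness $c \in C \cap L$; since $c \in C \setminus S$ and $S \in \owners$, (C5) yields $c \sim S$, and because $L$ is a module of $G_{ov}$ containing $c$, this upgrades to $L \sim S$. The relations $\mathcal{R}(Q) \sim S$ and $\mathcal{L}(Q) \sim \mathcal{R}(Q)$ are proved identically, transferring $\sim$ from individual witnesses in $C$ to the ambient modules.

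Part~(2) I would prove via the geometry of a $C$-conformal model $\phi$ of $G$, treating the case $C \in \phi|S^0$ in detail (the other is symmetric). For any $L \in \mathcal{L}(Q)$ and any $c \in C \cap L$, the vertex $c$ is $(L^0,L^1)$-oriented, so the oriented chord $\phi(c)$ has its tail in $L^0$ and its head in $L^1$. By property~\ref{prop:conformal-model-2}, the clique point $\phi(C)$ lies in the left arc of $\phi(c)$, namely the clockwise arc from $L^0$ to $L^1$. Since $\phi|S^0$ is the contiguous subword containing $\phi(C)$, the slot $S^0$ must fall between $L^0$ and $L^1$ in $\pi_Q = \phi_{|Q}$. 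The argument for $R \in \mathcal{R}(Q)$ is symmetric, using the reversed $(R^1,R^0)$-orientation of $c \in C \cap R$.

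For part~(3), the key structural fact is that when $Q$ is prime, $\Pi(Q) = \{\pi, \pi^R\}$ with $\pi \not\equiv \pi^R$. Reflection swaps superscripts $0 \leftrightarrow 1$ and reverses the cyclic order, so for every pair $(L^0, L^1)$ with $L \in \mathcal{L}(Q)$ exactly one of $\pi, \pi^R$ places $S^0$ in the clockwise arc from $L^0$ to $L^1$ and the other places $S^1$ there; the same dichotomy holds for every pair $(R^1, R^0)$ with $R \in \mathcal{R}(Q)$. Once I establish the consistency claim — that one of $\pi, \pi^R$ simultaneously places $S^0$ between $L^0$ and $L^1$ for \emph{every} $L \in \mathcal{L}(Q)$ and between $R^1$ and $R^0$ for \emph{every} $R \in \mathcal{R}(Q)$ — that single ordering is $\gamma$, its reflection is $\delta$, and uniqueness of each follows from $|\Pi(Q)| = 2$.

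The main obstacle is precisely this consistency claim, which I plan to handle by adapting the argument in the proof of Lemma~\ref{lem:deepest-owner-properties}.\eqref{item:deepest-owner-properties-prime}. Suppose for contradiction that two modules $N_1, N_2 \in \mathcal{L}(Q) \cup \mathcal{R}(Q)$ demanded opposite placements of $S^0$. Then choosing witnesses $c_1 \in C \cap N_1, c_2 \in C \cap N_2$ together with a vertex $c_0 \in C \cap D$ would force a configuration realizing a rigid non-Helly substructure of $C$: a triple via Claim~\ref{claim:small_rigid_forbidden_structure} when $N_1 \sim N_2$, or a four-element rigid cycle via Claim~\ref{claim:large_rigid_forbidden_structure} when $N_1 \parallel N_2$. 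Either outcome contradicts condition~\ref{prop:clique-type-no-rigid-non-Helly-subclique}, completing the proof.
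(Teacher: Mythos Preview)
Your proposal is correct and takes essentially the same approach as the paper. The paper's own proof is extremely terse (three one-sentence justifications citing \ref{prop:clique-type-vertices-of-C-from-outside-D}, the left-side geometry of the clique point, and Claim~\ref{claim:I_N-properties}.\eqref{item:I_N-properties-consistent-orientation} together with the absence of rigid non-Helly subcliques); your write-up simply spells out the details the paper leaves implicit.

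Two minor remarks on part~(3). First, the case $N_1 \parallel N_2$ never arises: Claim~\ref{claim:L-R-sets} (together with your own part~(1)) shows that all modules in $\mathcal{L}(Q)\cup\mathcal{R}(Q)\cup\{S\}$ are pairwise in relation~$\sim$, so the only case is the triple via Claim~\ref{claim:small_rigid_forbidden_structure} and the four-cycle case is superfluous. Second, when you pick $c_0 \in C\cap D$, you should note that the orientation of $c_0$ in $\SSS$ must be chosen appropriately: with one orientation the triple $\{c_0,c_1,c_2\}$ is Helly, with the other it is the rigid non-Helly structure you want. Both orientations are available precisely because $D$ (and hence $S$) is an owner of~$C$.
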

\begin{proof}
Statement~\eqref{item:Q-node-owner-properties-L-R-sets} follows from the fact that $D\subseteq S$ and $C$ satisfies condition~\ref{prop:clique-type-vertices-of-C-from-outside-D}.

Statement~\eqref{item:Q-node-owner-properties-model} follows from the fact that $D\subseteq S$ and the letter $C$ is on the left side of chords of $\phi$ representing the vertices from the set $C \cap \big{(}\bigcup \mathcal{L}(Q)\big{)}$ and from the set $C \cap \big{(}\bigcup \mathcal{R}(Q)\big{)}$.

Statement~\eqref{item:Q-node-owner-properties-prime} follows from Claim~\ref{claim:I_N-properties}.\eqref{item:I_N-properties-consistent-orientation} and the fact that $C$ contains no rigid non-Helly subclique.
\end{proof}
Following our convention, we introduce the following terminology for admissible orderings of $\Pi(Q)$ for Q-node $Q$ such that $C \subseteq Q$ and $\mathcal{I}(Q) \neq \emptyset$:
\begin{itemize}
 \item If $Q$ is prime, then admissible ordering $\gamma \in \Pi(Q)$ which orders the slot $S^0$ between $\mathcal{L}^0(Q)$ and $\mathcal{L}^1(Q)$ and between $\mathcal{R}^1(Q)$ and $\mathcal{R}^0(Q)$ \emph{binds $C$ in the slot $S^0$}, and admissible ordering $\delta$ which orders the slot $S^1$ between $\mathcal{L}^0(Q)$ and $\mathcal{L}^1(Q)$ and between $\mathcal{R}^1(Q)$ and $\mathcal{R}^0(Q)$ \emph{binds $C$ in the slot $S^1$}.
 See Figure~\ref{fig:Q-node-owner} (to the left) for an illustration.
 \item If $Q$ is serial, then any admissible ordering $\gamma \in \Pi(Q)$ which orders the slot $S^0$ between $\mathcal{L}^0(Q)$ and $\mathcal{L}^1(Q)$ and between $\mathcal{R}^1(Q)$ and $\mathcal{R}^0(Q)$ \emph{binds $C$ in the slot $S^0$}, 
 any admissible ordering $\delta$ of $\Pi(Q)$ which orders the slot $S^1$ between $\mathcal{L}^0(Q)$ and $\mathcal{L}^1(Q)$ and between $\mathcal{R}^1(Q)$ and $\mathcal{R}^0(Q)$ \emph{binds $C$ in the slot $S^1$}, and any other ordering $\sigma \in \Pi(Q)$ \emph{makes $C$ non-Helly}. 
 Note that $Q$ admits an ordering which makes $C$ non-Helly iff $|\mathcal{I}(Q)| \geq 2$.
 See Figure~\ref{fig:Q-node-owner} (to the right) for an illustration.
\end{itemize}
Note that when $Q$ is prime or when $Q$ is serial and $|\mathcal{I}(Q)| = 1$, 
then any ordering of $Q$ binds $C$ in $S^j$ for some $j \in \{0,1\}$.

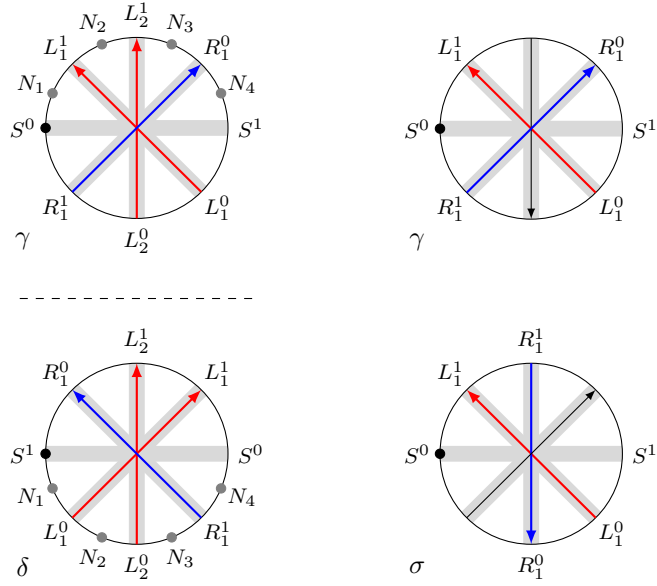
\begin{figure}[ht]
\begin{tikzpicture}[xscale=0.6,yscale=0.6,>=latex]
\coordinate (center) at (0,0) {};

\coordinate (v0) at ($(center)+(0:2cm)$) {};
\coordinate (v1) at ($(center)+(315:2cm)$) {};
\coordinate (v2) at ($(center)+(270:2cm)$) {};
\coordinate (v3) at ($(center)+(225:2cm)$) {};
\coordinate (v4) at ($(center)+(180:2cm)$) {};
\coordinate (v5) at ($(center)+(135:2cm)$) {};
\coordinate (v6) at ($(center)+(90:2cm)$) {};
\coordinate (v7) at ($(center)+(45:2cm)$) {};

\coordinate (lv0) at ($(center)+(0:2.5cm)$) {};
\coordinate (lv1) at ($(center)+(315:2.5cm)$) {};
\coordinate (lv2) at ($(center)+(270:2.5cm)$) {};
\coordinate (lv3) at ($(center)+(225:2.5cm)$) {};
\coordinate (lv4) at ($(center)+(180:2.5cm)$) {};
\coordinate (lv5) at ($(center)+(135:2.5cm)$) {};
\coordinate (lv6) at ($(center)+(90:2.5cm)$) {};
\coordinate (lv7) at ($(center)+(45:2.5cm)$) {};

\coordinate (n4) at ($(center)+(22.5:2cm)$) {};
\coordinate (n3) at ($(center)+(67.5:2cm)$) {};
\coordinate (n2) at ($(center)+(112.5:2cm)$) {};
\coordinate (n1) at ($(center)+(157.5:2cm)$) {};

\coordinate (ln4) at ($(center)+(22.5:2.5cm)$) {};
\coordinate (ln3) at ($(center)+(67.5:2.5cm)$) {};
\coordinate (ln2) at ($(center)+(112.5:2.5cm)$) {};
\coordinate (ln1) at ($(center)+(157.5:2.5cm)$) {};

\coordinate (tau) at (-2.5,-2.5) {};

\draw[fill=gray!30, draw=none] ($(center)+(-5:2cm)$) -- ($(center)+(0:2cm)$)-- ($(center)+(5:2cm)$) -- ($(center)+(175:2cm)$) -- ($(center)+(180:2cm)$) --($(center)+(185:2cm)$) -- cycle;

\draw[fill=gray!30, draw=none] ($(center)+(310:2cm)$) -- ($(center)+(315:2cm)$)-- ($(center)+(320:2cm)$) -- ($(center)+(130:2cm)$) -- ($(center)+(135:2cm)$) --($(center)+(140:2cm)$) -- cycle;

\draw[fill=gray!30, draw=none] ($(center)+(265:2cm)$) -- ($(center)+(270:2cm)$)-- ($(center)+(275:2cm)$) -- ($(center)+(85:2cm)$) -- ($(center)+(90:2cm)$) --($(center)+(95:2cm)$) -- cycle;

\draw[fill=gray!30, draw=none] ($(center)+(220:2cm)$) -- ($(center)+(225:2cm)$)-- ($(center)+(230:2cm)$) -- ($(center)+(40:2cm)$) -- ($(center)+(45:2cm)$) --($(center)+(50:2cm)$) -- cycle;

\draw (0,0) circle (2cm);

\draw[thick,red,->] (v1)--(v5);
\draw[thick,red,->] (v2)--(v6);
\draw[thick,blue,->] (v3)--(v7);

\tikzstyle{every node}=[circle,minimum size=3.5pt,inner sep=0pt,draw,fill]
\node[gray] at (n1) {};
\node[gray] at (n2) {};
\node[gray] at (n3) {};
\node[gray] at (n4) {};
\node at (v4) {};

\tikzstyle{every node}=[inner sep=1pt]
\begin{footnotesize}
\node at (tau) {$\gamma$};
\end{footnotesize}
\begin{tiny}
\node at (lv0) {$S^1$};
\node at (lv1) {$L^0_1$};
\node at (lv2) {$L^0_2$};
\node at (lv3) {$R^1_1$};
\node at (lv4) {$S^0$};
\node at (lv5) {$L^1_1$};
\node at (lv6) {$L^1_2$};
\node at (lv7) {$R^0_1$};

\node at (ln1) {$N_1$};
\node at (ln2) {$N_2$};
\node at (ln3) {$N_3$};
\node at (ln4) {$N_4$};
\end{tiny}
\draw[white,-] (-2.8,-2.8)--(-2.2,-2.8);
\draw[white,-] (2.5,2.8)--(2.2,2.8);
\end{tikzpicture}
\hspace{1.5cm}
\begin{tikzpicture}[xscale=0.6,yscale=0.6,>=latex]
\coordinate (center) at (0,0) {};

\coordinate (v0) at ($(center)+(0:2cm)$) {};
\coordinate (v1) at ($(center)+(315:2cm)$) {};
\coordinate (v2) at ($(center)+(270:2cm)$) {};
\coordinate (v3) at ($(center)+(225:2cm)$) {};
\coordinate (v4) at ($(center)+(180:2cm)$) {};
\coordinate (v5) at ($(center)+(135:2cm)$) {};
\coordinate (v6) at ($(center)+(90:2cm)$) {};
\coordinate (v7) at ($(center)+(45:2cm)$) {};

\coordinate (lv0) at ($(center)+(0:2.5cm)$) {};
\coordinate (lv1) at ($(center)+(315:2.5cm)$) {};
\coordinate (lv2) at ($(center)+(270:2.5cm)$) {};
\coordinate (lv3) at ($(center)+(225:2.5cm)$) {};
\coordinate (lv4) at ($(center)+(180:2.5cm)$) {};
\coordinate (lv5) at ($(center)+(135:2.5cm)$) {};
\coordinate (lv6) at ($(center)+(90:2.5cm)$) {};
\coordinate (lv7) at ($(center)+(45:2.5cm)$) {};

\coordinate (n4) at ($(center)+(22.5:2cm)$) {};
\coordinate (n3) at ($(center)+(67.5:2cm)$) {};
\coordinate (n2) at ($(center)+(112.5:2cm)$) {};
\coordinate (n1) at ($(center)+(157.5:2cm)$) {};

\coordinate (ln4) at ($(center)+(22.5:2.5cm)$) {};
\coordinate (ln3) at ($(center)+(67.5:2.5cm)$) {};
\coordinate (ln2) at ($(center)+(112.5:2.5cm)$) {};
\coordinate (ln1) at ($(center)+(157.5:2.5cm)$) {};

\coordinate (tau) at (-2.5,-2.5) {};

\draw[fill=gray!30, draw=none] ($(center)+(-5:2cm)$) -- ($(center)+(0:2cm)$)-- ($(center)+(5:2cm)$) -- ($(center)+(175:2cm)$) -- ($(center)+(180:2cm)$) --($(center)+(185:2cm)$) -- cycle;

\draw[fill=gray!30, draw=none] ($(center)+(310:2cm)$) -- ($(center)+(315:2cm)$)-- ($(center)+(320:2cm)$) -- ($(center)+(130:2cm)$) -- ($(center)+(135:2cm)$) --($(center)+(140:2cm)$) -- cycle;

\draw[fill=gray!30, draw=none] ($(center)+(265:2cm)$) -- ($(center)+(270:2cm)$)-- ($(center)+(275:2cm)$) -- ($(center)+(85:2cm)$) -- ($(center)+(90:2cm)$) --($(center)+(95:2cm)$) -- cycle;

\draw[fill=gray!30, draw=none] ($(center)+(220:2cm)$) -- ($(center)+(225:2cm)$)-- ($(center)+(230:2cm)$) -- ($(center)+(40:2cm)$) -- ($(center)+(45:2cm)$) --($(center)+(50:2cm)$) -- cycle;

\draw (0,0) circle (2cm);

\draw[thick,red,->] (v1)--(v5);
\draw[<-] (v2)--(v6);
\draw[thick,blue,->] (v3)--(v7);

\tikzstyle{every node}=[circle,minimum size=3.5pt,inner sep=0pt,draw,fill]
\node at (v4) {};

\tikzstyle{every node}=[inner sep=1pt]
\begin{footnotesize}
\node at (tau) {$\gamma$};
\end{footnotesize}
\begin{tiny}
\node at (lv0) {$S^1$};
\node at (lv1) {$L^0_1$};
\node at (lv3) {$R^1_1$};
\node at (lv4) {$S^0$};
\node at (lv5) {$L^1_1$};
\node at (lv7) {$R^0_1$};

\end{tiny}
\draw[white,-] (-2.8,-2.8)--(-2.2,-2.8);
\draw[white,-] (2.5,2.8)--(2.2,2.8);
\end{tikzpicture}

\vspace{0.2cm}
\begin{tikzpicture}[xscale=0.6,yscale=-0.6,>=latex]
\draw[black,dashed,-] (-2.6,0)--(2.6,0);
\draw[white,-] (-2.8,-0.2)--(-2.2,-0.2);
\draw[white,-] (2.8,0.2)--(2.2,0.2);
\end{tikzpicture}
\hspace{1.5cm}
\begin{tikzpicture}[xscale=0.6,yscale=-0.6,>=latex]
\draw[white,-] (-2.6,0)--(2.6,0);
\draw[white,-] (-2.8,-0.2)--(-2.2,-0.2);
\draw[white,-] (2.8,0.2)--(2.2,0.2);
\end{tikzpicture}

\vspace{0.2cm}

\begin{tikzpicture}[xscale=0.6,yscale=-0.6,>=latex]
\coordinate (center) at (0,0) {};

\coordinate (v0) at ($(center)+(0:2cm)$) {};
\coordinate (v1) at ($(center)+(315:2cm)$) {};
\coordinate (v2) at ($(center)+(270:2cm)$) {};
\coordinate (v3) at ($(center)+(225:2cm)$) {};
\coordinate (v4) at ($(center)+(180:2cm)$) {};
\coordinate (v5) at ($(center)+(135:2cm)$) {};
\coordinate (v6) at ($(center)+(90:2cm)$) {};
\coordinate (v7) at ($(center)+(45:2cm)$) {};

\coordinate (lv0) at ($(center)+(0:2.5cm)$) {};
\coordinate (lv1) at ($(center)+(315:2.5cm)$) {};
\coordinate (lv2) at ($(center)+(270:2.5cm)$) {};
\coordinate (lv3) at ($(center)+(225:2.5cm)$) {};
\coordinate (lv4) at ($(center)+(180:2.5cm)$) {};
\coordinate (lv5) at ($(center)+(135:2.5cm)$) {};
\coordinate (lv6) at ($(center)+(90:2.5cm)$) {};
\coordinate (lv7) at ($(center)+(45:2.5cm)$) {};

\coordinate (n4) at ($(center)+(22.5:2cm)$) {};
\coordinate (n3) at ($(center)+(67.5:2cm)$) {};
\coordinate (n2) at ($(center)+(112.5:2cm)$) {};
\coordinate (n1) at ($(center)+(157.5:2cm)$) {};

\coordinate (ln4) at ($(center)+(22.5:2.5cm)$) {};
\coordinate (ln3) at ($(center)+(67.5:2.5cm)$) {};
\coordinate (ln2) at ($(center)+(112.5:2.5cm)$) {};
\coordinate (ln1) at ($(center)+(157.5:2.5cm)$) {};

\coordinate (tau) at (-2.5,2.5) {};

\draw[fill=gray!30, draw=none] ($(center)+(-5:2cm)$) -- ($(center)+(0:2cm)$)-- ($(center)+(5:2cm)$) -- ($(center)+(175:2cm)$) -- ($(center)+(180:2cm)$) --($(center)+(185:2cm)$) -- cycle;

\draw[fill=gray!30, draw=none] ($(center)+(310:2cm)$) -- ($(center)+(315:2cm)$)-- ($(center)+(320:2cm)$) -- ($(center)+(130:2cm)$) -- ($(center)+(135:2cm)$) --($(center)+(140:2cm)$) -- cycle;

\draw[fill=gray!30, draw=none] ($(center)+(265:2cm)$) -- ($(center)+(270:2cm)$)-- ($(center)+(275:2cm)$) -- ($(center)+(85:2cm)$) -- ($(center)+(90:2cm)$) --($(center)+(95:2cm)$) -- cycle;

\draw[fill=gray!30, draw=none] ($(center)+(220:2cm)$) -- ($(center)+(225:2cm)$)-- ($(center)+(230:2cm)$) -- ($(center)+(40:2cm)$) -- ($(center)+(45:2cm)$) --($(center)+(50:2cm)$) -- cycle;

\draw (0,0) circle (2cm);

\draw[thick,red,<-] (v1)--(v5);
\draw[thick,red,<-] (v2)--(v6);
\draw[thick,blue,<-] (v3)--(v7);

\tikzstyle{every node}=[circle,minimum size=3.5pt,inner sep=0pt,draw,fill]
\node[gray] at (n1) {};
\node[gray] at (n2) {};
\node[gray] at (n3) {};
\node[gray] at (n4) {};
\node at (v4) {};

\tikzstyle{every node}=[inner sep=1pt]
\begin{footnotesize}
\node at (tau) {$\delta$};
\end{footnotesize}
\begin{tiny}
\node at (lv0) {$S^0$};
\node at (lv1) {$L^1_1$};
\node at (lv2) {$L^1_2$};
\node at (lv3) {$R^0_1$};
\node at (lv4) {$S^1$};
\node at (lv5) {$L^0_1$};
\node at (lv6) {$L^0_2$};
\node at (lv7) {$R^1_1$};
\node at (ln1) {$N_1$};
\node at (ln2) {$N_2$};
\node at (ln3) {$N_3$};
\node at (ln4) {$N_4$};
\end{tiny}
\draw[white,-] (-2.8,-2.8)--(-2.2,-2.8);
\draw[white,-] (2.5,2.8)--(2.2,2.8);
\end{tikzpicture}
\hspace{1.5cm}
\begin{tikzpicture}[xscale=0.6,yscale=0.6,>=latex]
\coordinate (center) at (0,0) {};

\coordinate (v0) at ($(center)+(0:2cm)$) {};
\coordinate (v1) at ($(center)+(315:2cm)$) {};
\coordinate (v2) at ($(center)+(270:2cm)$) {};
\coordinate (v3) at ($(center)+(225:2cm)$) {};
\coordinate (v4) at ($(center)+(180:2cm)$) {};
\coordinate (v5) at ($(center)+(135:2cm)$) {};
\coordinate (v6) at ($(center)+(90:2cm)$) {};
\coordinate (v7) at ($(center)+(45:2cm)$) {};

\coordinate (lv0) at ($(center)+(0:2.5cm)$) {};
\coordinate (lv1) at ($(center)+(315:2.5cm)$) {};
\coordinate (lv2) at ($(center)+(270:2.5cm)$) {};
\coordinate (lv3) at ($(center)+(225:2.5cm)$) {};
\coordinate (lv4) at ($(center)+(180:2.5cm)$) {};
\coordinate (lv5) at ($(center)+(135:2.5cm)$) {};
\coordinate (lv6) at ($(center)+(90:2.5cm)$) {};
\coordinate (lv7) at ($(center)+(45:2.5cm)$) {};

\coordinate (n4) at ($(center)+(22.5:2cm)$) {};
\coordinate (n3) at ($(center)+(67.5:2cm)$) {};
\coordinate (n2) at ($(center)+(112.5:2cm)$) {};
\coordinate (n1) at ($(center)+(157.5:2cm)$) {};

\coordinate (ln4) at ($(center)+(22.5:2.5cm)$) {};
\coordinate (ln3) at ($(center)+(67.5:2.5cm)$) {};
\coordinate (ln2) at ($(center)+(112.5:2.5cm)$) {};
\coordinate (ln1) at ($(center)+(157.5:2.5cm)$) {};

\coordinate (tau) at (-2.5,-2.5) {};

\draw[fill=gray!30, draw=none] ($(center)+(-5:2cm)$) -- ($(center)+(0:2cm)$)-- ($(center)+(5:2cm)$) -- ($(center)+(175:2cm)$) -- ($(center)+(180:2cm)$) --($(center)+(185:2cm)$) -- cycle;

\draw[fill=gray!30, draw=none] ($(center)+(310:2cm)$) -- ($(center)+(315:2cm)$)-- ($(center)+(320:2cm)$) -- ($(center)+(130:2cm)$) -- ($(center)+(135:2cm)$) --($(center)+(140:2cm)$) -- cycle;

\draw[fill=gray!30, draw=none] ($(center)+(265:2cm)$) -- ($(center)+(270:2cm)$)-- ($(center)+(275:2cm)$) -- ($(center)+(85:2cm)$) -- ($(center)+(90:2cm)$) --($(center)+(95:2cm)$) -- cycle;

\draw[fill=gray!30, draw=none] ($(center)+(220:2cm)$) -- ($(center)+(225:2cm)$)-- ($(center)+(230:2cm)$) -- ($(center)+(40:2cm)$) -- ($(center)+(45:2cm)$) --($(center)+(50:2cm)$) -- cycle;

\draw (0,0) circle (2cm);

\draw[thick,red,->] (v1)--(v5);
\draw[thick,blue,<-] (v2)--(v6);
\draw[->] (v3)--(v7);

\tikzstyle{every node}=[circle,minimum size=3.5pt,inner sep=0pt,draw,fill]
\node at (v4) {};

\tikzstyle{every node}=[inner sep=1pt]
\begin{footnotesize}
\node at (tau) {$\sigma$};
\end{footnotesize}
\begin{tiny}
\node at (lv0) {$S^1$};
\node at (lv1) {$L^0_1$};
\node at (lv2) {$R^0_1$};
\node at (lv4) {$S^0$};
\node at (lv5) {$L^1_1$};
\node at (lv6) {$R^1_1$};

\end{tiny}
\draw[white,-] (-2.8,-2.8)--(-2.2,-2.8);
\draw[white,-] (2.5,2.8)--(2.2,2.8);
\end{tikzpicture}
\caption{\label{fig:Q-node-owner} 
To the left: ordering $\gamma$ of prime Q-node $Q$ binds $C$ in~$S^0$ and ordering $\delta$ binds $C$ in~$S^1$. 
To the right: ordering $\gamma$ of serial Q-node $Q$ binds $C$ in $S^0$, 
ordering $\sigma$ makes $C$ non-Helly.}
\end{figure}

Now, we are ready to characterize conformal models of $G$ 
in which $C$ satisfies the Helly property.
For this purpose, we say a node $N$ of $\pqmtree$ \emph{affects (the slot of)} $C$ if $N$ admits an ordering which binds $C$ in some slot of $S$.
\begin{lemma}
\label{lem:conformal-models-with-C-in-fixed-slot}
Suppose $C$ is a private clique which satisfies conditions \ref{prop:clique-type-clean}-\ref{prop:clique-type-vertices-of-C-from-outside-D} and let $j \in \{0,1\}$.
A conformal model $\phi$ of $G$ can be extended to a $C$-conformal model with the clique $C$ in the slot $S^j$ if and only if for every node $N$ affecting $C$ the ordering $\phi_{|N}$ binds $C$ in $S^j$.
\end{lemma}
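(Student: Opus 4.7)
My plan is to prove the two implications separately. For the forward direction, the three ``model'' statements already established in Lemmas~\ref{lem:deepest-owner-properties}.\eqref{item:deepest-owner-properties-model}, \ref{lem:M-node-owner-properties}.\eqref{item:M-node-owner-properties-model}, and \ref{lem:Q-node-owner-properties}.\eqref{item:Q-node-owner-properties-model} do essentially all the work. Assuming $\phi$ extends to a $\{C\}$-conformal model $\phi^+$ with the clique letter $C$ inside $\phi^+|S^j$, I would take any node $N$ that affects $C$ and invoke the corresponding lemma at $N$: Lemma~\ref{lem:deepest-owner-properties}.\eqref{item:deepest-owner-properties-model} if $N = D$, Lemma~\ref{lem:M-node-owner-properties}.\eqref{item:M-node-owner-properties-model} if $N \in \owners \setminus \{D\}$, and Lemma~\ref{lem:Q-node-owner-properties}.\eqref{item:Q-node-owner-properties-model} if $N = Q$. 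Each asserts precisely that $\phi^+_{|N}$ orders the $\mathcal{L}^j(N)$ and $\mathcal{R}^j(N)$ blocks in the manner defined as ``binding $C$ in $S^j$''; since inserting a single letter $C$ does not change the induced orderings at any PQM-node, we have $\phi_{|N} = \phi^+_{|N}$, so the condition on affecting nodes is necessary.

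For the backward direction I would construct $\phi^+$ explicitly by inserting the letter $C$ into $\phi$ at one specific position inside $\phi|D^j$. The target position is chosen so that, for $j = 0$, every letter of $\bigcup\{L^0 : L \in \mathcal{L}(D)\}$ precedes $C$ in $\phi|D^0$ while every letter of $\bigcup\{R^0 : R \in \mathcal{R}(D)\}$ succeeds it (and symmetrically for $j = 1$). The existence of such a position is my first real step: if $D$ affects $C$, the hypothesis that $\phi_{|D}$ binds $C$ in $S^j$ immediately furnishes the required gap inside $\phi|D^j$; if $D$ does not affect $C$, that is, when $D$ is parallel with $|\mathcal{L}(D)| = |\mathcal{R}(D)| = 1$ or prime with $\mathcal{L}(D) \parallel \mathcal{R}(D)$, I would appeal to equations~\eqref{eq:models_of_permutation_graphs_1}--\eqref{eq:models_of_permutation_graphs_2}, which force the $\mathcal{L}(D)$- and $\mathcal{R}(D)$-blocks to appear in opposite orders in $\phi|D^0$ and $\phi|D^1$; this yields the required separation simultaneously for both values of $j$.

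Having placed $C$, properties~\ref{prop:conformal-model-1} and~\ref{prop:conformal-model-3} for $\phi^+$ are immediate, so the remaining content is property~\ref{prop:conformal-model-2}: for every $c \in C$ the inserted point $\phi^+(C)$ must lie on the left of the oriented chord $\phi^+(c)$. Using Claim~\ref{claim:I_N-properties}.\eqref{item:I_N-properties-clique} I would split on the ancestor $N \in \owners \cup \{Q\}$ for which $c$ lies in some $K \in \mathcal{I}(N)$, and further on whether $K \in \mathcal{L}(N)$ or $K \in \mathcal{R}(N)$. In each case, the binding of $N$ in $S^j$ combined with the bindings of all deeper owners forced by the hypothesis places the contiguous subwords $\phi|K^j$ and $\phi|K^{1-j}$ on the ``outer'' sides of $\phi|N^j$ and $\phi|N^{1-j}$ relative to the subword $\phi|K'^j$ that contains the child-owner chain leading down to $D$. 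Tracing the circular order clockwise from $c^0$ to $c^1$ then shows that this arc passes through $\phi|D^j$ in the unique sub-interval containing the inserted letter $C$, and hence the point $\phi^+(C)$ lies on the left of $\phi^+(c)$.

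The main obstacle I expect is precisely this verification step. The case analysis has several overlapping branches (the type --- prime, serial, or parallel --- of each owner on the path from $Q$ to $D$; the membership of $c$ in $\mathcal{L}(N)$ or $\mathcal{R}(N)$ at an arbitrary ancestor $N$; the two values of $j$) and each branch requires careful bookkeeping of the clockwise order near the interfaces between the two slot-halves $\phi|N^0$ and $\phi|N^1$. The right way to keep the argument short is to isolate a single auxiliary statement --- that if $\phi_{|N}$ binds $C$ in $S^j$ for every affecting $N$ then, for every owner $M$ on the $(Q,D)$-path, the clockwise arc from $c^0$ to $c^1$ crosses $\phi|D^j$ in the sub-interval between $\mathcal{L}^j(D)$ and $\mathcal{R}^j(D)$ --- which compresses all branches of the analysis into one inductive claim about bindings propagating along the owner path.
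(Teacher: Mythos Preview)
Your proposal is correct and follows essentially the same approach as the paper. The forward direction is identical: both invoke Lemmas~\ref{lem:deepest-owner-properties}.\eqref{item:deepest-owner-properties-model}, \ref{lem:M-node-owner-properties}.\eqref{item:M-node-owner-properties-model}, and \ref{lem:Q-node-owner-properties}.\eqref{item:Q-node-owner-properties-model}. For the backward direction the paper is much terser---it appeals to Claim~\ref{claim:I_N-properties}.\eqref{item:I_N-properties-clique} and then simply asserts that ``the properties of admissible orderings from $\Pi(N)$ binding $C$ in $S^j$ assert $\phi$ can be easily extended''---whereas you explicitly locate the insertion point in $\phi|D^j$, handle the case where $D$ does not affect $C$, and outline the case analysis verifying property~\ref{prop:conformal-model-2}; this elaboration is correct and fills in what the paper leaves implicit.
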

\begin{proof}
Suppose $\phi$ is a $C$-conformal model of $G$ with $C$ in the slot $S^j$.
Lemmas~\ref{lem:deepest-owner-properties}.\eqref{item:deepest-owner-properties-model},
\ref{lem:M-node-owner-properties}.\eqref{item:M-node-owner-properties-model}, and~\ref{lem:Q-node-owner-properties}.\eqref{item:Q-node-owner-properties-model} assert that 
for every node $N$ affecting $C$ the ordering $\phi_{|N}$ binds $C$ in $S^j$.

Suppose $\phi$ is a conformal model of $G$ such that $\phi_{|N}$ binds $C$ in $S^j$ for every node $N$ affecting $C$.
Since, by Claim~\ref{claim:I_N-properties}.\eqref{item:I_N-properties-clique}, every element from $C$ is a member of some node $K \in \mathcal{I}(N)$ for some $N \in \owners \cup \{Q\}$, the properties of admissible orderings from $\Pi(N)$ binding $C$ in $S^j$
assert $\phi$ can be easily extended to a $C$-conformal model with $C$ in $S^j$ (we can choose any ordering for the nodes of $\pqmtree$ which do not affect $C$).
\end{proof}

Finally we determine the type of $C$.
\begin{theorem}
\label{thm:private_clique_classification}
Suppose $C$ is a private clique which satisfies conditions \ref{prop:clique-type-clean}-\ref{prop:clique-type-vertices-of-C-from-outside-D}.
If there is a node with an ordering which makes $C$ non-Helly or 
there are at least two nodes with orderings binding $C$, then $C$ is ambiguous.
Otherwise, $C$ is always Helly.
\end{theorem}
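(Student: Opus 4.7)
The plan is to derive this theorem directly from Lemma~\ref{lem:conformal-models-with-C-in-fixed-slot}, combined with the bijection (stated at the end of Section~\ref{sec:PQM-trees}) between conformal models of $G$ and families of independent admissible-ordering choices at the PQM-nodes of $\pqmtree$. Under this bijection, a conformal model $\phi$ has $C$ Helly if and only if there exists $j\in\{0,1\}$ such that $\phi_{|N}$ binds $C$ in the slot $S^{j}$ for every node $N$ affecting $C$; and the orderings at distinct nodes can be chosen independently. These two facts drive every case.

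First I would treat the case that some node $N^{*}$ admits an ordering $\sigma^{*}$ that makes $C$ non-Helly; by Lemmas~\ref{lem:deepest-owner-properties}--\ref{lem:Q-node-owner-properties} this happens only when $N^{*}\in\owners\cup\{Q\}$ is serial with $|\mathcal{I}(N^{*})|$ large enough. Setting $\phi_{|N^{*}}=\sigma^{*}$ and choosing the remaining orderings arbitrarily produces a conformal model in which, by Lemma~\ref{lem:conformal-models-with-C-in-fixed-slot}, $C$ cannot be Helly. Conversely, assigning to every affecting node an ordering that binds $C$ in~$S^{0}$ (such an ordering always exists by the three structural lemmas) yields a conformal model in which $C$ is Helly, so $C$ is ambiguous.

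Next I would treat the case that at least two distinct nodes $N_{1},N_{2}$ affect $C$. For a non-Helly model I would choose $\phi_{|N_{1}}$ to bind $C$ in $S^{0}$ and $\phi_{|N_{2}}$ to bind $C$ in $S^{1}$; then neither $j=0$ nor $j=1$ satisfies the criterion of Lemma~\ref{lem:conformal-models-with-C-in-fixed-slot} for both nodes simultaneously, so $C$ is non-Helly. For a Helly model I would bind $C$ in $S^{0}$ at every affecting node. Hence $C$ is again ambiguous.

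In the remaining case, no node admits a non-Helly ordering and at most one node affects~$C$. For an arbitrary conformal model $\phi$, each affecting node (there are zero or one of them) must have $\phi_{|N}$ binding $C$ in some slot $S^{j}$, so the condition of Lemma~\ref{lem:conformal-models-with-C-in-fixed-slot} is satisfied for that $j$ and $C$ is Helly in $\phi$. The main obstacle is verifying that the three-way classification of admissible orderings of an affecting node into ``binds $C$ in $S^{0}$'', ``binds $C$ in $S^{1}$'', and ``makes $C$ non-Helly'' is exhaustive; but this is exactly what Lemmas~\ref{lem:deepest-owner-properties}, \ref{lem:M-node-owner-properties}, and~\ref{lem:Q-node-owner-properties} provide for the deepest owner $D$, the intermediate owners in $\owners\setminus\{D\}$, and the Q-node $Q$, respectively, so the case analysis is complete.
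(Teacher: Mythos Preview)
Your proposal is correct and follows essentially the same approach as the paper's proof: both rely on Lemma~\ref{lem:conformal-models-with-C-in-fixed-slot} together with the independence of admissible-ordering choices across PQM-nodes, and both handle the three cases (a non-Helly ordering exists; two affecting nodes exist; otherwise) in the same way using Lemmas~\ref{lem:deepest-owner-properties}--\ref{lem:Q-node-owner-properties} to guarantee that each affecting node has orderings binding $C$ in either slot and that the trichotomy of orderings is exhaustive.
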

\begin{proof}
Lemmas \ref{lem:deepest-owner-properties}.\eqref{item:deepest-owner-properties-prime}, 
\ref{lem:M-node-owner-properties}.\eqref{item:M-node-owner-properties-prime}, and 
\ref{lem:Q-node-owner-properties}.\eqref{item:Q-node-owner-properties-prime} assert
every node $N$ affecting $C$ admits an ordering which binds $C$ in the slot $S^j$, for any 
$j \in \{0,1\}$.
Lemma~\ref{lem:conformal-models-with-C-in-fixed-slot} guarantees $G$ admits a conformal model in which $C$ satisfies the Helly property.

By Lemma~\ref{lem:conformal-models-with-C-in-fixed-slot}, if there is a serial node $N$ affecting $C$ with an ordering making $C$ non-Helly, the conformal models of $G$ which use this ordering on $N$ make $C$ non-Helly.
By the same reason, if we have two distinct nodes $N_0,N_1$ which bind $C$, 
the conformal models $\phi$ of $G$ in which $\phi_{|N_0}$ binds $C$ in $S^0$ and $\phi_{|N_{1}}$ binds $C$ in $S^1$, make $C$ non-Helly.
Otherwise, if there is exactly one node $N$ whose orderings binds $C$ either in $S^0$ or in $S^1$, 
then any conformal model $\phi$ of $G$ can be extended to a $C$-conformal model with $C$ in the slot $S^j$ for some unique $j \in \{0,1\}$
(that is, in $S^0$ if $\phi_{|N}$ binds $C$ in $S^0$ or in $S^1$ if $\phi_{|N}$ binds $C$ in $S^1$).
Finally, if there are no nodes affecting $C$, every conformal model of $G$ can be extended to a $C$-conformal model with $C$ in $S^j$ for both $j \in \{0,1\}$.
This concludes the proof.
\end{proof}

Finally, we can observe that we can test the type of $C$ in polynomial time,
thus proving Theorem~\ref{thm:intro:clique-type}.
For this purpose we check whether $C$ satisfies properties~\ref{prop:clique-type-clean}-\ref{prop:clique-type-vertices-of-C-from-outside-D}.
All those properties are easy to test except of checking whether $C$
contains a rigid non-Helly clique of size $\geq 4$.
For this we can use the fact that $C$ contains a rigid non-Helly clique of size $\geq 4$ iff the vertices of $C$ induce a non-chordal graph in the overlap graph $G_{ov}$.

\section{FPT algorithm for \hcp}
\label{sec:fpt-Helly-Cliques-problem}
The \hcp\ problem ($\HCP$) is defined as follows.

\medskip

\begin{tabular}{rl}
\textbf{Problem:} & \hcp \\
\textbf{Input:}& A circular-arc graph $G$ and some of its cliques $C_1,\ldots,C_k$ \\
\textbf{Question:} & Is there a normalized circular-arc model of $G$ in which all \\
&the cliques $C_1,\ldots,C_k$ satisfy the Helly property?
\end{tabular}
\medskip

Equivalently, in \hcp, for a given circular arc graph $G$ and its cliques $C_1,\ldots,C_k$  we need to decide whether $G$ admits a $\{C_1,\ldots,C_k\}$-conformal model. 

Let $(G,C_1,\ldots,C_k)$ be an instance of \hcp.
If $C_i$ is an inclusion-wise maximal clique in $G$, 
then the instance $(G, C_1, \ldots, C_k)$ is equivalent to the instance $(G', C_1, \ldots, C_{i-1},C_{i+1},\ldots,C_k)$, 
where $G'$ is obtained by extending $G$ by a vertex~$v$ with $N(v)=C_i$. 
Hence, if all the cliques on the input are maximal in $G$, 
\hcp\ can be solved by a linear time algorithm 
as it can be reduced to the recognition problem of 
circular-arc graphs.

In the rest of this section we will prove Theorem \ref{thm:helly-cliques},
which extends Theorem~\ref{thm:intro:helly-cliques-par}.\eqref{item:intro:helly-cliques-fpt} as follows: 
\begin{theorem} \hcp\ can be solved:
\label{thm:helly-cliques}
\begin{enumerate}
\item \label{item:helly-cliques-prime-parallel} in time $2^{k}n^{\Oh{1}}$ if we restrict to the instances $(G,C_1,\ldots,C_k)$
in which $V(G)$ is prime or parallel in $\strongModules(G_{ov})$.
\item \label{item:serial} in time $2^{\Oh{k\log{k}}}n^{\Oh{1}}$ if we restrict to the instances $(G,C_1,\ldots,C_k)$ in which $V(G)$ is serial in $\strongModules(G_{ov})$.
\end{enumerate}
In particular, \hcp\ can be solved in time $2^{\Oh{k\log{k}}}n^{\Oh{1}}$.
\end{theorem}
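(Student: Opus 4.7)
I would process the PQM-tree $\pqmtree$ of $G$ in a bottom-up dynamic program, where at each PQM-node $N$ I maintain a compact summary of the partial $\{C_1,\ldots,C_k\}$-conformal models achievable in the subtree below $N$. Before running the DP, I would preprocess the input by identifying, for every clique $C_i$, its set of owner M-nodes exactly as in Section~\ref{sec:clique-type-problem}, and reject the instance immediately if some $C_i$ contains a rigid non-Helly subclique. By Lemma~\ref{lemma:private-nodes} each private clique $C_i$ is forced into exactly one of the two slots of some CA-module $S_i$, and its owners form a path from $S_i$ down to a deepest owner. This gives a uniform local picture for every clique.

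The DP state at a node $N$ records, for each clique $C_i$ whose vertex set intersects the subtree of $N$ but which is not yet fully placed, one bit indicating in which of the two sides of $N$ the clique letter $C_i$ must ``exit'' upward; public cliques that live entirely in a serial configuration above $N$ contribute no local bit at $N$. Because only the cliques that touch the subtree at $N$ contribute bits, the state space has size $2^{\Oh{k}}$. Transitions use the admissible-ordering structure from Section~\ref{sec:PQM-trees}: for prime M-nodes and prime Q-nodes there are only two orderings (reflections of each other) to try; for parallel M-nodes and for P-nodes the children can be freely permuted, and the feasibility of placing the required clique letters into the common outward region reduces, via the \emph{Trapezoid Lemma} announced in the introduction, to a polynomial-time check once the $2^{\Oh{k}}$ state of $N$ is fixed. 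This already yields Theorem~\ref{thm:helly-cliques}.\eqref{item:helly-cliques-prime-parallel}: when $V$ is prime we only pick one of the two admissible orderings of $\Pi(V)$, and when $V$ is parallel the top-level PQ-tree is handled identically to the internal PQ-nodes, giving a $2^{k}n^{\Oh{1}}$ algorithm.

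The remaining case is the serial root, where $\Pi(V)$ consists of circular orderings of the $t$ CA-modules of $G$ in which every two CA-modules overlap. Since $t$ can be much larger than $k$, we cannot enumerate $\Pi(V)$. The key reduction is that only the $\Oh{k}$ CA-modules which meet some $C_i$ actually constrain the placement of the clique letters; each unconstrained CA-module can afterwards be slotted back into a pairwise-overlapping order without breaking the Helly property of any $C_i$. I would therefore enumerate circular orderings (with orientations) of the constrained CA-modules, giving $2^{\Oh{k\log k}}$ choices, and for each choice run the DP described above in $2^{\Oh{k}}n^{\Oh{1}}$ time. This yields Theorem~\ref{thm:helly-cliques}.\eqref{item:serial}. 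The main obstacle I expect is precisely this reduction to $\Oh{k}$ CA-modules at the serial root: one must argue that every partial solution on the constrained CA-modules can be extended to all $t$ of them while keeping every clique Helly, and this is exactly where the Trapezoid Lemma carries the geometric weight, guaranteeing that pairwise-intersecting trapezoids admit pairwise-intersecting interior chords spanned between their two parallel sides.
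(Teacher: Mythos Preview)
Your approach to part~\eqref{item:helly-cliques-prime-parallel} is close in spirit to the paper's, though the paper is more direct: since in the prime/parallel case every ambiguous clique is private (Theorem~\ref{thm:public_clique_classification}), one simply guesses for each $C_i$ the slot $S^{j_i}_i$ in which it sits (this is your ``one bit per clique''), and then Lemma~\ref{lem:conformal-models-with-C-in-fixed-slot} reduces feasibility to a \emph{per-node} check with no tree DP needed. Your invocation of the Trapezoid Lemma for parallel M-nodes and P-nodes is misplaced: parallel M-nodes have a unique admissible ordering (not free permutation), and P-node permutations never affect the Helly property of a private clique, since the clique letter lives inside a fixed slot. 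The Trapezoid Lemma plays no role in this case in the paper.

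The real gap is in part~\eqref{item:serial}. Your claim that ``only the $\Oh{k}$ CA-modules which meet some $C_i$ actually constrain the placement'' is false: $k$ is the number of cliques, not their total size, and a single clique $C_i$ can intersect $\Theta(|C_i|)$ distinct CA-modules, which is polynomial in $n$ but unbounded in $k$. Enumerating orderings of these CA-modules therefore does not give a $2^{\Oh{k\log k}}$ bound. The paper avoids this by enumerating the $(k{-}1)!$ circular orders of the \emph{clique letters} $C_1,\ldots,C_k$ (and a stabilizer position), not of CA-modules. For each such order it computes, for every CA-module $S$, two trapezoids $T_{S^0},T_{S^1}$ encoding where the chord $\phi_R(S)$ can sit, and builds a 2-SAT formula over variables $x_{S^0}$ whose clauses record (i) whether a $(\tau^A(S),\tau^B(S))$-admissible model for $\mathbb{S}$ exists and (ii) which pairs of trapezoids fail to intersect. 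The Trapezoid Lemma is what guarantees that any satisfying assignment of this 2-SAT can be realized by pairwise-intersecting chords, i.e.\ it lifts pairwise intersection of trapezoids to a simultaneous system of crossing segments. So the Trapezoid Lemma is used once, at the serial root, together with 2-SAT to handle all CA-modules in polynomial time per guessed clique order; it is not a device for P-nodes inside the DP.
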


Suppose $(G,C_1,\ldots,C_k)$ is an instance of \hcp. 
If some clique $C_i$ is always-non-Helly, we immediately conclude that we are dealing with \no-instance.
Also, if $C_i$ is always-Helly, then $(G,C_1,\ldots,C_k)$ is equivalent to $(G,C_1,\ldots,C_{i-1},C_{i+1},\ldots,C_k)$.
So, we assume all the cliques $C_1,\ldots,C_k$ are ambiguous.

Suppose $\pqmtree$ is the PQM-tree of $G = (V,E)$ and
$G_{ov}=(V,{\sim})$ is the overlap graph of $G$.
Since $C_1,\ldots,C_k$ are ambiguous, $C_1,\ldots,C_k$ satisfy properties~\ref{prop:clique-type-clean}--\ref{prop:clique-type-no-rigid-non-Helly-subclique}. 
So, we assume $C_i$ is clean, we have $C_i \subseteq Q_i$ for a component $Q_i$ of $G_{ov}$, and $C_i$ contains no rigid non-Helly subclique, for every $i \in [k]$.

We consider two cases depending on the type of the module $V$ in $\strongModules(G_{ov})$.

\subsection{$V$ is prime or parallel in $\strongModules(G_{ov})$}
\label{subsec:fpt-first-case}
Since $V$ is prime/parallel, every $Q$-node in $\pqmtree$ is prime.
Since $C_1,\ldots,C_k$ are ambiguous and $
V$ is prime, Theorem~\ref{thm:public_clique_classification} asserts that all the cliques $C_1,\ldots,C_k$ are private, 
and since they are ambiguous, all of them satisfy properties \ref{prop:clique-type-clean}--\ref{prop:clique-type-vertices-of-C-from-outside-D}. 
We assume $C_i$ is private for a CA-module $S_i$, for $i \in [k]$.

Our algorithm iterates over all tuples in the set $\{0,1\}^{[k]}$ and 
for every tuple $(j_1,\ldots,j_k) \in \{0,1\}^{[k]}$ it verifies whether
there exists a $\{C_1,\ldots,C_k\}$-conformal model $\phi$ of $G$ such that 
the clique letter $C_i$ is contained in the slot $S^{j_i}_i$ for $i \in [k]$.

Let $(j_1,\ldots,j_k)$ be a tuple in $\{0,1\}^{[k]}$.
By Lemma~\ref{lem:conformal-models-with-C-in-fixed-slot} there is a $\{C_1,\ldots,C_k\}$-conformal model of $G$ with $C_i$ in $S^{j_i}$ if and only if for every $i \in [k]$ and for every node $N$ affecting $C_i$ the ordering
$\phi_{|N}$ binds $C$ in $S^{j_i}$.
Hence, our task is to check whether for every prime/serial node $N$ in $\pqmtree$ there is an ordering $\pi \in \Pi(N)$ such that for every $i \in [k]$
the ordering $\pi$ binds $C_i$ in $S^{j_i}$ whenever $N$ affects~$C_i$.

This is easy when $N$ is prime: ordering $\pi$ does not exist if and only if there are two cliques $C_{i_1}, C_{i_2} \in \{C_1,\ldots,C_k\}$ affected by $N$ and there is no ordering in $\Pi(N)$ which binds $C_{i_1}$ in $S^{j_{i_1}}$ and $C_{i_2}$ in $S^{j_{i_2}}$.

If $N$ is serial, then $N$ is an M-node as we are in the case when $V$ is prime or parallel. 
Suppose $N \subseteq S$ for a CA-module $S \in \camodules$.
Recall that the orderings $\pi$ in $\Pi(N)$ are in the correspondence with the linear orderings ${\prec_N}$ 
of the children of $N$ in $\pqmtree_S$.
Now, note that each constraint of the form \emph{$\pi$ binds $C_i$ in the slot $S^{i}$} enforces ${\prec_N}$-relation between some children of $N$ (see Lemmas \ref{lem:deepest-owner-properties}--\ref{lem:M-node-owner-properties}).
Eventually, the question of whether there exists $\pi \in \Pi(N)$ which binds $C_i$ in $S^{j_i}$ for every $C_i$ affected by $N$ is equivalent to testing whether a digraph $(N, {\prec_N})$ consisting of the ${\prec_N}$-edges enforced by the binding constraints, is acyclic.
Clearly, this condition can be tested in polynomial-time.

This completes the proof of Theorem \ref{thm:helly-cliques}.\eqref{item:helly-cliques-prime-parallel}.

\subsection{$V$ is serial in $\strongModules(G_{ov})$}
First observe that, contrary to the previous case, 
we can not assume that all the input cliques $C_1,\ldots,C_k$ are private;
indeed, Theorem~\ref{thm:public_clique_classification} asserts that a public clique $C_i$ intersecting at least three CA-modules of $G$ is ambiguous.
So, in this case we construct an algorithm which exploits PQM-tree of $G$ and the Trapezoid Lemma.

First, for every $S \in \camodules$ we let
$$
\begin{array}{rcl}
\cliques(S) & = & \{C_i: C_i \cap S \neq \emptyset \}, \\
\priv(S) &=& \{C_i: C_i \text{ is private for }S\}.
\end{array}
$$
Clearly, $C_i$ is public if and only if $C_i$ is contained in no set $\priv(S)$ for every $S \in \camodules$
and $C_i$ is private if and only if $C_i$ is contained in exactly one set $\priv(S)$ for some $S \in \camodules$.

A circular ordering $C_{i_1}\ldots C_{i_k}$ of the clique set $\{C_1,\ldots,C_k\}$ is \emph{good} if there exists a $\{C_1,\ldots,C_k\}$-conformal model $\phi$ of $G$ 
such that $$\phi|\{C_1,\ldots,C_k\} \equiv C_{i_1} \ldots C_{i_k}.$$
Model $\phi$ satisfying the above property is called \emph{$C_{i_1} \ldots C_{i_k}$-conformal}.

Let $\phi$ be a $C_{i_1}\ldots C_{i_k}$-conformal model of $G$.
Note that we can always extend the model $\phi$ by an oriented chord $s^0s^1$, called a \emph{stabilizer}, which satisfies the property:
\begin{itemize}
 \item For every $S \in \camodules$ the chord $s^0s^1$ has the slots $\phi|S^0$ and $\phi|S^1$ on its different sides.
\end{itemize}
Clearly, such an extension of $\phi$ is always possible:
for example, we may pick any $S \in \camodules$ and add the chord $s^0s^1$ such that $s^0$ is just before the slot $\phi|S^0$ and $s^1$ is just before the slot $\phi|S^1$.
Usually we draw the extended models $\phi$ on two parallel lines $A$ and $B$, with $A$ above $B$, putting the points of $\phi$ strictly between $s^0$ and $s^1$ on the line $A$, and the points of $\phi$ strictly between $s^1$ and $s^0$ on the line $B$,
and drawing the chord $s^0s^1$ in parallel and in between $A$ and $B$.
See Figure~\ref{fig:extended-model} for an illustration.

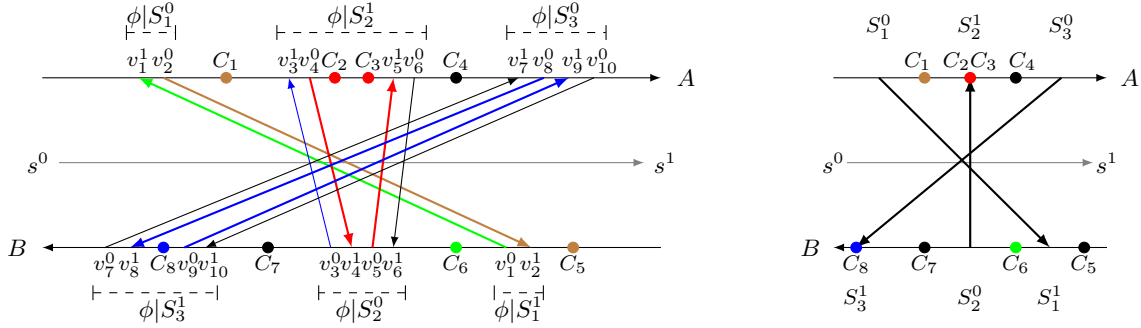
\begin{figure}[htp!]
\centering
\begin{tikzpicture}[yscale=0.75,xscale=1.1,>=latex]
\coordinate (label) at (1.5,-1) {};

\coordinate (a1) at (-2.55,3) {};
\coordinate (a2) at (-2.25,3) {};

\draw[|-|,dashed] (-2.7,3.8)--(-2.1,3.8);
\coordinate (laS1) at (-2.4,4.1) {};

\coordinate (a3) at (-1.5,3) {};

\coordinate (a4) at (-0.75,3) {};
\coordinate (a5) at (-0.5,3) {};
\coordinate (a6) at (-0.2,3) {};
\coordinate (a7) at (0.2,3) {};
\coordinate (a8) at (0.5,3) {};
\coordinate (a9) at (0.75,3) {};

\draw[|-|,dashed] (-0.9,3.8)--(0.9,3.8);
\coordinate (laS2) at (0,4.1) {};

\coordinate (a10) at (1.25,3) {};

\coordinate (a11) at (2.,3) {};
\coordinate (a12) at (2.3,3) {};
\coordinate (a13) at (2.6,3) {};
\coordinate (a14) at (2.9,3) {};

\draw[|-|,dashed] (1.85,3.8)--(3.05,3.8);
\coordinate (laS3) at (2.45,4.1) {};

\coordinate (b1) at (2.65,0) {};
\coordinate (b2) at (2.15,0) {};
\coordinate (b3) at (1.85,0) {};

\draw[|-|,dashed] (2.3,-.8)--(1.7,-.8);
\coordinate (lbS1) at (2,-1.1) {};

\coordinate (b4) at (1.25,0) {};

\coordinate (b5) at (0.5,0) {};
\coordinate (b6) at (0.25,0) {};
\coordinate (b7) at (0.,0) {};
\coordinate (b8) at (-0.25,0) {};

\draw[|-|,dashed] (0.65,-.8)--(-0.4,-.8);
\coordinate (lbS2) at (0.1,-1.1) {};

\coordinate (b9) at (-1,0) {};

\coordinate (b10) at (-1.75,0) {};
\coordinate (b11) at (-2,0) {};

\coordinate (b12) at (-2.25,0) {};

\coordinate (b13) at (-2.65,0) {};
\coordinate (b14) at (-2.95,0) {};

\draw[|-|,dashed] (-3.1,-.8)--(-1.6,-.8);
\coordinate (lbS3) at (-2.25,-1.1) {};


\coordinate (la1) at (-2.55,3.3) {};
\coordinate (la2) at (-2.25,3.3) {};
\coordinate (la3) at (-1.5,3.3) {};

\coordinate (la4) at (-0.75,3.3) {};
\coordinate (la5) at (-0.5,3.3) {};
\coordinate (la6) at (-0.2,3.3) {};
\coordinate (la7) at (0.2,3.3) {};

\coordinate (la8) at (0.5,3.3) {};
\coordinate (la9) at (0.75,3.3) {};
\coordinate (la10) at (1.25,3.3) {};

\coordinate (la11) at (2.,3.3) {};
\coordinate (la12) at (2.3,3.3) {};
\coordinate (la13) at (2.65,3.3) {};
\coordinate (la14) at (3,3.3) {};

\coordinate (lb1) at (2.65,-0.3) {};
\coordinate (lb2) at (2.15,-0.3) {};
\coordinate (lb3) at (1.85,-0.3) {};

\coordinate (lb4) at (1.25,-0.3) {};

\coordinate (lb5) at (0.5,-0.3) {};
\coordinate (lb6) at (0.25,-0.3) {};
\coordinate (lb7) at (0.,-0.3) {};
\coordinate (lb8) at (-0.25,-0.3) {};

\coordinate (lb9) at (-1,-0.3) {};

\coordinate (lb10) at (-1.65,-0.3) {};
\coordinate (lb11) at (-1.95,-0.3) {};

\coordinate (lb12) at (-2.25,-0.3) {};

\coordinate (lb13) at (-2.65,-0.3) {};
\coordinate (lb14) at (-2.95,-0.3) {};

\coordinate (phiA) at (4,3);
\coordinate (phiB) at (-4,0);
\coordinate (ls0) at (-3.75,1.5);
\coordinate (ls1) at (3.75,1.5);

\draw[->] (-3.7,3) -- (3.7,3);
\draw[<-] (-3.7,0) -- (3.7,0);
\draw[->,gray] (-3.5,1.5) -- (3.5,1.5);

\draw[<-,thick,green] (a1)--(b3);
\draw[->,thick,brown] (a2)--(b2);

\draw[<-,blue] (a4)--(b8);
\draw[->,red,thick] (a5)--(b7);
\draw[<-,red,thick] (a8)--(b6);
\draw[->,black] (a9)--(b5);

\draw[<-,black] (a11)--(b14);
\draw[->,blue,thick] (a12)--(b13);
\draw[<-,blue,thick] (a13)--(b11);
\draw[->,black] (a14)--(b10);

\tikzstyle{every node}=[circle,minimum size=4pt,inner sep=0pt,draw,fill]
\node[brown] at (a3) {};
\node[red] at (a6) {};
\node[red] at (a7) {};
\node at (a10) {};

\node[brown] at (b1) {};
\node[green] at (b4) {};
\node at (b9) {};
\node[blue] at (b12) {};

\tikzstyle{every node}=[inner sep=1pt]
\begin{scriptsize}
\node at (phiA) {$A$};
\node at (phiB) {$B$};
\node at (ls0) {$s^0$};
\node at (ls1) {$s^1$};

\node at (laS1) {$\phi|S^0_1$};
\node at (laS2) {$\phi|S^1_2$};
\node at (laS3) {$\phi|S^0_3$};

\node at (lbS1) {$\phi|S^1_1$};
\node at (lbS2) {$\phi|S^0_2$};
\node at (lbS3) {$\phi|S^1_3$};

\end{scriptsize}

\begin{tiny}
\node at (la1) {$v^1_1$};
\node at (la2) {$v^0_2$};
\node at (la3) {$C_1$};
\node at (la4) {$v^1_3$};
\node at (la5) {$v^0_4$};
\node at (la6) {$C_2$};
\node at (la7) {$C_3$};
\node at (la8) {$v^1_5$};
\node at (la9) {$v^0_6$};
\node at (la10) {$C_4$};
\node at (la11) {$v^1_7$};
\node at (la12) {$v^0_8$};
\node at (la13) {$v^1_9$};
\node at (la14) {$v^0_{10}$};

\node at (lb1) {$C_5$};
\node at (lb2) {$v^1_2$};
\node at (lb3) {$v^0_1$};
\node at (lb4) {$C_6$};
\node at (lb5) {$v^1_6$};
\node at (lb6) {$v^0_5$};
\node at (lb7) {$v^1_4$};
\node at (lb8) {$v^0_3$};
\node at (lb9) {$C_7$};
\node at (lb10) {$v^1_{10}$};
\node at (lb11) {$v^0_9$};
\node at (lb12) {$C_8$};
\node at (lb13) {$v^1_{8}$};
\node at (lb14) {$v^0_7$};
\end{tiny}
\draw[white] (-4.5,0)--(-4.5,-1.5);
\draw[white] (4.5,3)--(4.5,4.5);

\end{tikzpicture}
\hspace{0.5cm}
\begin{tikzpicture}[yscale=0.75,xscale=0.6,>=latex]
\coordinate (label) at (1.5,-1) {};

\coordinate (a1) at (1,3) {};
\coordinate (a2) at (2,3) {};
\coordinate (a3) at (3,3) {};
\coordinate (a4) at (4,3) {};
\coordinate (a5) at (5,3) {};

\coordinate (b1) at (5.5,0) {};
\coordinate (b2) at (4.75,0) {};
\coordinate (b3) at (4,0) {};
\coordinate (b4) at (3,0) {};
\coordinate (b5) at (2,0) {};
\coordinate (b6) at (0.5,0) {};

\coordinate (la1) at (1,3.3) {};
\coordinate (la2) at (1.85,3.3) {};
\coordinate (la3) at (3,3.3) {};
\coordinate (la4) at (4.15,3.3) {};
\coordinate (la5) at (5,3.3) {};

\coordinate (laa1) at (1,3.9) {};
\coordinate (laa2) at (1.85,3.9) {};
\coordinate (laa3) at (3,3.9) {};
\coordinate (laa4) at (4.15,3.9) {};
\coordinate (laa5) at (5,3.9) {};

\coordinate (lb1) at (5.5,-0.3) {};
\coordinate (lb2) at (4.75,-0.3) {};
\coordinate (lb3) at (4,-0.3) {};
\coordinate (lb4) at (3,-0.3) {};
\coordinate (lb5) at (2,-0.3) {};
\coordinate (lb6) at (0.5,-0.3) {};

\coordinate (lbb1) at (5.5,-0.9) {};
\coordinate (lbb2) at (4.75,-0.9) {};
\coordinate (lbb3) at (4,-0.9) {};
\coordinate (lbb4) at (3,-0.9) {};
\coordinate (lbb5) at (2,-0.9) {};
\coordinate (lbb6) at (0.5,-0.9) {};

\coordinate (phiA) at (6.5,3);
\coordinate (phiB) at (-0.5,0);
\coordinate (ls0) at (0,1.5);
\coordinate (ls1) at (6,1.5);

\draw[->] (0,3) -- (6,3);
\draw[<-] (0,0) -- (6,0);
\draw[->,gray] (0.3,1.5) -- (5.7,1.5);

\draw[->,thick] (a1)--(b2);
\draw[<-,thick] (a3)--(b4);
\draw[->,thick] (a5)--(b6);

\tikzstyle{every node}=[circle,minimum size=4pt,inner sep=0pt,draw,fill]
\node[brown] at (a2) {};
\node[red] at (a3) {};
\node at (a4) {};

\node at (b1) {};
\node[green] at (b3) {};
\node at (b5) {};
\node[blue] at (b6) {};

\tikzstyle{every node}=[inner sep=1pt]
\begin{scriptsize}
\node at (phiA) {$A$};
\node at (phiB) {$B$};
\node at (ls0) {$s^0$};
\node at (ls1) {$s^1$};



\end{scriptsize}

\begin{tiny}
\node at (laa1) {$S^0_1$};
\node at (la2) {$C_1$};
\node at (la3) {$C_2C_3$};
\node at (laa3) {$S^1_2$};
\node at (la4) {$C_4$};
\node at (laa5) {$S^0_3$};

\node at (lb1) {$C_5$};
\node at (lbb2) {$S^1_1$};
\node at (lb3) {$C_6$};
\node at (lbb4) {$S^0_2$};
\node at (lb5) {$C_7$};
\node at (lb6) {$C_8$};
\node at (lbb6) {$S^1_3$};
\end{tiny}
\draw[white] (-1,0)--(-1,-1.5);
\draw[white] (7,3)--(7,4.5);

\end{tikzpicture}

\caption{
\label{fig:extended-model}
An extended model $\phi$ and its skeleton $\phi^R$.
We have $v_1 \in C_6$, $v_2 \in C_1 \cap C_5$, $v_3 \in C_8$, 
$v_4, v_5 \in C_2 \cap C_3$. 
We have $\tau^A = C_1C_2C_3C_4$ and $\tau^B=C_5C_6C_7C_8$, $\tau^A(S_2) = C_2C_3$, $\tau^B(S_2) = \emptyset$, where $\emptyset$ denotes an empty word,
$\priv^A(S_2) = \{C_2,C_3\}$ and $\priv^B(S_2) = \emptyset$.
The model $(\phi|S^0_2, \phi|S^1_2) = (v_6^1v_5^0v_4^1v_3^0, v_3^1v_4^0C_2C_3v_5^1v_6^0)$ is $(\emptyset,C_2C_3)$-admissible for~$\SSS_2$.
We have $\mu^A = C_1 \{C_2,C_3\}C_4$ and $\mu^B = C_5C_6C_7C_8$, red dot is the clique-point of 
$C_2$ and $C_3$, green dot is the clique-point of $C_6$.
}
\end{figure}


Let $\phi$ be a $C_{i_1}\ldots C_{i_k}$-conformal model of $G$ extended by a stabilizer $s^0s^1$.
Let $\phi^{A}$ and $\phi^B$ be the contiguous subwords of $\phi$ 
containing the letters of $\phi$ strictly between $s^0$ and $s^1$ and 
strictly between $s^1$ and $s^0$, respectively.
Let $\tau(\phi), \tau^A, \tau^B$ be the restriction of $\phi, \phi^A, \phi^B$ 
to the set $\{C_1,\ldots,C_k\} \cup \{s^0,s^1\}$, respectively.
Also, for $S \in \camodules$ let $\tau^{A}(S)$ and $\tau^{B}(S)$
be the subwords of $\tau^A$ and $\tau^B$ containing the cliques from $\priv(S)$.
Note that $\tau^{A}(S)$ and $\tau^{B}(S)$, if non-empty, are contiguous in $\tau^A$ and $\tau^B$, respectively.
Next, let $\priv^A(S)$ and $\priv^B(S)$ be the cliques occurring in the words $\tau^A(S)$ and $\tau^B(S)$, respectively.
Note that $\priv(S) = \priv^A(S) \cup \priv^B(S)$.
Finally, let $\mu^A$ ($\mu^B$) be a word obtained from 
$\tau^A$ ($\tau^B$, respectively) by replacing every non-empty word $\tau^A(S)$ ($\tau^B(S)$, respectively) by the set $\priv^A(S)$ ($\priv^B(S)$, respectively), for $S \in \camodules$. 
See Figure~\ref{fig:extended-model} for an illustration.

Now, since $\phi$ is $\{C_1,\ldots,C_k\}$-conformal, the following property holds:
\begin{description}
 \item [\namedlabel{prop:P:admissible-models-for-CA-modules}{(P1)}] 
 For every $S \in \camodules$:
 \begin{itemize}
 \item If $\phi|S^0$ is above $s^0s^1$, then 
 $(\phi|S^0,\phi|S^1)$ is an \emph{$(\tau^A(S),\tau^B(S))$-admissible model for~$\SSS$}, which means that $(\phi|S^0,\phi|S^1)$ restricted to $S^*$ is admissible for~$\SSS$, restricted to $\Priv(S)$ coincides with $(\tau^A(S),\tau^B(S))$, and for every $C \in \Priv(S)$ and $v \in C \cap S$ the clique $C$ is on the left side of the chord $\phi(v)$,
 \item If $\phi|S^1$ is above $s^0s^1$, then 
 $(\phi|S^0,\phi|S^1)$ is an \emph{$\big{(}\tau^B(S),\tau^A(S)\big{)}$-admissible model for~$\SSS$}.
 \end{itemize}
\end{description}

Next, let $\phi_R$ denote a circular word that arises from $\phi$ by replacing, for every $S \in \camodules$, 
the subwords $\phi|S^0$ and $\phi|S^1$ by the letters $S^0$ and $S^1$, respectively.
The circular word $\phi_R$ obtained this way is called the \emph{skeleton} of $\phi$.
Clearly,
\begin{description}
 \item [\namedlabel{prop:P:chords_of_skeleton_are_pairwise_intersecting}{(P2)}] The chords $\phi_R(S)$ for $S \in \camodules$ are pairwise intersecting and all of them intersect the stabilizer $s^0s^1$.
\end{description}
See Figure~\ref{fig:extended-model} to the right for an illustration.

Now, we are going to describe some geometric properties of the skeleton $\phi_R$.
For this purpose, by slightly abusing our notation we represent every clique $C_i$ from $\{C_1,\ldots,C_k\}$ as the \emph{clique-point $C_i$} on the line $A$ or $B$, as follows.
We represent the letters from $\mu^A$ as points on the line $A$ and the letters from $\mu^B$ as points on the line $B$.
Then, the clique-point $C_i$ coincides with:
\begin{itemize}
\item the point $\priv^A(S)$ if $C_i \in \priv^A(S)$ for some $S \in \camodules$,
\item the point $\priv^B(S)$ if $C_i \in \priv^B(S)$ for some $S \in \camodules$,
\item the point $C_i$, otherwise.
\end{itemize}
See Figure~\ref{fig:extended-model} to the right for an illustration.

Now, note that we can easily assert the following properties of the oriented chord $\phi_R(S)$ for every $S \in \camodules$:
\begin{description}
 \item [\namedlabel{prop:T:trapezoids-not-private-cliques}{(T1)}]
For every $C_i \in \big{(}\cliques(S) \setminus \priv(S)\big{)}$, if the chords from $S \cap C$ are $(S^0,S^1)$-oriented ($(S^1,S^0)$-oriented), then the oriented chord $\phi_R(S)$ has the clique-point $C_i$ on its left side (right side, respectively).
\end{description}
\begin{description}
 \item [\namedlabel{prop:T:trapezoids-private-cliques}{(T2)}] If $\priv^A(S) \neq \emptyset$, then the upper tip of $\phi_R(S)$ coincides with the point $\priv^A(S)$. 
 If $\priv^B(S) \neq \emptyset$, then the lower tip of $\phi_R(S)$ coincides with the point $\priv^B(S)$.
\end{description}
Now let us consider the region where the oriented chord $\phi_R(S)$ can be placed in order to satisfy properties \ref{prop:T:trapezoids-not-private-cliques}--\ref{prop:T:trapezoids-private-cliques}.
Clearly, such region depends solely on the orientation of $\phi_R(S)$ but
in both cases it forms a trapezoid (possibly empty) with one base
(forming a possibly degenerated interval, open or closed on each side) in $A$ and the second one in $B$; we call such a trapezoid as \emph{spanned between $A$ and $B$}.  
Let $T_{S^0}$ ($T_{S^1}$) denote the largest trapezoid spanned between $A$ and $B$ such that
whenever $\phi_R(S)$ is downward oriented chord contained in $T_{S^0}$ (upward oriented chord contained in $T_{S^1}$, respectively), then $\phi_R(S)$ satisfies properties~\ref{prop:T:trapezoids-not-private-cliques}--\ref{prop:T:trapezoids-private-cliques}. 
So, since $\phi_R(S)$ satisfies properties~\ref{prop:T:trapezoids-not-private-cliques}--\ref{prop:T:trapezoids-private-cliques}, the following holds:
\begin{description}
 \item [\namedlabel{prop:P:chords_in_trapezoids}{(P3)}]
For every $S \in \camodules$:
\begin{itemize}
\item If $\phi|S^0$ is above $s^0s^1$, then $\phi_R(S) \subseteq T_{S^0}$.
\item If $\phi|S^0$ is below $s^0s^1$, then $\phi_R(S) \subseteq T_{S^1}$.
\end{itemize}
\end{description}
See Figure~\ref{fig:trapezoids} which shows trapezoids for slots shown in Figure~\ref{fig:extended-model}.

\begin{figure}[htp!]

\begin{tikzpicture}[scale=1,>=latex,shorten >=-0.4pt,shorten <=-0.4pt]

\draw[] (-1, 3) -- (3, 3);
\draw[] (-1, 0) -- (3, 0);
\draw[fill=gray!30, draw=none, opacity=0.5] (-0.8,3) -- (0.3,3) -- (2,0) -- (0,0) -- cycle;

\draw[fill=gray!30, draw=none, opacity=0.5] (1.3,3) -- (2.4,3) -- (2.8,0) -- (1.3,0) -- cycle;

\node at (0.3, 1.5) {$T_i$};
\node at (2, 1.5) {$T_j$};
\node at (1, -1) {$T_i <_A T_j$};

\end{tikzpicture}
\hspace{3cm}
\begin{tikzpicture}[scale=1,>=latex,shorten >=-0.4pt,shorten <=-0.4pt]

\draw[] (-1, 3) -- (3, 3);
\draw[] (-1, 0) -- (3, 0);
\draw[fill=gray!30, draw=none, opacity=0.5] (-0.8,3) -- (2.2,3) -- (2.7,0) -- (2,0) -- cycle;

\draw[fill=gray!30, draw=none, opacity=0.5] (0.8,3) -- (1.5,3) -- (1,0) -- (-0.7,0) -- cycle;

\node at (2, 1.5) {$T_i$};
\node at (0.2, 0.7) {$T_j$};
\node at (1, -1) {$T_i <_B T_j$};

\end{tikzpicture}
\caption{
\label{fig:trapezoids_relation}
Examples of trapezoids in the relations ${<_A}$ and ${<_B}$.
}
\end{figure}

Our algorithm iterates over all circular orders of the clique set $\{C_1,\ldots,C_k\}$ and 
for every circular order $C_{i_1} \ldots C_{i_k}$ it decides, in polynomial time, whether $C_{i_1}\ldots C_{i_k}$ is good.
Clearly, we accept $(G,C_1,\ldots,C_k)$ if at least one of the circular orders is good.
To test whether $C_{i_1}\ldots C_{i_k}$ is good, 
we iterate over all extensions of the circular word $C_{i_1},\ldots,C_{i_k}$ by the letters
$\{s^0,s^1\}$ and for each such extension $\tau$ we check whether there is an extended conformal model $\phi$ of 
$G$ such that $\tau = \tau(\phi)$.
Given~$\tau$, the algorithm computes the word~$\mu$, which defines the clique-point $C_i$ for every 
$C_i \in \{C_1,\ldots,C_k\}$, 
and the trapezoids $T_{S^0}$ and~$T_{S^1}$ for every $S \in \camodules$, 
exactly as we have computed~$\mu$ and $T_{S^0}$ and~$T_{S^1}$ from the word~$\tau(\phi)$.
Also, given $\tau$, we compute the sets $\priv^A(S)$, $\priv^B(S)$, and the words 
$\tau^A(S)$ and $\tau^B(S)$ for $S \in \camodules$.
Clearly, we reject $\tau$ if for some $S \in \camodules$ the cliques from $\priv^A(S)$ or from $\priv^B(S)$ are not consecutive in $\tau$. 

Our main task is to check whether there exits an extended $C_{i_1}\ldots C_{i_k}$-conformal model~$\phi$ of $G$ such that $\tau(\phi) = \tau$.
The most important question that arises when trying to construct such a model 
is whether we should place the slot $S^0$ above or below the stabilizer, for every $S \in \camodules$.
Clearly, we cannot check all possibilities as there can be exponentially many of them.
To avoid this problem, we construct a 2-SAT formula~$\Phi$, with variable $x_{S^0}$ for every CA-module $S \in \camodules$,
which has the property that the sets of true variables in the assignments satisfying $\Phi$
correspond to the sets of CA-modules~$S$ with the slot $S^0$ above the stabilizer in the conformal models $\phi$ of $G$ extending $\tau$.
Formally, we will construct $\Phi$ such that the following lemma is satisfied.
\begin{lemma}\ 
\label{lem:Phi-formula}
\begin{enumerate}
\item \label{item:Phi-formula-model-implies-assignment} For every $C_{i_1}\ldots C_{i_k}$-conformal model $\phi$ of $G$ extending $\tau$ an assignment $\alpha_{\phi}$
 given by
 $$\alpha_{\phi}(x_{S^0}) = 
 \left\{
 \begin{array}{lcl}
\text{T}& \text{if}& \text{$\phi|S^0$ is above $s^0s^1$} \\
\text{F}& \text{if}& \text{$\phi|S^0$ is below $s^0s^1$} \\
\end{array}
\right.
\quad \text{for $S \in \camodules$,}
$$
makes the formula $\Phi$ true.
\item \label{item:Phi-formula-assignment-implies-model} For every assignment $\alpha$ satisfying $\Phi$ 
there exists an extended $C_{i_1}\ldots C_{i_k}$-model $\phi$ of $G$ such that $\tau(\phi) = \tau$ and $\alpha_{\phi} = \alpha$.
\end{enumerate}
\end{lemma}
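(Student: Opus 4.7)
The plan is to build $\Phi$ as a 2-CNF encoding two layers of constraints: a single-module layer asserting that each CA-module can be realized in the chosen orientation together with its private-clique extensions, and a pairwise layer asserting that each pair of skeleton chords can be made to intersect. The Trapezoid Lemma will then bridge pairwise intersection compatibility to the simultaneous existence of the whole family of pairwise intersecting chords, which is the step that cannot be captured by any local 2-SAT condition on its own.

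For every $S \in \camodules$ I will first compute the trapezoids $T_{S^0}$ and $T_{S^1}$ induced by $\tau$ and then test, for each $j\in\{0,1\}$, whether $T_{S^j}$ is non-empty and whether the metachord $\SSS$ admits an admissible model compatible with $(\tau^A(S),\tau^B(S))$ if $j=0$ or with $(\tau^B(S),\tau^A(S))$ if $j=1$; both tests are local to $\pqmtree_S$ and run in polynomial time by the same kind of orientation analysis used in Section~\ref{sec:clique-type-problem}. If only $j=0$ (respectively $j=1$) is realizable in isolation, I add the unit clause $x_{S^0}$ (respectively $\lnot x_{S^0}$); if neither is, $\Phi$ is unsatisfiable, which is consistent with the absence of any conformal model extending $\tau$. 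For every pair $\{S_1,S_2\}\subseteq\camodules$ and every $(j_1,j_2)\in\{0,1\}^2$ I will then decide whether suitably oriented chords can simultaneously be placed inside $T_{S_1^{j_1}}$ and $T_{S_2^{j_2}}$ so that they cross; any failing combination contributes a 2-clause forbidding it. This completes the definition of $\Phi$, whose size is $O(|\camodules|^2)$.

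For statement (1), given a $C_{i_1}\ldots C_{i_k}$-conformal model $\phi$ extending $\tau$, property \ref{prop:P:admissible-models-for-CA-modules} witnesses that each feasibility clause is satisfied by $\alpha_\phi$, while properties \ref{prop:P:chords_of_skeleton_are_pairwise_intersecting} and \ref{prop:P:chords_in_trapezoids} witness that every pairwise clause is satisfied. For statement (2), suppose $\alpha$ satisfies $\Phi$ and denote by $j_S\in\{0,1\}$ the orientation selected by $\alpha$ for $S$. The feasibility witnesses fix an admissible model $\tau_S$ for $\SSS$ respecting the ordering of $\priv(S)$ prescribed by $\tau$; the pairwise clauses guarantee that any two trapezoids $T_{S_1^{j_{S_1}}}$ and $T_{S_2^{j_{S_2}}}$ admit a pair of crossing oriented chords. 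Applying the Trapezoid Lemma to $\{T_{S^{j_S}}\}_{S\in\camodules}$ produces oriented chords $\phi_R(S)$ that are pairwise intersecting and, by the very construction of the trapezoids, automatically satisfy \ref{prop:T:trapezoids-not-private-cliques} and \ref{prop:T:trapezoids-private-cliques}. Substituting each skeleton chord $\phi_R(S)$ by the word of $\tau_S$ — placing the component attached to the upper tip above $s^0s^1$ and the other below — yields a word $\phi$ which is an extended $C_{i_1}\ldots C_{i_k}$-conformal model of $G$ with $\tau(\phi)=\tau$ and $\alpha_\phi=\alpha$. The main obstacle is precisely this pairwise-to-global step in (2): 2-SAT records only pairwise geometric compatibility, and the Trapezoid Lemma is exactly the tool that upgrades this pairwise information into a globally realizing family of chords; without it the 2-CNF expressiveness would not suffice to capture realizability of the skeleton.
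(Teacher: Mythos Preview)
Your proposal is correct and follows essentially the same approach as the paper: the paper likewise defines $\Phi$ with unit clauses encoding single-module feasibility (existence of a $(\tau^A(S),\tau^B(S))$- or $(\tau^B(S),\tau^A(S))$-admissible model for $\SSS$) and $2$-clauses encoding pairwise trapezoid compatibility, then proves (1) directly from properties \ref{prop:P:admissible-models-for-CA-modules}--\ref{prop:P:chords_in_trapezoids} and (2) by invoking the Trapezoid Lemma to globalize the pairwise intersections before substituting the admissible models for each metachord. The only cosmetic difference is that the paper's $2$-clauses test bare trapezoid intersection $T_{S_1^{j_1}}\cap T_{S_2^{j_2}}\neq\emptyset$ and then argues separately that intersecting trapezoids in this setting are automatically nicely intersecting, whereas you build nice intersection into the clause test directly.
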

\noindent 
We first we define the clauses of $\Phi$ and then we prove that $\Phi$ satisfies the properties of Lemma~\ref{lem:Phi-formula}.

The clauses of $\Phi$ are defined as follows:
\begin{itemize}
\item For every $S \in \camodules$:
\begin{itemize}
 \item If there exists no $(\tau^A(S),\tau^B(S))$-admissible model for $\SSS$, 
 we add the clause $\lnot x_{S^0}$ to~$\Phi$.
 \item If there exists no $(\tau^B(S),\tau^A(S))$-admissible model for $\SSS$, 
 we add the clause $x_{S^0}$ to~$\Phi$.
\end{itemize}
We discuss how to test whether there is an $(\tau^A(S),\tau^B(S))$-admissible model ($(\tau^B(S),\tau^A(S))$-admissible) for $\SSS$ in the last paragraph of this section.
\item For every two distinct $S_1,S_2 \in \camodules$:
\begin{itemize}
 \item If $T_{S^0_1} \cap T_{S^0_2} = \emptyset$, we add the clause $\lnot (x_{S^0_1} \land x_{S^0_2})$ to $\Phi$.
 \item If $T_{S^0_1} \cap T_{S^1_2} = \emptyset$, we add the clause $\lnot (x_{S^0_1} \land \lnot x_{S^0_2})$ to $\Phi$.
 \item If $T_{S^1_1} \cap T_{S^0_2} = \emptyset$, we add the clause $\lnot ( \lnot x_{S^0_1} \land x_{S^0_2})$ to $\Phi$.
 \item If $T_{S^1_1} \cap T_{S^1_2} = \emptyset$, we add the clause $\lnot (\lnot x_{S^0_1} \land \lnot x_{S^0_2})$ to $\Phi$.
\end{itemize}
\end{itemize}

In the proof of Lemma~\ref{lem:Phi-formula} we use so-called \emph{Trapezoid Lemma}, whose proof (purely geometrical) can be found in the Appendix.
To state Trapezoid Lemma, we need a short definition.
We say that two trapezoids $T_1$ and $T_2$ spanned between $A$ and $B$ \emph{nicely intersect} if 
there is a segment $t_1$ spanned between $A$ and $B$ and contained in $T_1$ and a segment $t_2$ spanned between $A$ and $B$ and contained in $T_2$ such that $t_1$ and $t_2$ are intersecting strictly between the lines $A$ and $B$.
The Trapezoid Lemma asserts that for every family of pairwise nicely intersecting trapezoids one can pick a segment spanned between $A$ and $B$ from each trapezoid such that all the picked segments have different endpoints and are pairwise intersecting.

\begin{proof}[Proof of Lemma~\ref{lem:Phi-formula}]
Suppose $\phi$ is an extended $(C_{i_1},\ldots,C_{i_k})$-conformal model of $G$ such that $\tau(\phi) = \tau$ 
and suppose $\phi_R$ is the skeleton of $\phi$ satisfying properties \ref{prop:P:admissible-models-for-CA-modules}--\ref{prop:P:chords_in_trapezoids}.
We will show that the assignment $\alpha_{\phi}$ satisfies formula $\Phi$.

Let $S \in \camodules$. 
Suppose $x_{S^0}=T$ in $\alpha_{\phi}$, which means $\phi|S^0$ is above $s^0s^1$. 
Then, property \ref{prop:P:admissible-models-for-CA-modules} asserts $(\phi|S^0,\phi|S^1)$ is an $(\tau^A(S),\tau^B(S))$-admissible model of $\SSS$, 
and hence we do not have the clause $\lnot x_{S^0}$ in $\Phi$.
If $x_{S^0}=F$ in $\alpha_{\phi}$, which means $\phi|S^0$ is below $s^0s^1$, 
then $(\phi|S^0,\phi|S^1)$ is an $(\tau^B(S),\tau^A(S))$-admissible model of $\SSS$, 
and hence we do not have the clause $x_{S^0}$ in $\Phi$.
This proves $\alpha_{\phi}$ satisfies all one-literal clauses in $\Phi$.

Let $S_1,S_2$ be two distinct CA-modules from $\camodules$.
Suppose $x_{S_1^0}=x_{S_2^0}=T$ in $\alpha_{\phi}$, which means $\phi|S_1^0$ and $\phi|S_2^0$ are above $s^0s^1$ in $\phi$. 
Clearly, we have $\phi_R(S_1) \subseteq T_{S^0_1}$ and $\phi_R(S_2) \subseteq T_{S^0_2}$ by property \ref{prop:P:chords_in_trapezoids},
and hence $T_{S_1^0} \cap T_{S_2^0} \neq \emptyset$ as $\phi_R(S_1)$ and $\phi_R(S_2)$ are intersecting by property~\ref{prop:P:chords_of_skeleton_are_pairwise_intersecting}.
In particular, we do not have the clause $\lnot (x_{S^0_1} \land x_{S^0_2})$ in $\Phi$.
The remaining cases are proven analogously.
This shows $\alpha_{\phi}$ satisfies all two-literal clauses in $\Phi$ and hence satisfies $\Phi$.
This completes the proof of statement~\eqref{item:Phi-formula-model-implies-assignment}.

Now, suppose $\alpha$ is an assignment that satisfies $\Phi$.
For every $S \in \camodules$ let 
$$S^{A} = 
\left\{
\begin{array}{cl}
S^0& \text{if $x_{S^0} = T$,} \\
S^1& \text{if $x_{S^0} = F$.} 
\end{array}
\right.
$$
and let $S^{B}$ be such that $\{S^A,S^B\} = \{S^0,S^1\}$.
We will show that there is a $C_{i_1} \ldots C_{i_k}$-admissible model $\phi$ of $G$ extending $\tau$
such that $\phi|S^A$ is above $s^0s^1$ for every $S \in \camodules$.

Let $S \in \camodules$.
The one-literal clauses of $\Phi$ assert there exists a model $(\phi^0_S,\phi^1_S)$ of $\SSS$ which is admissible for $(\tau^A(S),\tau^B(S))$ if $S^A=S^0$ and
admissible for $(\tau^B(S),\tau^A(S))$ if $S^A=S^1$.
Indeed, if $S^A=S^0$, then $x_{S^0} = T$ in $\alpha$, we do not have the clause $\lnot x_{S^0}$ in $\Phi$, and hence
there is $(\tau^A(S),\tau^B(S))$-admissible model $(\phi^0_S,\phi^1_S)$ for $\SSS$.
If $S^A=S^1$, then we have $x_{S^0} = F$ in $\alpha$, we do not have the clause $x_{S^0}$ in $\Phi$, and hence
there is $(\tau^B(S),\tau^A(S))$-admissible model $(\phi^0_S,\phi^1_S)$ for $\SSS$.

The two-literal clauses of $\Phi$ assert that for $S \in \camodules$ the trapezoids $T_{S^A}$ are pairwise intersecting.
Suppose $S_1$ and $S_2$ are two distinct CA-modules from $\camodules$.
Indeed, if $S^A_1 = S^0_1$, $S^A_2 = S^0_2$, then we have $x_{S^0_1} = x_{S^0_2} = T$ in $\alpha$, 
we do not have the clause $\lnot (x_{S^0_1} \land x_{S^0_2})$ in $\Phi$, 
and hence we have $T_{S^0_1} \cap T_{S^0_2} \neq \emptyset$.
The other cases are proven analogously.

Now, properties \ref{prop:P:chords_of_skeleton_are_pairwise_intersecting}--\ref{prop:P:chords_in_trapezoids} assert that the trapezoids $T_{S^A}$ for $S \in \mathcal{S}$ are pairwise intersecting.
We can easily observe that they are also pairwise nicely intersecting.
Hence, Trapezoid Lemma asserts that for every $S \in \camodules$ we can draw a segment $\phi_R(S)$ spanned between $A$ and $B$ in every trapezoid $T_{S^A}$ such that the drawn segments have different endpoints and are pairwise intersecting. 
We orient $\phi_R(S)$ such that $S^A$ is above $s^0s^1$, for every $S \in \camodules$.
Now, for $S \in \camodules$ we replace the letters $S^0$ and $S^1$ in $\phi_R$ by 
the words $\phi^0_S$ and $\phi^1_S$, respectively, obtaining the circular word $\phi$.
One can easily check that $\phi$ is an extended $C_{i_1},\ldots,C_{i_k}$-admissible model for $G$ such that $\tau(\phi) = \tau$ and $\alpha_{\phi} = \alpha$.
\end{proof}

We end this section by sketching a poly-time algorithm testing whether there is 
a $(\tau^A(S), \tau^B(S))$-admissible model for $\SSS$, for $S \in \camodules$.
Suppose we search for a $(\tau^A(S), \tau^B(S))$-admissible model for $\SSS$ (the other case is similar).
Recall that $\tau^A(S)$ and $\tau^B(S)$ are permutations of $\priv^A(S)$ and $\priv^B(S)$, respectively, 
and $\priv^A(S) \cup \priv^B(S) = \priv(S)$.
Using the same ideas as in Subsection \ref{subsec:fpt-first-case} we add a set of restrictions on
admissible orderings of some M-nodes of $\pqmtree_S$ which assert that every admissible model 
for $\SSS$ admitting those restrictions can be extended to $(\mu^A(S),\mu^B(S))$-admissible model of~$\SSS$, 
where $\mu^X(S)$ is a permutation of $\Priv^X(S)$  for $X \in \{A,B\}$.
We want to test whether any of those models can be turned into $(\tau^A(S), \tau^B(S))$-admissible model for $\SSS$.
Clearly, there is no such model if for some two cliques $C,C' \in \priv^A(S)$ with $C'$ occurring before $C$ in $\tau^A(S)$ there are $a,b \in S$ such that $b <_S a$, $b^1 \in S^0$, $a^0 \in S^0$, $b \in C$, and $a \in C'$
(see Figure~\ref{fig:admissible-models-forb-structures} to the left).
So, suppose that such configurations do not exist for cliques in $\tau^A(S)$ 
and analogously for cliques in $\tau^B(S)$.
Now, we add additional restrictions on the set of admissible orderings $(\phi^0_S,\phi^1_S)$ of $\SSS$ 
which assert that for every two cliques $C,C' \in \priv^A(S)$ with $C'$ occurring before $C$ in $\tau^A(S)$
we do not have $b,a \in S$ such that $b \sim a$, $b^1 \in S^0$, $a^0 \in S^0$, $b \in C$, $a \in C'$, 
and $b^1$ is before $a^0$ in $\phi^0_S$ (see Figure~\ref{fig:admissible-models-forb-structures} to the right).
We add analogous restrictions for every two cliques from $\tau^B(S)$.
Clearly, there is no $\big{(}\tau^A(S), \tau^B(S)\big{)}$-admissible model for $\SSS$ if the set of models satisfying all restrictions is empty.
Otherwise, we start with any $(\mu^A(S),\mu^B(S))$-admissible model $(\phi^0_S,\phi^1_S)$ for $\SSS$, 
where $\mu^X(S)$ is a permutation of $\Priv^X(S)$ for $X \in \{A,B\}$.
One can easily observe that whenever the order of two neighbouring cliques $C$ and $C'$ in $\phi^0_S$ is inconsistent with its order in $\tau^A(S)$, we can shift and swap $C$ and $C'$ inside $\phi^0_S$ 
as we have excluded all the configurations which could forbid such a swap.
We proceed analogously with the cliques in $\tau^B(S)$.
Clearly, the above ideas allow us to design a poly-time algorithm testing existence of
a $\big{(}\tau^A(S), \tau^B(S)\big{)}$-admissible model for $\SSS$ with $S^0$ above the stabilizer.

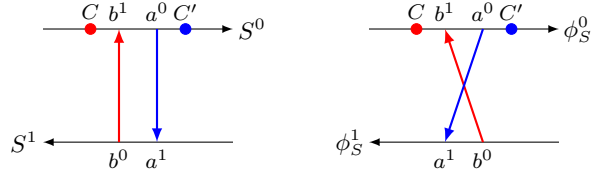
\begin{figure}[htp!]
\centering
\begin{tikzpicture}[yscale=0.5,xscale=0.5,>=latex]
\coordinate (label) at (1.5,-1) {};

\coordinate (a1) at (1.25,3) {};
\coordinate (a2) at (2,3) {};
\coordinate (a3) at (3,3) {};
\coordinate (a4) at (3.75,3) {};

\coordinate (b1) at (2,0) {};
\coordinate (b2) at (3,0) {};

\coordinate (lb1) at (2,-0.45) {};
\coordinate (lb2) at (3,-0.45) {};

\coordinate (la1) at (1.25,3.45) {};
\coordinate (la2) at (2,3.45) {};
\coordinate (la3) at (3,3.45) {};
\coordinate (la4) at (3.75,3.45) {};

\coordinate (S0) at (5.5,3);
\coordinate (S1) at (-0.5,0);

\draw[->] (0,3) -- (5,3);
\draw[<-] (0,0) -- (5,0);

\draw[<-, thick, red] (a2) -- (b1);
\draw[->, thick, blue] (a3) -- (b2);

\tikzstyle{every node}=[circle,minimum size=4pt,inner sep=0pt,draw,fill]
\node[red] at (a1) {};
\node[blue] at (a4) {};

\tikzstyle{every node}=[inner sep=1pt]
\begin{scriptsize}
\node at (S0) {$S^0$};
\node at (S1) {$S^1$};
\end{scriptsize}

\begin{tiny}
\node at (la1) {$C$};
\node at (la2) {$b^1$};
\node at (la3) {$a^0$};
\node at (la4) {$C'$};

\node at (lb1) {$b^0$};
\node at (lb2) {$a^1$};

\end{tiny}
\draw[white] (-1,0)--(-1,-0.8);
\draw[white] (6,3)--(6,3.8);
\end{tikzpicture}
\hspace{0.5cm}
\begin{tikzpicture}[yscale=0.5,xscale=0.5,>=latex]
\coordinate (label) at (1.5,-1) {};

\coordinate (a1) at (1.25,3) {};
\coordinate (a2) at (2,3) {};
\coordinate (a3) at (3,3) {};
\coordinate (a4) at (3.75,3) {};

\coordinate (b1) at (2,0) {};
\coordinate (b2) at (3,0) {};

\coordinate (lb1) at (2,-0.45) {};
\coordinate (lb2) at (3,-0.45) {};

\coordinate (la1) at (1.25,3.45) {};
\coordinate (la2) at (2,3.45) {};
\coordinate (la3) at (3,3.45) {};
\coordinate (la4) at (3.75,3.45) {};

\coordinate (S0) at (5.5,3);
\coordinate (S1) at (-0.5,0);

\draw[->] (0,3) -- (5,3);
\draw[<-] (0,0) -- (5,0);

\draw[<-, thick, red] (a2) -- (b2);
\draw[->, thick, blue] (a3) -- (b1);

\tikzstyle{every node}=[circle,minimum size=4pt,inner sep=0pt,draw,fill]
\node[red] at (a1) {};
\node[blue] at (a4) {};

\tikzstyle{every node}=[inner sep=1pt]
\begin{scriptsize}
\node at (S0) {$\phi^0_S$};
\node at (S1) {$\phi^1_S$};
\end{scriptsize}

\begin{tiny}
\node at (la1) {$C$};
\node at (la2) {$b^1$};
\node at (la3) {$a^0$};
\node at (la4) {$C'$};

\node at (lb1) {$a^1$};
\node at (lb2) {$b^0$};

\end{tiny}
\draw[white] (-1,0)--(-1,-0.8);
\draw[white] (6,3)--(6,3.8);

\end{tikzpicture}
\caption{
\label{fig:admissible-models-forb-structures}
Forbidden configurations for the existence of $(\tau^A(S), \tau^B(S))$-admissible model for $\SSS$,
where $C,C'$ are two cliques from $\priv^A(S)$ with $C'$ before $C$ in $\tau^A(S)$.}
\end{figure}

\section{Polynomial kernel for the Helly Clique Problem}
\label{sec:helly-cliques-kernel}
In this section we prove the following.
\begin{theorem}
\hcp\ admits a kernel of size $O(k^6)$.
\end{theorem}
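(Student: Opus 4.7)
The plan is to follow the two-step strategy sketched in the introduction. First, I would identify a set $R \subseteq V(G)$ of \emph{important vertices} with $|R| = O(k^6)$ such that the instance $(G, C_1, \ldots, C_k)$ is equivalent to $(G, C_1 \cap R, \ldots, C_k \cap R)$. Then I would construct a \emph{reduct} $G'$ whose vertex set contains $R$, has size linear in $|R|$, and preserves the configurations of the arcs representing $R$ in every conformal model; the output kernel is $(G', C_1 \cap R, \ldots, C_k \cap R)$. The construction of the reduct and the proof of its model-preserving property are deferred to the appendix, so the heart of the matter is bounding $|R|$ by $O(k^6)$.

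To define $R$, I would exploit the characterizations of clique types and clique-binding behavior developed in Section~\ref{sec:clique-type-problem}. Each input clique $C_i$ acts on the PQM-tree $\pqmtree$ through a bounded family of local constraints: the orderings of the nodes affecting $C_i$ must bind $C_i$ on the correct side, and each such binding constraint is determined by the intersections $C_i \cap K$ for the children $K$ of an affecting node $N$ falling into $\mathcal{L}(N)$, $\mathcal{R}(N)$, or containing the owner path of $C_i$. Hence for each clique $C_i$ I would mark a constant number of representatives witnessing its owner path, its $\mathcal{L}/\mathcal{R}$ partitions at each affecting node, and its membership in each non-owning child $K \in \mathcal{I}(N)$. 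Since each $C_i$ has at most $O(k)$ affecting nodes (the nodes whose children meaningfully split because of other cliques), this contributes $O(k^2)$ vertices so far.

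The dominant contribution comes from marking witnesses of obstructions. The FPT algorithm of Section~\ref{sec:fpt-Helly-Cliques-problem} shows that the only obstructions to a joint Helly realization are: (i)~rigid non-Helly subcliques (from Theorem~\ref{thm:rigid_forbidden_structures}); (ii)~incompatible binding demands imposed by different cliques on the same PQM-node; and (iii)~the unsatisfiability of the 2-SAT formula built from the Trapezoid Lemma when $V$ is serial. Each such obstruction involves a small bounded subfamily of cliques --- e.g., rigid non-Helly triples or quadruples, or the four clauses of a 2-SAT conflict built from at most two CA-modules and their associated trapezoids, which are in turn determined by the cliques privatized to those CA-modules. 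Iterating over all tuples of at most a constant number $c$ of cliques and marking $O(1)$ vertices per tuple to witness the presence or absence of each obstruction between those cliques yields $O(k^c)$ important vertices. A careful accounting of the kinds of obstructions should show $c \le 6$ suffices, giving $|R| = O(k^6)$.

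The main obstacle will be the third step: proving that marking only $O(1)$ witnesses per tuple of $\le c$ cliques is indeed sufficient, and pinning down the right value of $c$. This requires showing that every obstruction to joint Helly realization can be localized to a small subfamily of cliques, and that within such a subfamily only a bounded number of vertex witnesses are needed to encode which binding/trapezoid pattern is realizable. The 2-SAT viewpoint from Section~\ref{sec:fpt-Helly-Cliques-problem} will be helpful here: a conflict in the 2-SAT instance corresponds to a bounded-size subformula, so marking the corresponding trapezoid corners and slot-adjacent witnesses for each candidate conflict is enough. Once this localization is established, the equivalence of $(G, C_1, \ldots, C_k)$ and $(G, C_1 \cap R, \ldots, C_k \cap R)$ will follow by running the decision procedure on both instances and observing that they agree on every constraint; the reduct $G'$ from the appendix then produces the desired kernel of size $O(k^6)$.
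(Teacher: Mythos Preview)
Your high-level two-step strategy (mark a set $R$ of important vertices, then pass to a reduct) matches the paper exactly, and the reduct construction is indeed deferred to the appendix. But your plan for constructing $R$ diverges substantially from what the paper does, and as written it has a real gap.

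The paper does \emph{not} iterate over $c$-tuples of cliques and mark obstruction witnesses. Instead it runs a bottom-up pass over each $\pqmtree_Q$, maintaining for every node $N$ a partition $\blocks(N)$ of $\priv(N)$ into \emph{blocks}, each block split into two \emph{sides}; two cliques end up in the same block (with sides recorded) precisely when some node on the way up binds them on the same or on different sides (Observation~\ref{obs:block-partition}). A node is \emph{important} only if it introduces a clique or merges two blocks, so there are $O(k)$ important nodes. For serial important nodes the paper uses a \emph{signature} argument: a signature is a pair $(A,B)$ with $A,B\subseteq\priv(N)$ disjoint and $|A\cup B|\le 4$, and one keeps, for each signature, at most one child that realizes it. That is what produces the $O(k)\cdot O(k^4)\cdot O(k)=O(k^6)$ count. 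The correctness proof then shows that any child not marked weakly important can be placed next to some marked child with the same signature, so the model can always be repaired.

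Your proposal instead hopes that ``every obstruction to joint Helly realization can be localized to a small subfamily of cliques,'' invoking the 2-SAT picture from Section~\ref{sec:fpt-Helly-Cliques-problem}. This is where the gap lies. Unsatisfiability of a 2-SAT formula is witnessed by an implication cycle $x\Rightarrow\cdots\Rightarrow\neg x\Rightarrow\cdots\Rightarrow x$, and there is no a priori bound on its length; in the serial case the clauses come from pairwise trapezoid disjointness, and a minimal unsatisfiable subformula could involve many CA-modules and hence many cliques. So the step ``a conflict in the 2-SAT instance corresponds to a bounded-size subformula'' is not justified, and without it your bound on $c$ (and hence the $O(k^6)$) does not follow. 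A second, smaller issue: the sentence ``each $C_i$ has at most $O(k)$ affecting nodes'' is not obviously true as stated; the owner path of a single clique can be long. The paper sidesteps this by counting \emph{important} nodes (those where the block structure actually changes), not affecting nodes, and that is what gives the $O(k)$ bound.

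In short: the skeleton is right, but to close the argument you need the block/side bookkeeping and the size-$\le 4$ signature trick from the paper rather than a generic obstruction-tuple enumeration.
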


Suppose $G,C_1,\ldots,C_k$ is an input to the Helly Clique problem.
As in the previous section, we can assume $C_1,\ldots,C_k$ are ambiguous and 
satisfy properties~\ref{prop:clique-type-clean}--\ref{prop:clique-type-no-rigid-non-Helly-subclique}.
For every private clique $C_i \in \{C_1,\ldots,C_k\}$ by $S_{C_i}$ we denote the 
CA-module of $G$ such that $C_i \in \priv(S_{C_i})$.
We note that $S_{C_i}$ is not defined when $C_i$ is public, which 
might only happen when $V(G)$ is serial in $\strongModules(G_{ov})$.
For convenience, we slightly abuse our convention and we assume $\priv(V(G)) = \{C_1,\ldots,C_k\}$ if $V(G)$ is serial in $\strongModules(G_{ov})$, and $\priv(Q) = \bigcup\{\priv(S): S \in \camodules(Q)\}$ for every prime Q-node $Q$.
Also, we extend the notation from the previous section and for every M-node $M$ we let
$$
\begin{array}{rcl}
\cliques(M) & = & \{C_i: C_i \cap M \neq \emptyset \}, \\
\priv(M) &=& \{C_i: C_i \text{ is private for }M\}. \\
\end{array}
$$


Suppose $N$ is a prime node (M-node or Q-node) in $\pqmtree$ and $C,C'$ are two distinct cliques from $\{C_1,\ldots,C_k\}$ such that $C,C' \in \priv(N)$.
Note that $S_{C} = S_{C'}$ if $N$ is an M-node.
We say that: 
\begin{itemize}
\item \emph{$N$ binds $C$ and $C'$ on the same side} if any order from $\Pi(N)$ binds $C$ in $S^j_{C}$ and binds $C'$ in $S^j_{C'}$ for some $j \in \{0,1\}$, 
\item \emph{$N$ binds $C$ and $C'$ on different sides} if any order from  $\Pi(N)$ binds 
$C$ in $S^j_{C}$ and $C'$ in $S^{1-j}_{C'}$ for some $j \in \{0,1\}$.
\item \emph{$N$ binds $C$ and $C'$} if $N$ binds $C$ and $C'$ on the same side or on different sides.
\end{itemize}
Lemmas~\ref{lem:deepest-owner-properties}.\eqref{item:deepest-owner-properties-prime}, 
\ref{lem:M-node-owner-properties}.\eqref{item:M-node-owner-properties-prime} and~\ref{lem:Q-node-owner-properties}.\eqref{item:Q-node-owner-properties-prime} 
assert that $N$ binds $C$ and $C'$ if and only if $N$ binds $C$ and $C'$ either on the same or on different sides.  

Suppose $N$ is a serial node (M-node or Q-node) in $\pqmtree$ and $C,C'$ are two distinct cliques from $\{C_1,\ldots,C_k\}$ such that $C,C' \in \priv(N)$.
We say that:
\begin{itemize}
\item \emph{$N$ binds $C$ and $C'$ on the same side} if there are two distinct children $K,L$ of $N$ such that
$C,C' \in \priv(K)$, $C,C' \in \cliques(L)$, and the vertices from $C \cap L$ and $C' \cap L$ have the same orientation in $\LLL$,
\item \emph{$N$ binds $C$ and $C'$ on different sides} if there are two distinct children $K,L$ of $N$ in $\pqmtree_Q$ such that
$C,C' \in \priv(K)$, $C,C' \in \cliques(L)$, and the vertices from $C \cap L$ and $C' \cap L$ have different orientation in $\LLL$,
\item \emph{$N$ binds $C$ and $C'$} if $N$ binds $C$ and $C'$ either on the same or on different sides. 
\end{itemize}
Note that, since $C \in \priv(K)$ ($C' \in \priv(K)$), the vertices from $C \cap L$ ($C' \cap L$, respectively) have the same orientation in $\LLL$.

Lemma~\ref{lem:conformal-models-with-C-in-fixed-slot} yields the following:
\begin{observation}
\label{obs:block-partition}
Let $C, C'$ be two cliques bound by a node $N$ from $\pqmtree$ and let $\phi$ be a $\{C,C'\}$-conformal model of $G$.
Then:
\begin{itemize}
 \item If $C$ and $C'$ are bound on the same side, then there is $j \in \{0,1\}$ such that $C$ is contained in $\phi|S^j_{C}$ and $C'$ is contained in $\phi|S^j_{C'}$.
 \item If $C$ and $C'$ are bound on different sides, then there is $j \in \{0,1\}$ such that $C$ is contained in $\phi|S^j_{C}$ and $C'$ is contained in $\phi|S^{1-j}_{C'}$.
\end{itemize}
\end{observation}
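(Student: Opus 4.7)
The plan is to pin down, separately for each clique, the slot into which the \emph{letter} for the clique falls in $\phi$, and then to use the definition of ``$N$ binds $C$ and $C'$'' to compare the two slot indices. Concretely, I would first apply Lemma~\ref{lem:conformal-models-with-C-in-fixed-slot} to each of $C$ and $C'$ to produce unique indices $j_C,j_{C'}\in\{0,1\}$ with $C\subseteq\phi|S_C^{j_C}$ and $C'\subseteq\phi|S_{C'}^{j_{C'}}$, and to record that, because $N$ affects both cliques (binding implies affecting by the terminology introduced in Section~\ref{sec:clique-type-problem}), the same lemma forces the restricted ordering $\phi_{|N}$ to bind $C$ in $S_C^{j_C}$ and $C'$ in $S_{C'}^{j_{C'}}$.

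If $N$ is prime, I would invoke Lemmas~\ref{lem:deepest-owner-properties}.\eqref{item:deepest-owner-properties-prime}, \ref{lem:M-node-owner-properties}.\eqref{item:M-node-owner-properties-prime} and~\ref{lem:Q-node-owner-properties}.\eqref{item:Q-node-owner-properties-prime}, which say that $\Pi(N)=\{\gamma,\delta\}$ with $\delta$ the reflection of~$\gamma$, and that $\gamma,\delta$ bind $C$ in the two different slots of $S_C$ (and likewise for~$C'$). By definition of same-side (respectively different-side) binding, the ordering that binds $C$ in $S_C^0$ binds $C'$ in $S_{C'}^0$ (respectively $S_{C'}^1$), and reflecting yields the corresponding statement for slot~$1$. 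Since $\phi_{|N}\in\{\gamma,\delta\}$, this immediately forces $j_C=j_{C'}$ in the same-side case and $j_C\neq j_{C'}$ in the different-side case.

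If $N$ is serial, the definition provides distinct children $K,L$ of $N$ with $C,C'\in\priv(K)$ and $C,C'\in\cliques(L)$; in particular $K$ lies on the owner paths of both cliques, so that the $\mathcal{L}(\cdot),\mathcal{R}(\cdot)$-sets of Lemmas~\ref{lem:deepest-owner-properties}--\ref{lem:Q-node-owner-properties} are both computed at the same node $N$ with respect to the same $K$. The node $L$ contributes to $\mathcal{L}(N)$ (for the clique under consideration) iff the vertices of that clique inside $L$ are $(L^0,L^1)$-oriented in $\LLL$, and to $\mathcal{R}(N)$ otherwise; hence same-orientation of $C\cap L$ and $C'\cap L$ means $L$ lies on the same side ($\mathcal{L}$ or $\mathcal{R}$) for both cliques, while opposite-orientation means it lies on opposite sides. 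Applying items \eqref{item:deepest-owner-properties-model}, \eqref{item:M-node-owner-properties-model}, \eqref{item:Q-node-owner-properties-model} of the same three lemmas to $\phi_{|N}$ translates the choice of $j_C$ into the relative position of $L$ and $K$ inside the ordering of~$N$ (equivalently, of $L^0$ and $K^0$ in the $\pi^0$-part), and analogously for $j_{C'}$; comparing the two translations gives $j_C=j_{C'}$ in the same-side case and $j_C\neq j_{C'}$ in the different-side case, exactly as claimed.

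I do not expect a genuine obstacle here: the statement is essentially a bookkeeping corollary of Lemma~\ref{lem:conformal-models-with-C-in-fixed-slot} together with the structural lemmas recalled above. The only mild subtlety is that the sets $\mathcal{L}(N)$ and $\mathcal{R}(N)$ were originally defined relative to a single clique, so one must apply the definition twice (once for $C$ and once for $C'$) and verify that the auxiliary child $L$ from the serial-case definition is correctly identified as lying in matching or in opposite $\mathcal{L}/\mathcal{R}$-classes for the two cliques.
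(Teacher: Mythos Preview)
Your proposal is correct and follows the same approach as the paper, which simply states that the observation is yielded by Lemma~\ref{lem:conformal-models-with-C-in-fixed-slot}. You spell out in detail what the paper leaves implicit: first locating each clique in a slot (via Lemma~\ref{lemma:private-nodes} and the forward direction of Lemma~\ref{lem:conformal-models-with-C-in-fixed-slot}), and then comparing the two slot indices using the prime/serial case definitions of binding together with the structural Lemmas~\ref{lem:deepest-owner-properties}--\ref{lem:Q-node-owner-properties}. The only minor remark is that your claim of \emph{unique} $j_C,j_{C'}$ is not quite what Lemma~\ref{lem:conformal-models-with-C-in-fixed-slot} gives directly; rather, the existence of some such index comes from Lemma~\ref{lemma:private-nodes}.\eqref{item:private-nodes-model} applied to the owner $S_C$, after which Lemma~\ref{lem:conformal-models-with-C-in-fixed-slot} forces $\phi_{|N}$ to bind in that slot.
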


Our kernelization algorithm traverses the trees $\pqmtree_Q$ in the bottom-up order and marks every inner node of $\pqmtree_Q$ either as \emph{important} or \emph{irrelevant}.
Also, when the algorithm marks some node as important, it also marks some of its vertices as \emph{important} 
(the vertices which remain unmarked can be considered as irrelevant).
Roughly speaking, the important vertices form a subset of $V(G)$ of size $O(k^6)$ 
which preserves all relevant information allowing to conclude that we are dealing with \no-instance in the case where there is no $\{C_1,\ldots,C_k\}$-conformal model of $G$.
Formally, the algorithm returns as a kernel a reduct of $G$ with respect to the set of important vertices and the cliques $C_1,\ldots,C_k$ restricted to the set of important vertices.
We will show that the restricted instance is equivalent to the input instance.

Now we show how we process a single Q-node $Q$ in $\pqmtree$.
We traverse the tree $\pqmtree_Q$ in the bottom-up order and 
for every inner node $N \in \pqmtree$ such that $\priv(N) \neq \emptyset$ 
we compute a partition $\blocks(N)$ of the set 
$\priv(N)$ into so-called \emph{blocks} of $N$.
Additionally, for every block $B \in \blocks(N)$ we maintain
a partition $\sides(B)$ of the set $B$ into so-called \emph{sides} of~$B$.

Suppose the sets $\blocks(K)$ and their sides are already computed for every child $K$ of~$N$ in $\pqmtree_Q$.
Note that the sets $\priv(K)$ are pairwise disjoint.
We compute the set $\blocks(N)$ as follows.
We first initialize the set $\blocks(N)$ such that it contains all the sets $\blocks(K)$ of the children $K$ of $N$ and blocks $\{C\}$ with one side empty for every $C$ such that $N$ is the deepest owner of $C$ (if this holds then we say that \emph{$C$ is introduced by $N$}).
Next, we iterate over all the pairs $C',C'' \in \priv(N)$ and we do the following:
\begin{itemize}
 \item if $C'$ and $C''$ are bound by $N$ on the same side, then:
 \begin{itemize}
    \item if $C'$ and $C''$ are in the same block $B \in \blocks(N)$ but 
    in different sides of $B$, then we reject the instance $G,C_1,\ldots,C_k$,
    \item if $C'$ and $C''$ are in different blocks, say
    $B' \in \blocks(N)$ and $B'' \in \blocks(N)$ respectively, then 
    we form a new block $B = B' \cup B''$ with two sides, one obtained by merging the side of $B'$ containing 
    $C'$ and the side of $B''$ containing $C''$, and the second by merging the two remaining blocks of $B'$ and $B''$.
    If the above holds, then we say $N$ \emph{merges two blocks through the cliques $C'$ and $C''$}.
 \end{itemize}
 \item if $C'$ and $C''$ are bound by $N$ on different sides, then:
 \begin{itemize}
    \item if $C'$ and $C''$ are in the same block $B \in \blocks(N)$ and 
    the same side of $B$, then we reject the instance $G,C_1,\ldots,C_k$,
    \item if $C'$ and $C''$ are in different blocks, say
    $B' \in \blocks(N)$ and $B'' \in \blocks(N)$ respectively, then 
    we form a new block $B = B' \cup B''$ with two sides, one obtained by merging the side of $B'$ containing 
    $C'$ and the side of $B''$ containing no $C''$, and the second by merging the two remaining blocks in $B'$ and $B''$.
    If the above holds, then we say $N$ \emph{merges two blocks through the cliques $C'$ and $C''$}.
 \end{itemize}
\end{itemize}
Observation~\ref{obs:block-partition} asserts there is no $\{C_1,\ldots,C_k\}$-conformal model of $G$
in the case we reject the input instance. 
Otherwise (the set $\blocks(N)$ is computed),
\begin{description}
 \item [\namedlabel{prop:IN:important-node-blocks}{(N)}] 
 We mark $N$ as irrelevant if $\blocks(N) = \blocks(K)$ for some child $K$ of $N$; otherwise, we mark $N$ as important.
\end{description}
When we mark $N$ as important, we mark some vertices of $N$ as \emph{important}, 
according to the following rules.

If $N$ is prime or parallel and $N$ was marked as important, then:
\begin{description}
 \item[\namedlabel{prop:IV:important-vertices-deepest-owner}{(V1)}] 
 If $C$ is introduced by $N$, we mark two vertices $a,b$ from $C \cap N$ with different orientations in $\NNN$. 
 Moreover, if $N$ affects $C$, we choose $a$ and $b$ such that $a \sim b$.
\end{description}
\begin{description}
 \item[\namedlabel{prop:IV:important-vertices-binding}{(V2)}] If $N$ merges two blocks through $C'$ and $C''$, then: 
 \begin{itemize}
 \item if $C'$ is not introduced by $N$, we mark any vertex $c' \in C' \cap N$ 
 from any child $N'$ of $N$ such that $C' \notin \priv(N')$.
 \item if $C''$ is not introduced by $N$, we mark any vertex $c'' \in C'' \cap N$ 
 from any child $N''$ of $N$ such that $C'' \notin \priv(N)$.
 \end{itemize}
\end{description}

If $N$ is serial and $N$ was marked as important,
the marking procedure of important vertices in $N$ is performed in two phases.
First, we first mark some children of $N$ as \emph{weakly important}.
\begin{description}
\item [\namedlabel{prop:IN:important-node-serial-child-private}{(W1)}] 
If $K$ is a child of $N$ such that $\priv(K) \neq \emptyset$, we mark $K$ as weakly important.
\end{description}
Also, we might mark as weakly important some children $K$ of $N$ such that $\priv(K) = \emptyset$.
The marking procedure for this case requires some preparation.
First, we define the set $\signatures(N)$ of \emph{signatures} of~$N$:
$$\signatures(N) = \big{\{} (A,B): A,B \subseteq \priv(N),\ A \cap B = \emptyset,\ |A \cup B| \leq 4   \big{\}}.$$ 
Next, for every child $K$ of $N$ such that $\priv(K) = \emptyset$ and $\cliques(K) \cap \priv(N) \neq \emptyset$ we define a partition of the set $\cliques(K) \cap \priv(N)$ into two sets $\cliques_{01}(K)$ and $\cliques_{10}(K)$:
$$
\begin{array}{ccl}
\cliques_{01}(K) &=& \{C \in \cliques(K) \cap \priv(N): \text{the chords from $C \cap K$ are $(K^0,K^1)$-oriented in $\KKK$}\}, \\
\cliques_{10}(K) &=& \{C \in \cliques(K) \cap \priv(N): \text{the chords from $C \cap K$ are $(K^1,K^0)$-oriented in $\KKK$}\},
\end{array}
$$
and the set $\signatures_K(N)$ of signatures of $N$ occurring in the node $K$:
$$
\signatures_K(N) = \big{\{} (A,B) \in \signatures(N): A \subseteq \cliques_{01}(K),\ B \subseteq \cliques_{10}(K) \big{\}}.$$ 
See Figure~\ref{fig:signatures} for an illustration.
Now, we can describe the procedure which marks as weakly important some children $K$ of $N$ with $\priv(K) \neq \emptyset$.
First, we denote all the signatures in the set $\signatures(N)$ as not visited.
Next, we iterate over the children $K$ of $N$ such that $\priv(K) \neq \emptyset$ (in any order) and we do the following:
\begin{description}
\item [\namedlabel{prop:IN:important-node-serial-child-no-priv}{(W2)}] 
For every signature $(A,B)$ occurring in the set $\signatures_K(N)$ which is not yet visited, 
we mark $(A,B)$ as visited and we mark the node $K$ as weakly important. 
\end{description}
Note that child $K$ of $N$ is not marked as weakly important if and only if $\priv(K) = \emptyset$ and all signatures from 
$\signatures_K(N)$ have been visited by some weakly important children of $N$.
Finally, we mark some vertices in the children of $N$ marked as weakly important:
\begin{description}
\item [\namedlabel{prop:IV:important-vertices-serial-node-priv}{(V3)}] 
For every weakly important child $K$ of $N$ and every $C \in \cliques(K) \cap \priv(N)$ 
mark a vertex from $C \cap K$.
\end{description}
This completes the description of the marking procedure.

\begin{figure}[htp!]
\centering
\begin{tikzpicture}[yscale=0.85,xscale=1.6,>=latex]

\coordinate (av0) at (-0.2,3) {};
\coordinate (av1) at (0.1,3) {};
\coordinate (ac1) at (0.5,3) {};
\coordinate (av2) at (0.9,3) {};
\coordinate (av3) at (1.1,3) {};
\coordinate (ac2) at (1.5,3) {};
\coordinate (av4) at (1.85,3) {};
\coordinate (av5) at (2.0,3) {};
\coordinate (ac3) at (2.5,3) {};
\coordinate (av6) at (2.9,3) {};
\coordinate (av7) at (3.1,3) {};
\coordinate (ac4) at (3.5,3) {};
\coordinate (av8) at (3.9,3) {};
\coordinate (av9) at (4.2,3) {};

\coordinate (bv0) at (3.9,0) {};
\coordinate (bv1) at (4.2,0) {};
\coordinate (bc5) at (3.5,0) {};
\coordinate (bv2) at (2.9,0) {};
\coordinate (bv3) at (3.1,0) {};
\coordinate (bc6) at (2.5,0) {};
\coordinate (bv4) at (2.05,0) {};
\coordinate (bv5) at (2.2,0) {};
\coordinate (bc7) at (1.5,0) {};
\coordinate (bv6) at (0.9,0) {};
\coordinate (bv7) at (1.1,0) {};
\coordinate (bc8) at (0.5,0) {};
\coordinate (bv8) at (-0.2,0) {};
\coordinate (bv9) at (0.1,0) {};

\coordinate (lav0) at (-0.2,3.3) {};
\coordinate (lav1) at (0.1,3.3) {};
\coordinate (laN1) at (0,3.8) {};
\coordinate (lac1) at (0.5,3.3) {};
\coordinate (lav2) at (0.85,3.3) {};
\coordinate (lav3) at (1.15,3.3) {};
\coordinate (laN2) at (1,3.8) {};
\coordinate (lac2) at (1.5,3.3) {};
\coordinate (lav4) at (1.8,3.3) {};
\coordinate (laN3) at (1.9,3.8) {};
\coordinate (lav5) at (2.05,3.3) {};
\coordinate (lac3) at (2.5,3.3) {};
\coordinate (lav6) at (2.85,3.3) {};
\coordinate (laN4) at (3,3.8) {};
\coordinate (lav7) at (3.15,3.3) {};
\coordinate (lac4) at (3.5,3.3) {};
\coordinate (lav8) at (3.9,3.3) {};
\coordinate (lav9) at (4.2,3.3) {};
\coordinate (laN5) at (4,3.8) {};

\coordinate (lbv0) at (3.9,-0.3) {};
\coordinate (lbv1) at (4.2,-0.3) {};
\coordinate (lbN1) at (4,-0.8) {};
\coordinate (lbc5) at (3.5,-0.3) {};
\coordinate (lbv2) at (2.85,-0.3) {};
\coordinate (lbN2) at (3,-0.8) {};
\coordinate (lbv3) at (3.15,-0.3) {};
\coordinate (lbc6) at (2.5,-0.3) {};
\coordinate (lbv4) at (2.0,-0.3) {};
\coordinate (lbN3) at (2.1,-0.8) {};
\coordinate (lbv5) at (2.25,-0.3) {};
\coordinate (lbc7) at (1.5,-0.3) {};
\coordinate (lbv6) at (0.85,-0.3) {};
\coordinate (lbN4) at (1,-0.8) {};
\coordinate (lbv7) at (1.15,-0.3) {};
\coordinate (lbc8) at (0.5,-0.3) {};
\coordinate (lbv8) at (-0.2,-0.3) {};
\coordinate (lbv9) at (0.1,-0.3) {};
\coordinate (lbN5) at (0,-0.8) {};

\coordinate (laK) at (2.3,3.8) {};
\coordinate (lbK) at (1.7,-0.8) {};

\coordinate (N0) at (4.75,3);
\coordinate (N1) at (-0.75,0);

\draw[fill=gray!30, draw=none] (2.15,3) -- (2.35,3) -- (1.85,0) -- (1.65,0) -- cycle;

\draw[->] (-0.5,3) -- (4.5,3);
\draw[<-] (-0.5,0) -- (4.5,0);

\draw[<-,thick] (av0)--(bv0);
\draw[->,thick] (av1)--(bv1);

\draw[<-,thick] (av2)--(bv2);
\draw[->,thick] (av3)--(bv3);

\draw[<-,thick, red] (av4)--(bv4);
\draw[->,thick, green] (av5)--(bv5);

\draw[<-,thick] (av6)--(bv6);
\draw[->,thick] (av7)--(bv7);

\draw[<-,thick] (av8)--(bv8);
\draw[->,thick] (av9)--(bv9);

\tikzstyle{every node}=[circle,minimum size=4pt,inner sep=0pt,draw,fill]
\node at (ac1) {};
\node[red] at (ac2) {};
\node[green] at (ac3) {};
\node[green] at (ac4) {};

\node[green] at (bc5) {};
\node[green] at (bc6) {};
\node[red] at (bc7) {};
\node at (bc8) {};

\tikzstyle{every node}=[inner sep=1pt]
\begin{scriptsize}
\node at (N0) {$N^0$};
\node at (N1) {$N^1$};
\end{scriptsize}

\begin{tiny}
\node at (lac1) {$C_1$};
\node at (lac2) {$C_2$};
\node at (lac3) {$C_3$};
\node at (lac4) {$C_4$};

\node at (lbc5) {$C_5$};
\node at (lbc6) {$C_6$};
\node at (lbc7) {$C_7$};
\node at (lbc8) {$C_8$};

\node at (lav0) {$v^1_0$};
\node at (lav1) {$v^0_1$};
\node at (lav2) {$v^1_2$};
\node at (lav3) {$v^0_3$};
\node at (lav4) {$v^1_4$};
\node at (lav5) {$v^0_5$};
\node at (lav6) {$v^1_6$};
\node at (lav7) {$v^0_7$};
\node at (lav8) {$v^1_8$};
\node at (lav9) {$v^0_9$};

\node at (lbv0) {$v^0_1$};
\node at (lbv1) {$v^1_1$};
\node at (lbv2) {$v^0_2$};
\node at (lbv3) {$v^1_3$};
\node at (lbv4) {$v^0_4$};
\node at (lbv5) {$v^1_5$};
\node at (lbv6) {$v^0_6$};
\node at (lbv7) {$v^1_7$};
\node at (lbv8) {$v^0_8$};
\node at (lbv9) {$v^1_9$};

\node at (laN1) {$N^0_1$};
\node at (laN2) {$N^0_2$};
\node at (laN3) {$N^0_3$};
\node at (laN4) {$N^0_4$};
\node at (laN5) {$N^0_5$};
\node at (laK) {$N^0_6$};

\node at (lbN1) {$N^1_1$};
\node at (lbN2) {$N^1_2$};
\node at (lbN3) {$N^1_3$};
\node at (lbN4) {$N^1_4$};
\node at (lbN5) {$N^1_5$};
\node at (lbK) {$N^1_6$};
\end{tiny}
\draw[white] (-0.5,0)--(-0.5,-0.5);
\draw[white] (4.5,3)--(4.5,3.5);

\end{tikzpicture}
\caption{
\label{fig:signatures}
Serial M-node $N$ and its six children $N_1,\ldots,N_5,N_6$
with $\cliques_{01}(N_1)=\cliques_{10}(N_5)=\{C_1,C_2,C_3,C_4\}$, $\cliques_{10}(N_1)=\cliques_{01}(N_5)=\{C_5,C_6,C_7,C_8\}$, $\cliques_{01}(N_2)=\{C_2\}$ and $\cliques_{10}(N_2)=\{C_1\}$,
$\cliques_{01}(N_3) = \{C_3,C_4,C_5,C_6\}$ and $\cliques_{10}(N_3) = \{C_2,C_7\}$
(asserted by $v_4 \in C_2,C_7$ and $v_5 \in C_3,C_4,C_5,C_6$), 
$\cliques_{01}(N_4) = \{C_4\}$ and $\cliques_{10}(N_4) = \{C_3\}$, 
$\cliques_{01}(N_6) =\{C_3,C_4,C_6\}$ and $\cliques_{10}(N_6) = \{C_2,C_7\}$.
So we have $\priv(N) = \{C_1,\ldots,C_8\}$ and $\priv(N_i) = \emptyset$ for $i \in [6]$.
The set $\signatures_{N_3}(N)$ contains the pairs $(A, B)$,
where $A$ is a subset of $\{C_3,C_4,C_5,C_6\}$ of size at most $2$ and $B$ is a subset of $\{C_2,C_7\}$.
Suppose the children of $N$ are processed in the order $N_1,\ldots,N_6$.
Then $N_1,N_2,N_3, N_4,N_5$ are marked as weakly important as they mark as visited 
signatures $(\{C_1\},\{C_5\})$, $(\{C_2\},\{C_1\})$, $(\{C_3\},\{C_2\})$, $(\{C_4\},\{C_3\})$,
and $(\{C_8,C_4\})$, respectively, $N_6$ is not weakly important as $sgn_{N_6}(N) \subseteq sgn_{N_3}(N)$.
Since $\tau^0 = C_2C_3C_4$ and $\tau^1=C_7C_6$, we have $C^0_L=C_2$, $C^1_L=C_7$,
$C^0_R=C_3$, $C^1_R=C_6$.
Since $(\{C_3,C_6\}, \{C_2,C_7\})$ is in the set $\signatures_{N_3}(N)$, we can place the chords from
$\phi(N_6)$ next to $N_2$ (inside the gray trapezoid).
}
\end{figure}

\begin{observation}
The algorithm marks $O(k)$ nodes of $\pqmtree$ as important, $O(k^5)$ nodes as weakly important, 
and $O(k^6)$ vertices of $G$ as important.
\end{observation}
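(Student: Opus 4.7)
The plan is to establish each of the three bounds by a charging argument that assigns every marked object to a token whose total count is polynomial in $k$.

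I would first show that at most $O(k)$ nodes of $\pqmtree$ are marked as important. A node $N$ is important precisely when $\blocks(N) \neq \blocks(K)$ for every child $K$ of $N$, which occurs if and only if at least one of the following happens at $N$: (a) $N$ introduces a clique (i.e., $N$ is the deepest owner of some $C_i$); (b) $N$ merges two blocks; (c) at least two children of $N$ have non-empty $\blocks$. Case (a) happens at most $k$ times in total, since each input clique is introduced exactly once. Case (b) contributes at most $k-1$ nodes, because each merge strictly decreases the total number of blocks, starting from at most $k$. For case (c), I would consider the subforest of $\pqmtree$ consisting of all nodes $N$ with $\blocks(N) \neq \emptyset$; this is indeed a subforest since non-emptiness propagates upward, and every leaf of it must introduce some clique (otherwise its $\blocks$ would be empty), so the subforest has at most $k$ leaves and hence at most $k$ branching points.

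Next, I would bound the number of weakly important nodes by $O(k^5)$. Since weakly important markings occur only among the children of important serial nodes, of which there are $O(k)$ by the previous paragraph, it suffices to show that each important serial $N$ marks $O(k^4)$ children. Rule~\ref{prop:IN:important-node-serial-child-private} marks a child $K$ when $\priv(K)\neq\emptyset$; since the owners of any clique $C$ form a chain in $\pqmtree_Q$, at most one child of $N$ lies on the owner chain of $C$, and any child $K$ with $\priv(K)\neq\emptyset$ witnesses at least one clique of $\priv(N)$, so this rule marks at most $|\priv(N)| \leq k$ children. Rule~\ref{prop:IN:important-node-serial-child-no-priv} charges each marking to a freshly visited element of $\signatures(N)$, and since a signature is a pair of disjoint subsets of $\priv(N)$ of combined size at most four, we have $|\signatures(N)| = O(k^4)$. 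Multiplying by $O(k)$ important serial nodes gives $O(k^5)$.

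Finally, for the $O(k^6)$ bound on important vertices, rule~\ref{prop:IV:important-vertices-deepest-owner} contributes at most two vertices per clique introduction and rule~\ref{prop:IV:important-vertices-binding} at most two per merge event, giving $O(k)$ in each case. Rule~\ref{prop:IV:important-vertices-serial-node-priv} marks, for every weakly important child $K$ of a serial $N$, one vertex per clique in $\cliques(K)\cap\priv(N)$, whose size is at most $|\priv(N)| \leq k$; combined with the $O(k^5)$ bound on weakly important nodes this yields $O(k^6)$. The main subtlety I anticipate is the branching-point analysis for case (c) of the important-node count and the use of owner-chain linearity to bound rule~\ref{prop:IN:important-node-serial-child-private}; the remaining estimates are direct once these two structural facts are in place.
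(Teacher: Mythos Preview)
Your argument is correct and follows essentially the same charging strategy as the paper: introductions and merges for the important-node count, the size of $\signatures(N)$ for weakly important children, and the product for important vertices. You are in fact more careful than the paper on one point: the paper's one-line justification for the $O(k)$ important nodes appeals only to the $k$ introductions and $\leq k-1$ merges, but as you correctly identify, a node $N$ can be important solely because two of its children already have non-empty $\blocks$, without any introduction or merge occurring at $N$ itself. Your branching-point argument (the subforest of nodes with non-empty $\priv$ has at most $k$ leaves, hence at most $k-1$ branching vertices) closes this gap cleanly. The remaining bounds, for weakly important nodes and for important vertices, match the paper's reasoning exactly.
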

\begin{proof}
We mark at most $O(k)$ nodes as important as the algorithm introduces $k$ singleton blocks and merges blocks at most $k-1$ times.
We mark $O(k^5)$ nodes as weakly important as the set $\signatures(N)$ has size $O(k^4)$.
Consequently, we mark $O(k^6)$ vertices as important.
\end{proof}

Let $R$ denote the set of important vertices in $G$.
We now formally define the \emph{reduct} of $G$ with respect to $R$.
\begin{definition}
We say that a circular-arc graph $G'=(V',E')$ is a \emph{reduct} of $G$ with respect to $R$ if:
\begin{enumerate}
    \item $R\subseteq V'$.
    \item For every conformal model $\phi$ of $G$ there is a conformal model $\phi'$ of $G'$ such that $\phi | R^* \equiv \phi' | R^*$.
    \item For every conformal model $\phi'$ of $G'$ there is a conformal model $\phi$ of $G$ such that $\phi | R^* \equiv \phi' | R^*$. 
\end{enumerate}
\end{definition}
As we show in the Appendix in Section~\ref{subsec:reducts}, there is a polynomial-time algorithm computing a reduct $G'=(V',E')$ of $G$ with respect to $R$ such that $|V'|\leq 12 \cdot |R|$.
To finish the kernelization, we let $G^*$ be a reduct of $G$ respecting $R$ and let  
$C^*_i$ be $C_i$ restricted to $R$.
Note that $C^*_i$ is non-empty for every $i \in [k]$.
\begin{lemma}
$(G^*,C^*_1,\ldots,C^*_k)$ is a kernel of size $O(k^6)$ for the instance $(G,C_1,\ldots,C_k)$ of \hcp.
\end{lemma}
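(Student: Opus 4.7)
The size bound is routine: the observation immediately preceding the lemma certifies that $|R| = O(k^6)$, and the construction of the reduct (established in Section~\ref{subsec:reducts} of the Appendix) guarantees $|V(G^*)| \leq 12|R|$. Since each $C_i^* \subseteq R$, the total bit-size of the kernel $(G^*, C_1^*, \ldots, C_k^*)$ is $O(k^6)$. The rest of the proof amounts to checking that $(G, C_1, \ldots, C_k)$ is a yes-instance of \hcp\ iff $(G^*, C_1^*, \ldots, C_k^*)$ is. If the algorithm rejected while computing some $\blocks(N)$, then by Observation~\ref{obs:block-partition} the rejection witnesses that $G$ admits no $\{C_1,\ldots,C_k\}$-conformal model, so we may return any fixed trivial no-instance as the kernel and the equivalence becomes vacuous; from now on I assume the algorithm did not reject.

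For the forward direction, given a $\{C_1,\ldots,C_k\}$-conformal model $\phi$ of $G$, the reduct property yields a conformal model $\phi^*$ of $G^*$ with $\phi|R^* \equiv \phi^*|R^*$. Because $C_i^* \subseteq C_i$, any witnessing point in $\bigcap_{v \in C_i} \phi(v)$ lies also in $\bigcap_{v \in C_i^*} \phi(v)$, so $C_i^*$ is Helly in $\phi$; and since the Helly status of a subset of $R$ is determined purely by the circular word $\phi|R^*$, the same subset is Helly in $\phi^*$, and $\phi^*$ extends to a $\{C_1^*,\ldots,C_k^*\}$-conformal model of $G^*$.

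The reverse direction is the main obstacle. Given a $\{C_1^*,\ldots,C_k^*\}$-conformal model $\phi^*$ of $G^*$, the reduct property produces a conformal model $\phi$ of $G$ with $\phi|R^* \equiv \phi^*|R^*$, and the task is to show that $\phi$ can be modified (by rechoosing orderings at the PQM-nodes that do not involve vertices of $R$) into a model in which every $C_i$ is Helly. The plan is to establish an auxiliary claim: under the natural correspondence of PQM-nodes induced by the reduct, the block partitions $\blocks(N)$ and side partitions $\sides(B)$ computed by the algorithm on $(G, C_1,\ldots,C_k)$ coincide with those computed on $(G^*, C_1^*,\ldots,C_k^*)$. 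The core of this claim is a case analysis over the marking rules: rule~\ref{prop:IV:important-vertices-deepest-owner} ensures that every clique introduced at a node $N$ in $G$ is still introduced at the image of $N$ in $G^*$, because the two marked vertices (chosen adjacent in the overlap graph when $N$ affects the clique) witness both required orientations and the bindability relation; rule~\ref{prop:IV:important-vertices-binding} preserves every merge step executed by a prime or parallel node, because the marked witnesses make the binding observable in the reduct; and rules~\ref{prop:IN:important-node-serial-child-private}--\ref{prop:IV:important-vertices-serial-node-priv} preserve the behaviour of serial nodes, as every signature $(A,B) \in \signatures(N)$ realised by some child of $N$ in $G$ is realised by a weakly-important child, and the vertices marked inside it suffice to reproduce the same $\cliques_{01}/\cliques_{10}$-partition. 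I expect this serial case to be the hardest step: one must show that capturing only signatures of size at most four is sufficient to detect every possible merge as well as every rejection trigger, which is exactly what the bound $|A \cup B| \leq 4$ in the definition of $\signatures(N)$ is designed to buy.

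Once the equivalence of block/side decompositions is established, the reverse direction will follow by combining Lemma~\ref{lem:conformal-models-with-C-in-fixed-slot} with the node-by-node constructions of Section~\ref{sec:fpt-Helly-Cliques-problem}. The slot choices that make each $C_i^*$ Helly in $\phi^*$ translate, via the node correspondence, into a consistent family of slot choices for $C_1, \ldots, C_k$ in $G$; consistency follows because the block decomposition does not reject those choices. These slot choices are then realised by a conformal model of $G$ produced node by node along the PQM-tree of $G$, using admissible orderings at prime and parallel nodes, and the Trapezoid-Lemma/2-SAT mechanism at serial nodes, exactly as in the proof of Theorem~\ref{thm:helly-cliques}.
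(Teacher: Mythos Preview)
Your size bound and forward direction are correct and match the paper. The reverse direction, however, takes a route that is both different from the paper's and has real gaps.

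The paper does \emph{not} compare block decompositions of $G$ and $G^*$. Instead it argues directly on the model: starting from the conformal model $\phi$ of $G$ obtained via the reduct (extended by the clique letters so that it is $\{C^*_1,\ldots,C^*_k\}$-conformal), it traverses each $\pqmtree_Q$ of $G$ bottom-up and shows that at every node $N$ with $\priv(N)\neq\emptyset$ the model can be adjusted so that two invariants hold: (M1) the cliques in each block of $\blocks(N)$ sit in slots consistent with their sides, and (M2) every $c\in C\cap N$ has the point $C$ on the left of $\phi(c)$. The case analysis is over the type of $N$ (prime/serial, M-node/Q-node) and over whether $N$ was marked important or irrelevant. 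At an irrelevant node one reorders freely (only one child carries private cliques); at an important prime node the marked vertices already force the correct ordering; at an important serial node the signatures guarantee that every non-weakly-important child $K$ can be slid next to a weakly-important child $L$ that realises the same signature. This is where the bound $|A\cup B|\le 4$ is actually used: not to detect merges or rejections, but to certify that the cliques of $\cliques_{01}(K)$ and $\cliques_{10}(K)$ appear in the right relative order in $\phi$ and are separable by some $\phi(L)$.

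Your plan has two concrete problems. First, the ``natural correspondence of PQM-nodes induced by the reduct'' is not a bijection: the reduct construction in Section~\ref{subsec:reducts} collapses unimportant nodes and inserts fresh gadget vertices, so there is no clean node-to-node map along which one could transport $\blocks(\cdot)$ and $\sides(\cdot)$; making this precise would essentially require redoing the paper's bottom-up argument anyway. Second, your final step is miscited: the Trapezoid-Lemma/2-SAT machinery of Section~\ref{sec:fpt-Helly-Cliques-problem} is specific to the case where $V$ itself is serial and deals with the circular placement of CA-modules; it is not a tool for realising slot choices inside an arbitrary serial M-node. For internal serial M-nodes the relevant test is the acyclicity check described in Subsection~\ref{subsec:fpt-first-case}, and even then consistency of the block decomposition alone does not obviously imply acyclicity of all the induced ordering constraints. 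The paper sidesteps this entirely by modifying an already-existing model rather than synthesising one from slot data.
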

\begin{proof}
Since the reduct $G^*$ of $G$ respecting $R$ has size linear in $|R|$,
we deduce the instance $(G^*,C^*_1,\ldots,C^*_k)$ has size $O(k^6)$.

We need to show that $(G,C_1,\ldots,C_k)$ is a \yes-instance if and only if 
$(G^*,C^*_1,\ldots,C^*_k)$ is a \yes-instance.

Suppose $(G,C_1,\ldots,C_k)$ is a \yes-instance of \hcp.
Let $\phi$ be a $\{C_1,\ldots,C_k\}$-conformal model of $G$.
Let $\phi^*$ be a conformal model of $G^*$ such that $\phi^*|R \equiv \phi|R$;
such a model exists as $G^*$ is the reduct of $G$ respecting $R$.
One can easily observe that $\phi^*$ can be easily extended to the $\{C^*_1,\ldots,C^*_k\}$-conformal model of $G^*$ as 
we can insert the clique letters $\{C^*_1,\ldots,C^*_k\}$ into $\phi^*$ so as its position relative to 
the chords $\phi^*(R)$ is the same as the relative position of the clique letters $\{C_1,\ldots,C_k\}$ to 
the chords $\phi(R)$ in the model $\phi$.

Suppose $(G^*,C^*_1,\ldots,C^*_k)$ is a \yes-instance of \hcp.
Let $\phi^*$ be a $\{C^*_1,\ldots,C^*_k\}$-conformal model of $G^*$.
Let $\phi$ be an admissible model of $G$ such that $\phi|R \equiv \phi^*|R$ -- such a model exists as $G^*$ is a reduct of $G$.
Next, we extend $\phi$ by the letters $\{C^*_1,\ldots,C^*_k\}$ so as $\phi$ is a $\{C^*_1,\ldots,C^*_k\}$-conformal model for $G$, and finally we replace the letter $C^*_i$ by $C_i$ in $\phi$ for $i \in [k]$.
As far we know that $C_i$ is on the left side of $\phi(v)$ for every $v \in C_i^*$ and every $i \in [k]$.

Now, for every Q-node $Q$ we proceed the trees $\pqmtree_Q$ bottom up (in any order) and we show we can modify $\phi$ so as for every node $N \in \pqmtree_Q$ with $\priv(N) \neq \emptyset$ the following properties hold:
\begin{description}
 \item [\namedlabel{prop:M:model-blocks-consistency}{(M1)}] For every block $B \in \blocks(N)$ with $\sides(B) = \{B',B''\}$ there
 is $j \in \{0,1\}$ such that for every $C' \in B'$ the clique $C'$ occurs in $\phi|S^j_{C}$ and for every $C'' \in B''$ the clique $C''$ occurs in $\phi|S^{1-j}_{C''}$.
\end{description}
\begin{description}
\item [\namedlabel{prop:M:model-cliques-consistency}{(M2)}] For every $C \in \priv(N)$ and every $c \in C \cap N$ the point $C$ is on the left side of the chord $\phi(c)$.
\end{description}

Suppose $N$ is a prime M-node which was marked as irrelevant by rule \ref{prop:IN:important-node-blocks}.
In particular, it means there is a child $K$ of $N$ such that $\blocks(N) = \blocks(K)$.
So, we have also $\priv(N) = \priv(K)$.
Suppose $N \subseteq S$ for CA-module $S$ of $G$ and suppose $\{A',A''\}$ is a partition of the cliques affected by $N$ such that
any order $\pi \in \Pi(N)$ binds the cliques from $A'$ in $S^j$ and binds the cliques from $A''$ in $S^{1-j}$, for some $j \in \{0,1\}$.
Since $\blocks(N) = \blocks(K)$ and since $N$ did not result in rejection of the input instance and in merging two blocks of $N$,
we deduce that there is a block $B$ in $\blocks(K)$ which contains $A'$ and $A''$ in different sides.
Since the model $\phi$ satisfies \ref{prop:M:model-blocks-consistency} for the node $K$, there is $j \in \{0,1\}$ such that $B'$ is contained in $\phi|K^j$ and $B''$ is contained in $\phi|K^{1-j}$ for some $j \in \{0,1\}$.
Now, Lemma \ref{lem:M-node-owner-properties}.\eqref{item:M-node-owner-properties-prime} asserts we can adjust the ordering of $N$ in $\phi$ to obtain a model which satisfies conditions~\ref{prop:M:model-blocks-consistency}--\ref{prop:M:model-cliques-consistency} for $N$.

We can use an analogous argument (basing on Lemma~\ref{lem:Q-node-owner-properties}.\eqref{item:Q-node-owner-properties-prime}) to assert properties
\ref{prop:M:model-blocks-consistency} and \ref{prop:M:model-cliques-consistency} of $\phi$ for the case when $N$ is a prime irrelevant Q-node.

Suppose $N$ is a prime M-node which was marked as important by rule \ref{prop:IN:important-node-blocks}.
Suppose $N \subseteq S$ for CA-module $S$ of $G$.
Suppose $\{A',A''\}$ is a partition of the cliques affected by $N$ such that
any order $\pi \in \Pi(N)$ binds the cliques from $A'$ in $S^j$ and the cliques from from $A''$ in $S^{1-j}$, for some $j \in \{0,1\}$.
We claim that $\phi$ already satisfies condition \ref{prop:M:model-blocks-consistency} for $N$.
Clearly, $\phi$ satisfies condition \ref{prop:M:model-blocks-consistency} for every child $K$ of $N$.
Consider the case when $N$ merges two blocks $B_1$ and $B_2$ from $\blocks(N)$ through cliques $C'$ and $C''$. 
Suppose $C'$ and $C''$ are bound on the same side by $N$.
Note that rules \ref{prop:IV:important-vertices-deepest-owner} and \ref{prop:IV:important-vertices-binding} assert 
there are important vertices $a,b \in C' \cap N$ and $c,d \in C'' \cap N$ such that
$a \sim b$, $c \sim d$, $a$ and $b$ are in distinct children of $N$,  
and $c$ and $d$ are in distinct children of $N$.
Since $\phi(a)$ and $\phi(b)$ have $C'$ on the left side, $\phi(c)$ and $\phi(d)$ have $C''$ on the left side,
and since $\phi_{|N}$ binds $C$ and $C'$ in the same slot, we deduce $C$ and $C'$ occur in $\phi|S^j$ for some $j \in \{0,1\}$.
Since $\phi$ was consistent with the blocks of $N$ before the merge, 
we deduce $\phi$ is also consistent after the merge.
Finally, since $N$ did not result in rejection, 
we deduce there is a block of $N$ which contains $A'$ and $A''$ in different sides.
This shows $\phi$ satisfies \ref{prop:M:model-blocks-consistency} for $M$.
Note that, by Lemma~\ref{lem:M-node-owner-properties}.\eqref{item:M-node-owner-properties-prime}, property \ref{prop:M:model-cliques-consistency} is satisfied for every clique $C$ not introduced by $N$ as 
there are important vertices $a,b \in C \cap N$ contained in different children of $N$.
Also, by Lemma~\ref{lem:deepest-owner-properties}.\eqref{item:deepest-owner-properties-prime}, 
we can easily assert property \ref{prop:M:model-cliques-consistency} for every clique $C$ introduced by~$N$, by possibly shifting the clique $C$ inside the slot $\phi|N^j$ in which $C$ occurs.

We can use an analogous argument (basing on Lemma~\ref{lem:Q-node-owner-properties}.\eqref{item:Q-node-owner-properties-prime}) to assert property~\ref{prop:M:model-cliques-consistency} of $\phi$ for the case when $N$ is a prime important Q-node.

Suppose $N$ is a serial M-node and suppose $N$ was marked as irrelevant by rule~\ref{prop:IN:important-node-blocks}.
In particular, there is a child $K$ of $N$ such that $\blocks(N) = \blocks(K)$.
Suppose $N \subseteq S$ for CA-module $S$ of $G$.
Clearly, since $N$ did not result in rejection and $\phi$ satisfies property \ref{prop:M:model-blocks-consistency} for $K$, 
we deduce $\phi$ satisfies property \ref{prop:M:model-blocks-consistency} for $N$. 
Now, we show we can modify~$\phi$ in order to satisfy~\ref{prop:M:model-cliques-consistency} for $N$.
Let $L$ be a child of $N$ different than $K$.
Claim~\ref{claim:I_N-properties}.\eqref{item:I_N-properties-consistent-orientation} asserts that for every clique $C \in \cliques(L) \cap \priv(N)$ the chords 
from $C \cap L$ have the same orientation in~$\LLL$.
Clearly, if $\cliques(L) \cap \priv(N) = \{C\}$, we can shift the chords $\phi(L)$ in $\phi$ to fulfil property~\ref{prop:M:model-cliques-consistency} for vertices $c \in C \cap L$.
Suppose now that $|\cliques(L) \cap \priv(N)| \geq 2$ and suppose $A'$ and $A''$ 
is the partition of the set $\cliques(L) \cap \priv(N)$ such that
the elements from $C \cap L$ are $(L^0,L^1)$-oriented for $C \in A'$ 
and the elements from $C \cap L$ are $(L^1,L^0)$-oriented for $C \in A''$.
In particular, $N$ binds the cliques from $A'$ (or from $A''$) on the same side and the cliques from $A'$ and the cliques from $A''$ on the opposite sides. 
Again, since $\blocks(N) = \blocks(K)$ and since $N$ did not result in rejection, we deduce $A'$ and $A''$ are contained in different sides of a block $B$ from $\blocks(N)$.
Since $\blocks(N) = \blocks(K)$, we deduce there is $j \in \{0,1\}$ such that $A'$ is contained in $\phi|K^j$ and $A''$ is contained in $\phi|K^{1-j}$.
In particular, we can place the chords of $\phi(L)$ to satisfy condition ~\ref{prop:M:model-cliques-consistency} for every vertex $c \in C \cap L$ and every clique $C \in \cliques(L) \cap \priv(N)$.
This shows we can modify $\phi$ so as it satisfies property~\ref{prop:M:model-cliques-consistency} for $N$.

We can use an analogous argument to assert properties
\ref{prop:M:model-blocks-consistency} and \ref{prop:M:model-cliques-consistency} 
for the case when $N$ is a serial irrelevant Q-node.
Note that in this case $N=V$, $V$ is serial in $\strongModules(G_{ov})$, and
$\priv(S) = \{C_1,\ldots,C_k\}$ for some CA-module $S$ of $G$.

Suppose $N$ is a serial M-node and suppose $N$ was marked as important.
Suppose that $N \subseteq S$ for CA-module $S$ of $G$.
First note that $\phi$ satisfies the following properties with respect to every weakly important child $K$:
\begin{itemize}
 \item $\phi$ is $\priv(K)$-conformal model of $G$,
 \item for every $C \in \cliques(K) \cap \priv(N)$ the clique $C$ is to the left of every chord from $\phi(C \cap K)$.
\end{itemize}

Suppose $K$ was not marked as weakly important.
In particular, we have $\priv(K) = \emptyset$. 
Let $\tau^0$ be the sequence of all cliques from $\cliques(K)$ occurring in $\phi|N^0$ and let $\tau^1$ be the reverse of the sequence of all cliques from $\cliques(K)$ occurring in $\phi|N^1$ -- see Figure \ref{fig:signatures} for an illustration.

Suppose that the word $\tau^0$ ($\tau^1$) contains a clique from the set $\cliques_{01}(K)$; 
let $C^0_R$ ($C^1_R$, respectively) be the leftmost clique from $\tau^0$ (in $\tau^1$, respectively) with this property.
Suppose that the word $\tau^0$ ($\tau^1$) contains a clique from the set $\cliques_{10}(K)$; 
let $C^0_L$ ($C^1_L$, respectively) be the rightmost clique from $\tau^0$ (from $\tau^1$, respectively) with this property.
See Figure~\ref{fig:signatures} and the node $K=N_6$ for an illustration.
We claim that $C^0_L$ is before $C^0_R$ in the word $\tau^0$ and 
$C^1_L$ is before $C^1_R$ in $\tau^1$.
Indeed, note that $(\{C^0_R, C^1_R\},\{C^0_L, C^1_L\})$ is a signature in
the set $\signatures_K(N)$.
Since $K$ was not marked as weakly important, 
there is a weakly important child $L$ of $N$ which has 
the signature $(\{C^0_R, C^1_R\},\{C^0_L, C^1_L\})$ in the set $\signatures_L(N)$.
Now, we can place the chords from $\phi(K)$ just next to the chords $\phi(L)$ so
as $\phi$ is a valid conformal model of $G$ -- see Figure~\ref{fig:signatures}.
We proceed analogously in the case when some of the cliques $C^0_L$, $C^1_L$,$C^0_R$, $C^1_R$ is not defined.
Indeed, using similar arguments as above we can prove that the cliques from $\cliques_{10}(K)$ occur before the cliques from $\cliques_{01}(K)$ in $\tau^0$ and $\tau^1$ and there 
are chords $\phi(L)$ for some weakly important child $L$ of $N$ which separate those two sets.
Clearly, we can place the chords from $\phi(K)$ next to the chords from $\phi(L)$.
After performing the above procedure for every not weakly important child of $N$, we get a model $\phi$ which satisfies properties~\ref{prop:M:model-blocks-consistency}-\ref{prop:M:model-cliques-consistency} with respect to $N$.

Finally, assume $N$ is a serial $Q$-node which was marked as important.
It means that $N=V$ and $V$ is serial in $\strongModules(G_{ov})$.
In this case we proceed similarly as for important serial M-nodes.
Clearly, we need to take care about children of $V$ which are not weakly important.
Let $K$ be one of such child.

First, suppose $\cliques_{01}(K)\neq \emptyset$ and $\cliques_{10}(K)\neq \emptyset$.
Let $\tau$ be the sequence of all cliques from $\cliques(K)$ occurring in $\phi$.
We need to show that the cliques from $\cliques_{01}(K)$ form a contiguous subword in 
$\tau$ and the cliques from $\cliques_{10}(K)$ form a contiguous subword in $\tau$.
Suppose this is not the case. 
It means that there are $C_1,C_3 \in \cliques_{01}(K)$ and $C_2,C_4 \in \cliques_{10}(K)$ such that $C_1C_2C_3C_4$ is a subword of $\tau$.
However, $(\{C_1,C_3\},\{C_2,C_4\})$ is a signature in $\signatures(V)$,
which was marked as visited by some weakly important child $L$ of $V$ such that $\priv(L) = \emptyset$.
Since $\phi|L^0$ and $\phi|L^1$ do not contain cliques and since for every $C \in \{C_1,\ldots,C_k\}$ the chords from $\phi(C \cap L)$ are to the left of $C$,
we deduce $C_1C_2C_3C_4$ can not be a subword of $\phi$.
We use similar arguments to show there is a child $L$ of $V$ with $\priv(L) = \emptyset$ which separates the cliques from $\cliques_{01}(K)$ and from $\cliques_{10}(K)$. 
Clearly, we can place the chords $\phi(K)$ next to $\phi(L)$ to assert condition 
\ref{prop:M:model-cliques-consistency} for $K$.

Suppose now that $\cliques_{01}(K)=\emptyset$ (the case $\cliques_{10}(K)=\emptyset$ is analogous).
Note that this implies $\cliques(K)=\cliques_{10}(K)$.
Let $L$ be any weakly important child of $V$ such that $\priv(L) = \emptyset$.
Let $\tau_L$ and $\tau_R$ be the sequences of all cliques from $\cliques(K)$ lying on different sides of $\phi(L)$, respectively.
If $\tau_L$ or $\tau_R$ is empty, we can clearly place the chords $\phi(K)$ next to $\phi(L)$ to assert condition \ref{prop:M:model-cliques-consistency} for $K$.
So, suppose both $\tau_L$ and $\tau_R$ are non-empty.
Let $C_1,C_2$ be the first and last elements of $\tau_L$ and let $C_3,C_4$ be the first and last elements of $\tau_R$, respectively.
Since $(\emptyset, \{C_1,C_2,C_3,C_4\})$ is a signature in $\signatures(V)$, it was marked as visited by some weakly important child $L'$ of $V$ such that $\priv(L')=\emptyset$.
Since $\phi(L')$ must also intersect $\phi(L)$, we can clearly place the chords $\phi(K)$ next to $\phi(L')$ to assert condition \ref{prop:M:model-cliques-consistency} for $K$.

Proceeding this way for every not weakly important child of $V$, we obtain a model 
$\phi$ which satisfies property~\ref{prop:M:model-cliques-consistency} for $V$.

\end{proof}

\section{\hcp\ is \NP-complete}
\label{sec:npc}
In this section we show that \hcp\ is \NP-complete.
Moreover, it remains \NP-complete even if we fix the type of module formed by the vertex set in the modular decomposition of the overlap graph.
This is an improvement over the \NP-completeness proof of Agaoglu {\c{C}}agirici and Zeman~\cite{AgaogluZeman22}, as it required that the type of module was serial.

We are reducing from the \total problem.
In \total we are given a universe $S$ of size $n$ and a set $\mathbb{T}$ consisting of $m$ triples $(x, y, z)$ in $S^3$. 
We want to determine whether there exists a linear ordering $(S,{<})$ of $S$ such that for every $(x,y,z) \in \mathbb{T}$ we have either $x< y < z$ or $x > y > z$ in $(S,{<})$. 
Opatrny proved that the Total Ordering problem is \NP-complete~\cite{Opa79}.
Also, the proof given in \cite{Opa79} asserts that, assuming Exponential Time Hypothesis (ETH), the Total Ordering problem cannot be solved in time $2^{o(m)}n^{O(1)}$.

\begin{proof}
Let $S = [n]$ and $\mathbb{T}$ be an instance of the Total Ordering problem.
Now, let $G$ be any circular arc graph such that the PQM-tree of $G$ contains a serial M-node $M$ with the children $M_1,\ldots,M_n$, where for every $i \in [n]$ we have 
$M_i = \{u_i,v_i\}$ for some two non-adjacent vertices $u_i,v_i$ in $G$, 
$u_i$ is $(M^0,M^1)$-oriented and $v_i$ is $(M^1,M^0)$-oriented in~$\MMM$.
In particular, for every $i \in [n]$ we have that $M_i$ is a parallel child of $M$ and for every two distinct $i,j$ from $[n]$ we have that $M_i \sim M_j$.

Now, for every triple $(x,y,z)$ in $\mathbb{T}$ we define two cliques $C_{(x,y,z)}$ and $C'_{(x,y,z)}$ in $G$, where
$$C_{(x,y,z)} = \{v_x,u_y,u_z\} \quad \text{and} \quad C'_{(x,y,z)} = \{v_z,u_x,u_y\}.$$
See Figure \ref{fig:npc}.

\begin{figure}[htp!]
\centering
\begin{tikzpicture}[yscale=0.85,xscale=1.28,>=latex]

\coordinate (av1) at (0.9,3) {};
\coordinate (au1) at (1.1,3) {};
\coordinate (av2) at (1.9,3) {};
\coordinate (au2) at (2.1,3) {};
\coordinate (aC') at (2.5,3) {};
\coordinate (av3) at (2.9,3) {};
\coordinate (au3) at (3.1,3) {};

\coordinate (bv1) at (2.9,0) {};
\coordinate (bu1) at (3.1,0) {};
\coordinate (bC) at (2.5,0) {};
\coordinate (bv2) at (1.9,0) {};
\coordinate (bu2) at (2.1,0) {};
\coordinate (bv3) at (0.9,0) {};
\coordinate (bu3) at (1.1,0) {};

\coordinate (lav1) at (0.85,3.3) {};
\coordinate (lau1) at (1.15,3.3) {};
\coordinate (laM1) at (1.0,3.8) {};

\coordinate (lav2) at (1.85,3.3) {};
\coordinate (lau2) at (2.15,3.3) {};
\coordinate (laM2) at (2.0,3.8) {};

\coordinate (lav3) at (2.85,3.3) {};
\coordinate (lau3) at (3.15,3.3) {};
\coordinate (laM3) at (3.0,3.8) {};

\coordinate (lbv1) at (2.85,-0.3) {};
\coordinate (lbu1) at (3.15,-0.3) {};
\coordinate (lbM1) at (3.0,-0.8) {};

\coordinate (lbv2) at (1.85,-0.3) {};
\coordinate (lbu2) at (2.15,-0.3) {};
\coordinate (lbM2) at (2.0,-0.8) {};

\coordinate (lbv3) at (0.85,-0.3) {};
\coordinate (lbu3) at (1.15,-0.3) {};
\coordinate (lbM3) at (1.0,-0.8) {};

\coordinate (M0) at (3.8,3);
\coordinate (M1) at (0.2,0);

\draw[->] (0.5,3) -- (3.5,3);
\draw[<-] (0.5,0) -- (3.5,0);

\draw[<-,thick,red] (av1)--(bv1);
\draw[->,thick,blue] (au1)--(bu1);

\draw[<-,thick] (av2)--(bv2);
\draw[->,thick,red] (au2)--(bu2);
\draw[->,thick,blue] ($(au2) + (0.035,0)$)--($(bu2) + (0.035,0)$);

\draw[<-,thick,blue] (av3)--(bv3);
\draw[->,thick,red] (au3)--(bu3);

\tikzstyle{every node}=[circle,minimum size=4pt,inner sep=0pt,draw,fill]
\node[blue] at (aC') {};
\node[red] at (bC) {};

\tikzstyle{every node}=[inner sep=1pt]
\begin{scriptsize}
\node at (M0) {$M^0$};
\node at (M1) {$M^1$};
\end{scriptsize}

\begin{tiny}

\node at (lav1) {$v^1_1$};
\node at (lbv1) {$v^0_1$};
\node at (lau1) {$u^0_1$};
\node at (lbu1) {$u^1_1$};

\node at (lav2) {$v^1_2$};
\node at (lbv2) {$v^0_2$};
\node at (lau2) {$u^0_2$};
\node at (lbu2) {$u^1_2$};

\node at (lav3) {$v^1_3$};
\node at (lbv3) {$v^0_3$};
\node at (lau3) {$u^0_3$};
\node at (lbu3) {$u^1_3$};

\node at (laM1) {$M^0_1$};
\node at (laM2) {$M^0_2$};
\node at (laM3) {$M^0_3$};

\node at (lbM1) {$M^1_1$};
\node at (lbM2) {$M^1_2$};
\node at (lbM3) {$M^1_3$};

\end{tiny}
\draw[white] (0,0)--(-0,-0.5);
\draw[white] (4,3)--(4,3.5);
\end{tikzpicture}
\begin{tikzpicture}[yscale=0.85,xscale=1.28,>=latex]

\coordinate (av1) at (0.9,3) {};
\coordinate (au1) at (1.1,3) {};
\coordinate (av2) at (1.9,3) {};
\coordinate (au2) at (2.1,3) {};
\coordinate (aC') at (2.5,3) {};
\coordinate (av3) at (2.9,3) {};
\coordinate (au3) at (3.1,3) {};

\coordinate (bv1) at (2.9,0) {};
\coordinate (bu1) at (3.1,0) {};
\coordinate (bC) at (2.5,0) {};
\coordinate (bv2) at (1.9,0) {};
\coordinate (bu2) at (2.1,0) {};
\coordinate (bv3) at (0.9,0) {};
\coordinate (bu3) at (1.1,0) {};

\coordinate (lav1) at (0.85,3.3) {};
\coordinate (lau1) at (1.15,3.3) {};
\coordinate (laM1) at (1.0,3.8) {};

\coordinate (lav2) at (1.85,3.3) {};
\coordinate (lau2) at (2.15,3.3) {};
\coordinate (laM2) at (2.0,3.8) {};

\coordinate (lav3) at (2.85,3.3) {};
\coordinate (lau3) at (3.15,3.3) {};
\coordinate (laM3) at (3.0,3.8) {};

\coordinate (lbv1) at (2.85,-0.3) {};
\coordinate (lbu1) at (3.15,-0.3) {};
\coordinate (lbM1) at (3.0,-0.8) {};

\coordinate (lbv2) at (1.85,-0.3) {};
\coordinate (lbu2) at (2.15,-0.3) {};
\coordinate (lbM2) at (2.0,-0.8) {};

\coordinate (lbv3) at (0.85,-0.3) {};
\coordinate (lbu3) at (1.15,-0.3) {};
\coordinate (lbM3) at (1.0,-0.8) {};

\coordinate (M0) at (3.8,3);
\coordinate (M1) at (0.2,0);

\draw[->] (0.5,3) -- (3.5,3);
\draw[<-] (0.5,0) -- (3.5,0);

\draw[<-,thick,blue] (av1)--(bv1);
\draw[->,thick,red] (au1)--(bu1);

\draw[<-,thick] (av2)--(bv2);
\draw[->,thick,red] (au2)--(bu2);
\draw[->,thick,blue] ($(au2) + (0.035,0)$)--($(bu2) + (0.035,0)$);

\draw[<-,thick,red] (av3)--(bv3);
\draw[->,thick,blue] (au3)--(bu3);

\tikzstyle{every node}=[circle,minimum size=4pt,inner sep=0pt,draw,fill]
\node[red] at (aC') {};
\node[blue] at (bC) {};

\tikzstyle{every node}=[inner sep=1pt]
\begin{scriptsize}
\node at (M0) {$M^0$};
\node at (M1) {$M^1$};
\end{scriptsize}

\begin{tiny}

\node at (lav1) {$v^1_3$};
\node at (lbv1) {$v^0_3$};
\node at (lau1) {$u^0_3$};
\node at (lbu1) {$u^1_3$};

\node at (lav2) {$v^1_2$};
\node at (lbv2) {$v^0_2$};
\node at (lau2) {$u^0_2$};
\node at (lbu2) {$u^1_2$};

\node at (lav3) {$v^1_1$};
\node at (lbv3) {$v^0_1$};
\node at (lau3) {$u^0_1$};
\node at (lbu3) {$u^1_1$};

\node at (laM1) {$M^0_3$};
\node at (laM2) {$M^0_2$};
\node at (laM3) {$M^0_1$};

\node at (lbM1) {$M^1_3$};
\node at (lbM2) {$M^1_2$};
\node at (lbM3) {$M^1_1$};

\end{tiny}
\draw[white] (0,0)--(-0,-0.5);
\draw[white] (4,3)--(4,3.5);
\end{tikzpicture}
\begin{tikzpicture}[yscale=0.85,xscale=1.28,>=latex]

\coordinate (av1) at (0.9,3) {};
\coordinate (au1) at (1.1,3) {};
\coordinate (av2) at (1.9,3) {};
\coordinate (au2) at (2.1,3) {};
\coordinate (aC') at (2.5,3) {};
\coordinate (av3) at (2.9,3) {};
\coordinate (au3) at (3.1,3) {};

\coordinate (bv1) at (2.9,0) {};
\coordinate (bu1) at (3.1,0) {};
\coordinate (bC) at (2.5,0) {};
\coordinate (bv2) at (1.9,0) {};
\coordinate (bu2) at (2.1,0) {};
\coordinate (bv3) at (0.9,0) {};
\coordinate (bu3) at (1.1,0) {};

\coordinate (lav1) at (0.85,3.3) {};
\coordinate (lau1) at (1.15,3.3) {};
\coordinate (laM1) at (1.0,3.8) {};

\coordinate (lav2) at (1.85,3.3) {};
\coordinate (lau2) at (2.15,3.3) {};
\coordinate (laM2) at (2.0,3.8) {};

\coordinate (lav3) at (2.85,3.3) {};
\coordinate (lau3) at (3.15,3.3) {};
\coordinate (laM3) at (3.0,3.8) {};

\coordinate (lbv1) at (2.85,-0.3) {};
\coordinate (lbu1) at (3.15,-0.3) {};
\coordinate (lbM1) at (3.0,-0.8) {};

\coordinate (lbv2) at (1.85,-0.3) {};
\coordinate (lbu2) at (2.15,-0.3) {};
\coordinate (lbM2) at (2.0,-0.8) {};

\coordinate (lbv3) at (0.85,-0.3) {};
\coordinate (lbu3) at (1.15,-0.3) {};
\coordinate (lbM3) at (1.0,-0.8) {};

\coordinate (M0) at (3.8,3);
\coordinate (M1) at (0.2,0);

\draw[->] (0.5,3) -- (3.5,3);
\draw[<-] (0.5,0) -- (3.5,0);

\draw[<-,thick] (av1)--(bv1);
\draw[->,thick,red] (au1)--(bu1);
\draw[->,thick,blue] ($(au1) + (0.04,0)$)--($(bu1) + (0.04,0)$);

\draw[<-,thick,red] (av2)--(bv2);
\draw[->,thick,blue] (au2)--(bu2);

\draw[<-,thick,blue] (av3)--(bv3);
\draw[->,thick,red] (au3)--(bu3);

\tikzstyle{every node}=[circle,minimum size=4pt,inner sep=0pt,draw,fill]
\node[blue] at (aC') {};

\tikzstyle{every node}=[inner sep=1pt]
\begin{scriptsize}
\node at (M0) {$M^0$};
\node at (M1) {$M^1$};
\end{scriptsize}

\begin{tiny}

\node at (lav1) {$v^1_2$};
\node at (lbv1) {$v^0_2$};
\node at (lau1) {$u^0_2$};
\node at (lbu1) {$u^1_2$};

\node at (lav2) {$v^1_1$};
\node at (lbv2) {$v^0_1$};
\node at (lau2) {$u^0_1$};
\node at (lbu2) {$u^1_1$};

\node at (lav3) {$v^1_3$};
\node at (lbv3) {$v^0_3$};
\node at (lau3) {$u^0_3$};
\node at (lbu3) {$u^1_3$};

\node at (laM1) {$M^0_2$};
\node at (laM2) {$M^0_1$};
\node at (laM3) {$M^0_3$};

\node at (lbM1) {$M^1_2$};
\node at (lbM2) {$M^1_1$};
\node at (lbM3) {$M^1_3$};

\end{tiny}
\draw[white] (0,0)--(-0,-0.5);
\draw[white] (4,3)--(4,3.5);
\end{tikzpicture}
\caption{
\label{fig:npc}
For a triple $(1,2,3)$ in $\TTT$ we have two cliques: $C_{(1,2,3)}$ depicted in red and $C'_{(1,2,3)}$ depicted in blue.
If $\phi_{|M}$ has $M^0_2$ between $M^0_1$ and $M^0_3$ (to the left) or
between $M^0_3$ and $M^0_1$ (in the center), then both $C_{(1,2,3)}$ and $C'_{(1,2,3)}$ are Helly (witnessed by a red and blue point, respectively); otherwise, either $C_{(1,2,3)}$ or $C'_{(1,2,3)}$ is non Helly (to the right).
}
\end{figure}
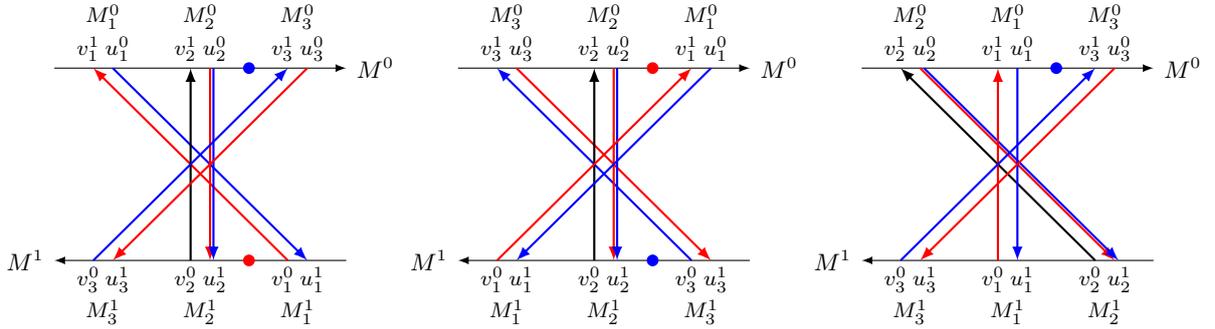

Let $\phi$ be a conformal model of $G$ and let $\phi_{|M} = (\pi^0_M,\pi^1_M)$ 
be an admissible ordering of $M$ in $\phi$.
Note that both $C_{(x,y,z)}$ and $C'_{(x,y,z)}$ satisfy the Helly property in $\phi$ if and only if $M^0_y$ occurs between $M^0_x$ and $M^0_z$ or between $M^0_z$ and $M^0_x$ in $\pi^0_M$ -- see Figure~\ref{fig:npc}.
Finally, let $$\mathbb{C} = \{ C_{(x,y,z)}, C'_{(x,y,z)}: (x,y,z) \in \mathbb{T} \}.$$
Clearly, $\mathbb{C}$ contains $2m$ cliques in $G$. 
The observation made above asserts that $G, \mathbb{C}$ is a \yes-instance of \hcp\ if and only if $S, \mathbb{T}$ is a \yes-intance of the Total Ordering problem.
\end{proof}

Due to the above construction we obtain the following theorem.
\begin{theorem}
Assuming ETH, \hcp\ cannot be solved in time $\Oh{2^{o(k)}|V(G)|^{O(1)}}$ even if we fix the type of the module formed by the vertex set $V$ in the modular decomposition of the overlap graph $G_{ov}$ of the input graph $G=(V,E)$.
\end{theorem}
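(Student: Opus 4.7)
The plan is to trace the parameter carefully through the reduction from \total\ given in the previous subsection and to refine the construction so that the top module of the overlap graph can be made to have any prescribed type. First I would recall that Opatrny's reduction (and standard sparsification arguments) imply that \total\ on $n$ variables and $m$ triples cannot be solved in time $2^{o(m)}n^{O(1)}$ unless ETH fails. In the reduction already presented, each triple $(x,y,z) \in \mathbb{T}$ produces exactly two cliques $C_{(x,y,z)}$ and $C'_{(x,y,z)}$ of $G$, so the resulting instance of \hcp\ has $k = 2m$ cliques and $|V(G)| = 2n + O(1)$. A hypothetical algorithm solving \hcp\ in time $2^{o(k)}|V(G)|^{O(1)}$ would then yield a $2^{o(m)}n^{O(1)}$ algorithm for \total, contradicting ETH. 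This gives the ETH lower bound, which in particular implies Theorem~\ref{thm:intro:helly-cliques-par}.\eqref{item:intro:helly-cliques-eth}.

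Next I would verify that the base graph $G$ from the reduction can be chosen so that the top module $V$ in $\strongModules(G_{ov})$ has any prescribed type. For the serial case, one can simply take $G$ so that $V = M$, i.e. the circular-arc graph whose CA-modules are the pairs $M_i = \{u_i,v_i\}$ with $M_i \sim M_j$ for $i \neq j$ and $u_i \parallel v_i$; the existence of such a $G$ is witnessed by a concrete model in which, for each $i$, $u_i$ and $v_i$ are placed as two disjoint short arcs at ``antipodal'' positions inside a common band, and all the bands pairwise overlap. For the parallel case, I would add to $G$ an extra connected component of the overlap graph (e.g. a single extra pair of disjoint twin arcs), so that $V$ becomes parallel with two children, one of them still carrying the serial M-node $M$ used by the reduction. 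For the prime case, I would append a small gadget (a constant-size induced subgraph of an overlap graph that is prime together with $M$'s children, e.g. a $P_4$-like attachment whose vertices correspond to additional arcs overlapping $M$ asymmetrically) so that $\strongModules(G_{ov})$ has $V$ prime while still containing $M$ as an inner strong module. In all three cases the modification adds only $O(1)$ vertices and clearly preserves the equivalence of the reduction, because the cliques of $\mathbb{C}$ are entirely supported in $M$ and the behaviour of admissible orderings of $M$ is unaffected by what happens above $M$ in the PQM-tree.

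The main obstacle will be the prime case: one must design a gadget that makes $V$ prime in $\strongModules(G_{ov})$ without inadvertently destroying the module $M$ or fixing the orientation of $M$ in an undesired way, and one must check that the added vertices do not create new cliques that must satisfy the Helly property (they are not part of any $C_{(x,y,z)}$ or $C'_{(x,y,z)}$, but one must verify that no new rigid non-Helly subcliques arise from the gadget). I expect this to be manageable by choosing a gadget whose arcs can always be realized consistently with any admissible ordering of $M$. With this done, in each of the three cases the reduction yields a family of \hcp\ instances with the prescribed type of the top module and with $k = 2m$, $|V(G)| = O(n)$, and the ETH lower bound for \total\ transfers verbatim, completing the proof.
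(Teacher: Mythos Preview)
Your core argument is correct and matches the paper's proof exactly: trace the parameter through the reduction from \total\ ($k=2m$, $|V(G)|=O(n)$), invoke the ETH lower bound $2^{o(m)}n^{O(1)}$ for \total, and conclude.

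Where you diverge is only in the level of detail. The paper simply asserts ``let $G$ be \emph{any} circular-arc graph such that the PQM-tree of $G$ contains a serial M-node $M$ with children $M_1,\ldots,M_n$'' and leaves the existence of such a $G$ with a prescribed top-module type implicit. You try to spell this out case by case, which is reasonable but not something the paper does. One small slip: in your serial case, taking $V=M$ would make $M$ the (serial) Q-node rather than an M-node; to literally match the paper's setup you would still need to bury $M$ inside some CA-module, e.g.\ by adding one more pair of vertices so that $M$ sits strictly below the root.

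Your stated ``main obstacle'' is not one. The worry that the gadget might ``create new cliques that must satisfy the Helly property'' or ``new rigid non-Helly subcliques'' is misplaced: in \hcp\ only the input cliques $C_{(x,y,z)},C'_{(x,y,z)}$ are required to be Helly, and these involve only vertices of $M$. Whatever cliques the gadget produces among its own vertices (or between gadget vertices and $M$) are irrelevant to the instance, so no such check is needed.
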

\begin{proof}

Assuming ETH, the Total Ordering problem cannot be solved in time $\Oh{2^{o(m)}n^{O(1)}}$.

Suppose for the sake of contradiction that we can solve \hcp\ in time $\Oh{2^{o(k)}|V(G)|^{O(1)}}$.
Since in the construction given above we create $2m$ cliques, we can obtain an algorithm solving the Total Ordering problem in time $\Oh{2^{o(m)}n^{O(1)}}$, 
which is not possible if ETH holds.
\end{proof}

\section{Appendix}
\label{sec:appendix}

\subsection{Reducts of circular-arc graphs}
\label{subsec:reducts}

\begin{definition}
Let $G=(V, E)$ be a circular-arc graph and let $U\subseteq V$ be a subset of vertices of $G$.
We say that a circular-arc graph $G'=(V',E')$ is a \emph{reduct} of $G$ with respect to $U$ if:

\begin{enumerate}
    \item $U\subseteq V'$.
    \item For every conformal model $\phi$ of $G$ there is a conformal model $\phi'$ of $G'$ such that $\phi | U^* \equiv \phi' | U^*$.
    \item For every conformal model $\phi'$ of $G'$ there is a conformal model $\phi$ of $G$ such that $\phi | U^* \equiv \phi' | U^*$. 
\end{enumerate}
\end{definition}

In this section we prove the following lemma.
\begin{lemma}
\label{lem:reduct}
Let $G=(V, E)$ be a circular-arc graph and let $U\subseteq V$ be a subset of vertices of $G$. 
Then, there is a polynomial-time algorithm computing a reduct $G'=(V',E')$ of $G$ with respect to $U$ such that $|V'|\leq 12 \cdot |U|$.
\end{lemma}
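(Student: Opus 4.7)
The plan is to construct $G'$ by pruning the PQM-tree $\pqmtree$ of $G$ to a $U$-relevant skeleton and attaching bounded-size gadgets at each retained inner node, so that admissible orderings at the positions of $U^*$ are preserved. Traversing $\pqmtree$ in bottom-up order, I would keep every node whose subtree contains at least one vertex of~$U$, and discard all other subtrees. After contracting chains of inner nodes with only a single $U$-relevant child, the surviving \emph{skeleton} has at most $|U|-1$ inner nodes, because each of its inner nodes has at least two $U$-relevant subtrees.

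The gadget attached at each retained node depends on the type of that node. For parallel M-nodes and for P-nodes, admissible orderings correspond to arbitrary permutations of the children, so keeping the skeleton children alone already preserves every admissible restriction to $U^*$; a single auxiliary vertex suffices if needed to witness the node's type. For serial M-nodes and serial Q-nodes, admissible orderings correspond to arbitrary linear orderings of the children, and again keeping the skeleton children (with at most one constant-size dummy child when only one skeleton child would survive) preserves the full set of restrictions. The delicate case is that of prime M-nodes and prime Q-nodes: here the graph $(M,{\sim_M})$ has exactly two transitive orientations, each the reverse of the other, and this rigidity must survive the restriction to the skeleton children. The gadget in this case is a constant-size collection of auxiliary vertices that, together with the skeleton children, form a prime permutation graph whose two transitive orientations restrict to the two original orientations of $(M,{\sim_M})$; this is possible because any induced subgraph of a prime permutation graph can be augmented to a prime permutation graph by adding a bounded number of vertices.

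Once the gadgets have been inserted, the resulting tree is the PQM-tree of a circular-arc graph $G'$ containing~$U$. To verify the reduct property I would invoke the bijection at the end of Section~\ref{sec:PQM-trees} between conformal models and assignments of admissible orderings to inner PQM-nodes. By construction, the admissible orderings at nodes of the skeleton are in bijection between $\pqmtree$ and the PQM-tree of~$G'$, and the restrictions of these orderings to letters of $U^*$ agree. The admissible orderings on the discarded subtrees of $\pqmtree$ (respectively, on the gadgets of $G'$) involve no letter of $U^*$ and can therefore be chosen freely, so any conformal model of $G$ restricts to a conformal model of $G'$ on $U^*$ and conversely. The vertex count is then $|U|$ plus a constant contribution per retained inner node; since the skeleton has at most $|U|-1$ inner nodes and each contributes at most eleven auxiliary vertices, we obtain $|V'| \leq |U| + 11(|U|-1) \leq 12|U|$, and the whole construction is clearly polynomial-time.

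The main obstacle I foresee is the prime-node case: the explicit constant-size gadget must be described and verified to restore primality on the skeleton children while \emph{simultaneously} ensuring that each of the two transitive orientations of the enlarged prime graph restricts to one of the two original orientations of $(M,{\sim_M})$. The parallel, serial, and P-node cases reduce to standard PQ-tree manipulations and are essentially bookkeeping; stitching the CA-module reducts together across the PQ-tree of $G$ is also routine once the gadget analysis at each individual inner node is in place.
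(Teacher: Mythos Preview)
Your high-level plan---prune the PQM-tree to the $U$-relevant skeleton, contract single-child chains, and attach type-dependent gadgets at the surviving inner nodes---is exactly the paper's approach. The verification via the bijection between conformal models and per-node admissible orderings is also the right way to argue correctness.

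The gap is in the prime case, and it is a real one. Your claim that ``any induced subgraph of a prime permutation graph can be augmented to a prime permutation graph by adding a bounded number of vertices'' (with the two transitive orientations restricting correctly) is false if ``bounded'' means an absolute constant. Concretely, if the $k$ important children of a prime node happen to induce an increasing permutation (equivalently, a clique in the quotient), then every contiguous block is an interval; extending such a permutation to a simple one requires $\Omega(\log k)$ additional points, not $O(1)$. So the step ``each inner node contributes at most eleven auxiliary vertices'' cannot be made to work, and with it your global count $|U| + 11(|U|-1)$ collapses.

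The paper sidesteps this by giving an \emph{explicit} gadget whose size is linear in the number $k$ of important children: for a prime M-node it adjoins sets $L_1,M_1,\ldots,L_k,M_k,M$ (each a pair of non-adjacent vertices) and writes down a concrete admissible ordering $\lambda$ that forces the resulting quotient to be prime with the two desired orientations; an analogous construction handles prime Q-nodes. The bookkeeping is then done \emph{inductively} rather than by counting skeleton nodes: one maintains $|N'| \le 12\,|U(N)| - 10$ for every important node $N$. The point is that when $N$ has $k\ge 2$ important children, each child already carries a slack of $-10$ in its bound, and the total slack $-10k$ comfortably absorbs the $O(k)$ auxiliary vertices of the prime gadget. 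This is the accounting you are missing once you accept that the gadget must grow with $k$.
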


To prove the lemma, we introduce some definitions.

First, for every CA-module $S \in \camodules$ we extend the notion of admissible models
$\Phi(N)$ on every node $N$ in $\pqmtree_S$.
When $N=\{u\}$ ($N$ is a leaf in $\pqmtree_S$), then $\Phi(N)$ contains a single oriented permutation model:
$(u^0,u^1)$ if $u$ is $(S^0,S^1)$-oriented and $(u^1,u^0)$ if $u$ is $(S^1,S^0)$-oriented in $\SSS$.
If $N$ is an M-node, then the admissible models in $\Phi(N)$ 
are obtained by first picking any admissible ordering $(\pi^0,\pi^1)$ in $\Pi(N)$ 
and then by replacing $L^0$ in $\pi^0$ by $\tau^0_L$ and $L^1$ in $\pi^1$ by $\tau^1_L$, where $L$ is any child $L$ of $N$ and $(\tau^0_L,\tau^1_L)$ is any admissible model from $\Phi(L)$.
Note that $\Phi(S)$ defined this way contains all the admissible models for CA-module~$S$.

Next, we root $\pqmtree$ in some Q-node $R$; 
we denote $\pqmtree$ rooted in $R$ by $\hat{\pqmtree}_R$.
Note that for any Q-node $Q$ and any $S \in \camodules(Q)$, CA-module $S$ and the slots $S^0,S^1$ are the children of $Q$ in $\pqmtree_{R}$.
For any PQ-node $N$ in $\hat{\pqmtree}_R$, by $V(N)$ we denote the vertices contained in the components (Q-nodes) descending $N$ in $\hat{\pqmtree}_R$ (including $N$) and by 
$U(N)$ we denote the set $V(N) \cap U$.
Similarly, we set $V(N)=N$ if $N$ is a node in $\pqmtree_S$ for some $S \in \camodules$ and we set $U(N) = V(N) \cap U$.
We say a node $N$ in $\hat{\pqmtree}_R$ distinct than a slot is \emph{important} if $U(N) \neq \emptyset$; slots $S^0,S^1$ are \emph{important} if $S$ is important.

We now extend the notion of admissible models $\Phi(N)$ on every
PQ-node $N$ in $\hat{\pqmtree}_R$.
If $N$ is a Q-node, then the admissible models in $\Phi(N)$ are obtained 
by first picking any admissible ordering $\pi_N$ in $\Pi(N)$ and then
by replacing in $\pi_N$:
\begin{itemize}
\item $S^0$ by $\tau^0_S$ and $S^1$ by $\tau^1_S$, where $S$ is any CA-module from $\camodules(N)$ and $(\tau^0_S,\tau^1_S)$ is any admissible model from~$\Phi(S)$,
\item $P$ by $\tau_P$, where $P$ is any P-node child of $N$ and $\tau_P$ is any admissible model from~$\Phi(P)$.
\end{itemize}
Similarly, if $P$ is a P-node, then the admissible models in $\Phi(N)$ are obtained 
by picking first an admissible ordering $\pi_N$ in $\Pi(N)$ and then
by replacing in $\pi_N$ the letter $Q$ by $\tau_Q$, where $Q$ is any Q-node child $Q$ and $\tau_Q$ is any admissible model from $\Phi(Q)$.

Note that for every PQ-node $N$ with the parent $N'$ in $\hat{\pqmtree}_R$, 
an admissible model from $\Phi(N)$ is a circular word over $V^*(N) \cup \{N'\}$.
In this case, for some technical reasons we replace in every member of $\Phi(N)$ the letter $N'$ by $R$.
Note that $\Phi(R)$ coincides with the set of all conformal models of $G$.

The general idea behind the construction of the reduct of $G$ is as follows. 
We traverse the tree $\hat{\pqmtree}_R$ bottom-up and we compress the tree $\hat{\pqmtree}_R$ to 
a new tree PQM-tree $\hat{\pqmtree}'_{R'}$ representing the conformal models of the reduct $G'$ of $G$. 
In particular, with every important node $N$ in $\hat{\pqmtree}_R$ 
a node $N'$ in $\hat{\pqmtree}'_{R'}$ is associated, called the \emph{reduct} of~$N$.
Roughly speaking, the reduct $N'$ of $N$ contains $U(N)$, has size linear in the size of $U(N)$, and 
satisfies the property that the set of chord configurations representing the vertices of $U(N)$ 
occurring in the admissible models in $\Phi(N)$ and in the admissible models in $\Phi(N')$ coincide 
(as in the definition of the reduct of $G$).

Now we proceed to the construction of the reduct $G'$ of $G$ and its PQM-tree $\hat{\pqmtree}'_{R'}$.

First, for every important CA-module $S$ of $G$ we process the tree $\pqmtree_S$ bottom-up and 
we construct~$\pqmtree'_S$. 
When processing an important node $N$ in $\pqmtree_S$, we construct also the reduct $N'$ of $N$ in $\pqmtree'_S$, which is required to satisfy the following properties:
\begin{itemize}
 \item $U(N) \subseteq N'$ and $|N'| \leq 12\cdot |U(N)|-10$,
 \item for every admissible model $(\tau^0,\tau^1)$ in $\Phi(N)$ there is 
 an admissible model $(\lambda^0, \lambda^1)$ in $\Phi(N')$ such that 
 $\tau^j|U^*(N) = \lambda^j|U^*(N)$ for $j \in \{0,1\}$,
 \item for every admissible model $(\lambda^0,\lambda^1)$ in $\Phi(N')$ there is 
 an admissible model $(\tau^0, \tau^1)$ in $\Phi(N)$ such that 
 $\lambda^j|U^*(N) = \tau^j|U^*(N)$ for $j \in \{0,1\}$.
\end{itemize}

Suppose we process an important node $N$ in $\pqmtree_S$. 
If $N$ is a leaf in $\pqmtree_S$, then we set $N'=N \cup \{v\}$, where $v$ is a vertex of $G'$ non-adjacent to $N$.
Note that $N'$ has two children, which are leaves in $\hat{\pqmtree}'_{R'}$.
Clearly, $N'$ satisfies the properties of the reduct of $N$.
Suppose now that $N$ is an important node in $\pqmtree_S$. 
If $N$ has exactly one important child $N_1$ in $\pqmtree_S$,
we set $N' = K_1$, where $K_1$ is the reduct of $N_1$. 
Clearly, $N'$ satisfies the properties of the reduct of $N$.
Now, suppose $N$ has at least two important children.
Suppose $N_1,\ldots,N_k$ are the important children of $N$ and 
$K_1,\ldots,K_k$ are their reducts, respectively.


If $N$ is serial or parallel, we define the reduct $N'$ of $N$ by first setting
$K = \bigcup_{i=1}^k K_i$ with the set $\Pi(K)$ obtained from $\Pi(N)$ by restricting the members of $\Pi(N)$ to the letters $\{N^0_1,N^1_1,\ldots,N^0_k,N^1_k\}$ and by replacing $N^j_i$ by $K^j_i$.
Then, we create the reduct $N'$ of $N$ by extending $K$ by the sets $E_1,E_2,E_3$, where $E_i$ contains two non-adjacent vertices of $G'$, 
and we set $\lambda=(E^0_1E^0_2E^0_3K^0,E^1_3E^1_1K^1E^1_2)$  
as an admissible order of $N'$ -- see Figure~\ref{fig:M-node-reduct}.(a)

When $N$ is prime, we proceed as follows.
Let $\pi = (\pi^0,\pi^1)$ be an admissible ordering from $\Pi(N)$.
Suppose $N_1, N_2, \ldots, N_k$ are enumerated such that $N^0_i$ occurs before $N^0_j$ in $\pi^0$ for every $i < j$ in $[k]$.
We define $N'$ such that $N' = \big{(}\bigcup_{i=1}^k (K_i \cup L_i \cup M_i) \big{)} \cup M$, where each of the sets $L_1,M_1, \ldots, L_{k}, M_k,M$ contains 
two non-adjacent vertices of $G'$, and we define an admissible ordering $\lambda = (\lambda^0,\lambda^1)$ of $\Pi(N')$ such that:
\begin{itemize}
 \item $\lambda^0 = L^0_1 K^0_1 \ldots L^0_k K^0_k M^0 M^0_1\ldots M^0_k$,
 \item the suffix of $\lambda^1$ contains the letters $K^1_1,\ldots,K^1_k$ which occur in the same
 order as the letters $N^1_1,\ldots,N^1_k$ in $\pi^1$,
 the prefix of~$\lambda^1$ equals to $M^1L^1_kM^1_k \ldots L^1_1M^1_1$.
\end{itemize}
See Figure~\ref{fig:M-node-reduct}.(b) for an illustration.
One can easily check that the module $N'$ in the graph $(N',{\sim})$ is prime and has $K_1,L_1,M_1,\ldots,K_k,L_k,M_k$ and $M$ as its children. 
We can also easily verify that $N'$ satisfies the properties of the reduct of~$N$.

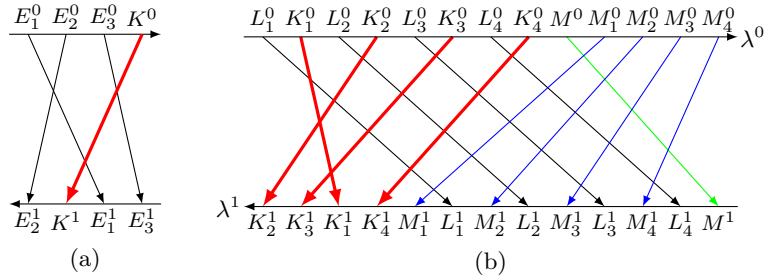
\begin{figure}[htp!]
\centering
\begin{tikzpicture}[yscale=0.75,xscale=0.5,>=latex]
\coordinate (label) at (1.5,-1) {};

\coordinate (a_v1) at (0,3) {};
\coordinate (a_v2) at (1,3) {};
\coordinate (a_v3) at (2,3) {};
\coordinate (a_v4) at (3,3) {};

\coordinate (b_v1) at (0,0) {};
\coordinate (b_v2) at (1,0) {};
\coordinate (b_v3) at (2,0) {};
\coordinate (b_v4) at (3,0) {};

\coordinate (la_v1) at (0,3.3) {};
\coordinate (la_v2) at (1,3.3) {};
\coordinate (la_v3) at (2,3.3) {};
\coordinate (la_v4) at (3,3.3) {};

\coordinate (lb_v1) at (0,-0.3) {};
\coordinate (lb_v2) at (1,-0.3) {};
\coordinate (lb_v3) at (2,-0.3) {};
\coordinate (lb_v4) at (3,-0.3) {};

\coordinate (tau0) at (3.9,3);
\coordinate (tau1) at (-0.9,0);

\draw[->] (-0.5,3) -- (3.5,3);
\draw[<-] (-0.5,0) -- (3.5,0);

\draw[->,black] (a_v1)--(b_v3);
\draw[->,black] (a_v2)--(b_v1);
\draw[->,black] (a_v3)--(b_v4);
\draw[->,red,very thick] (a_v4)--(b_v2);

\begin{tiny}
\node at (la_v1) {$E^0_1$};
\node at (la_v2) {$E^0_2$};
\node at (la_v3) {$E^0_3$};
\node at (la_v4) {$K^0$};

\node at (lb_v1) {$E^1_2$};
\node at (lb_v2) {$K^1$};
\node at (lb_v3) {$E^1_1$};
\node at (lb_v4) {$E^1_3$};
\end{tiny}

\begin{scriptsize}
\node at (label) {(a)};
\end{scriptsize}
\draw[white] (-1,-1.3)--(0,-1.3);
\end{tikzpicture}
\hspace{0.35cm}
\begin{tikzpicture}[yscale=0.75,xscale=0.5,>=latex]
\coordinate (label) at (6,-1) {};
\coordinate (a_v1) at (0,3) {};
\coordinate (a_v2) at (1,3) {};
\coordinate (a_v3) at (2,3) {};
\coordinate (a_v4) at (3,3) {};
\coordinate (a_v5) at (4,3) {};
\coordinate (a_v6) at (5,3) {};
\coordinate (a_v7) at (6,3) {};
\coordinate (a_v8) at (7,3) {};
\coordinate (a_v9) at (8,3) {};
\coordinate (a_v10) at (9,3) {};
\coordinate (a_v11) at (10,3) {};
\coordinate (a_v12) at (11,3) {};
\coordinate (a_v13) at (12,3) {};

\coordinate (b_v1) at (0,0) {};
\coordinate (b_v2) at (1,0) {};
\coordinate (b_v3) at (2,0) {};
\coordinate (b_v4) at (3,0) {};
\coordinate (b_v5) at (4,0) {};
\coordinate (b_v6) at (5,0) {};
\coordinate (b_v7) at (6,0) {};
\coordinate (b_v8) at (7,0) {};
\coordinate (b_v9) at (8,0) {};
\coordinate (b_v10) at (9,0) {};
\coordinate (b_v11) at (10,0) {};
\coordinate (b_v12) at (11,0) {};
\coordinate (b_v13) at (12,0) {};

\coordinate (la_v1) at (0,3.3) {};
\coordinate (la_v2) at (1,3.3) {};
\coordinate (la_v3) at (2,3.3) {};
\coordinate (la_v4) at (3,3.3) {};
\coordinate (la_v5) at (4,3.3) {};
\coordinate (la_v6) at (5,3.3) {};
\coordinate (la_v7) at (6,3.3) {};
\coordinate (la_v8) at (7,3.3) {};
\coordinate (la_v9) at (8,3.3) {};
\coordinate (la_v10) at (9,3.3) {};
\coordinate (la_v11) at (10,3.3) {};
\coordinate (la_v12) at (11,3.3) {};
\coordinate (la_v13) at (12,3.3) {};

\coordinate (lb_v1) at (0,-0.3) {};
\coordinate (lb_v2) at (1,-0.3) {};
\coordinate (lb_v3) at (2,-0.3) {};
\coordinate (lb_v4) at (3,-0.3) {};
\coordinate (lb_v5) at (4,-0.3) {};
\coordinate (lb_v6) at (5,-0.3) {};
\coordinate (lb_v7) at (6,-0.3) {};
\coordinate (lb_v8) at (7,-0.3) {};
\coordinate (lb_v9) at (8,-0.3) {};
\coordinate (lb_v10) at (9,-0.3) {};
\coordinate (lb_v11) at (10,-0.3) {};
\coordinate (lb_v12) at (11,-0.3) {};
\coordinate (lb_v13) at (12,-0.3) {};

\coordinate (tau0) at (12.9,3);
\coordinate (tau1) at (-0.9,0);

\draw[->,black] (a_v1)--(b_v6);
\draw[->,red,very thick] (a_v2)--(b_v3);
\draw[->,black] (a_v3)--(b_v8);
\draw[->,red,very thick] (a_v4)--(b_v1);
\draw[->,black] (a_v5)--(b_v10);
\draw[->,red,very thick] (a_v6)--(b_v2);
\draw[->,black] (a_v7)--(b_v12);
\draw[->,red,very thick] (a_v8)--(b_v4);
\draw[->,green] (a_v9)--(b_v13);
\draw[->,blue] (a_v10)--(b_v5);
\draw[->,blue] (a_v11)--(b_v7);
\draw[->,blue] (a_v12)--(b_v9);
\draw[->,blue] (a_v13)--(b_v11);

\draw[->] (-0.5,3) -- (12.5,3);
\draw[<-] (-0.5,0) -- (12.5,0);

\tikzstyle{every node}=[inner sep=1pt]
\begin{tiny}
\node at (la_v1) {$L^0_1$};
\node at (la_v2) {$K^0_1$};
\node at (la_v3) {$L^0_2$};
\node at (la_v4) {$K^0_2$};
\node at (la_v5) {$L^0_3$};
\node at (la_v6) {$K^0_3$};
\node at (la_v7) {$L^0_4$};
\node at (la_v8) {$K^0_4$};
\node at (la_v9) {$M^0_{\text{ }}$};
\node at (la_v10) {$M^0_1$};
\node at (la_v11) {$M^0_2$};
\node at (la_v12) {$M^0_3$};
\node at (la_v13) {$M^0_4$};

\node at (lb_v1) {$K^1_2$};
\node at (lb_v2) {$K^1_3$};
\node at (lb_v3) {$K^1_1$};
\node at (lb_v4) {$K^1_4$};
\node at (lb_v5) {$M^1_1$};
\node at (lb_v6) {$L^1_1$};
\node at (lb_v7) {$M^1_2$};
\node at (lb_v8) {$L^1_2$};
\node at (lb_v9) {$M^1_3$};
\node at (lb_v10) {$L^1_3$};
\node at (lb_v11) {$M^1_4$};
\node at (lb_v12) {$L^1_4$};
\node at (lb_v13) {$M^1_{\text{ }}$};
\end{tiny}

\begin{scriptsize}
\node at (tau0) {$\lambda^0$};
\node at (tau1) {$\lambda^1$};
\node at (label) {(b)};
\end{scriptsize}
\draw[white] (-1,-1.3)--(0,-1.3);
\end{tikzpicture}

\caption{
\label{fig:M-node-reduct}
Figures (a) shows the extended reduct of $N$ for the case when $N$ is serial.
Figure (b) shows the reduct of $N$ for the case when $N$ is prime: $N$ has 4 important children $N_1,\ldots,N_4$ with the reducts $K_1,\ldots,K_4$.} 
\end{figure}

Next, we process the tree $\hat{\pqmtree}_R$ bottom up and we construct the tree $\hat{\pqmtree}'_{R'}$.
When we process an important node $N$ in $\hat{\pqmtree}_R$, we construct also the reduct $N'$ of $N$ in $\hat{\pqmtree}'_{R'}$, which is required to satisfy the following properties:
\begin{itemize}
 \item for every admissible model $\tau$ in $\Phi(N)$ there is 
 an admissible model $\lambda$ in $\Phi(N')$ such that 
 $\tau|(U^*(N) \cup \{R\}) = \lambda|(U^*(N) \cup \{R\})$,
 \item  for every admissible model $\lambda$ in $\Phi(N')$ there is 
 an admissible model $\tau$ in $\Phi(N)$ such that 
 $\lambda|(U^*(N) \cup \{R\}) = \tau|(U^*(N) \cup \{R\})$,
 \item $|V(N')| \leq 12\cdot |U(N)|-10$, where $V(N')$ is the set of the vertices from the components descending $N'$ in $\hat{\pqmtree}'_{R'}$.
\end{itemize}
Then, the reduct of the root $R$ is simply the reduct $G'$ of $G$ satisfying $|V(G')|\leq 12\cdot |U|-10$.

Suppose we process an important Q-node $Q$ in $\hat{\pqmtree}_R$ which is not the root of $\hat{\pqmtree}_R$.
If $Q$ has exactly one important P-node child $P$ with exactly one important Q-node child $Q_1$, 
then we set $Q' = Q'_1$ (that is, we set the reduct of $Q_{1}$ as the reduct of $Q$).
Otherwise ($Q$ has at least two important children in $\hat{\pqmtree}_R$ or 
$Q$ has one important P-node child with at least two important Q-node children), 
we proceed as follows.
We choose an admissible ordering $\pi$ in $\pi(Q)$. 
Let $Rr_1r_2 \ldots r_k$ be the circular word that arises from 
$\pi$ by replacing the letter representing the parent of $Q$ in $\hat{\pqmtree}_R$ by $R$
and then by restricting $\pi$ to important children of $Q$ and to the letter $R$.
Note that each $r_i$ is either an important P-node child of $Q$ or 
a slot of an important CA-module contained in $Q$ -- see Figure~\ref{fig:Q-node-reduct}.
We define $Q'$ such that 
$$Q' = \bigcup \Big{\{} K: K \text{ is a reduct of an important CA-module in $\camodules(Q)$}\Big{\}} \cup \Big{(}\bigcup_{i=1}^k (L_i \cup M_i) \Big{)} \cup M,$$ 
where each of the sets $L_1,M_1, \ldots, L_{k}, M_k,M$ contains 
two non-adjacent vertices of $G'$, and we define an admissible ordering $\lambda$ of $\Pi(N')$ such that 
$$\lambda \equiv RL^0_1 r_1 L^0_2 r_2 \ldots L^0_k r_k L^1_k M^0_1L^1_{k-1} M^0_2 \ldots L^1_1 M^0_k M^0 M^1_k \ldots M^1_1 M^1.$$
See Figure~\ref{fig:Q-node-reduct} for an illustration.
Again, we can easily check that $L_1,M_1,\ldots,L_k,M_k,M$ as well as the reducts of important CA-modules from $\camodules(Q)$ are the CA-modules of the component $Q'$ of the overlap graph $G'_{ov}$ of $G'$.
Also, we can easily verify that $Q'$ defined this way satisfies all the conditions of the reduct of $Q$.
\begin{figure}[htp!]
\centering
\begin{tikzpicture}[xscale=1,yscale=1,>=latex,shorten >=-0.4pt,shorten <=-0.4pt]
\coordinate (center) at (0,0) {};
\coordinate (label) at (0,-2.5) {};

\coordinate (r) at ($(center)+(90:2.0cm)$) {};

\coordinate (s1) at ($(center)+(67:2.0cm)$) {};
\coordinate (s2) at ($(center)+(52:2.0cm)$) {};
\coordinate (s3) at ($(center)+(37:2.0cm)$) {};
\coordinate (s4) at ($(center)+(22:2.0cm)$) {};
\coordinate (s5) at ($(center)+(7:2.0cm)$) {};
\coordinate (s6) at ($(center)+(352:2.0cm)$) {};
\coordinate (s7) at ($(center)+(337:2.0cm)$) {};
\coordinate (s8) at ($(center)+(322:2.0cm)$) {};
\coordinate (s9) at ($(center)+(307:2.0cm)$) {};
\coordinate (s10) at ($(center)+(292:2.0cm)$) {};
\coordinate (s11) at ($(center)+(277:2.0cm)$) {};
\coordinate (s12) at ($(center)+(262:2.0cm)$) {};
\coordinate (s13) at ($(center)+(247:2.0cm)$) {};
\coordinate (s14) at ($(center)+(232:2.0cm)$) {};
\coordinate (s15) at ($(center)+(217:2.0cm)$) {};
\coordinate (s16) at ($(center)+(202:2.0cm)$) {};
\coordinate (s17) at ($(center)+(187:2.0cm)$) {};
\coordinate (s18) at ($(center)+(172:2.0cm)$) {};
\coordinate (s19) at ($(center)+(157:2.0cm)$) {};
\coordinate (s20) at ($(center)+(142:2.0cm)$) {};
\coordinate (s21) at ($(center)+(127:2.0cm)$) {};
\coordinate (s22) at ($(center)+(112:2.0cm)$) {};

\coordinate (lr) at ($(center)+(90:2.3cm)$) {};

\coordinate (ls1) at ($(center)+(67:2.3cm)$) {};
\coordinate (ls2) at ($(center)+(52:2.3cm)$) {};
\coordinate (ls3) at ($(center)+(37:2.3cm)$) {};
\coordinate (ls4) at ($(center)+(22:2.3cm)$) {};
\coordinate (ls5) at ($(center)+(7:2.3cm)$) {};
\coordinate (ls6) at ($(center)+(352:2.3cm)$) {};
\coordinate (ls7) at ($(center)+(337:2.3cm)$) {};
\coordinate (ls8) at ($(center)+(322:2.3cm)$) {};
\coordinate (ls9) at ($(center)+(307:2.3cm)$) {};
\coordinate (ls10) at ($(center)+(292:2.3cm)$) {};
\coordinate (ls11) at ($(center)+(277:2.3cm)$) {};
\coordinate (ls12) at ($(center)+(262:2.3cm)$) {};
\coordinate (ls13) at ($(center)+(247:2.3cm)$) {};
\coordinate (ls14) at ($(center)+(232:2.3cm)$) {};
\coordinate (ls15) at ($(center)+(217:2.3cm)$) {};
\coordinate (ls16) at ($(center)+(202:2.3cm)$) {};
\coordinate (ls17) at ($(center)+(187:2.3cm)$) {};
\coordinate (ls18) at ($(center)+(172:2.3cm)$) {};
\coordinate (ls19) at ($(center)+(157:2.3cm)$) {};
\coordinate (ls20) at ($(center)+(142:2.3cm)$) {};
\coordinate (ls21) at ($(center)+(127:2.3cm)$) {};
\coordinate (ls22) at ($(center)+(112:2.3cm)$) {};

\draw (0,0) circle (2cm);

\draw[very thick,red,<-] (s2) -- (s8);
\draw[thick,black,->] (s1) -- (s16);
\draw[thick,black,->] (s3) -- (s14);
\draw[thick,black,->] (s5) -- (s12);
\draw[thick,black,->] (s7) -- (s10);
\draw[thick,blue,->] (s9) -- (s21);
\draw[thick,blue,->] (s11) -- (s20);
\draw[thick,blue,->] (s13) -- (s19);
\draw[thick,blue,->] (s15) -- (s18);
\draw[thick,green,->] (s17) -- (s22);

\tikzstyle{every node}=[circle,minimum size=3.5pt,inner sep=0pt,draw,fill]
\node[red] at (s4) {};
\node[red] at (s6) {};
\node[black] at (r) {};

\tikzstyle{every node}=[inner sep=1pt]
\begin{tiny}
\node at (lr) {$R$};

\node at (ls1) {$L^0_1$};
\node at (ls2) {$K^1$};
\node at (ls3) {$L^0_2$};
\node at (ls4) {$P_1$};
\node at (ls5) {$L^0_3$};
\node at (ls6) {$P_2$};
\node at (ls7) {$L^0_4$};
\node at (ls8) {$K^0$};
\node at (ls9) {$M^0_1$};
\node at (ls10) {$L^1_4$};
\node at (ls11) {$M^0_2$};
\node at (ls12) {$L^1_3$};
\node at (ls13) {$M^0_3$};
\node at (ls14) {$L^1_2$};
\node at (ls15) {$M^0_4$};
\node at (ls16) {$L^1_1$};
\node at (ls17) {$M^0$};
\node at (ls18) {$M^1_4$};
\node at (ls19) {$M^1_3$};
\node at (ls20) {$M^1_2$};
\node at (ls21) {$M^1_1$};
\node at (ls22) {$M^1$};

\end{tiny}
\end{tikzpicture}
\caption{
\label{fig:Q-node-reduct}
An admissible ordering of the reduct $Q'$ of Q-node $Q$.
$Q$ has $4$ important children $P_1,P_2,S^0,S^1$ in $\pqmtree^{PQ}_R$,
$K$ with the slots $K^0,K^1$ is the reduct of the CA-module $S$.
}
\end{figure}
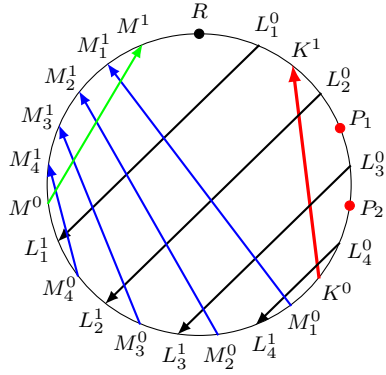

Now, suppose we process an important P-node $P$ in $\hat{\pqmtree}_R$.
If $P$ has exactly one important Q-node child $Q$ and $Q$ has exactly one important P-node child $P_1$, 
then we set $P' = P'_1$ (that is, we set the reduct of $P_1$ as the reduct of $Q$).
Otherwise ($P$ has at least two important Q-node children in $\hat{\pqmtree}_R$ or 
$P$ has one important Q-node child with at least two important children), 
we define the reduct $P'$ of $P$ by restricting $P$ and $\Pi(P)$ to 
important Q-node children of $P$.

Finally, we define the reduct of $R$, which completes the definition of $G'$.
If $R$ has exactly one important child $N$, then we set $R' = K$, where $K$ is the reduct of $N$.
Otherwise, $R$ has at least two important children and we proceed as follows.
Suppose $R$ is a serial Q-node (that is, $R$ is the only inner node in $\hat{\pqmtree}_R$).
Then, we set the reduct
$R'$ by restricting $R$ and $\Pi(R)$ to important CA-modules of $G$.
Finally, when $R$ is prime, we proceed similarly as for an important prime node different than $R$.

\subsection{Trapezoid Lemma}
\label{subsec:trapezoid-lemma}
We say that trapezoids $T_1$ and $T_2$ are \emph{nicely intersecting} if there exist a segment $s_1$ spanned between the bases of $T_1$ and a segment $s_2$ spanned between the bases of $T_2$ such that $s_1$ intersects $s_2$ and the endpoints of $s_1$ and $s_2$ are pairwise distinct.
We now prove the following lemma.
\begin{lemma}
Let $A$ and $B$ be two distinct horizontal lines on the plane, $A$ above $B$. 
Let $\mathcal{T} = \{T_1, \ldots, T_t\}$ be a set of $t$ (possibly degenerated) trapezoids 
with one base contained in $A$ (might be infinite and might be open/closed on each side) 
and one base contained in $B$ (might be infinite and might be open/closed on each side).
Then there exists a set of $t$ pairwise intersecting segments $\{s_1, \ldots, s_t\}$ with pairwise distinct endpoints spanned between $A$ and $B$
such that for each $i\in [t]$ the segment $s_i$ is contained in $T_i$ if and only if the trapezoids from $\mathcal{T}$ pairwise nicely intersect.
\end{lemma}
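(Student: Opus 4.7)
The forward direction is immediate: if such segments $s_1,\ldots,s_t$ exist, the pair $(s_i,s_j)$ itself witnesses that $T_i,T_j$ nicely intersect for every $i\neq j$.

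For the backward direction, I will parametrize each segment spanned between $A$ and $B$ by the pair $(u,v)\in\mathbb{R}^2$ of its top endpoint on $A$ and its bottom endpoint on $B$. Under this parametrization, the spanning segments of $T_i$ correspond exactly to the points of the rectangle $I_i^A\times I_i^B$, where $I_i^A$ and $I_i^B$ are the bases of $T_i$ on $A$ and $B$ respectively, and two segments $(u,v)$ and $(u',v')$ with distinct endpoints cross strictly between $A$ and $B$ iff $(u-u')(v-v')<0$. The task therefore reduces to finding a permutation $\sigma$ of $[t]$ together with values $u_{\sigma(1)}<\cdots<u_{\sigma(t)}$ and $v_{\sigma(1)}>\cdots>v_{\sigma(t)}$ satisfying $u_{\sigma(i)}\in I_{\sigma(i)}^A$ and $v_{\sigma(i)}\in I_{\sigma(i)}^B$; distinctness of the $2t$ endpoints is arranged afterwards by an arbitrarily small perturbation inside the open feasible region.

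Call an ordered pair $i\to j$ \emph{feasible} when $\inf I_i^A<\sup I_j^A$ and $\sup I_i^B>\inf I_j^B$, and \emph{forced} when in addition $j\to i$ is infeasible. Pairwise nice intersection is exactly the hypothesis that for every pair at least one orientation is feasible. The plan is to show that the digraph of forced orientations is acyclic; any topological order $\sigma$ of this DAG will then satisfy that $\sigma(i)\to\sigma(j)$ is feasible whenever $i<j$ (otherwise $\sigma(j)\to\sigma(i)$ would be forced, contradicting the topological ordering), and the simultaneous choice of the $u$'s (resp.\ the $v$'s) reduces to the non-emptiness of the polytope $\{u_{\sigma(1)}<\cdots<u_{\sigma(t)}:u_{\sigma(i)}\in I_{\sigma(i)}^A\}$, which holds precisely when $\inf I_{\sigma(i)}^A<\sup I_{\sigma(j)}^A$ for all $i\leq j$.

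The main obstacle is the acyclicity of the forced-orientation digraph. Each forced edge $i\to j$ admits at least one \emph{reason}: \emph{type A} if $\sup I_i^A\leq\inf I_j^A$, or \emph{type B} if $\sup I_j^B\leq\inf I_i^B$ (at least one holds since $j\to i$ is infeasible). Fix a forced cycle of minimum length $k$ (note $k\geq 3$, as a forced edge $i\to j$ prevents $j\to i$ from being forced) and assign to each edge a single type. If two consecutive edges $p\to q$ and $q\to r$ are both of type A, then $\sup I_p^A\leq\inf I_q^A\leq\sup I_q^A\leq\inf I_r^A$, so $r\to p$ is infeasible in the $A$-coordinate; combined with nice intersection of $(T_p,T_r)$ this forces $p\to r$ with type A, contradicting minimality. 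The analogous statement holds for type B, so in a minimum cycle the types alternate between A and B and its length is an even number $k=2m\geq 4$. Chaining the weak $A$-inequalities coming from the A-type edges with the strict $A$-inequalities $\inf I_{i_{2l}}^A<\sup I_{i_{2l+1}}^A$ coming from feasibility of the interleaved B-type edges yields
\[
\sup I_{i_1}^A\leq\inf I_{i_2}^A<\sup I_{i_3}^A\leq\inf I_{i_4}^A<\cdots\leq\inf I_{i_{2m}}^A<\sup I_{i_1}^A,
\]
which forces $\sup I_{i_1}^A<\sup I_{i_1}^A$, a contradiction. A cycle starting with a type-B edge is handled symmetrically using the $I^B$'s. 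The technicalities arising from degenerate trapezoids, infinite bases, and open/closed base endpoints are absorbed by working throughout with suprema and infima.
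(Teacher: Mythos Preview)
Your proof is correct and follows essentially the same route as the paper: the paper's relations $<_A$ and $<_B$ are precisely your type-A and type-B forced edges, and both proofs establish acyclicity of the union by taking a shortest cycle, arguing it must alternate between the two types, and chasing the resulting chain of inequalities on the $A$-coordinates to a contradiction. The only presentational difference is in the final step: the paper places the segments by an explicit induction that slides earlier endpoints to make room, whereas you reduce the placement to the non-emptiness of the order polytope $\{u_{\sigma(1)}<\cdots<u_{\sigma(t)}:u_{\sigma(i)}\in I_{\sigma(i)}^A\}$ and invoke the pairwise feasibility inequalities---a cleaner but equivalent argument.
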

\begin{proof}
We assume that the bases of $T_1,\ldots,T_t$ are closed intervals. 
We leave to the reader to verify that the proof given below can be easily modified to handle all the cases.

One direction of the proof is trivial: if there is such a set of pairwise intersecting segments, then naturally the trapezoids from $\mathcal{T}$ have to pairwise nicely intersect.

For the other direction suppose the trapezoids from $\{T_1,\ldots,T_t\}$ pairwise nicely intersect. 
We now argue that there exists a desired choice of segments $\{s_1, \ldots, s_t\}$.

First, we define relations ${<_A}$ and ${<_B}$ on the set of trapezoids $\mathcal{T}$. 
For trapezoids $T_i, T_j \in \mathcal{T}$ we say $T_i <_A T_j$ when the top-right corner of $T_i$ is on the left or equal to the top-left corner of $T_j$ (see Figure~\ref{fig:trapezoids_relation}). 
Since $T_i$ and $T_j$ are nicely intersecting, if $T_i <_A T_j$ then 
the top-left corner of $T_i$ is on the left of the top-right corner of $T_j$.
In particular, this implies that there are no $T_i, T_j$ such that $T_i <_A T_j$ and $T_j <_A T_i$.
For trapezoids $T_i, T_j \in \mathcal{T}$ we say $T_i <_B T_j$ when the bottom-left corner of $T_i$ is on the right of equal to the bottom-right corner of $T_j$ (see Figure~\ref{fig:trapezoids_relation}).
Again, observe that there are no $T_i, T_j$ such that $T_i <_B T_j$ and $T_j <_B T_i$.
Observe that since $T_i$ and $T_j$ are nicely intersecting, if $T_i <_B T_j$ then the top-left corner of $T_i$ is on the left of the top-right corner of $T_j$.

\begin{center}
    
\end{center}

We now claim that the union ${<_A \cup <_B}$ is acyclic on $\mathcal{T}$.
Suppose for the sake of contradiction there is a cycle in ${<_A \cup <_B}$ 
and consider the shortest one. 
Since ${<_A}$ and ${<_B}$ are transitive, the cycle has to have even length and alternate between ${<_A}$ and ${<_B}$.
Due to the observations above, an even cycle containing a trapezoid $T_i$ would imply that the top-right corner of $T_i$ is on the left of itself, 
which is a contradiction.

Since ${<_A \cup <_B}$ is acyclic, there is a topological order of ${<_A \cup <_B}$. 
Let ${<}$ be any such order.

Suppose the trapezoids $T_1, \ldots, T_t$ are enumerated consistently with ${<}$.
We now inductively prove that there exists a valid choice of segments $s_1, \ldots, s_i$ for every $i\in [t]$ such that the top endpoint of $s_j$ is on the left of the top endpoint of $s_{j+1}$ for every $j\in [i-1]$.

The base case of $i=1$ is trivial, as any choice of $s_1$ inside $T_1$ is valid.

Suppose $i>1$. 
From the induction hypothesis there is a valid choice of segments $s_1, \ldots, s_{i-1}$ for $T_1, \ldots, T_{i-1}$. 
Consider the trapezoid $T_i$. 
Observe that if for any $j<i$ the top endpoint of $s_j$ is on the right or equal to the top-right corner of $T_i$, 
then $T_j$ must contain the top-right corner of $T_i$ since $T_j < T_i$.
Moreover, the top-right corner of $T_i$ is not equal to the top-left corner of $T_j$.
Similarly, if the bottom endpoint of $s_j$ is on the left of the bottom side of $T_i$, then $T_j$ must contain the bottom-left corner of $T_i$.
Moreover, the bottom-left corner of $T_i$ is not equal to the bottom-right corner of $T_j$.

For every $j<i$ denote the top endpoint of $s_j$ by $a_j$ and the bottom endpoint by $b_j$. Let $P$ be the top-right corner of $T_i$ and $Q$ be the bottom-left corner of $T_i$.

We create the valid set of segments for $T_1, \ldots, T_i$ the following way.
Let $j\in [i-1]$ be the smallest number such that $a_j$ is on the right of $P$.
We move the top endpoints $a_j, \ldots, a_{i-1}$ to the left side of $P$ but very close to it.
Similarly, let $j' \in [i-1]$ be the smallest number such that $b_{j'}$ is on the left of $Q$.
We move the bottom endpoints $b_{j'}, \ldots, b_{i-1}$ to the right side of $Q$ but very close to it.

Finally, we define $s_i$ as the segment with endpoints $P$ and $Q$. Note that $a_{i-1}$ is on the left of $P$ and $b_{i-1}$ is on the right side of $Q$. This completes the induction (see Figure~\ref{fig:insert}).

\begin{center}
    \begin{figure}[htp!]

\begin{tikzpicture}[scale=1,>=latex,shorten >=-0.4pt,shorten <=-0.4pt]

\draw[fill=gray!30, draw=none, opacity=0.5] (0.8,3) -- (1.8,3) -- (1,0) -- (-0.7,0) -- cycle;

\draw[] (-1, 3) -- (3, 3);
\draw[] (-1, 0) -- (3, 0);

\draw[very thick] (-0.5,3) -- (2.5, 0);
\draw[very thick] (2.5,3) -- (0, 0);
\draw[very thick] (0.5, 3) -- (1, 0);

\draw[draw=red, thick, ->] (2.5,3) -- (1.8,3);

\begin{footnotesize}
\node at (2.5, 3.3) {$a_{i-1}$};
\node at (1.8, 3.3) {$P$};
\node at (-0.7, -0.3) {$Q$};  
\end{footnotesize}

\end{tikzpicture}
\hspace{3cm}
\begin{tikzpicture}[scale=1,>=latex,shorten >=-0.4pt,shorten <=-0.4pt]

\draw[fill=gray!30, draw=none, opacity=0.5] (0.8,3) -- (1.8,3) -- (1,0) -- (-0.7,0) -- cycle;

\draw[] (-1, 3) -- (3, 3);
\draw[] (-1, 0) -- (3, 0);

\draw[very thick] (-0.5,3) -- (2.5, 0);

\draw[very thick] (0.5, 3) -- (1, 0);
\draw[very thick] (1.5, 3) -- (0, 0);
\draw[very thick, green] (1.8, 3) -- (-0.7, 0);

\begin{footnotesize}
\node at (1.3, 3.3) {$a_{i-1}$};
\node at (1.9, 3.3) {$P$};
\node at (-0.7, -0.3) {$Q$};  
\end{footnotesize}

\end{tikzpicture}
\caption{
\label{fig:insert}
Example of the induction step. The trapezoid $T_i$ is shaded. The endpoint $a_{i-1}$ is moved slightly to the left of $P$, which allows to insert a segment $s_i$ (depicted in green) inside $T_i$ with endpoints $P$ and $Q$.
}
\end{figure}
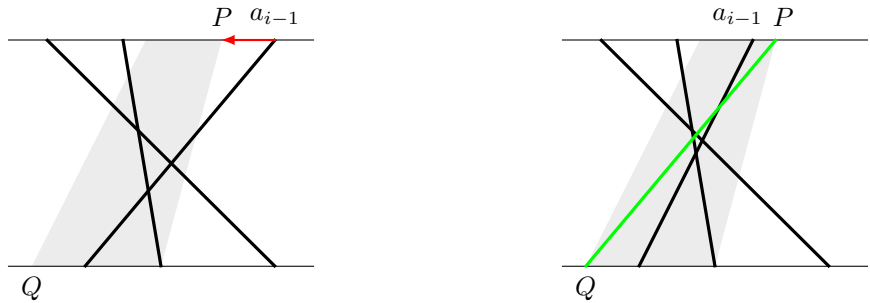
\end{center}

\end{proof}

\bibliographystyle{plain}
\bibliography{lit_short}

\end{document}